\documentclass[10pt,a4]{amsart}
\usepackage{verbatim}
\usepackage{enumerate}
\usepackage{hyperref}
\usepackage{makecell} 

\makeatletter
\@namedef{subjclassname@2020}{%
  \textup{2020} Mathematics Subject Classification}
\makeatother

\mathchardef\ordinarycolon\mathcode`\:
\mathcode`\:=\string"8000
\begingroup \catcode`\:=\active
  \gdef:{\mathrel{\mathop\ordinarycolon}}
\endgroup

\usepackage[sep=3pt, offset=1.2em]{simpler-wick} 
\usepackage{float} 

\usepackage{subcaption}

\usepackage{tikz}
\usepackage{tkz-euclide}
\usepackage{animate}
\usetikzlibrary{patterns}
\usetikzlibrary{arrows.meta}
\usetikzlibrary{quotes}
\tikzset{
  fermion/.style={draw=black, postaction={decorate},decoration={markings,mark=at position .55 with {\arrow{>}}}},
    bdry/.style={draw,shape=circle,fill=black,minimum size=5pt,inner sep=0pt},
  b1/.style={draw,shape=circle,minimum size=5pt,inner sep=0pt},
  b2/.style={draw,shape=circle,fill=gray!40,minimum size=5pt,inner sep=0pt},
  b3/.style={draw,shape=circle,fill=gray!80,minimum size=5pt,inner sep=0pt}
  }
  
\newcommand{\dd}{\partial}
\newcommand{\ra}{\rightarrow}
\newcommand{\bt}{\bullet}
\newcommand{\mr}{\mathrm}
\newcommand{\ZZ}{\mathbb{Z}}
\newcommand{\CC}{\mathbb{C}}
\newcommand{\RR}{\mathbb{R}}
\newcommand{\g}{\mathfrak{g}}
\newcommand{\frh}{\mathfrak{h}}
\newcommand{\FF}{\mathcal{F}}
\newcommand{\DD}{\mathcal{D}}
\newcommand{\BB}{\mathcal{B}}
\newcommand{\YY}{\mathcal{Y}}
\newcommand{\VV}{\mathcal{V}}
\newcommand{\PP}{\mathcal{P}}
\newcommand{\LL}{\mathcal{L}}
\newcommand{\mc}{\mathcal}
\newcommand{\til}{\widetilde}
\newcommand{\fl}{\mathrm{fl}}
\newcommand{\res}{\mathrm{res}}
\newcommand{\ii}{\mathrm{in}}
\newcommand{\oo}{\mathrm{out}}
\newcommand{\gh}{\mathrm{gh}}
\newcommand{\ph}{\mathrm{ph}}
\newcommand{\eff}{\mathrm{eff}}
\newcommand{\A}{\mathcal{A}}
\newcommand{\As}{\mathsf{A}}
\newcommand{\as}{\mathsf{a}}
\newcommand{\AsI}{\mathsf{A}_I}
\newcommand{\Ires}{{I\,\mathrm{res}}}
\newcommand{\D}{\mathcal{D}}
\newcommand{\const}{\mr{c}}
\newcommand{\ad}{\mathrm{ad}}
\newcommand{\nl}{\mathrm{nl}}
\newcommand{\ol}{\overline}
\newcommand{\ggg}{\mathcal{G}}
\newcommand{\calS}{\mathcal{S}}
\newcommand{\tr}{\mathrm{tr}}
\newcommand{\pol}{f}
\newcommand{\dt}{d_I}  
\newcommand{\PhiPM}{\mathcal{F}^\partial}  
\newcommand{\bl}{\textcolor{blue}}

\theoremstyle{remark}
\newtheorem{remark}{Remark}[section]
\newtheorem*{Ack}{Acknowledgments}
\theoremstyle{plain}
\newtheorem{lemma}[remark]{Lemma}
\newtheorem{proposition}[remark]{Proposition}
\newtheorem{corollary}[remark]{Corollary}
\newtheorem{theorem}{Theorem}

\theoremstyle{definition}

\newtheorem{example}[remark]{Example}

\begin{document}
\title[Chern--Simons on cylinders]{
Quantum Chern--Simons theories on cylinders: 
BV-BFV partition functions
}

\begin{abstract}
We compute partition functions of Chern--Simons type theories for cylindrical spacetimes $I \times \Sigma$, with $I$ an interval and $\dim \Sigma = 4l+2$, in the 
BV-BFV formalism (a refinement of the Batalin--Vilkovisky formalism adapted to manifolds with boundary and cutting--gluing). The case $\dim \Sigma = 0$ is considered as a toy example.  We show that one can identify---for certain choices of residual fields---the ``physical part'' (restriction to degree zero fields) of the BV-BFV effective action with the Hamilton--Jacobi action computed in the companion paper \cite{HJ}, without any quantum corrections. This Hamilton--Jacobi action is the action functional of a conformal field theory on $\Sigma$. For $\dim \Sigma = 2$, this implies a version of the CS-WZW correspondence. For $\dim \Sigma = 6$, using a particular polarization on one end of the cylinder, the Chern--Simons partition function 
is related to
Kodaira--Spencer gravity (a.k.a.\ BCOV theory); this provides a BV-BFV quantum perspective on the semiclassical 
result by Gerasimov and Shatashvili. 
\end{abstract}

\author{Alberto S. Cattaneo}
\address{Institut f\"ur Mathematik, Universit\"at Z\"urich\\
Winterthurerstrasse 190, CH-8057 Z\"urich, Switzerland}  
\email{cattaneo@math.uzh.ch}

\author{Pavel Mnev}

\address{University of Notre Dame}

\address{St. Petersburg Department of V. A. Steklov Institute of Mathematics of the Russian Academy of Sciences}
\email{pmnev@nd.edu}

\author{Konstantin Wernli}
\address{Centre for Quantum Mathematics, IMADA, University of Southern Denmark}
\email{kwernli@nd.edu}
\thanks{This research was (partly) supported by the NCCR SwissMAP, funded by the Swiss National Science Foundation. A.S.C. and K.W. acknowledge partial support of SNF Grants No.\ 200020\_192080 and 200020\_172498/1.  K. W. also acknowledges support from a BMS Dirichlet postdoctoral fellowship and the SNF Postdoc.Mobility grant P2ZHP2\_184083, and would like to thank the Humboldt-Universit\"at Berlin, in particular the group of Dirk Kreimer, and the university of Notre Dame for their hospitality. }

\subjclass[2020]{57R56, 
81T70, 
70H20, 
81T13 
(Primary); 
81T18, 
53D50, 
53D22, 
81T40, 
81T30, 
32Q25, 
32G05 
(Secondary).
}

\keywords{
(Generalized) Hamilton--Jacobi action, Chern--Simons theory, Batalin--Vilkovisky formalism, Segal--Bargmann transform, Wess--Zumino--Witten model, nonlinear (Hitchin) phase space polarization,  Kodaira--Spencer (BCOV) action.
}
\maketitle

\setcounter{tocdepth}{3}
\tableofcontents

\allowdisplaybreaks

\section{Introduction}

This paper is a 
sequel to
the paper ``Constrained systems, generalized Hamilton--Jacobi actions, and quantization'' \cite{HJ} by the same authors (but can be read independently).

 As announced in \cite{HJ}, the main result of this paper is the explicit computation of the perturbative partition functions of  Chern--Simons theories on cylinders $I \times \Sigma$, with respect to various boundary polarizations. 
Their restriction to degree zero fields turns out to be
the exponential of the corresponding Hamilton--Jacobi action, defined in \cite{HJ} and recalled in Section \ref{sec:HJ}, without any quantum corrections. 

Interestingly, the Hamilton--Jacobi actions of the theories we consider can be related to action functionals of conformal field theories on $\Sigma$. This means that the partition function of Chern--Simons theories (with certain boundary conditions) can be identified with the partition function of a conformal field theory (coupled to sources) --- a property that one might call ``holographic duality.''  In that terminology, among other results, we show the following: 
\begin{itemize}
\item The holographic dual theory of 3D abelian Chern--Simons theory is the 2D free boson CFT, see Section \ref{sss: intro 3d ab CS} (while for a different choice of boundary polarization, we obtain the beta-gamma system as the dual, see (\ref{ab CS hol-hol Z via S_HJ})).
\item The holographic dual of 3D nonabelian Chern--Simons theory is WZW theory (see Section \ref{sss: intro 3d non ab CS}). In particular, the bulk-boundary version of the Batalin--Vilkovisky master equation (referred to below as the modified quantum master equation) corresponds to the Polyakov--Wiegmann formula for the WZW action functional.
\item The holographic dual of 7D Chern--Simons theory is a free 2-form theory for the ``standard'' polarization and the Kodaira--Spencer gravity  for a particular nonlinear polarization (see Section \ref{sss: intro 7d CS vs KS}). 
\end{itemize}
A remark on the terminology: The term ``holographic duality'' is often used for the case where the bulk theory is a theory of quantum gravity, e.g., a string theory, such as in the celebrated AdS/CFT correspondence \cite{Maldacena}, \cite{Witten3} and its more general variant, the gauge/gravity correspondence (see \cite{Erdmenger} for a review). The bulk/boundary correspondences we discuss below were, in a different context, discovered earlier, see for instance \cite{Witten}, \cite{GS}. They can be interpreted as special cases of holography, thinking of Chern-Simons theory as a string theory  \cite{Witten4}. 

The first motivating point for this paper and its prequel \cite{HJ}, suggested to us by S. Shatashvili, concerned precisely the last item in the list above: namely, the systematical understanding of  the relation between 7D abelian Chern--Simons theory and 6D Kodaira--Spencer \cite{KS} gravity (otherwise known as BCOV theory \cite{BCOV}) from the BV-BFV perspective. 
At the semiclassical level, the relation is a result of Gerasimov--Shatashvili \cite{GS} (see also our review in \cite[Section~7.6]{HJ}). In this paper, we explore the perturbative BV-BFV quantization and show that, for an appropriate choice of gauge fixing, no quantum corrections are added to the semiclassical result. We thus prove the conjecture put forward by Gerasimov and Shatashvili in their original paper.

There are two other key points that motivated this paper and its companion paper \cite{HJ}. Firstly, we were interested in studying in detail the bulk-boundary or ``holographic'' correspondences mentioned above.  In this paper we prove that in these special cases the boundary theory is simply an effective theory of the bulk theory, in the sense that the bulk fields have been partially integrated out. Given the results in \cite{HJ}, we expect that the effective action viewpoint can explain more general bulk-boundary correspondences. To put it in clear words: holographic duality means that the boundary theory is the semiclassical limit of a certain effective action of the bulk theory. In the theories we consider in this paper, this semiclassical limit is exact,\footnote{In our framework, this is due to the fact that the constraints are affine functions of the variables describing the field configurations on either boundaries (such theories are called ``biaffine'' in \cite{HJ}).} but in general there is of course no reason to expect this. 

Secondly, partition functions on cylinders can be interpreted as kernels of generalized Segal--Bargmann transforms (see Appendix \ref{app:SB}). They are of interest because, in a $d$-dimensional theory, they describe how a state on a $(d-1)$-dimensional manifold $\Sigma$ depends on the choice of a polarization. One way to interpret our results is that in our examples those generalized Segal--Bargmann transforms (in general, it is only their semiclassical limit) can be described by another quantum field theory that lives on $\Sigma$. 

Our results show that in both those cases -- seemingly unrelated at first glance -- the corresponding boundary theory is given by a (generalized) Hamilton--Jacobi action.
The BV-BFV formalism turns out to be a clear conceptual framework in which one can state and prove those results from first principles: the only inputs required are those of a local field theory, namely a space of fields $F_M$ and an action functional $S_M \colon F_M \to \RR$. From a different perspective, this paper can also be viewed as an invitation to learn the formalism.

Before passing to a detailed description of our results, as a primer on the BV-BFV formalism we give a  brief recollection of abelian Chern--Simons theory in the BV-BFV formalism, which can be safely skipped by readers familiar with the subject.
\subsection{Chern--Simons theory in the BV-BFV formalism}\label{ss: intro CS BVBFV}
We consider abelian Chern--Simons theory 
 in dimensions $d = 4l + 3$ with $l$ a positive integer.  For a $d$-dimensional spacetime manifold $N$ (possibly with boundary), the space of fields is defined as
 $F_N = \Omega^{2l+1}(N)$ and the action functional is 
$$S_N[A] = \frac12 \int_N A \wedge d A.$$ 

In dimension $d=3$, we also consider nonabelian Chern--Simons theory. Here there is a structure Lie algebra $\ggg$ of coefficients endowed with a nondegenerate invariant pairing  $\langle \cdot, \cdot \rangle$. The space of fields on $N$ is then the space of $\ggg$-valued 1-forms $F_N = \Omega^{1}(N,\ggg)$ (thought of as the space of connections on a trivial principal $G$-bundle $N\times G$ with $G$ the connected and simply connected Lie group integrating $\ggg$). The action functional is 
$$S_N[A] = \int_N \frac12 \langle A, dA \rangle + \frac16 \langle A, [A,A]\rangle.$$ 

 Since these theories are gauge theories, to define the perturbative partition function we need a gauge fixing formalism.  In this paper, we will use the BV-BFV formalism, the modification of the Batalin--Vilkovisky (BV) formalism for manifolds with boundary introduced by two of the authors together with N. Reshetikhin in \cite{CMR14,CMR15}. Let us briefly explain this formalism by means of our main example. 
 
 The BV-BFV extension of abelian Chern--Simons theory has $\ZZ$-graded space of fields $\FF_N = \Omega^\bullet(N)[2l+1]$. This notation is shorthand for saying that a homogeneous form $\omega$ is assigned ghost number $\gh(\omega) = 2l+1 - \deg(\omega)$, so that all forms have \emph{total degree} $\gh + \deg = 2l+1$. In particular $\FF^0_N = F_N$. The space $\FF_N$ is an odd  symplectic vector space with odd symplectic form 
 \[
 \omega_N(\A,\A') = \int_N \A \wedge \A',
 \]
 where $\A,\A'$ are nonhomogeneous differential forms and only the top degree part contributes to the integral.\footnote{The symplectic form is odd because it pairs components of $\A$,$\A'$ of opposite parity, which in turn is due to the fact that $\dim N$ is odd.} The BV extended action functional of abelian Chern--Simons theory is 
 $$\calS_N[\A] = \frac12\int_N \A \wedge d\A.$$ 
In particular, restricting to forms of ghost number 0, we recover the classical action $S_N[A]$. 

If $\dd N = \emptyset$, then $(\calS_N,\calS_N) = 0$, where $(\cdot,\cdot)$ denotes the Poisson bracket induced by $\omega_N$. This equation is called classical master equation in the BV formalism, and it implies $Q_N^2 = 0$, where 
\[
Q_N = \int_N d\A \wedge \frac{\delta}{\delta \A}
\]
is the odd hamiltonian vector field of $\calS_N$. 

If $\dd N \neq \emptyset$, then we assign additional BFV\footnote{BFV is short for Batalin--Fradkin--Vilkovisky \cite{BF,FV}. } data to the boundary. The space of boundary fields is $\FF^\dd_{\dd N}= \Omega^\bullet(\dd N)[2l+1]$ with even symplectic form 
\[
\omega^\dd_{\dd N}(\A,\A') = \int_{\dd N} \A \wedge \A'.
\]
This symplectic form is the de~Rham differential (on $\FF^\dd_{\dd N}$) of the 1-form 
\[
\alpha^\dd_{\dd N} = \frac12 \int_{\dd N} \A \wedge \delta \A.
\]
Finally, using the surjective submersion $\pi\colon \FF_N \to \FF^\dd_{\dd N}$, given by pullback of differential forms from $N$ to $\dd N$, we can project the vector field\footnote{The vector field $Q_N$ is no longer the hamiltonian vector field of $\calS_N$. It is instead defined via the formula above.} $Q_N$ to $\FF^\dd_{\dd N}$. One can check that it is also hamiltonian. For degree reasons it then automatically has a unique odd hamiltonian function that we denote by $\calS^\dd_{\dd N}$. The important structural relation between the boundary BFV data $(\FF^\dd_{\dd N},\alpha^\dd_{\dd N}, \calS^\dd_{\dd N})$ and the bulk ``broken'' BV data $(\FF_N,\omega_N,\calS_N,Q_N,\pi)$ is 
 \begin{equation}
 \delta\calS_N =\iota_{Q_N}\omega_N +  \pi^*\alpha^\dd_{\dd N} \label{eq:mCME I}.
 \end{equation} 
 The 
 data, together with the structural relation
 \eqref{eq:mCME I}, 
 are the content of the classical BV-BFV formalism.  For more details we refer to \cite{CMR14}. 
 
 For $f$ a function on $\FF^\dd_{\dd N}$, there is a symmetry of the data
given by shifting $\calS_N \to \calS_N^f =\calS_N + \pi^*f$ and $\alpha^\dd_{\dd N} \to \alpha^{\dd, f}_{\dd N} = \alpha^\dd_{\dd N} + \delta f $. Clearly this is a symmetry of Equation \eqref{eq:mCME I}. 
 \begin{remark}
 The BV-BFV formulation of abelian Chern--Simons theory can be extended---as a $\ZZ_2$-graded theory---to dimension $d = 1$, see Section \ref{ss: 1d ab CS}. Instead of $\RR$-valued forms, there one has to consider forms with values in an odd vector space $\Pi\g$, with $\g$ an ordinary vector space equipped with an inner product.
This is the abelian version of the model
 studied in \cite{1dCS}.
 \end{remark} 
 
 Let us explain now how to define the BV-BFV partition function. We will be very brief here; for a detailed exposition we refer to \cite{CMR15}. We will require some additional pieces of data. The first one is a polarization $\PP$ (involutive lagrangian distribution) on $\FF^\dd_{\dd N}$. We say that the boundary 1-form $\alpha^\dd$ is compatible with $\PP$ if it vanishes on vectors belonging to $\PP$. Typically this is not the case, but it may  be achieved by means of the symmetry $\alpha^\dd \to \alpha^\dd + \delta f$ discussed above. Denote by $\BB$ the leaf space of the polarization. In the examples of this paper we actually have $\FF^\dd_{\dd N} \cong T^*\BB$. 
\begin{remark}
In the examples in this paper the graded manifold $\mathcal{F}^\partial_{\partial N} $ is actually a vector space, and the simplest polarizations are splittings into complex lagrangian subspaces $\mathcal{F}^\partial_{\partial N} \otimes \CC = \mathcal{B} \oplus \mathcal{B}'$, we call those \emph{linear polarizations}. However, it is interesting to consider more general polarizations. An example is the \emph{Hitchin polarization} on $\Omega^3(M,\CC)$ explained in subsection \ref{sss: Hitichin}. 
\end{remark} 
 Next, we require a splitting $\FF_N \cong \BB \times \YY$ where $\YY$ is also an odd symplectic vector space. Finally, we choose the data of a gauge fixing on $\YY$: another splitting $\YY \cong \VV \times \YY'$ into odd symplectic vector spaces and a lagrangian $\LL\subset \YY'$ such that  $0$ is an isolated critical point of  $\calS_N$ when restricted to $\BB \times \VV \times \LL \subset \BB \times \VV \times \YY' \cong \FF_M$, fiberwise over $\BB\times\VV$. The odd symplectic space $\VV$ is called the space of residual fields and $\LL$ is called the gauge-fixing lagrangian. 
 
 Given all these data, we can define the perturbative partition function as the 
  integral of the exponentiated BV action over $\LL$: 
 \begin{equation*}
 Z_N(\As,\as) = \int_{\alpha \in \LL\subset \YY'}\mathcal{D}\alpha \exp\left(\frac{i}{\hbar}\calS^f_N(\As,\as,\alpha)\right) = \exp\left(\frac{i}{\hbar}S_\eff(\As,\as) \right).
 \end{equation*}
 The partition function $Z$ and the effective action $S_\eff$ are both functions on $\BB \times \VV$. 
 
 The integral is defined as a sum over Feynman diagrams---i.e., modeled on finite-dimensional Gaussian integrals. As a consequence of the structural equation \eqref{eq:mCME I}, one expects $Z_N$ to satisfy the modified quantum master equation (mQME)
 \begin{equation}
 (\Omega_\BB - \hbar^2\Delta_\VV)Z_N = 0, \label{eq:mQME intro}
 \end{equation}
 where $\Delta_\VV$ is the BV operator acting on functions on the odd symplectic vector space $\VV$ of residual fields, given in Darboux coordinates $(q^i,p_i)$ by $\sum_i \pm \frac{\dd}{\dd q^i}\frac{\dd }{\dd p_i}$, and $\Omega_\BB$ is a quantization of the BFV action $\calS^\dd_{\dd N}$ acting on functions on $\BB$ as a differential operator. If we write $\FF^\dd_{\dd N} = T^*\BB \ni (b,b')$,
 then $\Omega_\BB$ is given by $\calS^\dd_{\dd N}(b,-i\hbar \frac{\dd}{\dd b})$, with all derivatives to the right. At lowest order in $\hbar$, we have $\Omega_\BB^2 = 0$ as a consequence of $(S,S)=0$. To ensure this to all orders, one might have to add higher order corrections (although there is no guarantee in general that the corrections exist). In all problems considered in this paper, $\Omega_\BB$ squares to zero without further corrections (see Theorem \ref{thm:SummaryI}).
 Since these operators anticommute with each other and square to zero, there is a double complex where $Z_N$ defines a cohomology class $[Z_N]$. 
This cohomology class is invariant under deformation of the choices made in the construction. For more details on the mQME \eqref{eq:mQME intro}, we refer to \cite{CMRcq},\cite{CMR15}. 
 \begin{remark}[Choice of residual fields]\label{rem: intro res fields}
 The choice of the space $\VV \subset \YY$ is not unique. In fact, there is a partially ordered set of such choices, with maximal element $\YY$ and a minimal element $\VV_\mr{min}$, and one can pass from a bigger to a smaller choice by a BV pushforward. A more detailed discussion can be found in \cite[Appendix F]{CMR15}. In this paper, when we deal with dimensions $d \neq 1$, we usually first have a ``big'' (infinite-dimensional) choice of $\VV$. In some cases we are able to compute the BV pushforward to $\VV_\mr{min}$.
 \end{remark}
 \subsection{Main results of the paper}
 We are now ready 
  to describe the main results of this paper. We consider only spacetime manifolds $N$ that are cylinders: $N = I \times \Sigma$. We think of the interval as $I = [0,1]$, so that $\dd N = \{0\} \times \Sigma \, \sqcup \, \{1\} \times \Sigma$, and we denote by $\Sigma_\ii,\Sigma_\oo$ the two components. The BFV space of boundary fields $\FF^\dd_{\dd N}$ then splits as $\FF^\dd_{\dd N} = \FF^\dd_\ii \times \FF^\dd_\oo$. 
 
 We will consider polarizations of the 
 space of boundary fields $\FF^\dd_{\dd N}
 $  that are products of two polarizations on the two factors. We will work mostly with linear polarizations, i.e., splittings  $\FF^\dd_{\Sigma}\otimes \CC  = \BB \oplus \BB'$ where $\BB,\BB'$ are complementary complex lagrangian subspaces of $\FF^\dd_{\Sigma}\otimes \CC$, so that we have an injection $\omega_\Sigma^\sharp\colon \BB' \to \BB^*$. We will then write  (suppressing the complexification) $\FF^\dd_\Sigma \cong T^*\BB$ and say that we are using the $\BB$-representation.\footnote{
A comment on complex vs.\ real spaces: by default, spaces of fields  and spaces of boundary fields are real vector spaces. In this paper, we 
 have to complexify them to impose convenient boundary conditions/polarizations. The splittings involved in the  gauge fixing (and thus in the corresponding spaces $\YY$,  $\VV$, $\YY'$, $\LL$) are only defined over $\CC$. The equations of motion and the evolution relation are defined over $\RR$ but need to be complexified to be described by a generating function (involving a complex polarization). When writing down path integrals an implied step is a choice of a real contour in the complexified space of fields, see Appendix \ref{App: real contour} for an illustration of the principle.
 } In ghost number 0 we also allow nonlinear polarizations with smooth leaf space $\BB$ such that $F^\dd_\Sigma \cong T^*B$.\footnote{In general, one might have to restrict to  neighborhoods in $\FF^\dd_{\dd N}$ and $\BB$ to achieve this isomorphism.} 
 
 Consider now a representation $\FF^\dd_{\dd N}\cong T^*\BB_\ii \times T^*\BB_\oo$. Denote by $\mathcal{EL}$ the zero locus of $Q$. It consists of (nonhomogeneous) closed forms in the abelian case and of ``flat'' nonhomogoneous forms in the nonabelian one. We call the projection $\LL:=\pi(\mathcal{EL})\subset \FF^\dd_{\dd N}$ the \emph{BV evolution relation}. 
 Denoting by $F^\dd_{\dd N}$ the ghost number 0 part, we get a product of two ordinary cotangent bundles $F^\dd_{\dd N} \cong T^*B_\ii \times T^*B_\oo$. We denote the restriction of the graded evolution relation by $L := \LL|_{\gh = 0}$ and call it simply the \emph{evolution relation}. One can show that it is a lagrangian subspace and that it consists of boundary fields that can be extended to solutions of the Euler--Lagrange equations. A generalized generating function for $L$ is given by the Hamilton--Jacobi action $S_{\mr{HJ}}[b_\ii,b_\oo,e] \in C^\infty(B_\ii \times B_\oo \times V_{\mr{aux}})$, where $V_\mr{aux}$ is a space of additional parameters. The requirement on $V_\mr{aux}$ is that in the fiber of $F_N$ over any
 triple $(b_\ii,b_\oo,e) \in B_\ii \times B_\oo \times V_{\mr{aux}}$ there exists a unique solution to the equations of motion. There is a 
 poset of choices for this space. This is discussed in detail in the companion paper \cite{HJ} and recalled in the Section \ref{sec:HJ} below. A first set of results can then be summarized as follows.
\begin{theorem}\label{thm:SummaryI}
Consider one of the following BV-BFV theories: 
\begin{enumerate}
\item 1D abelian Chern--Simons theory with linear or nonlinear polarizations, 
\item $d = 4l+3$-dimensional Chern--Simons theory with linear or nonlinear polarizations, 
\item 3-dimensional nonabelian Chern--Simons theory with linear polarizations.
\end{enumerate} Then there exists a space of residual fields $\VV$ and a gauge-fixing lagrangian $\LL$ such that the ghost number 0 component of $\VV$ coincides with $V_\mr{aux}$ and the ghost number 0 component of $S_\eff$ coincides with $S_{\mr{HJ}}$. 
\end{theorem}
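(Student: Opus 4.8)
The plan is to compute the effective action $S_{\eff}$ directly as a BV pushforward (the fiber integral over the gauge-fixing lagrangian $\LL$) and to match its ghost-number-zero part with the Hamilton--Jacobi action $S_{\mr{HJ}}$, which by \cite{HJ} is exactly the generating function of the evolution relation $L$ --- i.e.\ the critical value of the action on the unique Euler--Lagrange solution determined by the boundary data $(b_\ii,b_\oo)$ and the auxiliary parameters. Concretely, in each case I would first fix an explicit splitting $\FF_N \cong \BB \times \YY$ and a gauge fixing $\YY \cong \VV \times \YY'$, $\LL \subset \YY'$ adapted to the cylinder geometry: an axial-type gauge along the interval $I$, in which the propagator is a one-dimensional kernel on $I$ tensored with data on $\Sigma$. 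The residual fields $\VV$ are chosen so that their ghost-number-zero part recovers precisely the auxiliary parameter space $V_{\mr{aux}}$ of \cite{HJ}. The computation of $S_{\eff}$ then proceeds as a sum over connected Feynman diagrams with sources in $\BB \times \VV$, and the task reduces to showing that only the tree-level contribution survives on the ghost-number-zero locus.

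For the abelian theories (cases (1) and (2)) this last step is automatic: the BV action $\calS_N = \frac12 \int_N \A \wedge d\A$ is quadratic, so the fiber integral is exactly Gaussian. The only connected diagrams are a single source--source contraction along $\LL$, whose value is the stationary phase of $\calS_N^f$, and closed vacuum loops, whose product is a field-independent determinant absorbed into the normalization $\DD\alpha$. Hence $S_{\eff}$ equals the critical value of $\calS_N^f$ with no quantum corrections, and restricting to ghost number zero yields the critical value of $S_N$ on the Euler--Lagrange solution with the prescribed boundary data --- which is the defining formula for $S_{\mr{HJ}}$. For the nonlinear (Hitchin-type) polarizations the same argument applies on the ghost-number-zero sector, where the relevant integral is finite-dimensional and the saddle value again reproduces $S_{\mr{HJ}}$.

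The substantive case is the 3D nonabelian theory (case (3)), where the cubic term $\frac16 \int_N \langle \A, [\A,\A]\rangle$ generates genuine loop diagrams. Here I would exploit the axial gauge on the cylinder: with the $I$-component of $\A$ set to zero, the propagator is essentially the Green's function of $\dt$, a step function in the $I$-variable, and I expect every loop diagram that could contribute to the ghost-number-zero part of $S_{\eff}$ to vanish. The mechanism to establish is twofold: loops built from the cubic vertex involve coincident endpoints of the step-function propagator (tadpoles), which are arranged to vanish by the regularization, and the remaining potential corrections are obstructed by degree/ghost-number bookkeeping, since a nonzero contribution to the physical part would require the wrong total form-degree along $\Sigma$. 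What remains is the tree expansion, whose resummation reproduces $S_{\mr{HJ}}$; the nonlinear dependence of $S_{\mr{HJ}}$ on the boundary data (the WZW action) arises exactly from the trees attached to the cubic vertex, with the modified quantum master equation \eqref{eq:mQME intro} encoding the Polyakov--Wiegmann identity.

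The main obstacle is precisely this vanishing of quantum corrections in the nonabelian case: one must control all loop diagrams at once and show that, after the chosen gauge fixing, none contributes to the degree-zero effective action. I would handle this by combining the explicit form of the axial-gauge propagator on $I$, whose coincidence-limit and tadpole contributions are arranged to vanish, with a power-counting argument in form-degree along $\Sigma$ that obstructs any surviving loop --- consistent with the ``biaffine'' structure of the constraints, which guarantees that the semiclassical limit is exact. Finally, I would verify that with this $\VV$ and $\LL$ the operator $\Omega_\BB$ squares to zero without higher corrections, so that $\exp(\tfrac{i}{\hbar} S_{\eff})$ solves the mQME and the identification $S_{\eff}|_{\gh=0} = S_{\mr{HJ}}$ is compatible with the full quantum structure.
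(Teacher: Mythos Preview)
Your plan for the abelian cases is essentially what the paper does: the action is quadratic, the fiber integral is Gaussian, and $S_\eff$ is the stationary value with no corrections. For the nonlinear polarizations the paper's mechanism is slightly different from what you sketch --- it is not that ``the relevant integral is finite-dimensional'' but rather that the nonlinear generating function $G$ only produces a boundary vertex with purely incoming half-edges, so again only trees occur --- but the outcome is the same.

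The nonabelian case, however, does not work the way you describe, and this is a genuine gap. It is not true that all loop diagrams contributing to the ghost-number-zero part of $S_\eff$ vanish. In the axial gauge with parallel ghost polarization there are two kinds of wheels: \emph{physical wheels}, built from the $\langle a^{0,1}a^{1,0}\rangle$ propagator $\propto\theta(t-t')$, and \emph{ghost wheels}, built from the $\langle c_\fl \As^*_\fl\rangle$ propagator $\propto(\theta(t-t')-t)$. The physical wheels vanish not by a tadpole regularization but because the retarded propagator forces a cyclic time ordering with zero measure: $\int\theta(t_1-t_2)\cdots\theta(t_n-t_1)\,dt_1\cdots dt_n=0$. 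The ghost propagator is \emph{not} purely retarded, and the ghost wheels do \emph{not} vanish: they sum to a nonzero, in fact divergent, one-loop term $-i\hbar\,\mathbb{W}(\sigma)=-i\hbar\,\tr_{C^\infty(\Sigma)}j(\sigma)$ with $j(\sigma)=\tr_\ggg\log\frac{\sinh(\ad_\sigma/2)}{\ad_\sigma/2}$. This term depends only on the ghost-number-zero residual field $\sigma$, so your degree/ghost-number bookkeeping cannot exclude it.

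The paper's resolution is different from anything in your proposal. One recognizes $e^{j(\sigma)}$ as the Jacobian of the exponential map $\ggg\to G$, so that $\mathbb{W}$ is exactly the log of the Jacobian of the change of Darboux coordinates $(\sigma,\As^*_\res)\to(g,g^*)$ on $\VV$, with $g=e^{-\sigma}$. Interpreting $Z$ as a \emph{half-density} on $\VV$ rather than a function, the one-loop term is absorbed into the measure: $e^{\frac{i}{\hbar}S^\eff}\,\D^{1/2}\sigma\,\D^{1/2}\As^*_\res = e^{\frac{i}{\hbar}S^{\eff(0)}}\,\D^{1/2}g\,\D^{1/2}g^*$. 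It is in the $(g,g^*)$ coordinates that $S_\eff$ has no quantum corrections and its physical part equals the gauged WZW action, i.e.\ $S_{\mr{HJ}}$. (In the holomorphic-to-holomorphic polarization the mechanism is yet again different: there the physical and ghost wheels are both nonzero but cancel against each other.) So the missing idea is not a sharper vanishing argument but the half-density interpretation and the passage to the group-valued parametrization of residual fields.
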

Here the choice of space of residual fields $\mathcal{V}$ is determined in ghost number 0 by the requirement that it should be isomorphic to $V_\mr{aux}$. In all cases in this paper, the gauge-fixing Lagrangian is the space of 0-forms along the interval $I$ intersected with $\mathcal{Y}$.\footnote{Since we are inverting only the de Rham differential along the interval $d_I$, this is the unique possible choice of gauge-fixing lagrangian. Invertibility of $d_I$ restricted to $\mathcal{L}$ requires that $\mathcal{L}$ contains no 1-forms. The lagrangian condition then requires that all 
0-forms along $I$ in $\mathcal{Y}$ belong to $\mathcal{L}$.}  
In particular, there are no quantum corrections to the ghost number 0 part of the effective action (notice that the HJ action can be computed, as shown in \cite[Section~7]{HJ}, completely at the classical level). The leading term in the effective action was expected to be the Hamilton-Jacobi action from the finite-dimensional results in the companion paper \cite{HJ} (see in partiuclar Theorem 11.4 there). This theorem is nothing but an expression of the fact that to leading order the quantum theory is determined by the Euler-Lagrange locus. The fact that there are no quantum corrections is probably not surprising for the abelian Chern-Simons theory. For nonabelian Chern-Simons theory, the absence of quantum corrections is due to the fact that, given a complex structure on $\Sigma$, the interaction term is affine in both the holomorphic and antiholomorphic components of the connection along $\Sigma$, and the fact that our polarizataion and gauge fixing are compatible with this complex structure. 

Our second main result concerns the mQME.
\begin{theorem}\label{thm:SummaryII}
In all cases of Theorem \ref{thm:SummaryI} with linear polarizations, the BV-BFV partition functon $Z$ satisfies the modified quantum master equation $$
(\Omega - \hbar^2 \Delta)Z = 0$$ with $\Omega = \Omega_{\BB_\ii} + \Omega_{\BB_\oo}$ given by the standard quantization (i.e. with derivatives to the right of multiplication operators) of the boundary action at both endpoints. For nonlinear polarizations $F^\dd_{\Sigma} = T^*B \ni (b,b')$, the mQME is satisfied whenever the constraint $d_{\Sigma}A = 0$ is linear in the momenta $b'$. 
\end{theorem}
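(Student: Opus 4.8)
The plan is to deduce the mQME from the classical structural relation \eqref{eq:mCME I} together with two quantum checks, following the general BV-BFV scheme of \cite{CMR15,CMRcq}. Using the splitting $\FF_N\cong\BB\times\VV\times\YY'$ and $Z_N=\int_{\LL\subset\YY'}\D\alpha\,\exp\!\big(\tfrac{i}{\hbar}\calS^f_N\big)$, I would first commute $\Delta_\VV$ past the fiber integral and invoke the fundamental property of BV integration that the integral over the gauge-fixing lagrangian $\LL$ of a $\Delta_{\YY'}$-exact density vanishes. Since the BV Laplacian on $\YY=\VV\times\YY'$ splits as $\Delta_\YY=\Delta_\VV+\Delta_{\YY'}$, this turns the residual Laplacian into the full fiber Laplacian, and the standard identity $\Delta_\YY e^{\frac{i}{\hbar}\calS}=\big(\tfrac{i}{\hbar}\Delta_\YY\calS-\tfrac{1}{2\hbar^2}(\calS,\calS)_\YY\big)e^{\frac{i}{\hbar}\calS}$ gives
\[
\hbar^2\Delta_\VV Z_N=\int_\LL\Big(i\hbar\,\Delta_\YY\calS^f_N-\tfrac12(\calS^f_N,\calS^f_N)_\YY\Big)\,\exp\!\big(\tfrac{i}{\hbar}\calS^f_N\big).
\]

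The first check is classical. Contracting \eqref{eq:mCME I} with $Q_N$ and using that $\pi_*Q_N$ is the hamiltonian vector field of $\calS^\dd_{\dd N}$ yields the fiber modified classical master equation $\tfrac12(\calS^f_N,\calS^f_N)_\YY=-\pi^*\calS^{\dd,f}_{\dd N}$ (the sign being fixed by the conventions that make $\Omega$ the standard quantization below). The second check is the vanishing of the anomaly $\Delta_\YY\calS^f_N=0$. Since $\YY$ is infinite-dimensional I would first regularize, passing if necessary to the BV pushforward to $\VV_\mr{min}$ as in Remark \ref{rem: intro res fields}; the operator $\Delta_\YY$ then reduces to a trace of the propagator $\dt^{-1}$ along $I$ against the quadratic (abelian) or cubic (nonabelian) vertex. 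For the abelian theories $\calS_N$ is quadratic and $\Delta_\YY\calS_N$ is the coincidence-limit trace of this off-diagonal kernel, which vanishes; for $3$-dimensional nonabelian Chern--Simons the only candidate contribution is the one-loop tadpole of the cubic vertex, killed by the same coincidence-limit together with the antisymmetry of the structure constants. With both checks in hand the right-hand side above equals $\int_\LL\pi^*\calS^{\dd,f}_{\dd N}\,\exp\!\big(\tfrac{i}{\hbar}\calS^f_N\big)$.

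It remains to recognize this as $\Omega Z_N$. In a linear polarization $\FF^\dd_\Sigma\otimes\CC=\BB\oplus\BB'$ the boundary field is an affine function of the Darboux pair $(b,b')$, so the constraint---and hence $\calS^\dd_{\dd N}$---is linear in the momenta $b'$. Acting with the standard quantization $\Omega_\BB=\calS^\dd_{\dd N}(b,-i\hbar\tfrac{\dd}{\dd b})$ (derivatives to the right) on $\exp(\tfrac{i}{\hbar}\calS^f_N)$ then replaces each momentum factor $b'$ by a single derivative $-i\hbar\tfrac{\dd}{\dd b}$; acting on the exponential this produces the momentum $\tfrac{\dd}{\dd b}\calS^f_N$, and since the constraint is linear in $b'$ exactly one derivative acts, so no further $\hbar$-dependent term appears and $\Omega_\BB Z_N=\int_\LL\pi^*\calS^\dd_{\dd N}\exp(\tfrac{i}{\hbar}\calS^f_N)$. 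Splitting over the two ends gives $\Omega=\Omega_{\BB_\ii}+\Omega_{\BB_\oo}$; that $\Omega^2=0$ without corrections is already recorded in Theorem \ref{thm:SummaryI}. Comparing with the displayed formula yields $(\Omega-\hbar^2\Delta)Z_N=0$.

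The crux, and the place where the nonlinear hypothesis is needed, is precisely this last ordering step. For a nonlinear polarization the boundary field is a nonlinear function of $(b,b')$ and the constraint $d_\Sigma A=0$ rewritten in these coordinates may contain terms of degree $\geq 2$ in the momenta $b'$. Upon quantization two or more factors $-i\hbar\tfrac{\dd}{\dd b}$ can then hit the same field, and differentiating $\tfrac{\dd}{\dd b}\calS^f_N$ once more generates a genuine $O(\hbar)$ correction that is \emph{not} present in the multiplication operator of the structural relation and is not cancelled by $\Delta_\YY\calS^f_N=0$; such a term obstructs both $\Omega^2=0$ and the mQME. Requiring $d_\Sigma A=0$ to be at most linear in $b'$ is exactly the condition that removes every coincident-derivative term, restoring the identity $\Omega_\BB Z_N=\int_\LL\pi^*\calS^\dd_{\dd N}\exp(\tfrac{i}{\hbar}\calS^f_N)$. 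I therefore expect the main work to lie in controlling these ordering corrections for the nonlinear (e.g.\ Hitchin) polarizations, and in verifying that the affine structure of the Chern--Simons constraints---their biaffinity in the boundary variables---indeed guarantees linearity in $b'$ for the polarizations treated here.
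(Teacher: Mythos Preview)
Your argument is a genuinely different route from the paper's. The paper does \emph{not} give a unified proof of Theorem~\ref{thm:SummaryII}; it verifies the mQME case by case, by direct computation on the explicit effective actions already obtained (the lemmas in Sections~\ref{sss: hol-hol QME}, \ref{ss: ahol-hol}, Proposition~\ref{prop: 1d par ghosts mQME}, Section~\ref{sec:mQME nonlin}, Lemma~\ref{lemma: mQME non-ab}, etc.). In each case one simply writes out $\Omega_\ii Z$, $\Omega_\oo Z$, and $\hbar^2\Delta_\res Z$ and checks that the terms cancel. Your approach instead invokes the general BV-BFV mechanism of \cite{CMR15}: push $\Delta_\VV$ through the fiber integral, use the fiber mCME $\tfrac12(\calS^f,\calS^f)_\YY=\pi^*\calS^\dd$, and identify the result with $\Omega Z$ via linearity of the constraint in the momenta. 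This is more conceptual and explains \emph{why} the mQME holds rather than merely \emph{that} it holds; in particular your discussion of the ordering step and why linearity in $b'$ is exactly what kills the higher-$\hbar$ corrections is the right explanation of the nonlinear clause (and matches Remark~\ref{rem:mQMEnonlin}).

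The one place where your general argument is thinner than the paper's explicit check is the ``second check'' $\Delta_\YY\calS^f_N=0$ in the nonabelian case. You dispatch it by unimodularity (the tadpole vanishes by $f^a{}_{ab}=0$), which is formally correct, but in the axial gauge of this paper the propagator contains a $\delta$-form on $\Sigma$ and the one-loop sector is genuinely singular: the effective action acquires the divergent ghost-wheel term $-i\hbar\,\mathbb{W}(\sigma)$ of Lemma~\ref{lemma: S_eff non-ab via sigma}. The paper's proof of Lemma~\ref{lemma: mQME non-ab} does not appeal to a formal $\Delta_\YY\calS=0$; instead it regularizes by replacing $\Sigma$ with a finite set of points and shows the pointwise cancellation $\{S^{\mr{eff}(0)},\mathbb{W}\}_\res+\Delta_\res S^{\mr{eff}(0)}=0$ explicitly. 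Your argument would need to carry the same regularization through the BV pushforward to be a complete proof in that case; as written it is a correct outline but hides precisely the work the paper does by hand. (Also, a minor correction: $\Omega^2=0$ is not part of Theorem~\ref{thm:SummaryI}; it is stated separately in the introduction and is immediate here because each $\calS^\dd$ is at most quadratic in the momenta with the right structure.)
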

Again, in this case there are no quantum corrections to $\Omega$. These theorems summarize the results obtained in the various sections of this paper, where we discuss the different examples individually. We will outline the paper in slightly more detail in Section \ref{sec:Outline} below. Before that, let us comment on some of the more specific results in more detail. 

\subsubsection{Three-dimensional abelian Chern--Simons theory}\label{sss: intro 3d ab CS}
In three-dimensional Chern--Simons theory, in ghost number 0 we have the lagrangian splitting $$F^\dd_\Sigma \otimes \CC = \Omega^1(\Sigma,\CC) = \Omega^{1,0}(\Sigma) \oplus \Omega^{0,1}(\Sigma).$$ 
For instance, one can define $$\BB_\oo = \Omega^0(\Sigma,\CC) \oplus \Omega^{1,0}(\Sigma) \ni (\As^0_\oo,\As^{1,0}_\oo)$$ and $$\BB_\ii = \Omega^{0,1}(\Sigma) \oplus \Omega^2(\Sigma,\CC)\ni (\As^{0,1}_\ii,\As^2_\ii).$$ As discussed in Section \ref{ss: ahol-hol}, a possible choice for the space of residual fields is 
$$\VV_\mr{small} = \{dt\cdot(\As^0_\Ires+\As^2_\Ires)+(1-t)\cdot \As^0_\res + t\cdot \As^2_\res\} \subset \Omega^\bullet(I \times \Sigma %
,\CC),
$$
where $\As^k_\Ires, \As^k_\res$ are complex valued 
$k$-forms on $\Sigma$, $t$ is the coordinate on $I = [0,1]$ and the ghost numbers are $\gh (\As^k_\Ires) =-k,  \gh(\As^k_\res) = 1-k$. We will denote by $\sigma := A^0_\Ires$ the only ghost number 0 field in $\VV_\mr{small}$. The BV-BFV partition function is then computed as 
\begin{multline*}
Z_\mr{small}
=\exp\frac{i}{\hbar}
\Big(\underbrace{\int_\Sigma\Big(
\As^{1,0}_\oo \As^{0,1}_\ii
+(\dd \As^{0,1}_\ii+\bar\dd \As^{1,0}_\oo )\; \sigma
 +\frac12 \sigma \dd\bar\dd \sigma
\Big)}_{S_\text{HJ}}+\\+
\int_\Sigma\Big(-\As^0_\oo\As^2_\ii +\As^0_\oo \As^2_\res -\As^2_\ii \As^0_\res+ \frac12 \As^2_\res \As^0_\res  \Big)
\Big).
\end{multline*}
In particular, focusing on the summand of the effective action in the first line, we recognize the Hamilton--Jacobi action from Example \ref{ex:HJ ab CS}, as an instance of Theorem \ref{thm:SummaryI}. It is the action functional of a 2D free boson conformal field theory, coupled to the boundary fields $\As^{1,0}_\oo$ and $\As^{0,1}_\ii$. We arrive at the same result in ghost number 0 in Section \ref{sec: 3d par ghost ab CS}, using $\BB_\ii = \Omega^0(\Sigma,\CC) \oplus \Omega^{0,1}(\Sigma)$. One can integrate out the remaining residual fields to obtain then the fact that the partition function of three-dimensional Chern--Simons theory for the minimal space of residual fields (cf.\ Remark \ref{rem: intro res fields}) coincides with the partition function of the 2D free boson CFT. In particular, one can observe the Weyl anomaly in the 3D Chern--Simons partition function. See Remark \ref{rem: ahol-hol fully integrated}. 

\subsubsection{Three-dimensional nonabelian Chern--Simons theory and CS-WZW correspondence} \label{sss: intro 3d non ab CS}
The same lagrangian splitting as above can be used to study the 3D nonabelian Chern--Simons theory---see Section \ref{ss: non-ab CS parall ghost}. The representation we use in that section is $\FF^\dd_{\dd N} = T^*\BB_\ii \times T^*\BB_\oo$ with 
$$\BB_\oo = \Omega^0(\Sigma,\ggg_\CC) \oplus \Omega^{1,0}(\Sigma,\ggg) \ni (\As^0_\oo,\As^{1,0}_\oo)$$ and $$\BB_\ii =  \Omega^0(\Sigma,\ggg_\CC) \oplus \Omega^{0,1}(\Sigma,\ggg).$$ 
As a space of residual fields one can use 
$$
dt\cdot\Omega^0(\Sigma,\ggg_\CC) \oplus \Omega^2(\Sigma,\ggg_\CC)[-1] 
\ni (dt\cdot\sigma, \As^*_\res)$$
with $\gh(\sigma) = 0, \gh(\As^*_\res) = -1$. 
We compute the effective action in Lemma \ref{lemma: S_eff non-ab via sigma} and see that it has a tree part $S^{\eff(0)}$ and a 1-loop part $\mathbb{W}$: 
$$S^\eff = S^{\eff(0)} - i\hbar\mathbb{W} = S^{\eff(0)}_\ph + S^{\eff(0)}_\gh - i\hbar\mathbb{W}$$
(the subscript $\ph$ denotes the terms involving only fields of ghost number 0, the subscript $\gh$ denotes terms involving fields with nonzero ghost number). 
At first glance the explicit formula \eqref{S eff non-abelian} seems obscure, but we observe a number of interesting phenomena: 
\begin{enumerate}[i)]  
\item One has to restrict the residual field $\sigma$ to a certain ``Gribov region'' $B_0 \subset \ggg_\CC$---a region where the exponential map $\exp\colon \ggg_\CC \to G_\CC$ is injective---to make sure that certain power series appearing in $S^{\eff(0)}_\gh$ converge (Remark \ref{rem: B_0}).
\item As shown in Lemma \ref{lemma: S_eff non-ab rewriting via g}, when we restrict $\sigma$ to $B_0$, we can reparametrize by $g = e^{-\sigma}\colon \Sigma \to G_\CC$. In this reparametrization, we can rewrite $S^{\eff(0)}_\ph$ as 
\[ 
S^{\eff(0)}_\ph =  \int_\Sigma \Big( \langle \As^{1,0}_\oo, g\, \As^{0,1}_\ii g^{-1} \rangle -
\langle \As^{1,0}_\oo,\bar\dd g\cdot g^{-1} \rangle 
- 
\langle  \As^{0,1}_\ii, g^{-1}\dd g  \rangle \Big)
+\mr{WZW}(g)
\]
with the Wess--Zumino--Witten term
\[
\mr{WZW}(g)=- \frac12 \int_\Sigma  \langle \dd g\cdot g^{-1}, \bar\dd g\cdot  g^{-1} \rangle -\frac{1}{12}  \int_{\Sigma \times I}\langle d h\cdot h^{-1},[d h\cdot h^{-1},d h\cdot h^{-1}] \rangle
\]
and $h = e^{(t-1)\sigma}.$ This coincides with the Hamilton--Jacobi action of Chern--Simons theory, see Example \ref{ex: HJ nonab CS}. 
\item The term $-i\hbar\mathbb{W}$ in principle violates Theorem \ref{thm:SummaryI} and is divergent. However, it has a nice interpretation as 
a change of path integral measure from $\DD\sigma$ to $\DD g$, see Section \ref{sss: ghost wheels}. In particular, if one interprets $Z$ as a \emph{half-density} rather than a function on the space of residual fields, and thus $S^\eff$ as a log-half-density, the effective action has no quantum corrections in the $(g,g^*)$ coordinates on $\VV$ (here $g^*$ is the Darboux coordinate for $g$). It is in this sense that Theorem \ref{thm:SummaryI} holds.  
\item In Section \ref{sss: mQME nonab CS} we show that $Z$ satisfies the modified quantum master equation in the different interpretations of $Z$ (partition function vs.\ half-density).  Interestingly, in the $(g,g^*)$ representation the mQME implies the well-known Polyakov--Wiegmann identity for the WZW action. 
\end{enumerate}

We thus observe a strong version of the CS-WZW correspondence: Namely, the effective theory of nonabelian Chern--Simons theory on $I \times \Sigma$ is a ``gauged WZW theory,'' i.e., a WZW theory on $\Sigma$ coupled to chiral gauge fields $\As^{1,0}_\ii,\As^{0,1}_\oo$.  

We also compute expectation values of vertical Wilson lines (Section \ref{sec:WilsonLines}) and show that they are given by field insertions in this WZW theory. This extends the CS-WZW correspondence to the level of observables. See the discussion in Section \ref{sec:CSWZW}. 

Formally, after integrating over the residual group-valued field $g$, the Chern--Simons partition function agrees with the partition function of the gauged WZW theory. One can use this to heuristically show the holomorphic factorization of the WZW model, as argued in Section \ref{sec: hol fact WZW}. 

Different versions of the relation between nonabelian Chern--Simons theory and the WZW model were studied in the literature before. A connection somewhat close to the one we are discussing appeared in \cite[Section 4]{BT}; one important difference is that we are focusing on the homological (BV-BFV) aspects 
obtaining WZW as an effective BV theory. The other point is that the logic of our computation is different (it is a pure perturbative computation; it does not rely on quantum gauge invariance but has it as a result), see Remark \ref{rem: comaprison with Blau-Thompson}.

 \subsubsection{Seven-dimensional Chern--Simons theory and the CS-BCOV correspondence}\label{sss: intro 7d CS vs KS}
 Finally, let us consider seven-dimensional Chern--Simons theory on a cylinder $N = I \times M$ with $M$ a Calabi--Yau manifold. In particular, the complex structure on $M$ defines a lagrangian splitting of $F^\dd_{M} = \Omega^3(M,\CC)$: 
$$\Omega^3(M, \CC) = X^+ \oplus X^- , \qquad X^+ = \Omega^{3,0}(M) \oplus \Omega^{2,1}(M), \qquad X^- = \Omega^{1,2}(M) \oplus \Omega^{0,3}(M).$$ 
This lagrangian splitting determines a polarization of $F^\dd_M$. 

On a Calabi--Yau manifold, however, there is another polarization of $F^\dd_M$ due to Hitchin \cite{Hitchin}. Namely, a complex three-form $A$ on $M$ which is not itself decomposable, i.e., a wedge product of three 1-forms on $M$, has a decomposition $A = A^{+,\nl} + A^{-,\nl}$ where $A^{\pm,\nl}$ are decomposable three-forms uniquely defined up to exchange of $+$ and $-$. This polarization is discussed in \ref{sss: Hitichin}. 

We can compute the partition function $Z$ on the cylinder with 
$$\BB_\ii = \Omega^{\leq 2}(M,\CC) \oplus X^+ \ni (\mathsf{c}_\ii,\As_\ii^{+,\mr{l}}) $$ 
and 
$$\BB_\oo = \Omega^{\leq 2}(M,\CC) \times X^{-,\nl} \ni (\mathsf{c}_\oo,\As_\oo^{-,\mr{nl}}). $$
In this case, Theorem \ref{thm:SummaryI} holds---as shown in Section \ref{sec: 1d par ghosts nonlinear}---and Theorem \ref{thm:SummaryII} holds because the constraint $d_MA = 0$ is linear in the momentum $A^{+,\nl}$. Thus, the physical part of the effective action coincides with the Hamilton--Jacobi action computed in \cite[Section 7.6]{HJ} and is given by \begin{multline*} S_\ph[\As^{+,\mr{l}}_\ii,\As^{-,\nl}_\oo,\As^{2,0}_\Ires,\As^{1,1}_\Ires,\As^{0,2}_\Ires] \\ = \frac12 \int_M \dd \As^{1,1}_\Ires \bar\dd \As^{1,1}_\Ires  + \int_M \As^{+,\mr{l}}_\ii d\As^{0,2}_\Ires + \As^{2,1}_\ii\bar\dd \As^{1,1}_\Ires  
- G(\As^{+,\l}_\ii +  d\As^{2,0}_\Ires + \dd\As^{1,1}_\Ires;\As^{-,\nl}_\oo)
\end{multline*}
with no quantum corrections in our choice of gauge fixing. Here $\As^{p,q}_\Ires$ denote 2-forms of Hodge type $(p,q)$ which are the residual fields of ghost number 0, and $G(A^{+,\mr{l}},A^{-,\mr{nl}})$ is the generating function satisfying $\delta G = A^{-,\mr l}\delta A^{+,l} - A^{+,\mr{nl}}\delta A^{-,\nl}$. Since the partition function $Z$ satisfies, by Theorem \ref{thm:SummaryII}, the modified quantum master equation, when changing the gauge fixing the partition function changes by an $(\Omega - \hbar^2\Delta)$-exact term.

 The partition function $Z$ can be interpreted as the integral kernel of a generalized Segal--Bargmann transform, see Appendix \ref{app:SB}. We thus show that the approximation used by Gerasimov and Shatashvili in \cite{GS}---where they were only assuming this representation to be  true in the semiclassical limit---is exact.
  Following \cite{GS}, we can then relate the Chern--Simons partition function to the partition function of Kodaira--Spencer or BCOV  theory, defined in \cite{BCOV} and recalled in Appendix \ref{app:KS}, as follows. One can consider a certain ($\Omega$-closed) state $\psi(\As_\oo^{-,\nl},\mathsf{c}_\oo)$ in the $A^{-,\nl}$-representation. We then apply the operator $Z$ to $\psi$ --- by multiplying and formally integrating over $\BB_\oo$---  and show that the result $Z\cdot\psi$ is still $(\Omega - \hbar^2\Delta)$-closed.  Next we identify a gauge-fixing lagrangian $\LL  \subset \VV$ and compute $Z''[\As^{3,0}_\ii,\As^{2,1}_\ii,c_\ii] = \int_\LL Z\cdot \psi.$ One can then show that in ghost number 0 
$$Z''_\ph[\As^{3,0}_\ii=\omega_0,\As^{2,1}_\ii = x] \sim  Z_{KS}[x] ,$$
where $\omega_0$ is a normalized generator of $H_{\dd}^{3,0}(M)$, $x$ is a $\dd$-harmonic form, and $Z_{KS}[x]$ is the Kodaira-Spencer partition function with background $x$. For the precise statement see Section \ref{sss: CS vs KS}.
In particular, we see that this statement holds not only in the semiclassical approximation to $Z_{CS}$ as in \cite{GS}, but that it is exact.  For general boundary conditions $\As^{3,0}_\ii,\As^{2,1}_\ii$, the Chern--Simons partition function can be computed from the mQME.

\subsection{Structure of the paper}\label{sec:Outline}
We summarize the remaining results by outlining the structure of the paper. \\
In Section \ref{sec:HJ}, we recall the construction of the Hamilton--Jacobi action from \cite{HJ}, and the important examples (abelian and nonabelian Chern--Simons theory) from that paper.

In Section \ref{ss: 1d ab CS}, we consider as a warm-up the example of the abelian 1D CS theory. This is the 1D AKSZ theory with target a vector space $\g$ that we assume to have an inner product and a compatible complex structure $J$, so that $\g \otimes \CC = \g^+ \oplus \g^-$ splits into $\pm i$-eigenspaces of $J$. We then compute the partition function for both $\BB_\ii = \BB_\oo = \g^+$ in Section \ref{ss: 1d hol-hol}    and $\BB_\ii =\g^-, \BB_\oo = \g^+$ in Section \ref{ss: 1d ahol-hol} and comment briefly on the  Theorems \ref{thm:SummaryI} and \ref{thm:SummaryII} in this context (which are in this case rather trivial). 

In Section \ref{s: 3dCS}, we consider the 3D abelian Chern--Simons theory on $I\times \Sigma$ as a 1D theory with values in $\g = \Omega^\bullet(\Sigma)$. Choosing a complex structure on $\Sigma$, we split $\g = \g^+ \oplus \g^-$ and consider   $\BB_\ii = \BB_\oo = \g^+$ in Section \ref{ss: hol-hol} and $\BB_\ii =\g^-, \BB_\oo = \g^+$ in Section \ref{ss: ahol-hol}. In both cases, we comment on the HJ and mQME properties, and in the second case also the pushforward to the minimal space of residual fields and the relation to the 2D free boson CFT is discussed.

In Section \ref{s: par ghosts}, we consider the case where 
$\BB_\ii$ and $\BB_\oo$ both have components only in nonnegative ghost number, and agree in positive ghost number. We call these ``parallel ghost polarization''. In Section \ref{sec: 1d par ghosts complex}, we consider 1D Chern--Simons theory with values in a complex, with opposite linear polarizations in ghost number 0. In Section \ref{sec: 1d par ghosts nonlinear}, we consider the same theory with a possibly nonlinear polarization on the $\oo$-boundary. These subsections serve as a toy model for the higher-dimensional Chern--Simons theories considered later. In Section \ref{ss: non-ab CS parall ghost}, we return to the three-dimensional Chern--Simons theory, with opposite linear polarization in degree 0. After briefly studying again the abelian case in Section \ref{sec: 3d par ghost ab CS}, we discuss the nonabelian case in more detail, the results are summarized already in Section \ref{sss: intro 3d non ab CS} above. Finally in Section \ref{ss: non-ab hol-hol}, we consider the nonabelian theory with parallel polarizations both in the ghost and physical sectors.  

In Section \ref{sec:higherdimCS}, we turn to Chern--Simons theories of arbitrary dimension. We consider both linear polarizations that are transversal in the ghost sector at opposite ends (Section \ref{ss: higher dim CS transversal}) and parallel in the ghost sector (Section \ref{ss: higher dim CS parallel}).  Finally in Section \ref{s:7dCSKS} we turn our attention to nonlinear polarizations at one boundary, in particular the 7D case with Hitchin polarization, that was summarized in Section \ref{sss: intro 7d CS vs KS} above. 

The appendices contain some complementary material. In Appendix \ref{app:SB} we 
show how to recover the usual Segal--Bargmann transform as a BV-BFV partition function on an interval with a particular choice of boundary polarizations. This is an illustration of the maxim that topological partition functions on cylinders 
yield instances of generalized Segal--Bargmann transforms. We also comment on the
 contour integration in the complexified space of fields. In Appendix \ref{app:KS}, we recall very briefly the Kodaira--Spencer theory of deformations of complex structures and the BCOV action functional. 

\subsection{Outlook}

Finally, let us point out some interesting directions for further research. 
\begin{itemize}
\item All our partition functions depend nontrivially on the choice of complex structure on the boundary.\footnote{As a matter of fact, they even depend on the \emph{Riemanian metric} inducing the complex structure on the boundary, a phenomenon known as conformal anomaly. See Remark \ref{rem: ahol-hol fully integrated}.(b).} This dependence should be described by extending the partition function to a (projectively flat) section of a vector bundle over the moduli space of complex structures on the boundary, for instance the one constructed in \cite{ADW}.

\item Recently \cite{MW} it has been suggested that the partition function of a 3D $U(1)$ Chern--Simons theory can be computed by averaging over Narain moduli space of boundary CFT's. We believe our methods could be generalized to include nontrivial flat bundles and we plan to investigate this proposal. 

\item Our results on the CS-WZW correspondence strongly suggest that the space of $n$-point conformal blocks can be described as the $\Omega$-cohomology (see Section \ref{sec:CSWZW}; the genus-zero case of this statement was a result of \cite{ABM}). This would provide an interesting new description of the space of conformal blocks.  We also hope it would lead to a better understanding of the relationship between Chern--Simons theory and the KZ(B) connection. 

\item It would be highly interesting to compare our findings on the CS-BCOV correspondence to other approaches to the subject such as \cite{CL}.

\item Another proposal to compute holographic duals of action functionals from BV-BFV formalism on manifolds with boundary was made by the second and third authors together with M. Schiavina in \cite{Hol}. The point of view there was more focused on descent equations and extensions to higher codimension, while the present paper emphasizes the role of the BV effective action. The relationship between the two constructions needs to be explored. 

\item In \cite{GW}, the authors show that there exists a 1-loop exact quantization of Chern-Simons theory on $\mathbb{R}^3$, which is similar to the result that we obtain here (in our case, the wheels appear only in the ghost sector of the theory). The gauge fixing they use is different from ours, and the focus there is not on partition functions, but rather on the anomaly-freeness of the theory, a problem which does not appear in our gauge fixing. Nevertheless it would be interesting to investigate this gauge-fixing from the BV-BFV viewpoint and compare it with our current results.

\end{itemize}

\newcommand{\ddd}{
d}
\newcommand{\EE}{
e}
\newcommand{\II}{
i}
\newcommand{\EL}{\text{\sl EL}}
\subsection{Notations and Conventions}
\begin{quote}
\emph{This is a quantum paper and notations fluctuate. Fixing one makes a complementary one explode.
}
\end{quote}

In this paper we study field theories on cylinders $N = I \times \Sigma$ from different viewpoints, with $I=[0,1]$ the interval with its standard orientation, and a $\Sigma$ a $(d-1)$-dimensional closed oriented manifold. Notations are adapted to the individual sections. \\
We are considering 
Chern--Simons-type theories, in different dimensions and with different targets. The Chern--Simons superfield is denoted $\mathcal{A} \in \Omega^\bullet(I \times \Sigma, \Pi\g)$.

 When we are considering 1-dimensional theories (with a possibly infinite-dimensional target) as in Sections \ref{ss: 1d ab CS}, \ref{sec: 1d par ghosts complex}, \ref{sec: 1d par ghosts nonlinear}, we denote the components of the superfield $\mathcal{A} = \psi + A$, where $\psi\in \Omega^0(I,\g)$ and $A \in \Omega^1(I,\g)$. Decoration of $\psi, A$ with superscripts denotes components with respect to\ a splitting of $\g$. Decoration of $\psi,A$ with subscripts denotes components with respect to\ a splitting of $\Omega^\bullet(I)$. Typical subscripts are $\ii$ and $\oo$, denoting fields supported on the $\ii$ or $\oo$ boundary (elements of $\BB_\ii$ or $\BB_\oo$) respectively, $\res$ for residual fields (elements of $\VV$), and $\fl$ for fluctuations (elements of $\LL$).  \\
 When we are thinking about higher-dimensional theories (still on cylinders) as in Section \ref{s: 3dCS} , \ref{sec:higherdimCS}, we denote the components of $\mathcal{A} = \As + \ddd t \cdot \AsI$, with $\As,\AsI \in \Omega^{0,\bullet}(I \times \Sigma)$. Superscripts now denote components of homogeneous form degree in $\Sigma$. \\
In Sections \ref{ss: non-ab CS parall ghost}, \ref{ss: non-ab hol-hol}, it is convenient to revert to a more ``traditional'' notation $\mathcal{A} =  c + A + A^* + c^*$, here the nonhomogeneous differential form is split according to form degree. There we also denote the (finite-dimensional) coefficient Lie algebra by $\ggg$. 
\begin{Ack}
We thank Samson Shatashvili for suggesting the study of 7D abelian Chern--Simons theory in the quantum BV-BFV formalism, now in Section~\ref{s:7dCSKS}, which was the original motivation out of which this paper 
and \cite{HJ}
grew. 
We also thank Ivan Contreras, Philippe Mathieu, Nicolai Reshetikhin, Pavel Safronov, Michele Schiavina,
Stephan Stolz, Alan Weinstein, Ping Xu and Donald Youmans for useful discussions. We also thank an unknown referee for very helpful and inspiring comments and suggestions.
\end{Ack}
\section{Constrained systems and generalized Hamilton--Jacobi actions}\label{sec:HJ}
We start with a short review of the results of \cite{HJ} that are relevant for this paper. We focus on action functionals of the form\footnote{Such an action functional is the classical part of an AKSZ theory \cite{AKSZ} on an interval $I$. See also \cite{BDMGV} for the study of such a theory in the Dirac formalism.}
\[
S[p,q,e]=\int_I(pdq-\langle H(p,q),e\rangle),
\]
where $I$ is the interval $[0,1]$, $(p,q)$ are coordinates on a given cotangent bundle $T^*B$ (and, 
by abuse of notation, also stand for a map from $I$ to $T^*B$),
$e$ is a one-form on $I$ taking value in some vector space $\frh$, and $H$ is a given map $T^*B\to\frh^*$. The pairing between the $p$ and the $q$ coordinates is understood, whereas for the pairing between $\frh$ and its dual we use the notation $\langle\ ,\rangle$. In the applications of this paper the space $\frh$ and the manifold $B$ are infinite-dimensional (typically, Fr\'echet spaces).

To be more precise, $T^*B$ denotes some given vector bundle over $B$ with a nondegenerate pairing to $TB$; we denote by $\theta$ the canonical one-form on it (which we will also call the Noether 1-form in the following) and by $\omega=d\theta$ the canonical symplectic form; by $\frh^*$ we denote a given subspace of the dual of $\frh$ such that its pairing to $\frh$ is still nondegenerate.
The first term in the action can also be written in coordinate-free way as $x^*\theta$ in terms of a path $x\colon I\to T^*B$.
For the second term, we 
assume a given map $X$ from $\frh$ to the vector fields on $T^*B$ and define, up to carefully chosen constants, the map $H$ by $\iota_X\omega=d H$. (Note that $H$ is a map from $\frh$ to the functions on $T^*B$, and we assume that, dually, it belongs to the chosen subspace $\frh^*$.)

\begin{example}[3D Chern--Simons theory]\label{exa:3DCS}
Consider
 3D
 Chern--Simons theory for a quadratic Lie algebra $\ggg$ on $I\times\Sigma$, where $\Sigma$ is a closed oriented surface with a chosen complex structure.
  The complexified phase space 
 is $T^*B=\Omega^{1,0}(\Sigma)\otimes\ggg\oplus\Omega^{0,1}(\Sigma)\otimes\ggg$ with
 $B=\Omega^{0,1}(\Sigma)\otimes\ggg$. We then have $\frh=\Omega^0(\Sigma)\otimes\ggg$ and $\frh^*=\Omega^2(\Sigma)\otimes\ggg$. The pairings are induced by the given pairing on $\ggg$ and by integration on $\Sigma$. An element of $T^*B$ is a connection one-form, the map $X$ yields the gauge transformations, and $H$ is the curvature two-form.
\end{example}


We split the fields into two classes: the dynamical field (the map $x$ to $T^*B$) and the Lagrange multiplier (the $\frh$-valued one-form $e$). We accordingly split the Euler--Lagrange (EL) equations into the evolution equation, the variations with respect to the dynamical field,
\[
dx =  \langle X,e \rangle,
\]
and the constraints, the variations with respect to the Lagrange multiplier,
\[
H=0.
\]
Note that the constraints must be satisfied at every time.

We define the evolution relation $L$ as the possible boundary values (at $0$ and $1$ in $I$) that a solution to the EL equations can have. 
Assuming it to be a (possibly immersed) submanifold, $L$ turns out to be an isotropic submanifold of $\overline{T^*B}\times T^*B$ \cite{CMR14}, where the bar means that we use the opposite symplectic form. We assume it to be actually split lagrangian (i.e., for every point $v$ of $L$, its tangent space $T_vL$, which is isotropic in general, must have an isotropic complement).\footnote{In particular, this implies that $L$ is lagrangian, i.e., that the symplectic orthogonal of $T_vL$ is $T_vL$ itself  for every point $v$ of $L$.}
Thanks to the Hodge decomposition theorem, this assumption is satisfied in all the examples of this paper. 

We then denote by $C$ the projection of $L$ on either factor $T^*B$ and we assume it to be a submanifold. As observed in \cite{CMRcq}, if $L$ is lagrangian, then $C$ is coisotropic. In particular, at every point $c\in C$ and for every $\xi\in\frh$,
the  vector  $\langle X(c),\xi \rangle$ is tangent to $C$. Moreover,
the span of these vectors at each point defines an involutive distribution on $C$, called the characteristic distribution (the reduced phase space of the theory is then defined as the reduction of $C$ with respect to its characteristic distributon).\footnote{In the case of 3D Chern--Simons theory, Example~\ref{exa:3DCS}, $C$ is the space of flat connections on the surface $\Sigma$, and the reduced phase space is the space of flat connections modulo gauge trnasformations.}

In the case at hand, we have that $C$ is the zero locus of $H$. The evolution equation, for a given $e$, is then the hamiltonian evolution for the (time-dependent) hamiltonian $\langle H,e\rangle$. Since $C$ is coisotropic, this evolution does not leave $C$---so it is enough to implement the constraint $H=0$ at the initial, or final, endpoint---and lies along the characteristic distribution. It follows that the evolution relation $L$ consists of pairs of points on $C$ lying on
the same leaf of the characteristic distribution.

Next we are interested in solutions to the EL equations. For this we have to fix boundary conditions; namely, we have to choose lagrangian submanifolds $L_0$ and $L_1$ of $T^*B$ at the endpoints of $I$, and we assume that the intersection of $L_0\times L_1$ with the evolution relation $L$ is discrete.
\footnote{More precisely, one looks for solutions of the EL equations that are critical points for the action. This requires changing the boundary one-form by an exact term in such a way that it vanishes on $L_0$ and $L_1$. In particular, this can only happen if $L_0$ and $L_1$ are isotropic. Moreover, we want the intersection of $L_0\times L_1$ with $L$ to be discrete, so that locally the solution is unique. At each intersection point, the tangent spaces to $L_0\times L_1$ and to $L$ are then complementary, which implies that they are not only isotropic but split lagrangian. 
We want this to happen for generic boundary conditions. 
This is the reason why $L$ is required to be split lagrangian.} 
For simplicity, we will work with a unique solution.
We are also interested in letting boundary conditions vary, so we consider families of lagrangian submanifolds (polarizations). Concretely, at the initial endpoint we take the $L_0$s to be the fibers of $T^*B$, which we then parametrize by $B$, whereas at the final endpoint we realize $T^*B$ as $T^*B'$, with $B'$ a possibly different manifold, and take the $L_1$s to be the fibers of $T^*B'$, which we then parametrize by $B'$.\footnote{In the examples of this paper, $B$ is a vector space, so $T^*B$ is of the form $B^*\oplus B$. We are also interested in complex polarizations. In the case at hand, this simply means allowing $B$ to be a complex vector space. Then the complexified phase space is $B^*\oplus B$.
} 
We want the variations of the action with the given boundary conditions not
to have boundary terms. This is automatically satisfied at the initial point, where we take the polarization $T^*B$, but 
we have to adapt the action to the canonical one-form $\theta'$ of $T^*B'$ at the final endpoint. For this, we assume that there is a function $f$ on $B\times B'$ such that $\theta = \theta' + df$ and we modify the action to
\[
S^f[p,q,e]:=S[x,e]-f(q(1),Q(p(1),q(1))),
\]
where $Q$ is the base coordinate of $T^*B'$. 

We define the Hamilton--Jacobi (HJ) action $S^f_\text{HJ}$ of the theory with respect to the given polarizations as the evaluation of $S^f$ on a solution (which we assume to be unique) to the evolution equation for each choice of $e$. Note that $S^f_{\text{HJ}}$ is a function on $B\times B'\times\Omega^1(I,\frh)\ni(q_\ii,Q_\oo,e)$. Also note that we do not impose the constraints in the definition of $S^f_{\text{HJ}}$. It was proved in \cite{HJ}  
$i)$ that $S^f_{\text{HJ}}$ is invariant under certain equivalence transformations of $e$, and 
$ii)$ that it is a generalized generating function for the evolution relation $L$ with respect to the given polarizations.

Let us elaborate on this. As for 
$i)$, assume for simplicity that, as in every example of this paper, $\frh$ is actually a Lie algebra and $H$ is an equivariant momentum map (for the infinitesimal action $X$ of $\frh$ on $T^*B$). Then $e$ may be regarded as a connection one-form on $I$. The equivalence transformations are in this case gauge transformations that are trivial at the endpoints. As for 
$ii)$, the statement means that, upon setting to zero the variation of $S^f_{\text{HJ}}$ with respect to (the equivalence class of) $e$, 
we recover the final $P$ variables of a solution as the variation of $S^f_{\text{HJ}}$ with respect to $Q_\oo$ and
the initial $p$ variables of a solution as minus the variation of $S^f_{\text{HJ}}$ with respect to $q_\ii$.

Explicit examples, relevant for this paper, are discussed in \cite[Section~7]{HJ}. We recall the results.

\begin{example}[Abelian 3D Chern--Simons theory]\label{ex:HJ ab CS}
We use the notations of Example~\ref{exa:3DCS}, but now with $\g=\RR$.
We take the initial polarization as $T^*B$, with $B=\Omega^{0,1}(\Sigma)$, and the final polarization 
as $T^*B'$, with $B'=\Omega^{1,0}(\Sigma)$.\footnote{
This polarization is known in the literature on Chern--Simons theory. In a context close to the context of the present paper, it was discussed in \cite[Section 2.4.4]{CMR13} and in \cite{ABM}. 
} 
We denote by $\dd$ and $\bar\dd$ the Dolbeault operators.
The HJ action then reads
\[
 S^f_\text{HJ}=\int_\Sigma \left(\As^{1,0}_\oo \As^{0,1}_\ii + \sigma (\bar\dd  \As^{1,0}_\oo+\dd \As^{0,1}_\ii)+\frac12 \sigma \dd \bar\dd \sigma\right),
\]
with $\As^{0,1}_\ii\in B$, $\As^{1,0}_\oo\in B'$, and $\sigma\in\Omega^0(\Sigma)$.
\end{example}

\begin{example}[Nonabelian 3D Chern--Simons theory]\label{ex: HJ nonab CS}
Again we use the notations of Example~\ref{exa:3DCS}. The initial and final polarizations now are 
$T^*B$, with $B=\Omega^{0,1}(\Sigma)\otimes\ggg$, and $T^*B'$, with $B'=\Omega^{1,0}(\Sigma)\otimes\ggg$.
We assume the exponential map from $\ggg$ to the its simply connected Lie group $G$ to be surjective. In this case the gauge-invariant parameter $g\in\operatorname{Map}(\Sigma,G)$ 
is of the form 
$g=\EE^{-\sigma}$ with 
$\sigma\in\operatorname{Map}(\Sigma,\ggg)$. The HJ action then reads
\[
 S^f_\text{HJ} = \int_\Sigma \Big( \langle \As^{1,0}_\oo, g\, \As^{0,1}_\ii g^{-1} \rangle -
\langle \As^{1,0}_\oo,\bar\dd g\cdot g^{-1} \rangle 
- 
\langle  \As^{0,1}_\ii, g^{-1}\dd g  \rangle \Big)
+\mr{WZW}(g)
\]
with the Wess--Zumino--Witten term
\[
\mr{WZW}(g)=- \frac12 \int_\Sigma  \langle \dd g\cdot g^{-1}, \bar\dd g\cdot  g^{-1} \rangle -\frac{1}{12}  \int_{\Sigma \times I}\langle \ddd h\cdot h^{-1},[\ddd h\cdot h^{-1},\ddd h\cdot h^{-1}] \rangle,
\]
where $h=\EE^{(t-1)\sigma}$.\footnote{Here we are using the conventions of Section \ref{ss: non-ab CS parall ghost} (Lemma \ref{lemma: S_eff non-ab rewriting via g}) which are different from the conventions of \cite{HJ}.} 
Thus, the HJ action of Chern--Simons theory can be identified with a ``gauged WZW action'' (see for instance \cite{Gawedzki CFT 2}). This points at a deep relationship between these two theories.

\end{example}

\section{BV-BFV approach warm-up: 1D abelian Chern--Simons}\label{ss: 1d ab CS}
As a warm-up exercise before the BV-BFV treatment of 3D Chern--Simons, let us consider one-dimensional abelian Chern--Simons theory\footnote{
This is the abelian version of the theory considered in \cite{1dCS}.
} on an interval $I=[0,1]$ --- the AKSZ theory  with $\ZZ_2$-graded space of BV fields 
$$\FF=\mr{Map}(T[1]I,\Pi\g)=\Omega^\bt(I)\otimes \Pi\g.$$
Here $\g$ is a vector space of coefficients endowed with a nondegenerate inner product $(,)$ and $\Pi$ is the parity-reversal symbol. A vector in $\FF$ is the superfield $\psi+A$, with $\psi$ a $\Pi\g$-valued $0$-form and $A$ a $\g$-valued $1$-form, and the BV action is:
\begin{equation}\label{1d ab CS action}
S(\psi+A)=\int_I\frac12 (\psi, \dt \psi) 
\end{equation}
with $\dt=dt \frac{d}{dt}$ the de Rham differential on the interval $t\in[0,1]$. The odd symplectic form on $\FF$ is given by $\omega=-\int_I (\delta A, \delta \psi)$. The cohomological vector field   (BRST operator) $Q$ on $\FF$ is defined by $Q:\psi\mapsto 0,\; A\mapsto \dt\psi$.

The BFV phase space assigned to a point is $\PhiPM_{pt}=\Pi\g$, equipped with Noether $1$-form $\alpha_{pt^\pm}=\pm\frac12(\psi,\delta \psi)$ where $\pm$ corresponds to the orientation of the point; the BFV action is zero,\footnote{It is nonzero in nonabelian theory: there one has $S_{pt^\pm}=\pm\frac16 (\psi,[\psi,\psi])$.} $S_{pt}=0$. 
We are using the following sign convention for the BV-BFV structure relation: 
\begin{equation}\label{delta S = i_Q omega + alpha}
\delta S=\iota_Q \omega+\pi^* \alpha_\dd .
\end{equation}

Assume that $\g$ is equipped with a complex structure $J\in \mr{End}(\g)$, $J^2=-\mr{Id}$, compatible with the inner product.
We have a splitting of the complexification of $\g$ into $\pm i$-eigenspaces of $J$:
\begin{equation}\label{g=g+ + g-}
\g_\CC=\g^{+}\oplus \g^{-}
\end{equation}
--- the ``holomorphic'' and ``antiholomorphic'' subspaces of $\g_\CC=\g\otimes\CC$, which are  lagrangian due to compatibility between $J$ and $(,)$.

\subsection{Holomorphic-to-holomorphic boundary conditions}\label{ss: 1d hol-hol}
Consider the boundary polarization $\mr{Span}(\frac{\dd}{\dd\psi^{-}})$ imposed at both $t=0$ and $t=1$ (a.k.a. $\psi^{+}-\psi^{+}$ representation, as the partition function will depend on the boundary value $\psi^{+}_\mr{in}$ at $t=0$ and boundary value $\psi^{+}_\mr{out}$ at $t=1$).  For compatibility with this polarization, we need to modify the action (\ref{1d ab CS action}) by boundary terms:
\begin{equation} \label{1d ab CS Spol}
S\mapsto S^\pol=S+\frac12 (\psi^+,\psi^-)\big|_{t=1}-\frac12 (\psi^+,\psi^-)\big|_{t=0} .
\end{equation}
Then the corresponding boundary $1$-form is:
\begin{multline*} \alpha_{\dd I}^\pol=
\left.\left(\frac12(\psi,\delta\psi)+\delta\frac12(\psi^+,\psi^-)\right)\right|_{t=1}-
\left.\left(\frac12(\psi,\delta\psi)+\delta\frac12(\psi^+,\psi^-)\right)\right|_{t=0}\\
=
(\psi^-,\delta\psi^+)\big|_{t=1}- (\psi^-,\delta\psi^+)\big|_{t=0} 
\end{multline*}
--- the canonical $1$-form in the chosen representation,
as desired (cf. Section \ref{ss: intro CS BVBFV}; see \cite[Section 9]{HJ} and references therein for more details).
The space of fields $\FF$ is fibered over the base $\BB=\Pi\g^+\oplus\Pi\g^+ =\{(\psi^+_\mr{in},\psi^+_\mr{out})\}$ with the fiber
$$\YY=\Omega^\bt(I,\dd I;\Pi\g^+)\oplus \Omega^\bt(I;\Pi\g^-) .$$
Here the first summand on the r.h.s. is $\g^+$-valued forms vanishing on the boundary and the second summand is $\g^-$-valued forms with free boundary conditions. The cochain complex $\YY$ admits the following splitting (a Hodge decomposition):
\begin{multline}\label{1d ab CS hol-hol Hodge}
\YY=
\underbrace{\big(dt\cdot \g^+\oplus 1\cdot \Pi\g^-\big)}_{\VV}\bigoplus\\
\bigoplus
\underbrace{\big(\Omega^0(I,\dd I;\Pi\g^+)\oplus \Omega^0_{\int=0}(I;\Pi\g^-)\big)}_{\YY'_{K-ex}}\bigoplus \underbrace{\big(\Omega^1_{\int=0}(I;\g^+)\oplus \Omega^1(I;\g^-)\big)}_{\YY'_{d-ex}} .
\end{multline}
Here the first term (``residual fields'') is a deformation retract of $\YY$ (in this case, in fact, its cohomology). The subscript $\int=0$ means ``forms with vanishing total integral'' (against $dt$ in $0$-form case). The two last terms jointly form an acyclic subcomplex $\YY'$ of $\YY$, split into a $d$-exact part and its direct complement --- the $K$-exact part, where $K\colon \YY^\bt\ra \YY^{\bt-1}$ is the chain homotopy between identity and projection onto $\VV$. Explicitly, $K$ kills all $0$-forms and acts on $\g^+$- and $\g^-$-valued $1$-forms as follows:
\begin{equation}\label{1d ab CS hol-hol K}
K\colon  \begin{array}{lcl}
dt\,g^+(t) &\mapsto& \int_0^t dt' g^+(t')-t\int_0^1 dt' g^+(t') ,  \\
dt\, g^-(t) &\mapsto& -\int_t^1 dt' g^-(t')+ \int_0^1 dt'\, t'\, g^-(t') .
\end{array}
\end{equation}
The integral kernel of $K$ is the propagator:
\begin{equation}\label{1d ab CS hol-hol eta}
\eta(t,t')=\pi^+\otimes(\theta(t-t')-t)+\pi^-\otimes(t'-\theta(t'-t)) ,
\end{equation}
where $\pi^\pm$ are the projectors from $\g$ to $\g^\pm$ and $\theta$ is the step function.

The BV-BFV partition function is given by the following path integral (see \cite{CMR15} for the general construction):
\begin{multline}\label{1d ab CS Z PI}
Z(\psi^+_\mr{in},\psi^+_\mr{out}; \psi^-_\mr{res}, A^+_\mr{res})=\\
=
\int_{\YY'_{K-ex}\subset \YY'} \mc{D} \psi^+_\mr{fl} \mc{D} \psi^-_\mr{fl} \;
e^{\frac{i}{\hbar}S^\pol \left(
\til{\psi^+_\mr{in}}+\til{\psi^+_\mr{out}}+\psi^+_\mr{fl}+\psi^-_\mr{res}+\psi^-_\mr{fl}+dt\cdot A^+_\mr{res}
\right)} .
\end{multline}
Here the notations are:
\begin{itemize}
\item $\til{\psi^+_\mr{in}}$ is the discontinuous extension\footnote{
See \cite{CMR15} and \cite[Section 9]{HJ} for the details on discontinuous extension of boundary fields.
} of $\psi^+_\mr{in}$ at $t=0$ by zero at $t>0$; likewise, $\til{\psi^+_\mr{out}}$ is the discontinuous extension of $\psi^+_\mr{out}$ at $t=1$ by zero  at $t< 1$; 
\item the ``fluctuation'' $(\psi^+_\mr{fl}, \psi^-_\mr{fl})\in \YY'_{K-ex}$ is the field we integrate over (while setting to zero the component in $\YY'_{d-ex}$ is the gauge fixing);
\item $(\psi^-_\mr{res}, dt\cdot A^+_\mr{res})\in \VV$, with $\psi^-_\res\in \Pi\g^-$ and $A^+_\res\in \g^+$, is the residual field.
\end{itemize}
Continuing the computation (\ref{1d ab CS Z PI}), we have the Gaussian integral
\begin{multline}\label{1d ab CS hol-hol Z computation}
Z=
\int \D \psi^+_\fl \D \psi^-_\fl\; \exp\frac{i}{\hbar}\Big(
\underbrace{\frac12 \int_I  \big( \psi^-_\res+\psi^-_\fl, \dt(\til{\psi^+_\oo}+ \til{\psi^+_\ii}) \big)}_a
+\\
+\underbrace{\frac12\int_I \big(\til{\psi^+_\oo}+ \til{\psi^+_\ii}, \dt (\psi^-_\res+\psi^-_\fl) \big)}_b+
\underbrace{\frac12 \int_I \big( \psi^-_\res+\psi^-_\fl , \dt\psi^+_\fl \big) }_c+
\underbrace{\frac12 \int_I \big(\psi^+_\fl, \dt(\psi^-_\res+\psi^-_\fl)\big)}_d+
\\
+\underbrace{\frac12 (\psi^+_\oo,\psi^-_\res+\psi^-_\fl\big|_{t=1})}_e \underbrace{-
\frac12 (\psi^+_\ii,\psi^-_\res+\psi^-_\fl\big|_{t=0})}_f
\Big)
\\ =\int   \mc{D} \psi^+_\mr{fl} \mc{D} \psi^-_\mr{fl} \; e^{\frac{i}{\hbar}\left( \int_I (
\psi^-_\fl, \dt \psi^+_\fl
) +(\psi^+_\oo,\psi^-_\res+\psi^-_\fl(1))- (\psi^+_\ii,\psi^-_\res+\psi^-_\fl(0))\right)}\\
=e^{\frac{i}{\hbar} (\psi^+_\oo-\psi^+_\ii,\psi^-_\res)} .
\end{multline}
Here the terms in the first expression above are:
\begin{itemize}
\item Term $a$ is a pure boundary term, in fact $a=e+f$, which leads to $\frac12$ factors of the boundary terms $e,f$ being doubled and replaced by $1$ in the second equality in (\ref{1d ab CS hol-hol Z computation}).
\item $b=0$.
\item $c=d=\frac12 \int_I (\psi^-_\fl,\dt\psi^+_\fl)$.
\end{itemize}

\subsection{Antiholomorphic-to-holomorphic boundary conditions}\label{ss: 1d ahol-hol}
Next, consider imposing the polarization $\frac{\dd}{\dd \psi^+}$ at $t=0$ and $\frac{\dd}{\dd\psi^-}$ at $t=1$ (a.k.a. $\psi^--\psi^+$ representation: we are fixing the boundary value  $\psi^-_\ii$ at $t=0$ and $\psi^+_\oo$ at $t=1$). The ``polarized action'' (the counterpart of (\ref{1d ab CS Spol})) in this case is:
\begin{equation}\label{1d ab CS ahol-hol S_pol}
S^\pol=S+\frac12(\psi^+,\psi^-)\big|_{t=1}+\frac12(\psi^+,\psi^-)\big|_{t=0}
\end{equation}
and the corresponding boundary $1$-form is:
\begin{multline*}
\alpha^\pol_{\dd I}=
\left.\left(\frac12(\psi,\delta\psi)+\delta\frac12(\psi^+,\psi^-)\right)\right|_{t=1}-
\left.\left(\frac12(\psi,\delta\psi)-\delta\frac12(\psi^+,\psi^-)\right)\right|_{t=0}\\
=
(\psi^-,\delta\psi^+)\big|_{t=1}- (\psi^+,\delta\psi^-)\big|_{t=0} .
\end{multline*}
This $1$-form vanishes along the chosen polarization, as desired.

Next, the fiber of the space of fields $\FF$ over the base $\BB=\Pi\g^-\oplus \Pi\g^+=\{(\psi^-_\ii,\psi^+_\oo)\}$ is the complex
\begin{equation}\label{1d ab CS ahol-hol Y}
\YY=\Omega^\bt(I,\{1\};\Pi\g^+)\oplus \Omega^\bt(I,\{0\};\Pi\g^-) ,
\end{equation}
which admits the following decomposition:
\begin{multline}\label{1d ab CS ahol-hol Hodge}
\YY=\underbrace{\big(dt\cdot \g_{\CC} \oplus (1-t)\cdot \Pi\g^+\oplus t\cdot \Pi\g^-\big)}_{\VV}
\bigoplus \\
\bigoplus \underbrace{\big(\Omega^0_{\int=0}(I,\{1\};\Pi\g^+)\oplus \Omega^0_{\int=0}(I,\{0\};\Pi\g^-)\big)}_{\YY'_{K-ex}} \bigoplus\\
\bigoplus \underbrace{
\begin{array}{l}
\Big(\Big\{g^+(t)dt \in \Omega^1(I;\g^+)\;\Big|\; \int_I dt\,g^+(t)\cdot t=0\Big\}\oplus\\
 \oplus \Big\{g^-(t)dt \in \Omega^1(I;\g^-)\;\Big|\; \int_I dt\,g^-(t)\cdot (1-t)=0\Big\}
\Big)
\end{array}
}_{\YY'_{d-ex}}   
\end{multline}
Again, this is a splitting of $\YY$ into a deformation retract\footnote{
Note that here we have chosen $\VV$ to be larger than cohomology (which in fact vanishes in this case).
} and an acyclic subcomplex, with the latter split in turn into the $d$-exact part and a direct complement --- the $K$-exact part, with the chain homotopy $K$ taking the form
$$ 
K:\begin{array}{lcl}
dt\, g^+(t) &\mapsto & 
-\int_t^1 dt'\, g^+(t')+ 2(1-t) \int_0^1 dt' \, t'\,g^+(t') , \\
dt\, g^-(t) &\mapsto &  
\int_0^t dt'\, g^-(t')-2t \int_0^1 dt' \, (1-t')\, g^-(t') .
\end{array}
$$
Its integral kernel --- the propagator --- is
\begin{equation}\label{1d ab CS ahol-hol eta}
\eta(t,t')=\pi^+\otimes \big(-\theta(t'-t)+2(1-t)\,t'\big)+\pi^-\otimes \big(\theta(t-t') -2t\, (1-t')\big) .
\end{equation}
We write an element of the space of residual fields $\VV$ as $(1-t)\cdot \psi^+_\res+t\cdot \psi^-_\res+dt \cdot A_\res$, with $\psi^+_\res\in \Pi\g^+, \psi^-_\res\in \Pi\g^-,A_\res\in \g_{\CC}$.

The BV-BFV partition function is:
\begin{multline}\label{1d ab CS Z ahol-hol}
Z(\psi^-_\ii,\psi^+_\oo;\psi^+_\res,\psi^-_\res,A_\res)=\\
=
\int_{\YY'_{K-ex}\subset \YY'} \mc{D}\psi^+_\fl\, \mc{D}\psi^-_\fl\; e^{\frac{i}{\hbar}S^\pol\left(
\til{\psi^-_\ii}+\til{\psi^+_\oo}+ (1-t)\cdot\psi^+_\res +t\cdot \psi^-_\res+\psi^+_\fl+\psi^-_\fl+dt\cdot A_\res
\right)}\\
=\int  \mc{D}\psi^+_\fl\, \mc{D}\psi^-_\fl\; \exp\frac{i}{\hbar}\Big(
\int (\psi^-_\fl,\dt \psi^+_\fl)
+\frac12 (\psi^-_\res,\psi^+_\res)
+(\psi^+_\oo,\psi^-_\res+\psi^-_\fl(1)) - (\psi^-_\ii,\psi^+_\res+\psi^+_\fl(0))
\Big)\\
=\exp{\frac{i}{\hbar}\left(
\frac12 (\psi^-_\res,\psi^+_\res)+(\psi^+_\oo,\psi^-_\res)-(\psi^-_\ii,\psi^+_\res)+(\psi^-_\ii,\psi^+_\oo)
\right)} .
\end{multline}
Here the last term comes from the simple Feynman diagram with a single propagator connecting 
$\psi^+_\oo$ and $\psi^-_\ii$.


\begin{remark}\label{rem: 1d ab cs ahol-hol full integral} One can further integrate out $\psi^\pm_\res$ in (\ref{1d ab CS Z ahol-hol}) resulting in the partition function
\begin{equation}\label{1d ab CS hol-ahol Z min}
Z(\psi^-_\ii,\psi^+_\oo)=e^{\frac{i}{\hbar}(\psi^+_\oo,\psi^-_\ii)} .
\end{equation}
It corresponds to choosing the space of residual fields in (\ref{1d ab CS ahol-hol Y}) to be zero (which is possible since the full complex $\YY$ is acyclic). Thus, (\ref{1d ab CS hol-ahol Z min}) is the minimal realization of the partition function of the theory on the interval with prescribed boundary polarizations, and it is the BV pushforward of the nonminimal realization (\ref{1d ab CS Z ahol-hol}).
\end{remark}

\begin{remark}
The exponent $S_\mr{HJ}=(\psi^+_\oo,\psi^-_\ii)$ in (\ref{1d ab CS hol-ahol Z min}) is the Hamilton--Jacobi action for the theory: it is the action (\ref{1d ab CS ahol-hol S_pol}) evaluated on the (unique) solution of EL equation $\dot{\psi}=0$ satisfying the boundary conditions $\psi^-|_{t=0}=\psi^-_\ii$, $\psi^+|_{t=1}=\psi^+_\oo$. Also, $S_\mr{HJ}$ is the generating function for the evolution relation of the theory:
\begin{multline*}
 L_{S_\mr{HJ}}=\Big\{\psi|_{t=1}=\psi^+_\oo+\frac{\dd S_\mr{HJ}}{\dd \psi^+_\oo}=\psi^+_\oo+\psi^-_\ii\;\; ,\;\; 
\psi|_{t=0}=\psi^-_\ii - \frac{\dd S_\mr{HJ}}{\dd \psi^-_\ii} = \psi^-_\ii+\psi^+_\oo \Big\} \\
=\big\{ \psi|_{t=1}=\psi|_{t=0}\big\} \quad \subset\;\; \Pi \g \times \Pi\g .
\end{multline*}
This provides a simple example of Hamilton--Jacobi formalism, see 
\cite{HJ} and Section \ref{sec:HJ},
with the phase space being the symplectic \textit{super}manifold $\Pi\g$.

Moreover, the exponent in (\ref{1d ab CS Z ahol-hol}) is a \textit{generalized} generating function for the evolution relation, with $\psi^\pm_\res$ the auxiliary parameters. It can also be seen as the Hamilton--Jacobi action for the action $S^\pol+\int dt (\lambda, \psi-\frac12 \psi_\res )$ with $S^\pol$ as in (\ref{1d ab CS ahol-hol S_pol}) and where $\lambda\in \Pi \g$ (a constant along $I$) is a Lagrange multiplier.

Likewise, the exponent $(\psi^+_\oo-\psi^+_\ii,\psi^-_\res)$ in the right hand side of (\ref{1d ab CS hol-hol Z computation}) is the generalized generating function for the same evolution relation, with respect to $(\psi^+_\oo,\psi^+_\ii)$-polarization, with $\psi^-_\res$ the auxiliary parameter, cf. \cite[Section 6.1]{HJ}.

To summarize, in these three cases Threorem \ref{thm:SummaryI} holds:
\begin{itemize}
\item Case of (\ref{1d ab CS hol-ahol Z min}): 1D abelian Chern-Simons with $(\psi^+_\oo,\psi^-_\ii)$-polarization at the endpoints of the interval, with $\mathcal{V}=V_\mr{aux}=0$. 
\item Case of (\ref{1d ab CS Z ahol-hol}): 1D abelian Chern-Simons with $(\psi^+_\oo,\psi^-_\ii)$-polarization, with $\mc{V}$ parametrized  by $(\psi^+_\res,\psi^-_\res,A_\res)\in \Pi\g^+\oplus \Pi\g^-\oplus \g_\CC$ and with $V_\mr{aux}=\Pi\g^+\oplus \Pi \g^-$ parametrized by $(\psi^+_\res,\psi^-_\res)$. 
\item Case of  (\ref{1d ab CS hol-hol Z computation}): 1D abelian Chern-Simons with $(\psi^+_\oo,\psi^+_\ii)$-polarization, with 
$\mc{V}=\Pi\g^-\oplus \g^+$ parametrized by $(\psi^-_\res,A^+_\res)$ and $V_\mr{aux}=\Pi\g^-$ parametrized by $\psi^-_\res$.
\end{itemize}

Note that in these cases $V_\mr{aux}$ is a direct summand in $\mc{V}$ but it is not singled out by the condition of vanishing ghost number (rather, it is the \emph{odd} part of $\mc{V}$): space of fields of 1D abelian Chern-Simons as considered here does not admit a $\ZZ$-grading.

We also remark that in all these cases Theorem \ref{thm:SummaryII} works trivially: $\Omega=0$ in 1D abelian Chern-Simons and $\Delta$ contains a derivative in $A_\res$ on which $S_\eff$ does not depend.

\end{remark}

\section{BV-BFV approach to 3D abelian Chern--Simons on a cylinder}\label{s: 3dCS}
Consider the $3$-dimensional abelian Chern--Simons theory on a cylinder $I\times \Sigma$, with $\Sigma$ a closed oriented surface and $I=[0,1]$ the interval parametrized by the coordinate $t$. The space of BV fields, as given by the AKSZ construction, is the $\ZZ$-graded mapping space
$$\FF= \mr{Map}(T[1](I\times \Sigma), \RR[1]) =\Omega^\bt(I\times \Sigma)[1].  $$
Exploiting the fact that the source is a cylinder, we can also write it as a free 
(i.e., with a quadratic action)
 $1$-dimensional AKSZ theory with the target given by forms on $\Sigma$:
$$ \FF=\mr{Map}(T[1]I,\mr{Map}(T[1]\Sigma,\RR[1]))=\Omega^\bt(I,\Omega^\bt(\Sigma)[1]). $$
The BV action is:
\begin{equation}\label{ab CS S}
S=\int_{ I\times\Sigma} \frac12 \mc{A}\wedge d\mc{A}= \int_I  \frac 12 (\mc{A}, \dt \mc{A}) + \frac12 (\mc{A} ,d_\Sigma \mc{A}).
\end{equation}
Here $d=\dt+d_\Sigma$ is the de Rham operator on the cylinder splitting into the surface part and the interval part; the pairing is integration over the surface: $(u,v)=\int_\Sigma u\wedge v$. The field splits into $0$- and $1$-form components along $I$ as 
$$\mc{A}=\As+dt\cdot \AsI$$
with $\As,\AsI$  two $t$-dependent nonhomogeneous forms on $\Sigma$; 
their homogeneous components are prescribed internal $\ZZ$-grading (ghost number) as follows:
$\gh (\As^{(p)})=1-p$, $\gh (\AsI^{(p)})=-p$.

Comparing to the discussion of Section \ref{ss: 1d ab CS}, this theory can be seen as $1$-dimensional Chern--Simons on $I$ with coefficients in $\g=\Omega^\bt(\Sigma)$. Here the fact that $\g$ is itself a cochain complex with differential $d_\Sigma$ gives rise to an additional term in the action. Also, the fact that $\g$ has a degree $-2$ (rather than degree $0$) graded-symmetric pairing allows one to prescribe $\ZZ$-grading to fields (in such a way that the action has degree $0$ and the odd symplectic form has degree $-1$) rather than just $\ZZ_2$-grading.

The BFV phase space 
assigned to a boundary surface ($\{1\}\times\Sigma$ or $\{0\}\times\Sigma$) is 
$\PhiPM_\Sigma=\Omega^\bt(\Sigma)[1]$ which is $0$-symplectic, with the 
Noether $1$-form $\pm\int_\Sigma \frac12 \As\wedge \delta \As $ where the sign is $+$ for the out-boundary and $-$ for the in-boundary. The phase space carries a degree $-1$ BFV action 
\begin{equation}\label{ab CS S_BFV}
S_\Sigma = \pm\int_\Sigma  \frac12 \As\wedge d_\Sigma \As .
\end{equation}

Next, assume that $\Sigma$ is endowed with a complex structure, so that complex-valued $1$-forms split as $\Omega^1_\CC(\Sigma)=\Omega^{1,0}(\Sigma)\oplus \Omega^{0,1}(\Sigma)$.  Then, mimicking (\ref{g=g+ + g-}), we split the (complexified) space of all forms on $\Sigma$ as follows:
\begin{equation}\label{ab CS pol}
\underbrace{\Omega^\bt_\CC(\Sigma)}_{\g_\CC} = \underbrace{(\Omega^0_\CC(\Sigma) \oplus \Omega^{1,0}(\Sigma))}_{\g^+} \bigoplus
\underbrace{(\Omega^{0,1}(\Sigma)\oplus \Omega^2_\CC(\Sigma))}_{\g^-} .
\end{equation}
This is, clearly, a splitting into lagrangian subspaces.

\subsection{Holomorphic-to-holomorphic  boundary conditions 
}\label{ss: hol-hol}
Consider the polarization $\mr{Span}\{\frac{\delta}{\delta \As^-}\}$ on both boundary surfaces, at $t=0$ and $t=1$, i.e., the one where we prescribe boundary values  $\As^+_\ii$, $\As^+_\oo$.
The corresponding modification of the action by boundary terms adjusting for the polarization is:
$$S^\pol=S+\frac12\int_{\{1\}\times\Sigma } \As^+  \As^- 
-\frac12\int_{\{0\}\times\Sigma } \As^+  \As^-$$
and the corresponding Noether $1$-form is:
$$\alpha^\pol_{\Sigma\times\dd I}=\int_{\{1\}\times\Sigma } \As^- \delta\As^+ - 
\int_{\{0\}\times\Sigma } \As^- \delta\As^+ . $$

The fiber of the (complexified) space of fields over the space of boundary conditions $\BB=\g^+[1]\oplus \g^+[1] =\{(\As^+_\ii,\As^+_\oo)\}$ is:
$$ \YY=\Omega^\bt(I,\dd I;\g^+[1]) \oplus \Omega^\bt(I; \g^-[1]) .
$$
Hodge decomposition (\ref{1d ab CS hol-hol Hodge}) holds (where one should replace $\Pi$ with degree shift $[1]$) and the formula for the chain homotopy (\ref{1d ab CS hol-hol K}) and the propagator (\ref{1d ab CS hol-hol eta}) also. Writing out the projectors $\pi^\pm$ explicitly in our case, we obtain the following formula for the propagator: 
\begin{multline}\label{ab CS hol-hol eta}
\eta((z,\bar{z},t)\, ;\, (z',\bar{z}',t'))=\\
=\delta^{(2)}(z-z')\;\frac{i}{2}\big(-dz\wedge d\bar{z}'+dz'\wedge d\bar{z}'\big) \; \big(\theta(t-t')-t\big)+\\
+\delta^{(2)}(z-z')\; \frac{i}{2}\big(d\bar{z}\wedge dz' + dz\wedge d\bar{z} \big)\; \big(t'-\theta(t'-t)\big).
\end{multline}
--- This is a distributional $2$-form on 
$(I\times \Sigma)\times (I\times \Sigma)$.
Here $z$ is the local complex coordinate on $\Sigma$. Our convention for the normalization of the delta function is: $\int \frac{i}{2} dz\wedge d\bar{z}\; \delta^{(2)}(z-z')=1$.

Note that the propagator (\ref{ab CS hol-hol eta}) is for the $\dt$ term in the action (\ref{ab CS S}) only, whereas the $d_\Sigma$ term is treated as a perturbation. 


The space of residual fields is:
$$
\VV=dt\cdot \g^+\oplus 1\cdot \g^- = \{dt\cdot \As^0_{I\,\res} +dt\cdot\As^{1,0}_{I\,\res}+\As^{0,1}_\res+\As^2_\res \} ,
$$
where $\As^0_{I\,\res},\; \As^{1,0}_{I\,\res},\;  \As^{0,1}_\res,\; \As^2_\res$ are $t$-independent forms on $\Sigma$ of de Rham degree $0, (1,0), (0,1), 2$, respectively, with 
internal degree 
$0,-1,0,-1$, respectively.

\begin{remark}[Axial gauge] \label{rem: axial gauge}
We call the gauge fixing introduced here the \emph{axial gauge}: it sets the ``axial'' field fluctuations --- those which are 1-forms along $I$ and forms of any degree along $\Sigma$ --- to zero. 

On the level of homological algebra, for $M,N$ closed manifolds, one can construct a chain contraction $K$ from $\Omega^\bt(M\times N)$ to $H^\bt(M)\otimes \Omega^\bt(N)$ of the form $K=K_M\otimes \mr{id}_N$ with $K_M$ a chain contraction from forms on $M$ to its cohomology (cohomology can be swapped for any deformation retract of the de Rham complex in the construction). The integral kernel of $K$ --- the propagator --- is a distributional form on $(M\times N)^{\times 2}$ containing the delta form on $N\times N$. A version of the axial gauge for Chern--Simons theory was first employed in \cite{FK}. 
In our situation, $N=\Sigma$ and $M=I$ is not a closed manifold and hence the construction has to be adapted 
for boundary conditions --- which is exactly what we did above. The chain contraction, corresponding to (\ref{ab CS hol-hol eta}), has the form $K=K_{I,\dd I}\otimes \mr{id}_{\g^+}+K_I \otimes \mr{id}_{\g^-}$. We will encounter versions of this construction for different choices of boundary conditions 
further in this paper (e.g., in the case of Section \ref{ss: ahol-hol}, the chain contraction has the form $K_{I,\{1\}}\otimes \mr{id}_{\g^+}+K_{I,\{0\}}\otimes \mr{id}_{\g^-}$).\footnote{
Depending on the choice of boundary conditions (e.g., in the case of Section \ref{ss: ahol-hol}), the space of forms subject to boundary conditions 
 $\YY[-1]$ may fail to be a subcomplex of $\Omega^\bt(I\times \Sigma)$ with respect to the total de Rham differential $d_I+d_\Sigma$. However, the operator 
 $K$ we are constructing can be seen as a chain contraction for just the ``axial'' differential $d_I$.}
\end{remark}

The BV-BFV partition function is readily calculated:
\begin{multline}\label{ab CS hol-hol Z}
Z(\underbrace{\As^0_\ii,\As^{1,0}_\ii}_{\As^+_\ii},\underbrace{ \As^0_\oo,\As^{1,0}_\oo}_{\As^+_\oo};\underbrace{\As^0_{I\,\res},\As^{1,0}_{I\,\res}}_{\As^+_{I\,\res}},\underbrace{\As^{0,1}_\res,\As^2_\res}_{\As^-_\res})=\\
=
\int_{\YY'_{K-ex}\subset \YY'}\mc{D} \As^+_\fl \;\mc{D} \As^-_\fl\;e^{\frac{i}{\hbar}S^\pol\left(
\til{\As^+_\ii}+\til{\As^+_\oo}+ \As^+_\fl+\As^-_\res+\As^-_\fl+dt\cdot\As^+_{I\,\res}
\right)}\\
=\int \mc{D}\As^+_\fl \;\mc{D} \As^-_\fl
\;e^{\frac{i}{\hbar}\big(
\int_{ I\times\Sigma} \As^-_\fl  \dt \As^+_\fl+ \int_{\{1\}\times\Sigma } \As^+_\oo \As^- - \int_{\{0\}\times\Sigma } \As^+_\ii \As^- + \int_{ I\times\Sigma} \frac12\A\, d_\Sigma \A \big)}
\\
=\int \mc{D} \As^+_\fl \;\mc{D} \As^-_\fl\;
\exp\frac{i}{\hbar} \Big(\int_{ I\times\Sigma} \As^-_\fl \, \dt \As^+_\fl+ \int_{\Sigma} 
\As^+_\oo \, (\As^-_\res+\As^-_\fl\big|_{t=1}) -\\ -
 \int_{\Sigma} \As^+_\ii\, (\As^-_\res+\As^-_\fl\big|_{t=0})
 +
 \int_\Sigma \As^0_{I\,\res} \dd \As^{0,1}_\res  + \int_{ I\times\Sigma} dt\;
 \As^+_{I\,\res} \,\bar\dd \As^+_\fl
\Big) .
\end{multline}
Here 
we are using the splitting $d_\Sigma=\dd+\bar\dd$ of de Rham operator on $\Sigma$ into the holomorphic and antiholomorphic Dolbeault operators. Finally, computing this Gaussian integral, we obtain
\begin{equation}\label{ab CS hol-hol Z answer}
Z=\exp\frac{i}{\hbar}\int_\Sigma\Big(
(\As^+_\oo - \As^+_\ii)\; \As^-_\res+\As^0_{I\,\res}\,\dd \As^{0,1}_\res+\frac12 (\As^+_\oo+\As^+_\ii)\;\bar\dd \As^+_{I\,\res}
\Big) .
\end{equation}
Here the last term arises from the 
 Wick contraction 
$$\wick{ \big\langle \int_\Sigma \As^+_\oo\,\c1{\As^-_\fl}\big|_{t=1}\;\;\int_{ I\times\Sigma} dt' \c1{\As^+_\fl} \bar\dd \As^+_{I\,\res}\, \big\rangle}=\underbrace{\int_I dt'\, (t'-\theta(t'-t))\big|_{t=1}}_{1/2} \int_\Sigma \As^+_\oo\,\bar\dd\As^+_{I\,\res} , $$ 
and a similar one with $\As^+_\ii$ talking  to $\bar\dd \As^+_{I\,\res}$. 

Graphically, the diagrams contributing to (\ref{ab CS hol-hol Z answer}) are:
\begin{figure}[H]
 \includegraphics[scale=1]{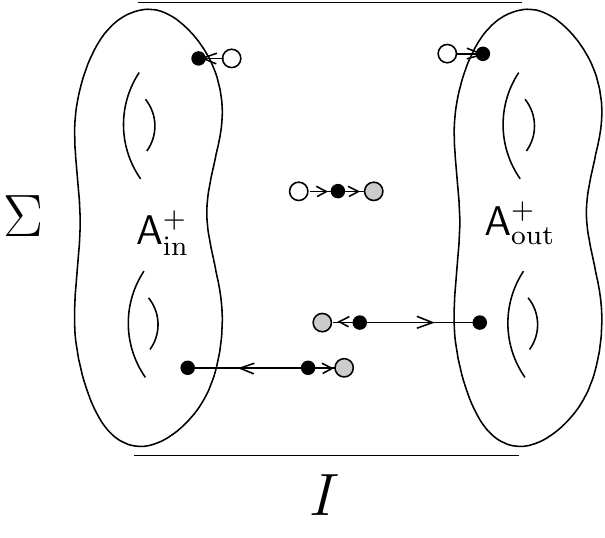} 
\caption{Feynman diagrams for the abelian theory on a cylinder in holomorphic-to-holomorphic polarization.}
\label{fig1}
\end{figure}
Here the conventions (Feynman rules) are:
\begin{itemize}
\item Black dots are vertices, which can be on in- or out-boundary (then they are univalent, with single incoming half-edge), or in the bulk (then they are bivalent -- with one incoming and one outgoing, or with two outgoing half-edges).
\item 
Half-edges can be internal (joined into pairs forming an edge, depicted as a long edge above) or external -- depicted as a short edge ending with a white or a gray blob, depending on orientation.
\item Boundary vertices are decorated by $\As^+_\oo$ on the out-boundary and by $\As^+_\ii$ on the in-boundary.
\item White blobs are decorated by $\As^-_\res$, gray blobs are decorated by $\As^+_{I\,\res}$.
\item (Long) edges are decorated by the propagator $\eta$.
\item Bulk vertices with one incoming and one outgoing half-edge  carry $\dd$;  bulk vertices with two outgoing half-edges carry $\bar\dd$. 
(Equivalently, one may say that bulk vertices are decorated by $d_\Sigma$ independently of orientation.)
\item For each connected graph $\Gamma$ in Figure \ref{fig1}, we take the product of decorations obtaining a differential form on $\mr{Conf}_\Gamma=\Sigma^{\#\mr{in-vertices}}\times (I\times \Sigma)^{\#\mr{bulk\,vertices}}\times \Sigma^{\#\mr{out-vertices}}$ depending on the residual fields. Then we take integral over $\mr{Conf}_\Gamma$, obtaining the value of the diagram.
\end{itemize}

We will return to the version of the result (\ref{ab CS hol-hol Z answer}) in the context of \textit{nonabelian} Chern--Simons theory  in Section \ref{ss: non-ab hol-hol}.

\subsubsection{Comparison with Hamilton--Jacobi action} \label{sss: hol-hol HJ}
We can write the result (\ref{ab CS hol-hol Z answer}) in the form
\begin{multline}\label{ab CS hol-hol Z via S_HJ}
Z=\exp\frac{i}{\hbar}\Big(
\underbrace{\int_\Sigma \Big( (\As^{1,0}_\oo-\As^{1,0}_\ii)\lambda +\lambda \dd\sigma + \frac12 (\As^{1,0}_\oo+\As^{1,0}_\ii)\bar\dd\sigma \Big) }_{S_\text{HJ}} +\\
+
\int_\Sigma\Big( (\As^0_\oo-\As^0_\ii)\As^2_\res +\frac12(\As^0_\oo+\As^0_\ii)\bar\dd\As^{1,0}_\Ires \Big)
\Big) ,
\end{multline}
where we introduced the alternative notation for degree zero residual fields
$$ \lambda:= \As^{0,1}_\res,\;\; \sigma:=\As^0_\Ires . $$
In the first integral in (\ref{ab CS hol-hol Z via S_HJ}) we recognize the Hamilton--Jacobi action 
\cite[Eq. (48)]{HJ}, 
which can be seen as the conformal $\beta\gamma$-system coupled to the boundary fields, while in the second integral we collected the contribution of nonzero-degree fields.

\subsubsection{Quantum master equation}\label{sss: hol-hol QME}
The space of states on a surface with $\As^+$-fixed polarization is the space of functions of $\As^+$ of the form 
\begin{equation}\label{states in A^+ pol}
\Psi(\As^+)=\sum_{n\geq 0} \int_{\Sigma^n}\gamma_n \pi_1^* \As^+\cdots \pi_n^* \As^+
\end{equation}
where $\{\gamma_n\}$ are $\hbar$-dependent 
distributional forms on $\Sigma^n$ and $\pi_i\colon \Sigma^n\ra\Sigma$ is the projection to the $i$-th copy of $\Sigma$ (we refer to \cite[Section 3.5.1]{CMR15} for details).
The space of states is
equipped with the differential (the quantum BFV operator)
\begin{equation}\label{Omega_Sigma^+}
\Omega_\Sigma^+=\int_\Sigma \left(-i\hbar\, \dd \As^0\,\frac{\delta}{\delta \As^{1,0}}+\epsilon \As^{1,0}\,\bar\dd \As^0 \right)
\end{equation}
with the sign $\epsilon=+1$ for the out-boundary and $\epsilon=-1$ for the in-boundary;\footnote{The integral over $\Sigma$ in (\ref{Omega_Sigma^+}) is understood as being with respect to the ``standard'' orientation, which coincides with the induced one from the cylinder on the out-boundary and is opposite to the induced one on the in-boundary.}
the superscript in $\Omega^+_\Sigma$ is a reminder of the choice of polarization. This operator is the canonical quantization of the boundary BFV action (\ref{ab CS S_BFV}), 
\begin{equation}\label{S BFV via hol splitting}
S_\Sigma=\epsilon\int_\Sigma \As^{1,0}\,\bar\dd \As^0+ \As^{0,1}\,\dd\As^0 .
\end{equation}
In the quantization, $\As^0, \As^{1,0}$ become multiplication operators and $\As^{0,1}\mapsto -\epsilon\, i\hbar \frac{\delta}{\delta \As^{1,0}}$, $\As^2\mapsto -\epsilon\, i\hbar \frac{\delta}{\delta \As^0}$ become derivations.

\begin{lemma}
The partition function (\ref{ab CS hol-hol Z answer}) satisfies the BV quantum master equation modified by the boundary terms (see \cite{CMR15}):
\begin{multline}\label{ab CS hol-hol mQME}
\Big(
\underbrace{
\int_\Sigma \big(-i\hbar\, \dd \As^0_\oo\,\frac{\delta}{\delta \As^{1,0}_\oo}+\As^{1,0}_\oo\,\bar\dd \As^0_\oo \big)
}_{\Omega_\oo^+}+
\underbrace{
\int_\Sigma \big(-i\hbar\, \dd \As^0_\ii\,\frac{\delta}{\delta \As^{1,0}_\ii}-\As^{1,0}_\ii\,\bar\dd \As^0_\ii \big)
}_{\Omega_\ii^+}
 - \\ -\hbar^2 \underbrace{\int_\Sigma \frac{\delta}{\delta \As^-_\res}\;\frac{\delta}{\delta \As^+_{I\,\res }} }_{\Delta_\res} \Big)\; Z = 0  .
\end{multline}
\end{lemma}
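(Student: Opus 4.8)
The plan is to exploit that $Z=\exp\frac{i}{\hbar}S_\eff$ is the exponential of the bilinear function $S_\eff$ read off from \eqref{ab CS hol-hol Z answer}, so that every operator in \eqref{ab CS hol-hol mQME} acts on $Z$ by producing $Z$ times a polynomial in the fields. First I would record the functional derivatives of $S_\eff$ with respect to the boundary fields $\As^0_\ii,\As^{1,0}_\ii,\As^0_\oo,\As^{1,0}_\oo$ and the residual fields $\As^{0,1}_\res,\As^2_\res,\As^0_{I\,\res},\As^{1,0}_{I\,\res}$, and then organise $(\Omega^+_\oo+\Omega^+_\ii-\hbar^2\Delta_\res)Z$ by powers of $\hbar$. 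The two first-order operators $\Omega^+_\ii,\Omega^+_\oo$ contribute only at order $\hbar^0$: the factor $-i\hbar$ in front of the derivative term cancels against the $\frac{i}{\hbar}$ produced when $\frac{\delta}{\delta\As^{1,0}}$ hits the exponential, leaving $\int_\Sigma\dd\As^0\,\frac{\delta S_\eff}{\delta\As^{1,0}}$, while the multiplication term $\pm\As^{1,0}\bar\dd\As^0$ is already of order $\hbar^0$. The second-order BV operator $\Delta_\res$, acting on an exponential, splits into a ``tree'' piece $\int_\Sigma\frac{\delta S_\eff}{\delta\As^-_\res}\frac{\delta S_\eff}{\delta\As^+_{I\,\res}}$ at order $\hbar^0$ and a ``tadpole'' piece $-i\hbar\int_\Sigma\frac{\delta^2 S_\eff}{\delta\As^-_\res\,\delta\As^+_{I\,\res}}$ at order $\hbar^1$. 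Since $\Omega$ produces nothing at order $\hbar^1$, the mQME decouples into an order-$\hbar^1$ equation and an order-$\hbar^0$ equation, which I would verify separately.

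For the order-$\hbar^1$ equation I would observe that the $\Delta_\res$-conjugate residual pairs are $(\As^{0,1}_\res,\As^{1,0}_{I\,\res})$ and $(\As^2_\res,\As^0_{I\,\res})$, so the tadpole is the sum of the coincident-point contractions $\frac{\delta^2 S_\eff}{\delta\As^{0,1}_\res\,\delta\As^{1,0}_{I\,\res}}$ and $\frac{\delta^2 S_\eff}{\delta\As^2_\res\,\delta\As^0_{I\,\res}}$. Inspecting \eqref{ab CS hol-hol Z answer}, the only term built purely from residual fields is $\As^0_{I\,\res}\,\dd\As^{0,1}_\res$, and it pairs $\As^0_{I\,\res}$ with $\As^{0,1}_\res$ --- fields that are conjugate to $\As^2_\res$ and $\As^{1,0}_{I\,\res}$ respectively, hence \emph{not} to each other. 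Therefore both of the above second derivatives vanish, the tadpole is identically zero, and this is exactly the statement that no quantum corrections arise.

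The bulk of the work is the order-$\hbar^0$ (tree-level) identity
\[
\int_\Sigma\Big(\dd\As^0_\oo\,\tfrac{\delta S_\eff}{\delta\As^{1,0}_\oo}+\As^{1,0}_\oo\bar\dd\As^0_\oo+\dd\As^0_\ii\,\tfrac{\delta S_\eff}{\delta\As^{1,0}_\ii}-\As^{1,0}_\ii\bar\dd\As^0_\ii+\tfrac{\delta S_\eff}{\delta\As^-_\res}\tfrac{\delta S_\eff}{\delta\As^+_{I\,\res}}\Big)=0.
\]
After substituting the derivatives I would sort the integrand into three groups and show each integrates to zero on the closed surface $\Sigma$. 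The terms linear in $\As^{0,1}_\res$ combine into $\int_\Sigma\dd\big((\As^0_\oo-\As^0_\ii)\As^{0,1}_\res\big)$ and vanish by Stokes. The terms linear in $\As^0_{I\,\res}$ reduce to a combination of $\int_\Sigma\dd(\As^0_\oo+\As^0_\ii)\wedge\bar\dd\As^0_{I\,\res}$ and $\int_\Sigma\dd\As^0_{I\,\res}\wedge\bar\dd(\As^0_\oo+\As^0_\ii)$, which coincide by the integration-by-parts identity $\int_\Sigma\dd g\wedge\bar\dd\phi=\int_\Sigma\dd\phi\wedge\bar\dd g$ valid on a closed Riemann surface (using $\dd^2=\bar\dd^2=0$ and $\Omega^{2,0}(\Sigma)=\Omega^{0,2}(\Sigma)=0$), and so cancel. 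Finally the purely boundary terms $\As^{1,0}_{\ii,\oo}\bar\dd\As^0_{\ii,\oo}$ coming from $\Omega$ are matched, after the same integration by parts, by the boundary--boundary part of the tree term $\frac{\delta S_\eff}{\delta\As^-_\res}\frac{\delta S_\eff}{\delta\As^+_{I\,\res}}$.

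The main obstacle is one of sign bookkeeping rather than of ideas: one must fix a single convention for the functional derivatives of the odd (nonzero ghost number) residual fields, for the relative sign $\int_\Sigma\pm\frac{\delta}{\delta q}\frac{\delta}{\delta p}$ in $\Delta_\res$, and for the many wedge reorderings, so that the three groups above cancel exactly rather than up to a sign. Every integration by parts is legitimate precisely because $\Sigma$ is closed and one-complex-dimensional; this is the only geometric input, and I would isolate the two identities (the Stokes cancellation and $\int_\Sigma\dd g\wedge\bar\dd\phi=\int_\Sigma\dd\phi\wedge\bar\dd g$) as auxiliary lemmas so that the sign analysis is confined to a short, checkable computation.
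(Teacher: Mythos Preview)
Your proposal is correct and follows essentially the same direct-computation approach as the paper: both act with $\Omega^+_\oo+\Omega^+_\ii$ and with $\hbar^2\Delta_\res$ on the exponential and compare the resulting polynomials in the fields. The paper is more terse---it simply writes out the two expressions \eqref{hol-hol mQME1} and \eqref{hol-hol mQME2} and observes they coincide---whereas you add a useful layer of organization by separating the $\hbar^1$ ``tadpole'' (whose vanishing you explain via the non-conjugacy of $\As^0_{I\,\res}$ and $\As^{0,1}_\res$, a point the paper leaves implicit) from the $\hbar^0$ tree-level identity, and then further splitting the latter into three groups cancelled by Stokes and the integration-by-parts identity $\int_\Sigma\dd g\,\bar\dd\phi=\int_\Sigma\dd\phi\,\bar\dd g$; this is the same cancellation mechanism the paper relies on when ``inspecting'' that the two sides match.
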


\begin{proof}
One checks this by a direct computation:
\begin{multline}\label{hol-hol mQME1}
(\Omega_\oo^+ +\Omega_\ii^+)Z =\\
= Z\cdot \int_\Sigma\Big((\dd \As^0_\oo - \dd \As^0_\ii)\,\As^{0,1}_\res+\frac12 (\dd \As^0_\oo+\dd\As^0_\ii)\,\bar\dd\As^0_\Ires+ \As^{1,0}_\oo\,\bar\dd\As^0_\oo-\As^{1,0}_\ii\,\bar\dd\As^0_\ii \Big) .
\end{multline}
On the other hand,
\begin{equation}\label{hol-hol mQME2}
\hbar^2\Delta_\res Z= Z\cdot \int_\Sigma \big(\As^+_\oo-\As^+_\ii-\dd\As^+_\Ires\big)\,\big(\dd\As^-_\res+\frac12 \bar\dd(\As^+_\oo+\As^+_\ii)\big) .
\end{equation}
Inspecting this expression, we see that it coincides with (\ref{hol-hol mQME1}), which proves (\ref{ab CS hol-hol mQME}).
\end{proof}

Following the terminology of \cite{CMR15}, we call the equation $(\Omega_\dd-\hbar^2\Delta_\res) Z=0$ the \textit{modified} (by the boundary term) quantum master equation (mQME).

\subsection{Antiholomorphic-to-holomorphic boundary conditions}\label{ss: ahol-hol}
Consider the polarization $\mr{Span}\{\frac{\delta}{\delta \As^+}\}$ at $t=0$ and $\mr{Span}\{\frac{\delta}{\delta \As^-}\}$ at $t=1$. I.e., we prescribe boundary values $\As^-_\ii, \As^+_\oo$. The corresponding modification of the action by boundary terms adjusting for the polarization is:
$$ S^\pol=S+\frac12\int_{\{1\}\times\Sigma}\As^+\As^-  + \frac12 \int_{ \{0\}\times\Sigma}\As^+\As^- $$
and the modified boundary Noether $1$-form is:
$$ \alpha^\pol_{\Sigma\times \dd I} = \int_{\{1\}\times\Sigma } \As^- \delta \As^+ -\int_{\{0\}\times\Sigma } \As^+\delta \As^- . $$

The fiber of the (complexified) space of fields over the space of boundary conditions $\BB=\g^-[1]\oplus \g^+[1]=\{(\As^-_\ii,\As^+_\oo)\}$ is the complex
\begin{equation*}
\YY= \Omega^\bt(I,\{1\};\g^+[1])\oplus \Omega^\bt(I,\{0\};\g^-[1]) .
\end{equation*}
Hodge decomposition (\ref{1d ab CS ahol-hol Hodge}) holds (where one replaces $\Pi\ra [1]$) and the propagator is given by (\ref{1d ab CS ahol-hol eta}) or, more explicitly,
\begin{multline*}
\eta((z,\bar{z},t)\, ; \, (z',\bar{z}',t'))=\\
=\delta^{(2)}(z-z')\frac{i}{2}(-dz\wedge d\bar{z}'+dz'\wedge d\bar{z}')\;\big(-\theta(t'-t)+2(1-t)\, t'\big)+\\
+\delta^{(2)}(z-z')\frac{i}{2}(d\bar{z}\wedge dz'+ dz\wedge d\bar{z})\;\big(\theta(t-t')-2t\,(1-t')\big) .
\end{multline*}

The space of residual fields is:
\begin{equation}\label{ahol-hol V}
\VV=dt\cdot \g_{\CC}\oplus (1-t)\cdot \g^+[1] \oplus t\cdot \g^-[1] = \{dt\cdot \As_\Ires + (1-t)\cdot \As^+_\res + t\cdot \As^-_\res \} ,
\end{equation}
where $\As_\Ires$, $\As^+_\res$, $\As^-_\res$ are $t$-independent forms on $\Sigma$. The homogeneous components of these residual fields
and their internal degrees (ghost numbers) are as follows:\\
\begin{tabular}{lllll|lll|lll}
 $\As_{\Ires}=$ & $\As_\Ires^0+$ & $\As_\Ires^{1,0}+$ & $\As_\Ires^{0,1}+$ & $\As_\Ires^2$ & $\As^+_\res=$ & $\As^0_\res+$ & $\As^{1,0}_\res$ & $\As^-_\res=$ & $\As^{0,1}_\res+$ & $\As^2_\res$  \\
& $0$ & $-1$ & $-1$ & $-2$ & & $1$ & $0$ & & $0$ & $-1$
\end{tabular}

The BV-BFV partition function is:
\begin{multline}\label{ab CS ahol-hol Z}
Z(\As^-_\ii,\As^+_\oo;\As_\Ires, \As^+_\res,\As^-_\res)=\\
=\int_{\YY'_{K-ex}\subset \YY'} \mc{D}\As^+_\fl\,\mc{D}\As^-_\fl\; e^{\frac{i}{\hbar}S^\pol\Big(\As^-_\ii+\As^+_\oo+(1-t)\cdot \As^+_\res+t\cdot \As^-_\res+\As^+_\fl+\As^-_\fl+dt\cdot \As_\Ires\Big)}\\
=\int \mc{D}\As^+_\fl\,\mc{D}\As^-_\fl\; e^{\frac{i}{\hbar}\Big(\int_{ I\times\Sigma} \big(\As^-_\fl+t \As^-_\res\big)\,\dt\big(\As^+_\fl+(1-t)\As^+_\res\big) +\int_{\{1\}\times\Sigma } \As^+_\oo \As^- - \int_{\{0\}\times\Sigma } \As^-_\ii\As^+ +\int_{ I\times\Sigma} \frac12 \A\,d_\Sigma \A  \Big)}
\\
=\int \mc{D}\As^+_\fl\,\mc{D}\As^-_\fl\; \exp \frac{i}{\hbar} \Big( \int_{ I\times\Sigma} \As^-_\fl \, \dt \As^+_\fl + \frac12 \int_\Sigma \As^-_\res\As^+_\res+\int_\Sigma \As^+_\oo(\As^-_\res+\As^-_\fl\big|_{t=1}) - \\
- \int_\Sigma \As^-_\ii (\As^+_\res+\As^+_\fl\big|_{t=0}) +\frac12 \int_\Sigma 
\As_\Ires \; d_\Sigma (\As^+_\res+\As^-_\res)
\Big) \\
=\exp\frac{i}{\hbar}\int_\Sigma \Big( -\As^+_\oo\As^-_\ii+\As^+_\oo \As^-_\res -\As^-_\ii\As^+_\res +\frac12 \As^-_\res \As^+_\res +\\
+\frac12 \big(
\As^{0,1}_\Ires \dd\As^0_\res+  \As^{1,0}_\Ires\bar\dd \As^0_\res + \As^0_\Ires (\dd \As^{0,1}_\res+ \bar\dd \As^{1,0}_\res)
\big)
\Big) .
\end{multline}
Here the first term in the final result is a contribution of the diagram where $\As^+_\oo$ is contracted by a propagator with $\As^-_\ii$.

\subsubsection{Partial integral over residual fields and comparison with Hamilton--Jacobi action}
Motivated by comparison with the Hamilton--Jacobi formalism, we consider the BV pushforward of the partition function (\ref{ab CS ahol-hol Z}) along the odd symplectic fibration
$$ p:\VV \ra \VV_\mr{small}=\{dt\cdot(\As^0_\Ires+\As^2_\Ires)+(1-t)\cdot \As^0_\res + t\cdot \As^2_\res\} . $$
In its kernel, we choose the gauge-fixing lagrangian subspace $\mc{L}$ cut out by equations $\As^{1,0}_\Ires=\As^{0,1}_\Ires=0$ and parametrized by $\As^{1,0}_\res, \As^{0,1}_\res$. The corresponding BV pushforward is:
\begin{multline}\label{ab CS hol-hol Z_small}
Z_\mr{small}=\int \D\As^{1,0}_\res\,\D \As^{0,1}_\res \; Z = \\
=\exp\frac{i}{\hbar}
\Big(\underbrace{\int_\Sigma\Big(
\As^{1,0}_\oo \As^{0,1}_\ii
+(\dd \As^{0,1}_\ii+\bar\dd \As^{1,0}_\oo )\; \sigma
 +\frac12 \sigma \dd\bar\dd \sigma
\Big)}_{S_\text{HJ}}+\\+
\int_\Sigma\Big(-\As^0_\oo\As^2_\ii +\As^0_\oo \As^2_\res -\As^2_\ii \As^0_\res+ \frac12 \As^2_\res \As^0_\res  \Big)
\Big) .
\end{multline} 
Here we denoted the degree zero scalar residual field by $$\sigma:=\As^0_\Ires\;\; \in \Omega^0_{\CC}(\Sigma) .$$ 
In the first bracket in (\ref{ab CS hol-hol Z_small}) we recognize the Hamilton--Jacobi action 
\cite[Eq. (47)]{HJ} (see also Example \ref{ex:HJ ab CS}) --- the action of a free (conformal) massless boson interacting with the boundary fields,\footnote{\label{fn:abWZW}
One can see $S_\text{HJ}$ as the abelian version of the gauged Wess--Zumino--Witten theory, see e.g (2.7) in \cite{Gawedzki_coset}.
} while in the second bracket we collected the contributions of nonzero-degree fields.

\subsubsection{Full integral over residual fields}\label{sec: full int ahol-hol}
If we wish to integrate out the remaining residual fields completely, we construct the gauge-fixing lagrangian $\mc{L}_\mr{small}\subset \VV_\mr{small}$ as follows. Choose an area form $\mu$ on $\Sigma$. 
Consider the splitting of $0$-forms into constants and forms with vanishing integral against $\mu$: $\As^0=\As^0_\const+\underline{\As^0}$.
Also, consider the splitting of $2$-forms into constant multiples of $\mu$ and forms of vanishing total integral: $\As^2=\mu\cdot \As^2_\const+\underline{\As^2}$. Then, we define the lagrangian $\mc{L}_\mr{small}\subset \VV_\mr{small}$ by equations $\As^2_\Ires = \sigma_{\const}=\underline{\As^2_\res}=0$. Thus, the lagrangian is parametrized by $\As^0_\res$, $\underline{\sigma}$, $\As^2_{\res,\const}$.\footnote{We have to split off the constants from $\sigma$, because they are in the kernel of the Laplacian $\dd\bar\dd$ and thus would obstruct the evaluation of the integral of (\ref{ab CS hol-hol Z_small}) over $\sigma$.}

The resulting full BV integral is:
\begin{multline*}
Z_*= \int_{
\mc{L}_\mr{small}\subset \VV_\mr{small}
} \D \As^0_\res \; \D \underline{\sigma}\;\D \As^2_{\res,\const}\;\;  Z_\mr{small}=\\
= \delta(\underline{\As^2_\ii})e^{-\frac{i}{\hbar}\int_\Sigma \mu\cdot \As^2_{\ii,\const}\As^0_{\oo,\const} }
\int \D \underline{\sigma}\; \exp\frac{i}{\hbar}\int_\Sigma 
\Big(
\As^{1,0}_\oo \As^{0,1}_\ii
+(\dd \As^{0,1}_\ii+\bar\dd \As^{1,0}_\oo )\; \underline{\sigma} +\frac12 \underline{\sigma} \dd\bar\dd \underline{\sigma}
\Big) .
\end{multline*}
Further, assume that the area form $\mu = \sqrt{\det{g}}\; d^2x $ is the Riemannian area form associated to a certain metric $g$ on $\Sigma$ inducing simultaneously the complex structure we use in our polarization. Then the integral over $\underline{\sigma}$ evaluates finally to
\begin{equation}\label{ab CS ahol-hol Z fully integrated}
Z_*=
\delta(\underline{\As^2_\ii})\; e^{-\frac{i}{\hbar}\int_\Sigma \mu\cdot  \As^2_{\ii,\const}\As^0_{\oo,\const} }
\; \cdot  \big({\det}'_{\Omega^0(\Sigma)}\Delta_g\big)^{-\frac12}\cdot e^{\frac{i}{\hbar} \mathbb{I}(\As^{1,0}_\oo,\As^{0,1}_\ii) }
\end{equation}
where 
\begin{itemize}
\item 
$\Delta_g$ is the metric Laplace operator acting on $0$-forms, ${\det}'$ means the zeta-regularized product of \textit{nonzero} eigenvalues.
\item The exponent in (\ref{ab CS ahol-hol Z fully integrated}) is
\begin{equation}
\mathbb{I}= \int_\Sigma \As^{1,0}_\oo P_\mr{harm}( \As^{0,1}_\ii )-i\int_{\Sigma\times\Sigma\;\ni (z,z')} 
\bar\dd \As^{1,0}_\oo\big|_z G(z,z')\, \bar\dd \As^{1,0}_\oo\big|_{z'} +  \dd \As^{0,1}_\ii\big|_z G(z,z')\, \dd\As^{0,1}_\ii \big|_{z'} . \label{eq:def I}
\end{equation}
Here $G
$ is the Green's function for $\Delta_g$, viewed as a function on $\Sigma\times\Sigma$ with a logarithmic singularity at the diagonal.\footnote{Recall that, in terms of Dolbeault operators and the area form, the metric Laplace operator is: $\Delta_g=\frac{2i}{\mu}\dd\bar\dd$.} 
The operator $P_\mr{harm}: \As^{0,1}\mapsto \As^{0,1}- 2i\int_{\Sigma\ni z'} \bar\dd G(z,z') \dd \As^{0,1}\big|_{z'}$ is the projector onto harmonic $(0,1)$-forms in the Hodge decomposition.

\end{itemize}

Written in different notations, the exponent in (\ref{ab CS ahol-hol Z fully integrated}) is:
\begin{equation}\label{ab CS ahol-hol I via d dbar}
\mathbb{I}=\int_\Sigma \As^{1,0}_\oo\,  (1-\bar\dd (\dd\bar\dd)^{-1}\dd) \,\As^{0,1}_\ii -\frac12 \As^{1,0}_\oo \, \bar\dd (\dd\bar\dd)^{-1} \bar\dd \, \As^{1,0}_\oo-\frac12 \As^{0,1}_\ii \, \dd (\dd\bar\dd)^{-1} \dd\, \As^{0,1}_\ii  .
\end{equation}


\begin{remark}\label{rem: ahol-hol fully integrated}
\begin{enumerate}[(a)]
\item The exponent $\mathbb{I}$ in (\ref{ab CS ahol-hol Z fully integrated}) depends only on the complex structure on $\Sigma$, not on the particular compatible metric $g$. In other words, it is invariant under Weyl transformations of the metric $g\mapsto e^{\phi}\, g$. Weyl-invariance of $\mathbb{I}$ is manifest in the form (\ref{ab CS ahol-hol I via d dbar}).  
\item \label{rem: ahol-hol fully integrated (b)} 
Unlike $\mathbb{I}$, the full quantum answer (\ref{ab CS ahol-hol Z fully integrated}) is not Weyl-invariant, since the determinant of the Laplacian is not invariant (a phenomenon known as the ``conformal anomaly" or ``trace anomaly'' of the free scalar field as a conformal field theory).
In addition to that quantum effect, the dependence of $Z_*$ on boundary $\gh\neq 0 $ fields $\As_\ii^2, \As_\oo^0$ involves the metric area form $\mu$.
\item
The lagrangian generated by $\mathbb{I}$ is
\begin{multline*} 
L_\mathbb{I} =\left\{
\begin{array}{c}
\As_\oo = \As_\oo^{1,0} + \frac{\delta\mathbb{I}}{\delta \As^{1,0}_\oo},\\
\As_\ii = \As_\ii^{0,1} - \frac{\delta\mathbb{I}}{\delta \As^{0,1}_\ii}
\end{array}
\right\} 
= \left\{  
\begin{array}{c}
\As_\oo=(1-\bar\dd (\dd\bar\dd)^{-1}\bar\dd) \As^{1,0}_\oo + 
P_\mr{harm} 
\As^{0,1}_\ii ,\\
\As_\ii=(1+\dd (\dd\bar\dd)^{-1}\dd) \As^{0,1}_\ii + P_\mr{harm} 
\As^{1,0}_\oo 
\end{array}
 \right\} .
\end{multline*}
It is easy to see that 
this lagrangian coincides with
the evolution relation of abelian Chern--Simons theory on the cylinder $ I\times\Sigma$,
$$L_{CS}=\big\{(\As_\oo,\As_\ii)\; \in \Omega^1(\Sigma)\times \Omega^1(\Sigma)\;\;\big| \;\; d\As_\oo=0,\; d\As_\ii=0,\; \As_\oo-\As_\ii=d(\cdots)\big\} .$$
Thus, $\mathbb{I}$ is a (nongeneralized\footnote{I.e., with no auxiliary fields.}) Hamilton--Jacobi action for the abelian theory on the cylinder.
\item
Classically, one can obtain $\mathbb{I}$ from the generalized Hamilton--Jacobi action (Example \ref{ex:HJ ab CS})
 as the conditional extremum of $S_\text{HJ}$ in $\sigma$, with $\As^{1,0}_\oo$ and $\As^{0,1}_\ii$ fixed. 
\end{enumerate}
\end{remark}

\begin{remark}\label{rem: Liouville}
To make (\ref{rem: ahol-hol fully integrated (b)}) of Remark \ref{rem: ahol-hol fully integrated} above more explicit: if $g_\tau=e^{\phi_\tau}g_0 $ is a $\tau$-dependent family of metrics compatible with the given complex structure on $\Sigma$, one has
\begin{equation}\label{trace anomaly}
\frac{d}{d\tau} Z_*^{g_\tau} = (\Omega^+_\oo+\Omega^-_\ii)(\xi Z_*^{g_\tau})+Z_*^{g_\tau}\cdot \frac{1}{48\pi}\int_\Sigma \mu_{g_\tau} R_{g_\tau}\dot{\phi}_\tau  
\end{equation}
with $R$ the scalar curvature of the metric, $\mu_{g_\tau}$ the Riemannian area form of $g_\tau$, the $\Omega$ operators given by (\ref{Omega hol out}), (\ref{Omega ahol in}) below and\footnote{This is the quantization of the hamiltonian $H = \As^2_{\res,c}\int_\Sigma\underline{\sigma}\dot{\mu}$ generating the family of lagrangians given by $\mu_{g^\tau}$, see \cite[Section 2]{CMR15}. } $\xi =\As_{\ii,c} \int_{\Sigma\times\Sigma}(\bar\dd A^{1,0}_\oo + \dd A^{0,1}_\ii)_zG(z,z')\dot{\mu}_{z'}$. The second term in (\ref{trace anomaly}) is the trace anomaly.
Furthermore, one can compensate the anomaly term by including the Liouville action as a counterterm,\footnote{We are referring to the fact that in a conformal field theory with central charge $c$, the partition function has the following behavior under Weyl transformations of metric:
$Z_\mr{CFT}^{e^{\phi}g} =Z_\mr{CFT}^{g} \cdot  e^{\frac{c}{48\pi}\int_\Sigma \frac12 d\phi\wedge * d\phi +R\phi \mu_g} $, see, e.g., \cite{Gawedzki CFT}.
} i.e.,  by introducing
\begin{equation*}
\widehat{Z}^{g}=Z_*^g\cdot  e^{-\frac{1}{48\pi}\int_\Sigma \frac12 d\phi \wedge  * d\phi +R_g\phi \mu_g },
\end{equation*}
where $\phi$ is defined by $g=e^{\phi}g_0$ with $g_0$ some ``reference'' metric in the same conformal class   (e.g., one can take $g_0$ to be the uniformization metric on $\Sigma$ --- 
spherical, flat or hyperbolic metric for  $\Sigma$  of genus $0$, $1$ or $\geq 2$, respectively). Then, for a conformal variation of metric we have $\delta_{\varphi} \widehat{Z}^{e^\varphi g} =(\Omega^+_\oo+\Omega^-_\ii)(\xi Z_*^g)$.

As an aside, it is tempting to compare the two phenomena:
\begin{enumerate}[(i)]
\item \label{rem: Liouville (i)} The anomalous metric dependence (under a Weyl transformation $g_\Sigma \ra e^\phi g_\Sigma$) of the partition function on the cylinder and the cancellation of that dependence by a Liouville action counterterm.
\item \label{rem: Liouville (ii)}The anomalous metric dependence (under $g_M\ra g_M+\delta g_M$) of the perturbative Chern--Simons partition function on a closed 3-manifold $M$ and the cancellation of that dependence by the gravitational Chern--Simons counterterm introducing the dependence on framing $M$, see \cite{Witten,AS}.
\end{enumerate}
But in fact, these effects seem different. In particular, the dependence on Weyl transformations in (\ref{rem: Liouville (i)}) rescales $Z$ by a real factor, whereas the anomalous metric dependence in (\ref{rem: Liouville (ii)}) affects only the phase of the partition function.

\end{remark}

\begin{remark}
As implied by (\ref{ab CS hol-hol Z_small}),
one can view the ``physical part'' $Z_*^\mr{ph}=e^{\frac{i}{\hbar}\mathbb{I}(\As^{1,0}_\oo,\As^{0,1}_\ii) }$ of (\ref{ab CS ahol-hol Z fully integrated}) as a generating function for the correlators of chiral currents $j=i\dd \phi$, $\bar{j}=i\bar\dd \phi$ in massless scalar theory (viewed as abelian WZW model): 
\begin{multline}
\langle j(z_1)\cdots j(z_n)\bar{j}(w_1)\cdots \bar{j}(w_m) \rangle =\\
=\hbar^{n+m} \frac{1}{ Z^\mr{ph}_*}\left.\frac{\delta}{\delta \As^{0,1}_\ii(z_1)}\cdots \frac{\delta}{\delta \As^{0,1}_\ii(z_n)} \frac{\delta}{\delta \As^{1,0}_\oo(w_1)}\cdots \frac{\delta}{\delta \As^{1,0}_\oo(w_m)} \right|_{\As^{0,1}_\ii=\As^{1,0}_\oo=0} Z^\mr{ph}_*
\end{multline}
Here we are assuming that all points $z_1,\ldots,z_n,w_1,\ldots,w_m$ in $\Sigma$ are pairwise distinct (so that we can ignore the term $\int_\Sigma\As^{1,0}_\oo\As^{0,1}_\ii$ in (\ref{ab CS hol-hol Z_small})).
We note that (\ref{ab CS ahol-hol I via d dbar}) implies a short-distance behavior of these correlators consistent with the OPEs (operator product expansions) 
\begin{equation}
j(z) j(w)\sim \frac{1}{(z-w)^2}+\mr{reg},\quad \bar{j}(z)\bar{j}(w)\sim \frac{1}{(\bar{z}-\bar{w})^2}+\mr{reg},\quad j(z)\bar{j}(w)\sim \mr{reg}
\end{equation}
as  $z\ra w$ (``reg'' stands for the regular part) -- the standard fundamental OPEs of abelian WZW model.
\end{remark}

\subsubsection{Quantum master equation}
The space of states on the out-surface with $\As^+$-fixed polarization was discussed in Section \ref{sss: hol-hol QME}: it is the space of functions of $\As^+_\oo$ of the form (\ref{states in A^+ pol}) with the BFV operator
\begin{equation} \label{Omega hol out}
\Omega_\oo^+ = \int_\Sigma \left( -i\hbar\; \dd \As^0_\oo\frac{\delta}{\delta \As^{1,0}_\oo}+\As^{1,0}_\oo \,\bar\dd \As^0_\oo\right) .
\end{equation}
The space of states on the in-surface with $\As^-$-fixed polarization is the space of functions of $\As^-_\ii$ (defined similarly to (\ref{states in A^+ pol})) with the BFV operator
\begin{equation}\label{Omega ahol in}
\Omega_\ii^- =\int_\Sigma \Big( -i\hbar\; \dd\As^{0,1}_\ii\frac{\delta}{\delta \As^2_\ii}+\hbar^2\frac{\delta}{\delta \As^{0,1}_\ii}\,\bar\dd \frac{\delta}{\delta \As^2_\ii} \Big) .
\end{equation}
This is the quantization of the BFV action (\ref{S BFV via hol splitting}) where $\As^{0,1},\As^2$ become multiplication operators and $\As^{1,0}\mapsto -\epsilon\, i\hbar \frac{\delta}{\delta \As^{0,1}}$, $\As^0\mapsto -\epsilon\, i\hbar \frac{\delta}{\delta \As^2}$ become derivations, where $\epsilon=-1$ for the in-boundary, as in Section \ref{sss: hol-hol QME}.

The BV Laplacian on residual fields (\ref{ahol-hol V}) is:\footnote{Here the factor $2$ comes from the fact that the odd symplectic form on $\VV$, induced from the standard BV 2-form on the space of fields, $\omega_\FF=\int_{ I\times\Sigma}\frac12 \delta\A\,\delta \A$, is $\omega_\VV=\frac12 \int_\Sigma \delta \As^+_\res\, \delta\As^-_\Ires + \delta \As^-_\res\, \delta\As^+_\Ires$. The factor $\frac12$ in the latter expression comes from $\int_0^1 dt (1-t)=\int_0^1 dt\, t=\frac12$.}
\begin{equation*}
\Delta_\res=2\int_\Sigma\frac{\delta}{\delta \As^-_\res} \frac{\delta}{\delta \As^+_\Ires}+
\frac{\delta}{\delta \As^+_\res} \frac{\delta}{\delta \As^-_\Ires} .
\end{equation*}

\begin{lemma}
Partition function (\ref{ab CS ahol-hol Z}) satisfies the modified quantum master equation
\begin{equation*}
\left( \Omega^+_\oo + \Omega^-_\ii-\hbar^2 \Delta_\res \right)\;Z =0 .
\end{equation*}
\end{lemma}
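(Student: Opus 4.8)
The approach parallels the direct computation used for the holomorphic-to-holomorphic case in Section~\ref{sss: hol-hol QME}. Since the partition function is available in closed form as $Z=e^{\frac{i}{\hbar}S}$, with $S$ the quadratic exponent on the last line of (\ref{ab CS ahol-hol Z}), the plan is to apply each of the three operators $\Omega^+_\oo$, $\Omega^-_\ii$ and $\hbar^2\Delta_\res$ to $Z$, write each result as $Z$ times a local density on $\Sigma$, and verify that the three densities sum to zero. The one structural fact I would use repeatedly is that $S$ is at most quadratic in every field, so that for any two field components $\phi_1,\phi_2$
\[
\frac{\delta}{\delta\phi_1}\frac{\delta}{\delta\phi_2}Z=\left(\frac{i}{\hbar}\frac{\delta^2 S}{\delta\phi_1\,\delta\phi_2}+\left(\frac{i}{\hbar}\right)^2\frac{\delta S}{\delta\phi_1}\frac{\delta S}{\delta\phi_2}\right)Z,
\]
the field-independent ``loop'' coefficient $\frac{\delta^2 S}{\delta\phi_1\,\delta\phi_2}$ being a constant.

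First I would treat the two boundary operators. In $\Omega^+_\oo$ of (\ref{Omega hol out}) the single-derivative term contributes at order $\hbar^0$ (the $-i\hbar$ cancelling the $\frac{i}{\hbar}$ produced by $\frac{\delta}{\delta\As^{1,0}_\oo}$ acting on $Z$) a density proportional to $\dd\As^0_\oo$ times $\frac{\delta S}{\delta\As^{1,0}_\oo}$, while the multiplication term adds $\As^{1,0}_\oo\,\bar\dd\As^0_\oo$; the term $-i\hbar\,\dd\As^{0,1}_\ii\frac{\delta}{\delta\As^2_\ii}$ of $\Omega^-_\ii$ from (\ref{Omega ahol in}) is treated identically. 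The delicate piece is the genuine second-derivative term $\hbar^2\frac{\delta}{\delta\As^{0,1}_\ii}\bar\dd\frac{\delta}{\delta\As^2_\ii}$ of $\Omega^-_\ii$: applying the displayed identity, the $\hbar^2$ prefactor turns its ``double-tree'' piece into an order-$\hbar^0$ density, whereas the potential order-$\hbar$ ``loop'' contribution is proportional to $\frac{\delta^2 S}{\delta\As^{0,1}_\ii\,\delta\As^2_\ii}$, which vanishes identically because $S$ contains no term bilinear in the two components of $\As^-_\ii$ (such a term would have form degree $3$ on $\Sigma$). This is the ``no quantum corrections'' phenomenon of Theorem~\ref{thm:SummaryI} made visible at the level of the mQME.

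The third operator $\hbar^2\Delta_\res$ is handled by the same mechanism; the factors $2$ and $\frac12$ appearing respectively in $\Delta_\res$ and in the residual sector of the exponent are precisely those dictated by the induced symplectic normalization. Its double-tree part, of order $\hbar^0$, is what must cancel against $(\Omega^+_\oo+\Omega^-_\ii)Z$; the only order-$\hbar$ contribution in the whole expression is the loop part $-i\hbar\,\Delta_\res S$ of $-\hbar^2\Delta_\res Z$, and I would argue that it vanishes because $\Delta_\res S$ reduces to a coincident-point (``tadpole'') integral of the Dolbeault operators $\dd,\bar\dd$ over $\Sigma$, which is zero — exactly the statement that no quantum anomaly is generated. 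With the order-$\hbar$ terms gone, it remains to collect the order-$\hbar^0$ densities and check that $(\Omega^+_\oo+\Omega^-_\ii)Z$ reproduces $\hbar^2\Delta_\res Z$ term by term, in complete analogy with the comparison of (\ref{hol-hol mQME1}) and (\ref{hol-hol mQME2}) in the holomorphic-to-holomorphic case. The main obstacle is purely bookkeeping: tracking the orientation signs $\epsilon=\pm1$ that distinguish the in- and out-boundaries, the various factors of $\frac12$, and the integrations by parts moving $\dd,\bar\dd$ between factors on $\Sigma$, while confirming that every tadpole contribution genuinely cancels so that no anomaly survives.
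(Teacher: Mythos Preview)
Your proposal is correct and follows essentially the same direct-computation approach as the paper: apply each of $\Omega^+_\oo$, $\Omega^-_\ii$, $-\hbar^2\Delta_\res$ to $Z=e^{\frac{i}{\hbar}S}$, write each as $Z$ times an explicit local density on $\Sigma$, and check that the three densities sum to zero after integration by parts. The paper simply records the three resulting densities without commenting on the vanishing of the potential $O(\hbar)$ loop terms, whereas you make explicit why those terms drop out (no $\As^{0,1}_\ii\As^2_\ii$ coupling in $S$; $\Delta_\res S$ a $\dd$- or $\bar\dd$-tadpole); this is helpful context but not a different argument.
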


\begin{proof}
Indeed, a straightforward computation gives:
\begin{align*}
\Omega^+_\oo Z&=Z\cdot\int_\Sigma\Big( -\dd \As^0_\oo\,(\As^{0,1}_\ii-\As^{0,1}_\res)+ \As^{1,0}_\oo\;\bar\dd\As^0_\oo \Big), \\
\Omega^-_\ii Z&=Z\cdot \int_\Sigma\Big( 
\dd \As^{0,1}_\ii\, (\As^0_\oo-\As^0_\res)-(\As^{1,0}_\oo-\As^{1,0}_\res)\;\bar\dd (\As^0_\oo-\As^0_\res)
\Big), \\
-\hbar^2 \Delta_\res Z & = Z\cdot \int_\Sigma \Big(
(\dd \As^{0,1}_\res+\bar\dd \As^{1,0}_\res)\, (-\As^0_\oo+\frac12 \As^0_\res)+\\ \nonumber
 +\bar\dd\As^0_\res\, & (-\As^{1,0}_\oo+\frac12 \As^{1,0}_\res+\frac12 \dd\As^0_\Ires)
+\dd \As^0_\res\, (\As^{0,1}_\ii-\frac12 \As^{0,1}_\res+\frac12 \bar\dd \As^0_\Ires)
 \Big ) .
\end{align*}
The sum of these three expressions is zero.
\end{proof}

Similarly, one can check the quantum master equation for the ``small'' realization (\ref{ab CS hol-hol Z_small}):
\begin{equation*}
\left( \Omega^+_\oo + \Omega^-_\ii-\hbar^2 \Delta_\mr{small} \right)\;Z_\mr{small} =0,
\end{equation*}
where $$\Delta_\mr{small}=2\int_\Sigma  \frac{\delta}{\delta \As^2_\res}\frac{\delta}{\delta\sigma}
+\frac{\delta}{\delta \As^0_\res}\frac{\delta}{\delta \As^2_\Ires}$$
is the BV Laplacian on $\VV_\mr{small}$.

Finally, the result of the full integration over residual fields (\ref{ab CS ahol-hol Z fully integrated}) satisfies the BFV cocycle (gauge-invariance) condition
\begin{equation*}
\left( \Omega^+_\oo + \Omega^-_\ii\right)\; Z_*=0.
\end{equation*}

\section{Chern--Simons theory in ``parallel ghost polarization''}\label{s: par ghosts}
In three-dimensional Chern--Simons theory there is another way of picking a pair of polarizations on the opposite sides of a cylinder: we can use the $(\As^0_\oo,\As^{0,1}_\oo)$ representation on the out-boundary surface and the $(\As^0_\ii,\As^{1,0}_\ii)$ representation on the in-surface. Thus, the corresponding polarizations are transversal in ghost number 0 and parallel in ghost number $\neq 0$. See also the discussion of quantization of 1D systems with this class of polarizations in \cite[Section 11]{HJ}.
\subsection{One-dimensional Chern--Simons theory with values in a cochain complex
}\label{sec: 1d par ghosts complex}
As a warm-up, we consider again the one-dimensional theory, with a slightly different setup. Fix an odd integer $k$. Let $$\g = \bigoplus \g^i$$ 
be a graded vector space with a differential $d_\g$ and a compatible graded symmetric pairing $(\cdot,\cdot)$ of degree $-2k$.\footnote{The prime example being $\g = \Omega^\bullet(M)$ with $M$ a $2k$-manifold, $d_\g$ the de Rham differential and $(\alpha,\beta) = \int_M \alpha \wedge \beta$.} Now, we let $X=\g[k]$ - this is a 0-shifted graded symplectic vector space. We call the induced grading on $C^\infty(X)$ the ghost number. It is convenient to express elements of $C^\infty(X)$ in terms of the shifted identity map $\psi \in \mr{Hom}(X,\g)$ which has total degree (ghost number + degree) $k$.\footnote{
If $\{t_a\}$ is a basis in $\g$ and $\{\psi^a\}$ is the shifted dual basis in $X^*$, then we have $\psi=\sum_a t_a \psi^a \in \mr{Hom}(X,\g)$. If the degree of $t_a$ is $|a|$, then the ghost number of $\psi^a$ is $k-|a|$, and thus $\psi$ is indeed an object of total degree $k$.
}
 We denote the ghost number $l$ component of a field $\varphi$ by $\varphi^{[l]}$. In particular, the ghost number $l$ component $\psi^{[l]}$ of $\psi$ has degree $k - l$. For instance, the function 
 $$\Theta(\psi) = \frac12(\psi,d_\g\psi)$$ 
 has ghost number $+1$. Its hamiltonian vector field $Q$ has ghost number $+1$ and satisfies $Q^2 = 0$, thus $(X,\omega,Q)$ is a BFV vector space. 
 
We split the complexification of the ghost number $0$ component of $X$ 
as
 $X^{[0]}_\CC = X^+ \oplus X^-$, with $X^\pm$ the degree 0 $\pm i$-eigenspaces of a complex structure $J$ on $X^{[0]}$ compatible with the pairing. Thus, $X_\CC$ admits the total decomposition 
\begin{equation}
X_\CC = X^{[>0]}_\CC \oplus X^+ \oplus X^- \oplus X^{[<0]}_\CC\label{eq: par ghost split g}
\end{equation}
where $X^{[>0]},X^{[<0]}$ are the components of positive (resp.\ negative) ghost number.\footnote{I.e., coordinates, e.g., on $X^{[>0]}$ have positive ghost number so $X^{[>0]} = \bigoplus_{i<k}\g^i[k]$.} We also introduce the notations $d_\g^+,d_\g^-$ for the composition of the differential $X^{[1]} \to X^+ \oplus X^-$ with projections and similarly for the restriction of the differential ${ X^- \oplus X^+ \to X^{[-1]} }$ (so that $d_\g^-=d_\g\big|_{X^+}$, $d_\g^+=d_\g\big|_{X^-}$). We automatically have  
$(d_\g^+)^2 = (d_\g^-)^2 = 0$ and ${d_\g^+d_\g^- = - d_\g^-d_\g^+}$.\footnote{A particularly important example is the case $\g = \Omega^\bullet(M)$ where $M$ is a $2k$-manifold, with $k=2l+1$ odd, endowed with a complex structure that allows us to decompose $\Omega^k_\CC(M) = \Omega^+(M) \oplus \Omega^-(M)$ into lagrangian subspaces with respect to $(\alpha,\beta)=\int_M \alpha\wedge \beta$, with the splitting given by (\ref{Omega^k splitting into + and -}) below. 
In the case $k=1$, we have $d^+_\g = \dd,d^-_\g = \bar{\dd}$.}
\subsubsection{Setup}
We now consider the 1-dimensional AKSZ theory with target the symplectic graded vector space $(X,(\cdot,\cdot))$ and hamiltonian $\Theta(\psi)$. 
The space of fields is 
$$\FF = \Omega^\bullet(I;X).$$
It is parametrized by the superfield $\mathcal{A}$ valued in $\Omega^\bullet(I;\g)$. We denote the 0- and 1-form components of $\mathcal{A}$  by $\psi$ and $A$, respectively.
The total degrees of $\A,\psi,A$ are all odd. 
The action is 
$$S[\psi + A]= \frac12\int_I({\A},d_I\A) +\frac12 \int_I(\A,d_\g \A) = \frac{1}{2}\int_I (\psi,d_I\psi) +\int_I (A,d_\g\psi).$$
The 
space of boundary fields is 
$$\FF^\dd = X_{\ii} \times X_{\oo} \ni (\psi_\ii,\psi_\oo).$$
The boundary 1-form is 
$$\alpha_{\dd I} = \alpha^{\dd}_\oo + \alpha^{\dd}_\ii =  \frac12(\psi_\oo,\delta \psi_\oo) - \frac12(\psi_\ii,\delta \psi_\ii) = \frac12(\psi,\delta \psi)\bigg|_{t=0}^{t=1}.$$
Splitting elements of $X_\CC$ 
according to \eqref{eq: par ghost split g}, 
$\psi = \psi^{[<0]} + \psi^+ + \psi^- + \psi^{[>0]}$, 
 the boundary 1-form splits similarly: 
$$\alpha^{\dd}_\oo =\frac12\Big[ (\psi_\oo^{[<0]},\delta \psi_\oo^{[>0]}) +  (\psi_\oo^+,\delta \psi_\oo^-)+ (\psi_\oo^-,\delta \psi_\oo^+)+  (\psi_\oo^{[>0]},\delta \psi_\oo^{[<0]})\Big]$$
and similarly for $\alpha^\dd_\ii$. 
\subsubsection{Parallel ghost polarization}
 Let us now consider the case where the polarizations are parallel in the ghost sector (of the target) and transversal in the physical sector: 
\begin{align*}\PP &= \PP_\ii \times \PP_\oo, \\ 
\PP_\ii &= \left\lbrace\frac{\delta}{\delta \psi^{[<0]}_\ii},\frac{\delta}{\delta \psi_\ii^{+}}\right\rbrace ,\\
\PP_\oo &= \left\lbrace\frac{\delta}{\delta \psi^{[<0]}_\oo},\frac{\delta}{\delta \psi_\oo^{-}}\right\rbrace ,
\end{align*}
so that we are using the $(\psi^{[>0]}_\ii,\psi^{-}_\ii)$ representation at $t=0$ and the $(\psi^{[>0]}_\oo,\psi^{+}_\oo)$ representation at $t=1$, i.e.:
$$\BB = \BB_\ii \times \BB_\oo, \qquad \BB_\ii = X^{[ > 0]}_\CC \oplus X^-, \qquad \BB_\oo = X^{ [> 0]}_\CC \oplus X^+.$$
The polarized 1-form is 
$\alpha^\pol_{\dd I} =\alpha^{\dd,\PP_\oo} + \alpha^{\dd,\PP_\ii}$ with 
\begin{align*}
\alpha^{\dd,\PP_\ii} &= -(\psi^{[<0]}_\ii,\delta \psi^{[>0]}_\ii) - (\psi_\ii^{+},\delta \psi_\ii^-) = \alpha^{\dd}_\ii - \delta f_\ii, \\
\alpha^{\dd,\PP_\oo} & =(\psi^{[<0]}_\oo,\delta \psi^{[>0]}_\oo) + (\psi_\oo^-,\delta \psi_\oo^{+}) = \alpha^{\dd}_\oo + \delta f_\oo,
\end{align*}
where\footnote{Since $\psi$ is odd, we have, e.g., $\delta(\psi^+,\psi^-) = (\psi^-,\delta\psi^+) - (\psi^+,\delta\psi^-)$.}
 $f_\ii =\frac12 (\psi^{[>0]}_\ii,\psi^{[<0]}_\ii) - \frac12(\psi^+_\ii,\psi^-_\ii)$, $f_\oo = \frac12 (\psi^{[>0]}_\oo,\psi^{[<0]}_\oo) + \frac12(\psi_\oo^+,\psi_\oo^-)$,
so that 
$\alpha^\pol_{\dd I} = \alpha_{\dd I} + \delta f$
 with 
\begin{multline*}
f (\psi_\oo,\psi_\ii) = f_\oo(\psi_\oo) - f_\ii(\psi_\ii) \\
= \frac12 (\psi^{[>0]}_\oo,\psi^{[<0]}_\oo) + \frac12(\psi_\oo^+,\psi_\oo^{-}) -\frac12 (\psi^{[>0]}_\ii,\psi^{[<0]}_\ii) + \frac12(\psi_\ii^+,\psi_\ii^-) .
\end{multline*}
The polarized action is 
$$S^{\pol}[\mathcal{A}]=  \frac12\int_I({\A},d_I\A) +\frac12 \int_I(\A,d_\g \A) + f(\mathcal{A}).$$
\subsubsection{Gauge fixing}
The kernel $\YY$ of the map $\mathcal{F} \to \mathcal{B}$ is 

 \begin{equation}\YY = \Omega^\bullet(I,\partial I; X^{[> 0]}_\CC) \oplus \Omega^\bullet(I,\{0\};X^-) \oplus \Omega^\bullet(I,\{1\};X^+) \oplus \Omega^\bullet(I;X^{[<0]}_\CC).\label{eq: fluc 1d par gh}
 \end{equation}
 
Choosing the minimal space of residual fields 
$$\VV = \langle dt \rangle \otimes X^{ [> 0]}_\CC \oplus \langle 1 \rangle \otimes X^{ [<0]}_\CC \ni {\A}_\res = dt \cdot A_\res + 1\cdot \psi_{\res}, $$
we obtain 
$$\YY = \VV \times \YY' \ni (\A_\res,\A_\fl)$$
with 
\begin{equation*}
\YY' = \Omega^\bullet(I,\partial I; X^{[> 0]}_\CC)_{\int= 0} \oplus \Omega^\bullet(I,\{0\};X^-) \oplus \Omega^\bullet(I,\{1\};X^+) \oplus \Omega^\bullet(I;X^{[<0]}_\CC)_{\int \cdot \wedge dt=0}.
 \end{equation*}
Here the notation $\int \cdot = 0$ (resp.\ $\int \cdot dt = 0$) denotes acylic subcomplexes of forms with vanishing integral (resp.\ forms whose product with $dt$ has vanishing integral).
Choosing an extension 
$${\widetilde{\psi}} = \widetilde{\psi^{[>0]}_\ii} + \widetilde{\psi_\ii^-} + \widetilde{\psi^{[>0]}_\oo} + \widetilde{\psi_\oo^{+}}$$ 
of boundary fields into the bulk, we obtain a splitting 
of
$\A = \psi
 +A $
into $$\psi + A = \widetilde{\psi} + \psi_\res +  \psi_\fl + dt \cdot A_\res + A_\fl.$$ 
Inside $\YY'$, we have the gauge-fixing lagrangian $\LL \subset \YY'$ given by forms of degree 0 in $I$  --- i.e., $\LL$ is given by $A_\fl = 0$ --- and we write for $\psi_\fl \in \LL$ 
$$\psi_\fl = \psi_\fl^{[>0]} + \psi^-_\fl + \psi_\fl^{+} + \psi_\fl^{[<0]} .$$ 
Recollecting, for a field $\psi+A \in \BB \times \VV \times \LL$ we obtain 
\begin{align}
\psi &= \widetilde{\psi} + \psi_\res + \psi_\fl \label{eq:psi split I}\\
&= \widetilde{\psi^{[>0]}_\ii} + \widetilde{\psi_\ii^-} + \widetilde{\psi^{[>0]}_\oo} + \widetilde{\psi_\oo^{+}} + \psi_\res + \psi_\fl^{[>0]} + \psi^-_\fl + \psi_\fl^{+} + \psi_\fl^{[<0]},  \notag \\
A &= dt \cdot A_{\res}. \notag
\end{align}

The gauge-fixed polarized action is then computed as follows:
\begin{lemma}\label{lem: S GF par ghosts}
Restricted to the gauge-fixing lagrangian, the polarized action can be written as 
\begin{equation} \label{1d CS parall ghost S^f on L}
 S^\pol[\As] = S_{\mr{source}}[\psi_\ii,\psi_\oo,\psi_\fl] + S_0[\psi_\fl] + S_{\mr{int}}[\psi_\res,\psi_\fl,A_\res] +S_{\mr{back}}[\psi_\ii,\psi_\oo,\psi_\res,A_\res] ,
\end{equation}
where 
 \begin{align*}
S_{\mr{source}}[\psi_\ii,\psi_\oo,\psi_\res,\psi_\fl]  &=  (\psi^+_\oo,\psi_\fl^-(1)) - (\psi^-_\ii,\psi_\fl^+(0)) + (\psi^{[>0]}_\oo, \psi^{[<0]}_\fl(1)) - (\psi^{[>0]}_\ii,\psi_\fl^{[<0]}(0)) ,\\
 S_{0}[\psi_\fl] &= \int_I(\psi_\fl^+,d_I\psi_\fl^-) + \int_I(\psi^{[<0]}_\fl,d_I\psi_\fl^{[>0]})  , \\
 S_{\mr{int}}[\psi_\fl,A_\res] &=  -\int_I\,dt(d_\g^+ A^{[1]}_{\res},\psi_\fl^-) - \int_I\,dt(d_\g^-A^{[1]}_\res,\psi^+_\fl) + \int_Idt (A_{\res}^{[>1]},d_\g\psi_\fl^{[<0]})  , \\
 S_{\mr{back}}[\psi_\ii,\psi_\oo,\psi_\res,A_\res] &=(\psi^{[>0]}_\oo-\psi^{[>0]}_\ii,\psi_{\res}) + (A_{\res}^{[>1]},d_\g\psi_{\res}) .
 \end{align*}

\end{lemma}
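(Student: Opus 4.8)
The plan is to substitute the field decomposition \eqref{eq:psi split I}, together with the gauge condition $A = dt\cdot A_\res$, directly into
\[
S^\pol = \frac12\int_I(\A,d_I\A) + \frac12\int_I(\A,d_\g\A) + f,
\]
and then to sort the resulting terms by their field content. Since $A$ is top-degree along $I$ one has $d_I\A = d_I\psi$, whence $\frac12\int_I(\A,d_I\A) = \frac12\int_I(\psi,d_I\psi)$; likewise only the mixed $0$-form/$1$-form part of $(\A,d_\g\A)$ survives, giving $\frac12\int_I(\A,d_\g\A) = \int_I(A,d_\g\psi) = \int_I dt\,(A_\res,d_\g\psi)$, as already recorded in the definition of $S$. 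Thus the whole computation reduces to expanding these two scalar integrals in $\psi = \widetilde\psi + \psi_\res + \psi_\fl$, using throughout that $d_I\psi_\res = 0$ (residual fields are $t$-independent) and that the pairing couples $X^+$ with $X^-$ and $X^{[>0]}$ with $X^{[<0]}$.

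First I would treat the $d_I$-term. The purely fluctuating part $\frac12\int_I(\psi_\fl,d_I\psi_\fl)$ is integrated by parts; the endpoint contributions vanish thanks to the vanishing conditions built into $\YY$ in \eqref{eq: fluc 1d par gh} ($\psi_\fl^{[>0]}$ is Dirichlet at both ends, $\psi_\fl^-$ vanishes at $0$, $\psi_\fl^+$ at $1$), and the two symmetric cross-pairings then combine to give exactly $S_0$. The mixed terms between the extensions $\widetilde\psi$ and the interior part $\psi_\res + \psi_\fl$ produce the source and background couplings: only the ordering in which $d_I$ hits $\widetilde\psi$ survives (the other vanishes, since $\widetilde\psi$ is supported at the endpoints), and $d_I$ of the discontinuous extension localizes the integral, $d_I\widetilde{\psi_\oo}$ giving $+(\cdot)|_{t=1}$ paired with $\psi_\oo$ and $d_I\widetilde{\psi_\ii}$ giving $-(\cdot)|_{t=0}$ paired with $\psi_\ii$, each with weight $\frac12$. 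These halves are completed to weight $1$ by the matching contributions of $f = f_\oo - f_\ii$, once one notes that the ``momentum'' entries of $f$, namely $\psi_\oo^-,\psi_\oo^{[<0]}$ and $\psi_\ii^+,\psi_\ii^{[<0]}$, are just the boundary values of the dynamical field: at $t=1$ these are $\psi_\fl^-(1)$ and $\psi_\res + \psi_\fl^{[<0]}(1)$, and analogously at $t=0$. Collecting the fluctuating pieces yields $S_{\mr{source}}$, while the $\psi_\res$-pieces assemble into the term $(\psi_\oo^{[>0]} - \psi_\ii^{[>0]},\psi_\res)$ of $S_{\mr{back}}$.

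Finally I would expand $\int_I dt\,(A_\res,d_\g\psi)$. The extension $\widetilde\psi$ drops out, being supported at the endpoints and hence killed by the smooth measure $dt$; the $t$-independent piece $\int_I dt\,(A_\res,d_\g\psi_\res) = (A_\res^{[>1]},d_\g\psi_\res)$ supplies the remaining part of $S_{\mr{back}}$; and $\int_I dt\,(A_\res,d_\g\psi_\fl)$ gives $S_{\mr{int}}$ after splitting $A_\res = A_\res^{[1]} + A_\res^{[>1]}$ by ghost number, moving $d_\g$ onto $A_\res$ by its adjointness with respect to $(\cdot,\cdot)$, and decomposing $d_\g A_\res^{[1]}$ into $d_\g^\pm A_\res^{[1]} \in X^\pm$; ghost-number matching then shows that $A_\res^{[1]}$ pairs only with $\psi_\fl^\mp$ and $A_\res^{[>1]}$ only with $\psi_\fl^{[<0]}$ (in particular $\psi_\fl^{[>0]}$ does not contribute), reproducing the three terms of $S_{\mr{int}}$. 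The main obstacle is purely bookkeeping: tracking the Koszul signs produced by the graded symmetry of $(\cdot,\cdot)$ and by the integrations by parts, and handling the discontinuous extensions carefully enough that the endpoint $\delta$-contributions combine with $f$ with coefficient exactly $1$ while the genuine boundary terms of the fluctuations cancel. Once the signs are fixed the four groups are precisely $S_{\mr{source}},S_0,S_{\mr{int}},S_{\mr{back}}$, which is \eqref{1d CS parall ghost S^f on L}.
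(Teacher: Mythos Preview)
Your proposal is correct and follows essentially the same route as the paper's proof: both substitute the decomposition $\psi = \widetilde\psi + \psi_\res + \psi_\fl$ and $A = dt\cdot A_\res$ into $S^\pol = \frac12\int_I(\psi,d_I\psi) + \int_I(A,d_\g\psi) + f$, use the discontinuous extensions to localize the cross terms with $\widetilde\psi$ at the endpoints (each with weight $\frac12$), observe that $f$ supplies the other half of these boundary contributions, and then split the $d_\g$-term by ghost number to obtain $S_{\mr{int}}$ and the remainder of $S_{\mr{back}}$. The paper phrases the localization step as ``letting the support of $\widetilde\psi$ go towards $\partial I$'' rather than via $d_I\widetilde\psi$ being a boundary delta, but the mechanism and the resulting bookkeeping are identical.
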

 Here we have introduced the notation $A^{[1]}_\res,A^{[>1]}_\res$ for the components of $A_{\res}$
valued in $X^{[1]}$ and in $X^{[>1]}$, respectively.\footnote{
We understand $A^{[k]}_\res$ as valued in $X^{[k]}$ but with ghost number $k-1$; the shift is due to the fact that $A^{[k]}_\res$ is a coefficient of a $1$-form on the source. This shift is a standard feature of the AKSZ construction. In particular, $A^{[1]}_\res$ is an object of ghost number \textit{zero}.
}
\begin{proof}
The polarized action is 
$$S^\pol[\psi+A] = \frac{1}{2}\int_I (\psi,d_I\psi) + \int_I(A,d_\g\psi) + f(\psi) ,
$$
where 
\begin{align*}
f(\psi) &= 
 f_\oo(\psi(1)) - f_\ii(\psi(0))  
 \\
&= \frac12( \psi^{[>0]}_\oo,\psi^{[<0]}_\fl(1)) + \frac12(\psi^+_\oo,\psi^-_\fl(1)) -\frac12(\psi^{[>0]}_\ii,\psi^{[<0]}_\fl(0)) + \frac12 (\psi^+_\fl(0),\psi^-_\ii) .
\end{align*}
Splitting $ \psi$ as in \eqref{eq:psi split I} and letting the support of $\widetilde{\psi}$ go towards $\dd I$ we obtain that 
\begin{align*}
\frac12 \int_I (\psi,d_I\psi) &= \frac12\int_I (\widetilde{\psi} + \psi_\res + \psi_\fl,d_I(\widetilde{\psi} + \psi_\res + \psi_\fl)) \\
&=\frac12(\widetilde{\psi},\psi_\res)\bigg|^{t=1}_{t=0} + \frac12(\widetilde{\psi},\psi_\fl)\bigg|^{t=1}_{t=0} +  \frac{1}{2}\int_I(\psi_\fl,d_I\psi_\fl) \\
&= \frac12(\psi^{[>0]}_\oo-\psi^{[>0]}_\ii,\psi_{\res})  + \frac12(\psi^{[>0]}_\oo,\psi_\fl^{[<0]}(1)) + \frac12(\psi_\oo^{+},\psi_\fl^-(1))\\
&-\frac12(\psi^{[>0]}_\ii,\psi^{[<0]}_\fl(0)) - \frac12 (\psi_\ii^-,\psi_\fl^+(0)) 
+ \int_I(\psi_\fl^+,d_I\psi_\fl^-) + \int_I(\psi_\fl^{[<0]},d_I\psi_\fl^{[>0]})\end{align*}
and 
\begin{align*}
\int_I (dt \cdot A_\res,d_\g\psi) 
=&-\int_I\,dt(d_\g^+A^{[1]}_{\res},\psi_\fl^-) - \int_I\,dt(d_\g^-A^{[1]}_\res,\psi^+_\fl) \\
&+ \int_Idt (A_{\res}^{[>1]},d_\g\psi_\fl^{[<0]}) + (A_{\res}^{[>1]},d_\g\psi_{\res}).
\end{align*}
Collecting the various terms, we obtain (\ref{1d CS parall ghost S^f on L}). 
\end{proof}
Notice that adding $f$ has the effect of doubling the boundary source terms. 

 \subsubsection{Effective action} 

The effective action is defined by 
\begin{align*}
Z &= e^{\frac{i}{\hbar}S_\eff[\psi_\ii,\psi_\oo,\psi_\res,A_\res]} = \int\DD\psi_\fl\, e^{\frac{i}{\hbar}S^f[\psi_\ii,\psi_\oo,\psi_\res,\psi_\fl,A_\res]} \\
&=e^{\frac{i}{\hbar}S_{\mr{back}}}\int\DD\psi_\fl\,  e^{\frac{i}{\hbar}(S_{\mr{source}}+S_{0}+S_{\mr{int}})}
\end{align*} 
where the integral is defined in terms of Feynman diagrams. 
\begin{proposition}\label{prop: 1d par ghosts Seff}
The effective action is given by 
\begin{align*}
S_\eff &=   S_\ph+S_\gh\quad \mbox{with} \\
S_\ph &= (\psi^+_\oo,\psi^-_\ii) + (\psi^-_\ii,d_\g^+A^{[1]}_\res) + (\psi_\oo^+,d_\g^-A^{[1]}_\res) + \frac12 (d_\g^- A^{[1]}_\res,d_\g^+ A^{[1]}_\res), \\
S_\gh &= (\psi^{[>0]}_\oo - \psi^{[>0]}_\ii,\psi_\res) + (A^{[>1]}_\res,d_\g\psi_\res) .
\end{align*}
\end{proposition}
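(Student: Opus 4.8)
The plan is to evaluate the Gaussian integral over $\psi_\fl$ directly, exploiting that the gauge-fixed action \eqref{1d CS parall ghost S^f on L} is at most quadratic in the fluctuation, so that its stationary-phase evaluation is exact. First I would note that $S_{\mr{back}}$ is independent of $\psi_\fl$ and hence pulls out of the integral, contributing to the effective action verbatim; this already gives $S_\gh = S_{\mr{back}} = (\psi^{[>0]}_\oo - \psi^{[>0]}_\ii,\psi_\res) + (A^{[>1]}_\res, d_\g\psi_\res)$. It then remains to show that $\int \DD\psi_\fl\, e^{\frac{i}{\hbar}(S_{\mr{source}} + S_0 + S_{\mr{int}})} = e^{\frac{i}{\hbar}S_\ph}$ with no quantum corrections. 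Since $S_0$ is block diagonal --- it couples $\psi^+_\fl$ only to $\psi^-_\fl$ and $\psi^{[<0]}_\fl$ only to $\psi^{[>0]}_\fl$ --- and the source and interaction terms respect this decomposition, the integral factorizes into a physical sector ($\psi^\pm_\fl$) and a ghost sector ($\psi^{[>0]}_\fl,\psi^{[<0]}_\fl$).

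Next I would dispose of the ghost sector. Inspecting $S_{\mr{source}}$ and $S_{\mr{int}}$, every ghost-sector source couples to $\psi^{[<0]}_\fl$ (namely $\psi^{[>0]}_\oo$ and $\psi^{[>0]}_\ii$ at the endpoints, and $A^{[>1]}_\res$ in the bulk), while $\psi^{[>0]}_\fl$ carries no source at all. Because the propagator dual to $S_0$ in this sector pairs $\psi^{[<0]}_\fl$ with $\psi^{[>0]}_\fl$, every edge would have to terminate on a $\psi^{[>0]}_\fl$-source, of which there are none. Hence there are no connected Feynman diagrams, and the ghost integral produces only a field-independent Gaussian normalization, which we normalize to one. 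Thus the ghost sector adds nothing beyond the already-extracted $S_{\mr{back}}$, confirming $S_\gh = S_{\mr{back}}$.

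For the physical sector I would first record the propagator. Since the minimal choice of $\VV$ leaves no residual field in $X^\pm$, the complex $\Omega^\bullet(I,\{0\};X^-) \oplus \Omega^\bullet(I,\{1\};X^+)$ is acyclic and the chain homotopy inverting $d_I$ carries no zero-mode subtraction; concretely $\langle \psi^-_\fl(t)\,\psi^+_\fl(t')\rangle = \theta(t-t')$, the bare Green's function of $d_I$ compatible with $\psi^-_\fl(0)=0$ and $\psi^+_\fl(1)=0$ (the analogue of \eqref{1d ab CS ahol-hol eta} with the residual modes deleted). The sources coupling to $\psi^-_\fl$ are $\psi^+_\oo$ at $t=1$ and $-d_\g^+ A^{[1]}_\res$ in the bulk; those coupling to $\psi^+_\fl$ are $\psi^-_\ii$ at $t=0$ and $-d_\g^- A^{[1]}_\res$ in the bulk. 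There are exactly four tree diagrams, each joining one $\psi^-_\fl$-source to one $\psi^+_\fl$-source by the propagator. The boundary--boundary diagram gives $\theta(1-0)\,(\psi^+_\oo,\psi^-_\ii) = (\psi^+_\oo,\psi^-_\ii)$; the two boundary--bulk diagrams give $(\psi^-_\ii,d_\g^+ A^{[1]}_\res)$ and $(\psi^+_\oo,d_\g^- A^{[1]}_\res)$, each with coefficient $\int_0^1 dt\,\theta(\cdot)=1$; and the bulk--bulk diagram gives $\frac12 (d_\g^- A^{[1]}_\res, d_\g^+ A^{[1]}_\res)$, the factor $\frac12$ arising from $\int_0^1\int_0^1 \theta(t-t')\,dt\,dt' = \frac12$. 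Summing these reproduces $S_\ph$.

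The hard part is entirely the bookkeeping of signs and prefactors. I would track the Koszul signs from the odd fields and from the degree $-2k$ pairing, and in particular verify that the boundary sources enter with coefficient $1$ rather than $\frac12$ --- this is exactly the doubling caused by the polarization term $f$ noted after Lemma \ref{lem: S GF par ghosts}. I would also confirm that the acyclicity of the physical complex is what removes the $2(1-t)t'$-type corrections present in \eqref{1d ab CS ahol-hol eta}, so that the bare $\theta$-propagator is the correct one, and that the quadratic-plus-linear form of the action (the $d_\g$ factors being external, non-integrated insertions rather than vertices) precludes any loop, and hence any $\hbar$, corrections.
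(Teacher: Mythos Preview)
Your proposal is correct and follows essentially the same approach as the paper: a Feynman-diagram evaluation of the Gaussian integral, with the same propagator $\eta_\ph(t,t')\propto\theta(t'-t)$, the same three single-edge diagrams in the physical sector, and the same observation (relegated to a footnote in the paper) that the ghost sector contributes nothing because no source couples to $\psi^{[>0]}_\fl$. Your additional remarks on why the propagator carries no zero-mode subtraction and why no loops arise are sound and match the paper's implicit reasoning.
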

\begin{proof}
In terms of Feynman diagrams, $S_{\mr{source}}$ generates boundary vertices (figure \ref{fig:bdryvertices}), $S_{\mr{int}}$ generates bulk vertices (Figure \ref{fig:bulkvertices}).\footnote{We are ignoring here the ghost vertices that in this case do not contribute to the effective action since no vertex carries as $\psi^{[>0]}$.} \\
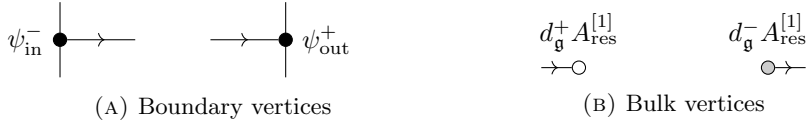
\begin{figure}[h]
\begin{subfigure}{0.45\linewidth}
\begin{tikzpicture}
\draw (0,.5) -- (0,-.5);
\node[coordinate] (b1) at (1,0) {};
\node[bdry,label=left:{$\psi_\ii^-$}] at (0,0) {} edge[fermion] (b1); 
\begin{scope}[shift={(3,0)}]
\draw (0,.5) -- (0,-.5);
\node[bdry,label=right:{$\psi_\oo^+$}] (bdry1) at (0,0) {} ; 
\node[coordinate] at (-1,0) {} edge[fermion] (bdry1);
\end{scope}
\end{tikzpicture}
\caption{Boundary vertices} \label{fig:bdryvertices}

\end{subfigure}
\begin{subfigure}{0.5\linewidth}
\centering
\begin{tikzpicture}
\node[b1, label=above:{$d_\g^+ A^{[1]}_\res$}] (bulk1) at (1,0) {};\node[coordinate] at (.5,0) {}edge[fermion] (bulk1);
\end{tikzpicture}
\hspace{1cm}
\begin{tikzpicture}
\node[coordinate] (bulk1) at (1.5,0) {};
\node[b2, label=above:{$d_\g^- A^{[1]}_\res$}]  at (1,0) {}edge[fermion] (bulk1);
\end{tikzpicture}
\caption{Bulk vertices}\label{fig:bulkvertices}
\end{subfigure}
\caption{Vertices in 1D AKSZ theories with linear polarizations}
\end{figure}  

The term $S_0$ generates the propagators
\begin{align*}
\eta_{\ph}(t,t')=\frac{i}{\hbar} 
\langle\psi^+_\fl(t) \psi^-_\fl(t')\rangle &= (\theta(t'-t))\cdot \omega^{-1} \\
\eta_{\gh}(t,t')=\frac{i}{\hbar} 
\langle\psi_\fl^{[>0]}(t) \psi_\fl^{[<0]}(t')\rangle &= (\theta(t-t')-t)\cdot\omega^{-1}
\end{align*}
with $\theta(t)$ the Heaviside function and $\omega^{-1}$ the inverse of the pairing on $\g$. There are three types of connected Feynman diagrams contributing to $S_\eff$ (see Figure \ref{fig:FeynmanGraphsLin}): 
\begin{enumerate}
\item A single edge connecting the two boundary vertices (Figure \ref{fig:FeynmanGraphsLinA}). That diagram evaluates to $(\psi_\ii^-,\psi_\oo^+)$.
\item A single edge connecting a boundary vertex to a bulk vertex (Figure \ref{fig:FeynmanGraphsLinB})
.  Those diagrams yield $(\psi^-_\ii,d_\g^+ A^{[1]}_\res) + (\psi^+_\oo,d_\g^-  A^{[1]}_\res).$
\item A single edge connecting the two bulk vertices (Figure \ref{fig:FeynmanGraphsLinC})
. This diagram gives, using $\int_{I \times I}\eta_{\ph}(t,t') = \frac12$, the contribution $\frac12 (d_\g^- A^{[1]}_\res,d_\g^+ A^{[1]}_\res)$.
\end{enumerate} 
\begin{figure}[h]
\begin{subfigure}[b]{0.4\linewidth}
\centering
\begin{tikzpicture}[scale=1]
\draw (0,0.5) -- (0,-.5);
\node[bdry,label=left:{$\psi^-_\ii$}] (bdry1) at (0,0) {};
\draw (2,0.5) -- (2,-0.5);
\node[bdry,label=right:{$\psi^+_\oo$}] (bdry2) at (2,0) {};
\draw[fermion] (bdry1) -- (bdry2);
\end{tikzpicture}
\caption{Single edge connecting two boundaries}\label{fig:FeynmanGraphsLinA}

\end{subfigure}
\begin{subfigure}[b]{0.55\linewidth}
\centering
\begin{tikzpicture}[scale=1]
\draw (0,0.5) -- (0,-.5);
\node[bdry,label=left:{$\psi^-_\ii$}] (bdry1) at (0,0) {};
\draw (4,0.5) -- (4,-0.5);
\node[bdry,label=right:{$\psi^+_\oo$}] (bdry2) at (4,0) {};
\node[b1, label=above:{$d_\g^+ A^{[1]}_\res$}] (bulk1) at (1,0) {};
\node[b2,label=above:{$d_\g^- A^{[1]}_\res$}](bulk2) at (3,0) {};
\draw[fermion] (bdry1) -- (bulk1);
\draw[fermion] (bulk2) -- (bdry2);
\end{tikzpicture}
\caption{Single edge connecting bulk and boundary}\label{fig:FeynmanGraphsLinB}
\end{subfigure}
\begin{subfigure}{0.5\linewidth}
\begin{tikzpicture}[scale=1]
\draw (0,0.5) -- (0,-.5);
\draw (4,0.5) -- (4,-0.5);
\node[b1, label=below:{$d_\g^+ A^{[1]}_\res$}] (bulk1) at (3,0) {};
\node[b2,label=above:{$d_\g^- A^{[1]}_\res$}](bulk2) at (1,0) {};
\draw[fermion] (bulk2) -- (bulk1);
\end{tikzpicture}
\caption{Single edge connecting two bulk vertices}\label{fig:FeynmanGraphsLinC}
\end{subfigure}
\caption{Connected Feynman diagrams in effective action }\label{fig:FeynmanGraphsLin}
\end{figure}
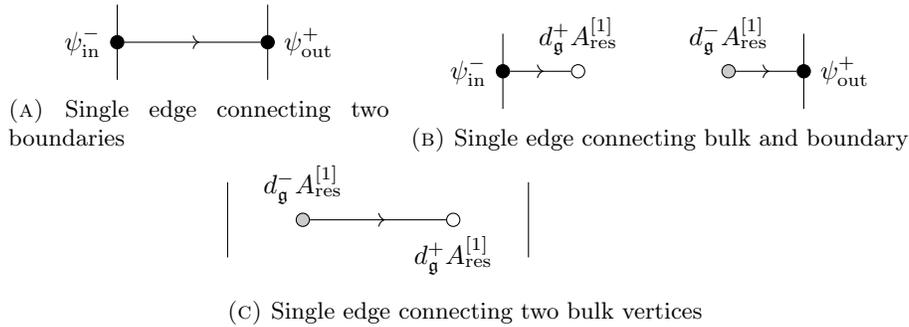
In total we obtain the effective action 
\begin{multline}
S_{\eff}[\psi_\ii,\psi_\oo,\psi_\res,A_\res] =  (\psi^{[>0]}_\oo-\psi^{[>0]}_\ii,\psi_\res) - (\psi^-_\ii,\psi^+_\oo) \\
+ (\psi^-_\ii,d_\g^+ A^{[1]}_\res) + (\psi_\oo^+,d_\g^-A^{[1]}_\res) + \frac12 (d_\g^- A^{[1]}_\res,d_\g^+ A^{[1]}_\res) + (A^{[>1]}_\res,d_\g \psi_{\res}). \label{eq: S eff 1d CS par}
\end{multline}
Separating the term depending only on ghost number 0 fields from the rest, we obtain the proof. 
\end{proof}
\begin{proposition}\label{prop: 1d par ghosts HJ}
The lagrangian generated by the $\gh=0$ part of the action is the  evolution relation in $X^{[0]}_\ii \times X^{[0]}_\oo$. 
\end{proposition}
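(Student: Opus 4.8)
The plan is to compare two lagrangians in $X^{[0]}_\ii \times X^{[0]}_\oo$: the evolution relation $L$ and the lagrangian $L_{S_\ph}$ generated by $S_\ph$, viewed as a generalized generating function with auxiliary parameter $a := A^{[1]}_\res \in X^{[1]}_\CC$. First I would pin down $L$ directly from the dynamics. Varying the unpolarized action $S$ in $\gh=0$ produces the constraint $d_\g\psi^{[0]}=0$ (from $A^{[1]}$) and the evolution equation $d_I\psi^{[0]} = -\,d_\g A^{[1]}$ (from $\psi^{[0]}$). Integrating the latter over $I$, and noting that the constraint is preserved along the flow since $d_\g^2=0$, one reads off
\[
L = \{(\psi^{[0]}_\ii,\psi^{[0]}_\oo) : d_\g\psi^{[0]}_\ii = d_\g\psi^{[0]}_\oo = 0,\ \psi^{[0]}_\oo - \psi^{[0]}_\ii \in \mathrm{Im}(d_\g|_{X^{[1]}})\},
\]
i.e.\ pairs of $d_\g$-closed elements of $X^{[0]}$ differing by a $d_\g$-exact one.

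Next I would compute $L_{S_\ph}$ from the generating-function rules recalled in Section \ref{sec:HJ}. With $B_\ii = X^-$ and $B_\oo = X^+$, the conjugate momenta are $\psi^+_\ii$ and $\psi^-_\oo$, recovered as $\psi^+_\ii = -\,\delta S_\ph/\delta\psi^-_\ii$ and $\psi^-_\oo = \delta S_\ph/\delta\psi^+_\oo$, while $a$ is extremized, $\delta S_\ph/\delta a = 0$. Differentiating $S_\ph$ of Proposition \ref{prop: 1d par ghosts Seff} gives, up to the signs dictated by the graded pairing of the odd fields, $\psi^+_\ii = \pm(\psi^+_\oo + d_\g^+ a)$ and $\psi^-_\oo = \pm(\psi^-_\ii + d_\g^- a)$, whence $\psi^{[0]}_\oo - \psi^{[0]}_\ii = \pm\, d_\g a$: the momentum equations already reproduce the exactness condition defining $L$.

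The crux is the critical-point equation $\delta S_\ph/\delta a = 0$. Using the graded self-adjointness of $d_\g$ with respect to $(\cdot,\cdot)$ to move the operators off the variation $\delta a$, this equation becomes
\[
d_\g^+\psi^-_\ii + d_\g^-\psi^+_\oo + \tfrac12(d_\g^- d_\g^+ + d_\g^+ d_\g^-)a = 0.
\]
On the other hand, substituting the momentum equations into $d_\g\psi^{[0]}_\ii = d_\g^-\psi^+_\ii + d_\g^+\psi^-_\ii$ and $d_\g\psi^{[0]}_\oo = d_\g^-\psi^+_\oo + d_\g^+\psi^-_\oo$ and adding, I expect to find that the critical-point equation is exactly equivalent to $d_\g\psi^{[0]}_\ii + d_\g\psi^{[0]}_\oo = 0$. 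Since exactness already forces $d_\g\psi^{[0]}_\oo = d_\g\psi^{[0]}_\ii$ (again because $d_\g^2=0$), this yields $d_\g\psi^{[0]}_\ii = d_\g\psi^{[0]}_\oo = 0$, so every point of $L_{S_\ph}$ lies in $L$. Conversely, given a point of $L$ the required primitive $a$ with $d_\g a = \pm(\psi^{[0]}_\oo - \psi^{[0]}_\ii)$ exists by definition of exactness, and closedness makes the critical-point equation hold, giving $L \subseteq L_{S_\ph}$; hence $L_{S_\ph} = L$.

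I expect the main obstacle to be purely the sign bookkeeping coming from the graded-symmetric degree $-2k$ pairing acting on odd fields, which enters both the momentum rules and the adjointness manipulation that produces the displayed critical-point equation; keeping the sign of $d_\g a$ consistent across the two inclusions is precisely what makes the closedness and exactness conditions align. An alternative, essentially equivalent route is to observe that $S_\ph$ is exactly the generalized Hamilton--Jacobi action of this $1$-dimensional theory, so that the statement is an instance of the result of \cite{HJ} that the Hamilton--Jacobi action is a generalized generating function for the evolution relation; the direct computation above makes the argument self-contained.
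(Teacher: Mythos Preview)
Your approach is essentially the same as the paper's: compute the evolution relation from the Euler--Lagrange equations, compute the lagrangian generated by $S_\ph$ via the three variational equations, and match them. The paper is slightly more direct in that it simply verifies that the two momentum equations reproduce \eqref{eq: proof HJ par gh 2}--\eqref{eq: proof HJ par gh 3} and that the critical-point equation enforces the constraint \eqref{eq: proof HJ par gh 1}; your route via ``critical equation $\Leftrightarrow d_\g\psi^{[0]}_\ii + d_\g\psi^{[0]}_\oo = 0$, plus exactness $\Rightarrow d_\g\psi^{[0]}_\ii = d_\g\psi^{[0]}_\oo$'' is a mild repackaging of the same computation, and your explicit check of both inclusions is a nice touch.

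There is, however, a concrete error in your displayed critical-point equation. The quadratic term you wrote, $\tfrac12(d_\g^- d_\g^+ + d_\g^+ d_\g^-)a$, vanishes identically because $d_\g^+ d_\g^- = -\,d_\g^- d_\g^+$ (as stated at the start of Section~\ref{sec: 1d par ghosts complex}). The correct variation of $\tfrac12(d_\g^- a, d_\g^+ a)$ gives the single term $d_\g^+ d_\g^- a$ (equivalently $-d_\g^- d_\g^+ a$): when you move the differentials off $\delta a$ by graded self-adjointness, the two contributions come with signs that make them add rather than symmetrize. With the corrected equation
\[
d_\g^+\psi^-_\ii + d_\g^-\psi^+_\oo + d_\g^+ d_\g^- a = 0,
\]
your substitution argument goes through exactly as you outlined; in fact one finds $d_\g\psi^{[0]}_\ii + d_\g\psi^{[0]}_\oo = 2(d_\g^+\psi^-_\ii + d_\g^-\psi^+_\oo + d_\g^+ d_\g^- a)$, so the equivalence you expected holds on the nose. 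This is precisely the ``sign bookkeeping'' issue you anticipated, but it is not merely cosmetic: as written, your critical equation would lose the quadratic term and fail to be equivalent to the constraint.
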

\begin{proof}
The Euler-Lagrange equations of the theory in ghost number 0 are
\begin{align*}
d_I\psi^{[0]} &= -d_\g A^{[1]},  \\
d_\g\psi^{[0]} &= 0. 
\end{align*}
Projecting to boundary values $(\psi_\ii,\psi_\oo)$ we obtain 
the equations 
\begin{align}
d^-_\g\psi^+_\ii + d_\g^+\psi^-_\ii &= d^+_\g\psi^-_\oo+ d_\g^-\psi^+_\oo = 0, \label{eq: proof HJ par gh 1}\\
\psi^+_\oo - \psi^+_\ii &= d^+_\g a,  \label{eq: proof HJ par gh 2}\\
\psi^-_\oo - \psi^-_\ii &= d^-_\g b,  \label{eq: proof HJ par gh 3}
\end{align}
for some $a,b \in X^{[1]}$, and the first equation forces $a=b$ (up to a $d_\g^+$ and $d_\g^-$-closed term).
On the other hand, the lagrangian generated by $S_\ph$ is given by 
\begin{align*}
\psi^+_\ii &= -\frac{\dd S_\ph}{\dd \psi^-_\ii} = \psi^+_\oo - d_\g^+A^{[1]}_\res, \\
\psi^-_\oo &= \frac{\dd S_\ph}{\dd \psi^+_\oo} = \psi^-_\ii + d_\g^- A^{[1]}_\res,  \\
0 &= \frac{\dd S_\ph}{\dd A^{[1]}_\res} =d_\g^+\psi_\ii^- + d_\g^-\psi^+_\oo + d_\g^+d_\g^- A^{[1]}_\res.
\end{align*}
The first two equations  are equivalent to equations  \eqref{eq: proof HJ par gh 2},\eqref{eq: proof HJ par gh 3}, while the last equation enforces the constraint \eqref{eq: proof HJ par gh 1}.
\end{proof}

\subsubsection{Quantum master equation}
The modified 
quantum master equation 
$$(-\hbar^2\Delta_\res + \Omega)e^{\frac{i}{\hbar}S_\eff}$$
is equivalent to 
\begin{equation}
\left(\frac12\left\lbrace S_\eff,S_\eff \right\rbrace_\res - i\hbar\Delta_\res S_\eff+\Omega\right)Z =0 
\label{eq: mQME parallel ghosts}.
\end{equation}
Here we denote by $\{\cdot,\cdot\}_\res$ the BV ($+1$-shifted Poisson) bracket on $\VV$.
\begin{proposition}\label{prop: 1d par ghosts mQME}
The effective action $S_\eff$ given by \eqref{eq: S eff 1d CS par} satisfies the mQME \eqref{eq: mQME parallel ghosts} with boundary BFV operator $\Omega$ given by the standard quantization of $$\Theta(\psi) = \frac12(\psi,d_\g\psi) = (\psi^+,d_\g^-\psi^{[1]}) + (\psi^-,d_\g^+\psi^{[1]}) + (\psi^{[<0]},d_\g\psi^{[>1]}).$$ 
\end{proposition}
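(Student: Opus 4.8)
The plan is a direct verification, entirely parallel to the abelian mQME computations of Sections~\ref{ss: hol-hol} and~\ref{ss: ahol-hol}. First I would write out $\Omega = \Omega_\oo + \Omega_\ii$ explicitly. In the $(\psi^{[>0]}_\oo,\psi^+_\oo)$-representation at the out-end the base coordinates $\psi^{[>0]}_\oo,\psi^+_\oo$ become multiplication operators while the conjugate momenta are quantized as derivatives, $\psi^-_\oo \mapsto -i\hbar\,\delta/\delta\psi^+_\oo$ and $\psi^{[<0]}_\oo \mapsto -i\hbar\,\delta/\delta\psi^{[>0]}_\oo$; at the in-end one uses the $(\psi^{[>0]}_\ii,\psi^-_\ii)$-representation (so the roles of $\psi^+,\psi^-$ are exchanged) and the opposite orientation sign. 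Quantizing $\Theta$ with all derivatives to the right gives a \emph{first-order} operator $\Omega$: no term is of second order in the boundary derivatives, so $\Omega Z = Z\cdot\big((\Omega Z)/Z\big)$ with $(\Omega Z)/Z$ of order $\hbar^0$ (the $-i\hbar$ from each momentum cancels the $\tfrac{i}{\hbar}$ produced by differentiating $e^{\frac{i}{\hbar}S_\eff}$).

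Dividing \eqref{eq: mQME parallel ghosts} by $Z$, it then suffices to check $\tfrac12\{S_\eff,S_\eff\}_\res - i\hbar\Delta_\res S_\eff + (\Omega Z)/Z = 0$, which I would split into its order-$\hbar^1$ and order-$\hbar^0$ pieces. For the order-$\hbar^1$ piece it is enough to show $\Delta_\res S_\eff = 0$. This holds because in $S_\eff$ the residual fields $A_\res,\psi_\res$ are coupled only through $d_\g$ (in $\tfrac12(d_\g^- A^{[1]}_\res, d_\g^+ A^{[1]}_\res)$ and $(A^{[>1]}_\res, d_\g\psi_\res)$), and $d_\g$ shifts the ghost number by $-1$, whereas the pairing couples $X^{[a]}$ with $X^{[-a]}$ and $\Delta_\res$ contracts $A^{[k]}_\res$ against its BV-conjugate $\psi^{[-k]}_\res$. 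Hence no monomial of $S_\eff$ ever contains a residual field together with its own conjugate, every second derivative entering $\Delta_\res$ annihilates $S_\eff$, there are no wheels/tadpoles, and the $\hbar^1$ part vanishes identically --- this is precisely the statement that there are no quantum corrections in this linear case.

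It remains to verify the order-$\hbar^0$ identity $\tfrac12\{S_\eff,S_\eff\}_\res + (\Omega Z)/Z = 0$. I would compute both sides from the explicit $S_\eff$ of \eqref{eq: S eff 1d CS par}. For each conjugate pair $(A^{[k]}_\res,\psi^{[-k]}_\res)$ the bracket contributes $\big(\delta S_\eff/\delta A^{[k]}_\res,\ \delta S_\eff/\delta\psi^{[-k]}_\res\big)$; the $k=1$ term reproduces the combination $d_\g^+\psi^-_\ii + d_\g^-\psi^+_\oo + d_\g^+ d_\g^- A^{[1]}_\res$ already seen as the constraint in the proof of Proposition~\ref{prop: 1d par ghosts HJ}, paired against the $\psi^{[-1]}_\res$-derivative $\psi^{[1]}_\oo - \psi^{[1]}_\ii + \dots$, while higher $k$ feed the ghost tower. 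On the other side $(\Omega Z)/Z$ produces the boundary pieces $(\psi^+_\oo, d_\g^-\psi^{[1]}_\oo)$, $(\psi^-_\ii + d_\g^- A^{[1]}_\res, d_\g^+\psi^{[1]}_\oo)$, $(\psi_\res, d_\g\psi^{[>1]}_\oo)$ and their in-end mirror images. I would then match the two using $(d_\g^\pm)^2 = 0$, $d_\g^+ d_\g^- = -d_\g^- d_\g^+$, and the fact that $d_\g^+$ and $d_\g^-$ are mutually adjoint under the pairing (a consequence of the compatibility of $d_\g$ with $(\cdot,\cdot)$, i.e. $(d_\g a, b) = \pm(a, d_\g b)$); for instance the multiplication term $(\psi^+_\oo, d_\g^-\psi^{[1]}_\oo)$ of $\Omega_\oo$ cancels the bracket term $(d_\g^-\psi^+_\oo, \psi^{[1]}_\oo)$, and the ghost pieces carrying $\psi^{[>1]}_{\ii/\oo}$ cancel likewise.

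The main obstacle is purely the sign bookkeeping. Because $\psi$ is odd, the two ends carry opposite orientations (so $\Omega_\ii$ and $\Omega_\oo$ enter with a relative sign), the Darboux conventions on $\VV$ fix the signs in $\{\cdot,\cdot\}_\res$, and the adjunction $(d_\g a, b) = \pm(a, d_\g b)$ contributes further signs. All of these must be tracked consistently for the boundary pieces of $(\Omega Z)/Z$ to cancel the residual-field bracket terms pairwise; once the conventions are fixed the remaining algebraic content (nilpotency and anticommutation of $d_\g^\pm$, and self-adjointness of $d_\g$) is elementary.
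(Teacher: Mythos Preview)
Your proposal is correct and follows essentially the same approach as the paper's proof: write out $\Omega$ as a sum of a multiplication operator and a first-order derivation, observe $\Delta_\res S_\eff=0$ by the degree-shift argument, compute $\tfrac12\{S_\eff,S_\eff\}_\res$ (which yields exactly the expression $-(\psi^{[1]}_\oo-\psi^{[1]}_\ii,\, d_\g^-\psi^+_\oo+d_\g^+\psi^-_\ii+d_\g^+d_\g^-A^{[1]}_\res)-(\psi^{[>1]}_\oo-\psi^{[>1]}_\ii,d_\g\psi_\res)$ that the paper writes down), and match it against $Z^{-1}\Omega Z$ term by term. The paper presents the same computation slightly more tersely, simply listing the four pieces $\Omega^{(0)}_{\oo/\ii}$, $\tfrac{i}{\hbar}\Omega^{(1)}_{\oo/\ii}S_\eff$ and declaring the cancellation ``straightforward''; your description of the cancellation mechanism (adjointness of $d_\g^\pm$ under the pairing, the relative orientation sign between in- and out-ends) is exactly what makes that straightforward check go through.
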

\begin{proof}
Expanding degree-wise as a differential operator, we obtain $\Omega = \Omega^{(1)} + \Omega^{(0)} =  \Omega^{(1)}_\oo +  \Omega^{(0)}_\oo +  \Omega^{(1)}_\ii + \Omega^{(0)}_\ii$ with 
\begin{align*}
\Omega^{(0)}_\oo &= \left(d_\g^-\psi^{[1]}_\oo,\psi^+_\oo\right), \\
\Omega^{(0)}_\ii &= -\left(d_\g^+\psi^{[1]}_\ii,\psi^-_\ii\right), \\
\frac{i}{\hbar}\Omega^{(1)}_\oo &=\left(d_\g^+\psi^{[1]}_\oo,\frac{\delta}{\delta \psi_\oo^+}\right) + \left(d_\g\psi_\oo^{[>1]},\frac{\delta}{\delta \psi_\oo^{[>0]}}\right),\\
\frac{i}{\hbar}\Omega^{(1)}_\ii &= \left(d_\g^-\psi^{[1]}_\ii,\frac{\delta}{\delta \psi_\ii^-}\right) + \left(d_\g\psi^{[>1]}_\ii,\frac{\delta}{\delta \psi^{[>0]}_\ii}\right).
\end{align*}
First of all, notice that $\Delta_\VV S_\eff = 0$ since in the only possibly nonvanishing term $(A^{[1]}_\res,d_\g\psi_\res)$ fields are not paired with their antifields because of the degree shift by the differential. Computing the BV bracket we obtain 
\begin{multline}
\frac12 \{ S_\eff,S_\eff \}_\res =\\
=- (\psi^{[1]}_\oo -\psi^{[1]}_\ii,d_\g^-\psi^+_\oo + d_\g^+\psi^-_\ii + d_\g^+d_\g^- A^{[1]}_\res) - (\psi^{[>1]}_\oo - \psi^{[>1]}_\ii,d_\g\psi_\res)\label{eq: 1d parallel ghosts mQME}
\end{multline}
(only terms of opposite ghost number survive in the pairing). 
On the other hand, since $\Omega^{(0)}$ is a multiplication operator and $\Omega^{(1)}$ contains only derivatives of first order, we have $\Omega Z = \Omega^{(0)} Z +  \frac{i}{\hbar}\Omega^{(1)} (S_\eff) Z$ and 
\begin{align*}
\frac{i}{\hbar}\Omega^{(1)}_\oo S_\eff &=   (\psi^-_\ii,d_\g^+\psi^{[1]}_\oo) + (d_\g^+\psi^{[1]}_\oo,d_\g^-A^{[1]}_\res)+ (d_\g\psi^{[>1]}_\oo, \psi_\res), \\
\frac{i}{\hbar}\Omega^{(1)}_\ii S_\eff &=  -(d_\g^-\psi^{[1]}_\ii,\psi^+_\oo) + (d_\g^-\psi^{[1]}_\ii,d_\g^+A^{[1]}_\res) - (d_\g\psi^{[>1]}_\ii, \psi_\res).
\end{align*}
 A straightforward computation  shows that $\Omega^{(0)} +\frac{i}{\hbar}\Omega^{(1)} S_\eff $ coincides with \eqref{eq: 1d parallel ghosts mQME}, thus completing the proof. 
\end{proof}

\subsection{General polarizations}\label{sec: 1d par ghosts nonlinear}
Next, we will consider the case where $X^{[0]}_\CC$ is equipped with another polarization $\mathcal{P}_\mr{nl}$ which is not necessarily linear (see \cite[Section 12]{HJ} for the corresponding toy model).  Let the base $\BB^{\PP_\mr{nl}}$ be locally parametrized by a coordinate $\psi^Q$, and the fibers by a coordinate $\psi^P$. Suppose $G(\psi^-,\psi^Q)$ is a generating function of the canonical
 transformation\footnote{I.e., the function $G(\psi^-,\psi^Q)$ satisfying $\delta G = \psi^+ \delta \psi^- - \psi^P \delta \psi^Q$.}
 $(\psi^-,\psi^+) \to (\psi^Q,\psi^P)$. Then we have that $\psi^+ = F(\psi^-,\psi^Q) = \frac{\dd G}{\dd \psi^-}$. We assume that $G$ is analytic in $\psi^-$ in a neighborhood $U$ of $\{0\} \times \BB^{\PP_\nl}$. 

We now consider again the 1D AKSZ theory on the interval, where we choose the polarizations at the two endpoints to be parallel in the ghost sector. In the physical sector we choose the $\psi^-$-representation on the in-boundary and the $\psi^Q$-representation on the out-boundary:
\begin{align*}\PP &= \PP_\ii \times \PP_\oo \quad \mbox{with} \\ 
\PP_\ii &= \left\lbrace\frac{\delta}{\delta \psi^{[<0]}_\ii},\frac{\delta}{\delta \psi_\ii^{+}}\right\rbrace, \\
\PP_\oo &= \left\lbrace\frac{\delta}{\delta \psi^{[<0]}_\oo},\frac{\delta}{\delta \psi_\oo^{P}}\right\rbrace.
\end{align*}
The base is 
$$\BB = \BB_\ii \times \BB_\oo,\;\; \BB_\ii = X^{[ > 0]}_\CC \oplus X^-,\;\; \BB_\oo = X^{ [> 0]}_\CC \times \BB^{\PP_{\mr{nl}}}.$$
The polarized 1-form is 
$\alpha^\pol_{\dd I} =\alpha^{\dd,\PP_\oo} + \alpha^{\dd,\PP_\ii}$ with 
\begin{align*}
\alpha^{\dd,\PP_\ii} &= -(\psi^{[<0]}_\ii,\delta \psi^{[>0]}_\ii) - (\psi_\ii^{+},\delta \psi_\ii^-) = \alpha^{\dd}_\ii -
 \delta f_\ii ,\\
\alpha^{\dd,\PP_\oo} & =(\psi^{[<0]}_\oo,\delta \psi^{[>0]}_\oo) + (\psi_\oo^P,\delta \psi_\oo^{Q}) = \alpha^{\dd}_\oo + \delta f_\oo ,
\end{align*}
where \begin{align*} 
f_\ii &=\frac12( \psi^{[>0]}_\ii,\psi^{[<0]}_\ii) - \frac12(\psi^
+_\ii,\psi^-_\ii), \\
 f_\oo &= \frac12 (\psi^{[>0]}_\oo,\psi^{[<0]}_\oo) - \frac12(\psi_\oo^+,\psi_\oo^-) - G(\psi^-_\oo,\psi^Q_\oo).
 \end{align*}
Thus, 
$\alpha^\pol_{\dd I} = \alpha_{\dd I} + \delta f$
 with 
\begin{multline*}
f (\psi_\oo,\psi_\ii) = f_\oo(\psi_\oo) - f_\ii(\psi_\ii) \\
= \frac12 (\psi^{[>0]}_\oo,\psi^{[<0]}_\oo) - \frac12(\psi_\oo^+,\psi_\oo^-) - G(\psi^-_\oo,\psi^Q_\oo) - \frac12 (\psi^{[>0]}_\ii,\psi^{[<0]}_\ii) + \frac12(\psi_\ii^+,\psi_\ii^-).
\end{multline*}
\subsubsection{Splitting the fields}
The goal is to find again a symplectomorphism $\Phi\colon\BB\times \VV \times \YY' \to \FF$.\footnote{Strictly speaking, the range of $\Phi$ is not $\FF$, but rather a certain regularization of $\FF$ more suitable for quantization. See the discussion in \cite[Section 9.2.3]{HJ}.} Here the trick is that we keep the space of fluctuations $\YY$ as above in Equation \eqref{eq: fluc 1d par gh}. In ghost number zero, the map $\Phi$ is defined as follows. For boundary values $\psi^-_\ii \in \BB_\ii^{[0]}$ and $\psi^Q_\oo \in \BB_\oo^{[0]}$ and fluctuations $\psi^-_\fl, \psi^+_\fl \in \YY$ (recall that $\psi^-_\fl(0) = \psi^+_\fl(1)= 0$), we let 
\begin{equation*}
\psi^-(t) = \begin{cases} \psi^-_\fl(t) &\mr{for} \;\; t > 0, \\ \psi^-_\ii &\mr{for} \;\; t = 0, \end{cases}
\end{equation*}
and 
\begin{equation*}
\psi^+(t) = \begin{cases} \psi^+_\fl(t) &\mr{for} \;\; t < 1, \\ F(\psi^-_\fl(1),\psi^Q_\oo) & \mr{for}\;\; t = 1. \end{cases} 
\end{equation*}
The map $\Phi$ is given by
\begin{multline}\label{Phi symplectomorphism}
\Phi(\psi^{[>0]}_\ii,\psi^-_\ii,\psi^{[>0]}_\oo,\psi^{Q}_\oo,\psi_\res,\psi_\fl,A) = \psi^-(\psi^-_\ii,\psi^-_\fl) + \psi^+(\psi^+_\fl,\psi^-_\fl, \psi^Q_\oo) \\ 
+{\psi_\res}+ \widetilde{\psi^{[>0]}_\ii} + \widetilde{\psi^{[>0]}_\oo} + \psi_\fl^{[\neq 0]} + A.
\end{multline}
In nonzero ghost number, this coincides with the splitting considered in the previous section. In what follows, we will discuss only the physical sector, i.e., the part in ghost number 0. The analysis in the ghost sector proceeds exactly as in Section \ref{sec: 1d par ghosts complex} and results in the ghost effective action $S_\gh$ described in Proposition \ref{prop: 1d par ghosts Seff}.
\subsubsection{Effective action}
Again, we can use the gauge-fixing lagrangian $\LL \subset \YY'$ given by zero forms. Restricted to $\BB \times \VV \times \LL$ and fields of ghost number 0, we have 
\begin{multline*}
S^f[\psi^-_\ii,\psi^Q_\oo,\psi^-_\fl,\psi^+_\fl,dt \cdot A_\res] = -(\psi^-_\ii,\psi^+_\fl(0)) - G(\psi^-_\fl(1),\psi^Q_\oo) \\+ \int_I (\psi^+_\fl,d_I \psi^-_\fl) - dt(d^+_\g A^{[1]}_\res,\psi^-_\fl) - dt(d^-_\g A^{[1]}_\res,\psi^+_\fl)
\end{multline*}
where the computation is very similar to the one in the proof of Lemma \ref{lem: S GF par ghosts}. 
The BV-BFV effective action is defined by 
\begin{align*}
Z &= e^{\frac{i}{\hbar}S_\eff[\psi_\ii,\psi_\oo,\psi_\res,A_\res]} = \int\DD\psi_\fl\; e^{\frac{i}{\hbar}S^f[\psi_\ii,\psi_\oo,\psi_\res,\psi_\fl,A_\res]} \\
&=e^{\frac{i}{\hbar}S_{\mr{back}}}\int\DD\psi_\fl\; e^{\frac{i}{\hbar}(S_{\mr{source}}+S_{0}+S_{\mr{int}})}
\end{align*} 
where the integral is defined in terms of Feynman diagrams. 
\begin{proposition}
The effective action (in ghost number 0) is 
\begin{equation}
S_\eff^\ph[\psi_\ii^-,\psi_\oo^Q,A^{[1]}_\res]= -G(\psi^-_\ii + d_\g^- A^{[1]}_\res,\psi_\oo^Q)  - (d_\g^+A^{[1]}_\res,\psi_\ii^- + \frac12d_\g^-A^{[1]}_\res)\label{eq: Seff nonlin pol}
\end{equation}
\end{proposition}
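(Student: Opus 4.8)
The plan is to compute the fluctuation integral $\int\DD\psi_\fl\, e^{\frac{i}{\hbar}S^f}$ by Feynman diagrams, reusing the propagator and vertices of the linear case in Section~\ref{sec: 1d par ghosts complex}. In ghost number $0$ the kinetic term is again $S_0=\int_I(\psi^+_\fl,d_I\psi^-_\fl)$, so the only propagator is $\eta_\ph(t,t')=\theta(t'-t)\,\omega^{-1}$, pairing a $\psi^+_\fl(t)$ leg with a $\psi^-_\fl(t')$ leg. The interaction and source terms of the gauge-fixed action give four kinds of vertices: an \emph{in}-vertex $-(\psi^-_\ii,\psi^+_\fl(0))$ inserting a single $\psi^+_\fl$ leg at $t=0$; a nonlinear \emph{out}-vertex $-G(\psi^-_\fl(1),\psi^Q_\oo)$ at $t=1$, which upon Taylor expansion in its first slot inserts arbitrarily many $\psi^-_\fl(1)$ legs; and the two bulk vertices $-\int_I dt\,(d_\g^+A^{[1]}_\res,\psi^-_\fl)$ and $-\int_I dt\,(d_\g^-A^{[1]}_\res,\psi^+_\fl)$, inserting a $\psi^-_\fl$, resp.\ a $\psi^+_\fl$, leg integrated over $I$.

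The crucial structural point is that $G$ depends only on $\psi^-_\fl$ while the propagator couples a $+$ leg only to a $-$ leg. Hence the legs emanating from the $G$-vertex can never be contracted among themselves, so no loop can attach to $G$ (a returning path would have to pass through a univalent $\psi^+$-source and stop there); each $\psi^-_\fl(1)$ leg must terminate on the in-vertex or on a copy of the $\psi^+$ bulk vertex, both of which occur to all orders in the expansion of the exponential. A single contraction against the in-vertex yields $\psi^-_\ii$ with weight $\eta_\ph(0,1)=1$, while a contraction against the $\psi^+$ bulk vertex yields $d_\g^- A^{[1]}_\res$ with weight $\int_0^1\eta_\ph(t,1)\,dt=1$. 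Since distinct legs contract independently and there are no internal $G$--$G$ lines, resumming the Taylor series of $G$ simply replaces its first argument by the one-point value $\psi^-_\fl(1)\mapsto \psi^-_\ii+d_\g^- A^{[1]}_\res$, which is exactly the $t=1$ boundary value of the classical solution of $d_I\psi^-=d_\g^- A^{[1]}_\res\,dt$ with $\psi^-(0)=\psi^-_\ii$. The $G$-dependent part of $S_\eff^\ph$ is therefore precisely $-G(\psi^-_\ii+d_\g^- A^{[1]}_\res,\psi^Q_\oo)$, with no quantum corrections. Equivalently, one first performs the \emph{linear} Gaussian integral over $\psi^+_\fl$, which produces a delta-functional localizing $\psi^-_\fl$ onto this classical solution, and then substitutes into the remaining $\psi^-_\fl$-dependent terms.

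It remains to collect the connected diagrams that do not touch $G$. These are the in-vertex joined to the $\psi^-$ bulk vertex (as in the left half of Figure~\ref{fig:FeynmanGraphsLinB}), contributing $-(\psi^-_\ii,d_\g^+A^{[1]}_\res)$ with weight $\int_0^1\eta_\ph(0,t')\,dt'=1$, and the two bulk vertices joined to each other (Figure~\ref{fig:FeynmanGraphsLinC}), contributing $-\tfrac12(d_\g^- A^{[1]}_\res,d_\g^+A^{[1]}_\res)$ with weight $\int_{I\times I}\eta_\ph(t,t')=\tfrac12$. Using the graded antisymmetry of $(\cdot,\cdot)$ on odd elements their sum is $-(d_\g^+A^{[1]}_\res,\psi^-_\ii+\tfrac12 d_\g^- A^{[1]}_\res)$, and adding the $G$-term produces \eqref{eq: Seff nonlin pol}. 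The main obstacle --- and the only point where the argument genuinely goes beyond the linear computation --- is the justification that the nonlinear vertex $G$ receives no loop corrections; this rests entirely on the bipartite ($+$ to $-$) nature of the propagator together with the absence of any $\psi^-$--$\psi^-$ coupling, which is the concrete manifestation of the biaffine structure that makes the semiclassical answer exact.
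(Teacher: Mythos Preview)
Your proof is correct and follows essentially the same approach as the paper: the same propagator and vertices, Taylor expansion of the nonlinear $G$-vertex, and resummation using analyticity of $G$ to produce $-G(\psi^-_\ii+d_\g^-A^{[1]}_\res,\psi^Q_\oo)$, plus the two remaining tree diagrams from the linear case. Your explicit explanation of why no loops can attach to the $G$-vertex (the bipartite $+/-$ structure of the propagator together with $G$ depending only on $\psi^-_\fl$) and the alternative ``integrate out $\psi^+_\fl$ first to localize $\psi^-_\fl$'' viewpoint are helpful elaborations that the paper leaves implicit, but the route is the same.
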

\begin{proof}
In terms of Feynman diagrams, the source term creates a vertex of arbitrary incoming valence on the out-boundary decorated by derivatives of $G$, and a univalent (outgoing) vertex on the in-boundary decorated by 
$\psi^-_\ii$. 
The interaction term creates  univalent in- and outgoing bulk vertices decorated by $d_\g^\pm A^{[1]}_\res$ as in the proof of Proposition \ref{prop: 1d par ghosts Seff}. 
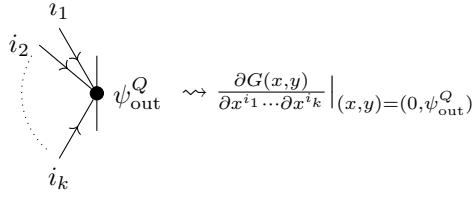
\begin{figure}
\centering
\begin{tikzpicture}
\begin{scope}[shift={(7,0)}]
\draw (0,.5) -- (0,-.5);
\node[bdry,label=right:{$\psi_\oo^Q$}] (bdry1) at (0,0) {};
\node[coordinate,label=above:{$i_1$}] at (120:1) {} edge[fermion] (bdry1);
\node[coordinate,label=left:{$i_2$}] at (140:1) {} edge[fermion] (bdry1);
\node[coordinate,label=below:{$i_k$}] at (240:1) {} edge[fermion] (bdry1);
\draw[dotted] (150:1) arc (150:230:1);
\node[coordinate, label=right:{$\leadsto \frac{\dd G(x,y)}{\dd x^{i_1} \cdots \dd x^{i_k}}\big|_{(x,y)= (0,\psi_\oo^Q)}$}] at (1,0){};
\end{scope}
\end{tikzpicture}

\caption{Additional vertex in 1D AKSZ theory with a general polarization on $X_\oo^{[0]}$.} \label{fig: vertex nonlin pol}
\end{figure}
The connected Feynman diagrams contributing to the effective action are: 
\begin{enumerate}
\item Diagrams involving the $G$-vertex on the out-boundary. The outgoing half-edges can connect either to the bulk vertex involving $\psi^+_\fl$ or the vertex on the in-boundary (see Figure \ref{fig:FeynmanDiagramsNonlin}). Summing over all valences, we obtain the Taylor series  in $x$ of $G(x,y)$ in the first argument at $(0,\psi_\oo^Q)$
evaluated  on $\psi^-_\ii + d_\g^-A^{[1]}_\res$. Hence, by analyticity of $G$ 
 those vertices sum up to
$$
- G(\psi_\ii^- + d_\g^-A^{[1]}_\res, \psi_\oo^Q).$$
\item Diagrams involving the univalent incoming bulk vertex. Here the outgoing half-edges connect to either the vertex on the in-boundary or 
an outgoing bulk vertex,
giving 
$$-(d_\g^+A^{[1]}_\res,\psi^-_\ii + \frac12 d_\g^-A^{[1]}_\res).$$
Those diagrams are the same as in the linear case (Figure \ref{fig:FeynmanGraphsLin}).
\end{enumerate}
\begin{figure}
\begin{tikzpicture}
\def\in{-1}
\def\out{2}
\draw (\in,2) -- (\in,-1);
\node[bdry,label=left:{$\gamma_\ii$}] (bdry1) at (\in,1.6) {};
\node[bdry,label=left:{$\gamma_\ii$}] (bdry2) at (\in,1.2) {}; 
\draw[dotted] (\in-0.1,1.2) -- (\in-0.1,0.6);
\node[bdry,label=left:{$\gamma_\ii$}] (bdry3) at (\in,0.5) {};
\draw (\out,2) -- (\out,-1);
\node[bdry, label=right:{$\psi_\oo^Q$ }] (bdry4) at (\out,0.5) {};
\node[b2] at (0,0.4) {} edge[fermion] (bdry4);
\node[b2] at (0,0.1) {} edge[fermion] (bdry4);
\draw[dotted] (0,0) -- (0,-0.5);
\node[b2] at (0,-0.5) {} edge[fermion] (bdry4);
\draw[fermion] (bdry1) .. controls (0.75,1) .. (bdry4);
\draw[fermion] (bdry2) .. controls (0.5,1) .. (bdry4);
\draw[fermion] (bdry3) .. controls (0,1) .. (bdry4);
\end{tikzpicture}
%
\caption{Additional Feynman diagrams in 1D AKSZ with general polarization on $X^{[0]}_\oo$.} \label{fig:FeynmanDiagramsNonlin}
\end{figure}
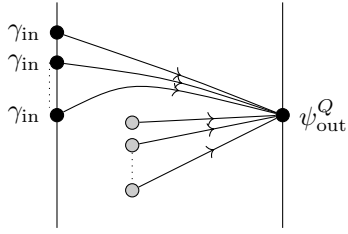
In total, we obtain the effective action (\ref{eq: Seff nonlin pol}).
\end{proof}
\begin{remark}
In the main case of interest for this paper, the target $\g = \Omega^\bullet(M)$ is infinite-dimensional and the propagator contains a delta form as the ``inverse'' of the pairing $(\tau_1,\tau_2) = \int_M \tau_1 \wedge \tau_2$ (cf. Remark \ref{rem: axial gauge}). 
However, our computations here are still valid. 
Indeed, even though Feynman diagrams contain products of delta functions, since all these diagrams are actually trees,  no problematic terms like $\delta(0)$ arise when computing the integrals. 
\end{remark}
\begin{proposition}
The lagrangian generated by 
(\ref{eq: Seff nonlin pol})
is the evolution relation in $X_\ii^{[0]} \times X_\oo^{[0]}$. 
\end{proposition}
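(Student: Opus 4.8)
The plan is to read off the lagrangian generated by the generalized generating function $S_\eff^\ph$ of \eqref{eq: Seff nonlin pol} and match it, term by term, against the description of the evolution relation $L$ extracted in the proof of Proposition~\ref{prop: 1d par ghosts HJ}. With the conventions used there, the generated lagrangian is the image of the map sending $(\psi^-_\ii,\psi^Q_\oo,A^{[1]}_\res)$, subject to the auxiliary equation $\frac{\dd S_\eff^\ph}{\dd A^{[1]}_\res}=0$, to the point with
\[
\psi^+_\ii = -\frac{\dd S_\eff^\ph}{\dd \psi^-_\ii}, \qquad \psi^P_\oo = \frac{\dd S_\eff^\ph}{\dd \psi^Q_\oo}.
\]
Here $A^{[1]}_\res$ is the auxiliary parameter of the generalized generating function, exactly as it was in Proposition~\ref{prop: 1d par ghosts HJ}.

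First I would set $\psi^-_\oo := \psi^-_\ii + d_\g^- A^{[1]}_\res$, so that automatically $\psi^-_\oo - \psi^-_\ii = d_\g^- A^{[1]}_\res$, reproducing \eqref{eq: proof HJ par gh 3}. Since only the first slot of $G$ carries $\psi^-_\ii$- and $A^{[1]}_\res$-dependence through $\psi^-_\oo$, and since $F = \frac{\dd G}{\dd \psi^-}$ while $\psi^P = -\frac{\dd G}{\dd \psi^Q}$ by the defining property of $G$, the two momentum equations become
\[
\psi^+_\ii = F(\psi^-_\oo,\psi^Q_\oo) \pm d_\g^+ A^{[1]}_\res, \qquad \psi^P_\oo = -\frac{\dd G}{\dd \psi^Q}(\psi^-_\oo,\psi^Q_\oo).
\]
Writing $\psi^+_\oo := F(\psi^-_\oo,\psi^Q_\oo)$, the pair $\psi^+_\oo = \frac{\dd G}{\dd \psi^-}(\psi^-_\oo,\psi^Q_\oo)$, $\psi^P_\oo = -\frac{\dd G}{\dd \psi^Q}(\psi^-_\oo,\psi^Q_\oo)$ is precisely the canonical transformation on $X^{[0]}_\oo$ generated by $G$; that is, $(\psi^-_\oo,\psi^+_\oo)$ and $(\psi^Q_\oo,\psi^P_\oo)$ are two descriptions of one and the same point of the out phase space. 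The first momentum equation then reads $\psi^+_\oo - \psi^+_\ii = \mp d_\g^+ A^{[1]}_\res$, matching \eqref{eq: proof HJ par gh 2}.

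It remains to compute the auxiliary equation $\frac{\dd S_\eff^\ph}{\dd A^{[1]}_\res}=0$. Varying \eqref{eq: Seff nonlin pol} in $A^{[1]}_\res$, differentiating $G$ through its first slot and using the graded self-adjointness of $d_\g$ under the pairing $(\cdot,\cdot)$---which makes the two incarnations of $d_\g^\pm$ (as $X^{[1]}\to X^\pm$ and as $X^\mp\to X^{[-1]}$) mutually adjoint---together with $d_\g^- d_\g^+=-d_\g^+ d_\g^-$, I expect to obtain
\[
d_\g^+\psi^-_\ii + d_\g^-\psi^+_\oo + d_\g^+ d_\g^- A^{[1]}_\res = 0,
\]
which is exactly the constraint \eqref{eq: proof HJ par gh 1} (after replacing $F(\psi^-_\oo,\psi^Q_\oo)=\psi^+_\oo$). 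Assembling the three families of equations, they coincide with \eqref{eq: proof HJ par gh 1}--\eqref{eq: proof HJ par gh 3}, with $A^{[1]}_\res$ playing the role of the common primitive $a=b$; hence the generated lagrangian is $L$, read off in the chosen polarization ($\psi^-$ on the in-boundary, $\psi^Q$ on the out-boundary). The main obstacle is purely the sign bookkeeping: tracking the signs coming from the oddness of $\psi$ and the graded symmetry of $(\cdot,\cdot)$ when differentiating the quadratic term $-(d_\g^+A^{[1]}_\res,\psi^-_\ii+\tfrac12 d_\g^- A^{[1]}_\res)$ and when passing derivatives across $d_\g^\pm$, so that the flatness constraint and the $\pm d_\g^+ A^{[1]}_\res$ terms emerge with the signs fixed in Proposition~\ref{prop: 1d par ghosts HJ}.
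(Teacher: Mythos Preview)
Your proposal is correct and follows essentially the same approach as the paper: compute the three partial derivatives $\frac{\dd S_\eff^\ph}{\dd \psi^-_\ii}$, $\frac{\dd S_\eff^\ph}{\dd \psi^Q_\oo}$, $\frac{\dd S_\eff^\ph}{\dd A^{[1]}_\res}$, identify $\psi^+_\oo = F(\psi^-_\oo,\psi^Q_\oo)$ via the generating function $G$, and match against the evolution relation already characterized in Proposition~\ref{prop: 1d par ghosts HJ}. The only difference is that the paper resolves the signs you left as $\pm$ (the correct answer is $\psi^+_\ii = \psi^+_\oo - d_\g^+ A^{[1]}_\res$), so you should simply carry out the sign bookkeeping you flagged rather than leaving it open.
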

\begin{proof} We know that in the $\psi^\pm$ variables, the evolution relation is given by $\psi^-_\oo = \psi^-_\ii + d^-_\g A^{[1]}_\res$, $\psi^+_\oo = \psi^+_\ii + d_\g^+A^{[1]}_\res$. The lagrangian generated by  $S_\eff^\ph[\psi_\ii^-,\psi_\oo^Q,A^{[1]}_\res]$ is 
\begin{align*}
-\psi^+_\ii = \frac{\dd S_\eff^\ph}{\dd \psi^-_\ii} &= -F(\psi_\ii^-+d_\g^- A^{[1]}_\res,\psi^Q_\oo)  + d_\g^+A^{[1]}_\res \\
& = -F(\psi^-_\oo,\psi^Q_\oo) + d_\g^+A^{[1]}_\res = -\psi^+_\oo+ d_\g^+A^{[1]}_\res, \\
\psi^P_\oo=\frac{\dd S_\eff^\ph}{\dd \psi^Q_\oo} &= -\frac{\dd G(\psi^-_\oo,\psi_\oo^Q)}{\dd \psi^Q_\oo},  \\ 
0 = \frac{\dd S_\ph}{\dd A^{[1]}_\res}& =d_\g^+\psi_\ii^- + d_\g^-\psi^+_\oo + d_\g^+d_\g^- A^{[1]}_\res.
\end{align*}
This lagrangian coincides with the evolution relation. 
\end{proof}
\subsubsection{Modified quantum master equation}\label{sec:mQME nonlin}
Let us also comment on the mQME. Again, we can compute the BV bracket (we ignore higher ghosts for simplicity)
$$\frac{1}{2}\{S_\eff,S_\eff\}_\res = \{S_\eff^\gh,S_\eff^\ph\} =   -(\psi^{[1]}_\oo -\psi^{[1]}_\ii,d_\g^-F(\psi^-_\ii + d_\g^-A^{[1]}_\res,\psi_\oo^Q) + d_\g^+\psi^-_\ii + d_\g^+d_\g^- A^{[1]}_\res).$$
As before, we have $\Omega_\ii = -(d_\g^+\psi_\ii^{[1]},\psi^-_\ii) - i\hbar (d_\g^-\psi_\ii^{[1]},\frac{\delta}{\delta \psi^-_\ii})$ and 
\begin{align*}Z^{-1}\Omega_\ii Z &= -(d_\g^+\psi_\ii^{[1]},\psi^-_\ii)Z - (d_\g^-\psi^{[1]}_\ii,F(\psi^-_\ii + d_\g^-A^{[1]}_\res,\psi_\oo^Q)) + (d_\g^-\psi^{[1]}_\ii,d_\g^+ A^{[1]}_\res) \\
&= - (\psi^{[1]}_\ii, d_\g^-F(\psi^-_\ii + d_\g^-A^{[1]}_\res,\psi_\oo^Q) + d_\g^+\psi^-_\ii + d_\g^+d_\g^- A^{[1]}_\res).
\end{align*}
Thus, the mQME is equivalent to 
\begin{equation}Z^{-1}\Omega_\oo Z = (\psi^{[1]}_\oo, d_\g^-F(\psi^-_\ii + d_\g^-A^{[1]}_\res,\psi_\oo^Q) + d_\g^+\psi^-_\ii + d_\g^+d_\g^- A^{[1]}_\res) = (\psi^{[1]}_\oo,d_\g\psi^{[0]}_\oo).\label{eq:mqme nonlin I}\end{equation}
The operator $\Omega_\oo$ acting on ghost number 0 fields should be obtained as a quantization of $\Theta(\psi)$ in the $\psi^P_\oo,\psi^Q_\oo$ variables, 
\begin{equation}
\Theta(\psi^P_\oo,\psi^Q_\oo) = (d_\g^+\psi^{[1]}_\oo,\psi^-(\psi^P_\oo,\psi^Q_\oo)) + (d_\g^-\psi^{[1]}_\oo,\psi^+(\psi^P_\oo,\psi^Q_\oo)).\label{eq:mqme nonlin II}
\end{equation}
The standard quantization $\Omega_\oo^{\mr{std}}$ of \eqref{eq:mqme nonlin II} --- i.e., replacing all $\psi^P_\oo$ variables with $-i\hbar \delta/\delta \psi^Q_\oo$ and moving all derivatives to the right - satisfies \eqref{eq:mqme nonlin I} to $0$-th order in $\hbar$, but there are terms of higher order in $\hbar$ corresponding to higher derivatives in $\psi^Q_\oo$ acting on $G$. To prove the mQME to all orders, one would have to find  quantum corrections to $\Omega^{\mr{std}}_\oo$ such that these terms are cancelled and the deformed operator still squares to 0. 
\begin{remark}\label{rem:mQMEnonlin}
A particularly simple case occurs when $\psi(\psi^P,\psi^Q) = \psi^P + \psi^Q$.  A rather trivial example of this case is $\psi^Q = \psi^+,\psi^P = \psi^-$. A nontrivial example will be considered in Section \ref{s:7dCSKS}.  In this case, we may define 
\begin{equation*}
\Omega_\oo = \left(d_\g\psi^{[1]}_\oo,\psi^Q_\oo - i\hbar\frac{\delta}{\delta \psi^Q_\oo}\right).
\end{equation*}
Then, from $\partial G/\partial \psi^Q = \psi^P$ we immediately get $Z^{-1}\Omega_\oo Z =  (d_\g\psi^{[1]}_\oo,\psi^Q_\oo + \psi^P_\oo) = (\psi^{[1]}_\oo,d_\g\psi^{[0]}_\oo)$, i.e., Equation \eqref{eq:mqme nonlin I}, and hence the mQME, are satisfied. 
In general, we have the mQME whenever the constraints are linear both in the original and the new momenta, see \cite[Section 12]{HJ}.
\end{remark}

\begin{remark}
Let $Z_{\nl,\perp}$ be the partition function with transversal ghost polarization and a general polarization in ghost number 0, to be precise, we are choosing the $(\psi^-,\psi^{[<0]})$  on the $\ii$-boundary and the $(\psi^Q,\psi^{[>0]})$ on the $\oo$-boundary. 
In this case, there are no residual fields, and following a computation similar to the above, one finds\footnote{There are no bulk vertices in this polarization. The two contributing terms come from multivalent $\oo$-boundary vertices in ghost number 0 and univalent boundary vertices in the ghost sector.} 
\begin{equation*}
Z_{\nl,\perp} = \exp\left(-\frac{i}{\hbar}G(\psi^-_\ii,\psi^Q_\oo) + \frac{i}{\hbar}\psi^{[<0]}_\ii\psi^{[>0]}_\oo\right).
\end{equation*}
The mQME for this partition function is just $(\Omega_\ii + \Omega_\oo)Z = 0$, since there are no residual fields. We can observe that the only obstruction for the mQME to hold  is the existence of a suitable $\Omega_\oo$ in a general polarization. 
Then, one can obtain the partition function $Z_{\nl,\parallel}$ -with parallel ghost polarization and $(\psi^-,\psi^Q)$-representation in ghost number 0, as given by \eqref{eq: Seff nonlin pol} by composition of the partition function $Z_{\mr{l},\parallel}$ with parallel ghost polarization and linear polarization in ghost number 0, given by \eqref{eq: S eff 1d CS par} with the partition function $Z_{\nl,\perp}$: $Z_{\nl,\parallel} = Z_{\nl,\perp} \circ Z_{\mr{l},\parallel}$. Since we know that $Z_{\mr{l},\parallel}$ satisfies the mQME, $Z_{\nl,\parallel}$ will satisfy it if $Z_{\nl,\perp}$ does. 
\end{remark}

\subsection{3D nonabelian Chern--Simons with parallel ghost polarization and antiholomorphic-to-holomorphic polarization in ghost degree zero 
} \label{ss: non-ab CS parall ghost}
Next, we return to the example of 3D Chern--Simons with parallel ghost polarization. In this context, it is convenient to use the   
traditional notation for the components of the superfield $\mathcal{A}$: 
$$ \mathcal{A} = c + A + A^* + c^* ,$$ 
where $\phi^*$ denotes the BV antifield of the field $\phi$. 

In this section we will use some special notations for field components (as compared to Section \ref{s: 3dCS}): ${a^{1,0}=\As_\fl^{1,0}}$, $a^{0,1}=\As_\fl^{0,1}$, $c=\As^0$, $\As^* = \As^2$, $\sigma=\As^0_\Ires$.
\subsubsection{Abelian case}\label{sec: 3d par ghost ab CS}
The action with polarization terms is:
$$ S^\pol = \int_{ I\times\Sigma} \frac12 \A d \A + \int_{\{1\}\times\Sigma } \frac12 \left(\As^{1,0} \As^{0,1}+c \As^* \right) - \int_{\{0\}\times\Sigma } \frac12 \left(\As^{0,1} \As^{1,0}+c \As^* \right) . $$
The space of fields is:
$$\FF=\Omega^\bt(I,\Omega^{1,0}\oplus \Omega^{0,1}\oplus \Omega^0[1] \oplus \Omega^2[-1])$$ 
--- here $\Omega^p$ in the coefficients stands for $\Omega^p(\Sigma)$.
It is fibered over $$\BB=(\Omega^{0,1}\oplus \Omega^0[1])\bigoplus (\Omega^{1,0}\oplus \Omega^0[-1])\;\; \ni ((\As^{0,1}_\ii,c_\ii),\; (\As^{1,0}_\oo,c_\oo))$$ with fiber
$$ \YY=\Omega^\bt(I,\{0\};\Omega^{0,1})\oplus \Omega^\bt(I,\{1\};\Omega^{1,0})\oplus \Omega^\bt(I,\{0,1\};\Omega^0[1])\oplus \Omega^\bt(I;\Omega^2[-1]) . $$
The space of residual fields is given by the (relative) cohomology in $I$-direction:
$$\VV=H^\bt(I,\{0,1\};\Omega^0[1])\oplus H^\bt(I;\Omega^2[-1])\quad \ni (dt\cdot\sigma, \As^*_\res) .$$
The gauge-fixing lagrangian $\LL$ in the fiber of $\YY\ra \VV$ is given by setting to zero the (relatively) exact 1-form components of fields along $I$.

Thus, on $\LL$ we have 
\begin{align*}
\gh=0:\quad & A^{(1)}=\til{\As}^{1,0}_\oo+\til{\As}^{0,1}_\ii+a^{1,0}+a^{0,1}+dt\cdot \sigma, \\
\gh=1:\quad & A^{(0)}=\til{c}_\oo+\til{c}_\ii+c_\fl ,\\
\gh=-1: \quad & A^{(2)}=\As^*_\res+\As^*_\fl ,\\
\gh=-2:\quad & A^{(3)}=0 ,
\end{align*}
with tilde denoting the discontinuous extension by zero from $t=1$ or $t=0$, respectively. Fluctuations are understood to satisfy
$$ a^{1,0}|_{t=1}=0,\quad a^{0,1}|_{t=0}=0 , \quad c_\fl|_{t=0}=c_\fl|_{t=1}=0,\quad \int_0^1 dt\; \As^*_\fl =0 .$$
The gauge-fixed polarized action is:
%
\begin{multline*}
S^\pol|_\LL=\\
\int_{ I\times\Sigma} a^{1,0} \dt a^{0,1}  
+ \int_{ I\times\Sigma} dt\, (a^{1,0}+a^{0,1}) d_\Sigma \sigma+
\int_\Sigma \As^{1,0}_\oo a^{0,1}\big|_{t=1}  - \int_\Sigma \As^{0,1}_\ii a^{1,0}\big|_{t=0}\\
+\int_{ I\times\Sigma} \As^*_\fl \dt c_\fl
-\int_\Sigma (\As^*_\res+\As^*_\fl\big|_{t=1}) c_\oo +
\int_\Sigma (\As^*_\res+\As^*_\fl\big|_{t=0}) c_\ii .
\end{multline*}

The propagators are given by:
\begin{align}
\langle a^{0,1}(t,z) a^{1,0}(t',z') \rangle  &=-i\hbar\, \theta(t-t')\, \delta^{(2)}(z-z')\frac{i}{2}d\bar{z}\,dz' ,  \label{non-ab phys prop} \\
\langle c_\fl(t,z) \As^*_\fl(t',z') \rangle  &=-i\hbar\, (\theta(t-t')-t)\, \delta^{(2)}(z-z')\frac{i}{2}dz'\,d\bar{z}' . \label{non-ab ghost prop}
\end{align}

The corresponding effective action is:
\begin{equation}
S^\mr{eff}=\int_\Sigma \As^{1,0}_\oo \As^{0,1}_\ii +
\As^{1,0}_\oo \bar\dd\sigma + \As^{0,1}_\ii \dd\sigma-\frac12 \dd\sigma\bar\dd\sigma - \As^*_\res (c_\oo-c_\ii).\label{eq: Seff par ab 3d CS}
\end{equation}
\begin{remark}[Hamilton--Jacobi property, mQME]
Notice that \eqref{eq: Seff par ab 3d CS} coincides with \eqref{eq: S eff 1d CS par} above upon specializing $\g = \Omega^\bullet(\Sigma)$, $X^+ = \Omega^{1,0}(\Sigma)$, $X^-=\Omega^{0,1}(\Sigma)$. Thus \eqref{eq: Seff par ab 3d CS} satisfies the modified quantum master equation, and the $\gh=0$ part of \eqref{eq: Seff par ab 3d CS} generates the evolution relation of abelian Chern--Simons theory. 
\end{remark}
\begin{remark}[
Integrating out residual fields]\label{rem:pushforward ab 3d CS}
As in Section \ref{sec: full int ahol-hol}, we can integrate out the residual fields $\sigma,\As_\res^*$ by choosing a Riemannian metric compatible with the complex structure and decomposing fields as $\sigma = \sigma_c + \underline{\sigma}, \As^*_\res = \mu\cdot\As^*_{\res,c}+\underline{\As^*_\res}$. As expected, the result differs from \eqref{ab CS ahol-hol Z fully integrated} only in the ghost sector:
\begin{equation*}
Z_*
[c_\oo,c_\ii,\As^{1,0}_\oo,\As^{0,1}_\ii] = \delta(c_{\oo,c} - c_{\ii,c}) \big({\det}'_{\Omega^0(\Sigma)}\Delta_g\big)^{-\frac12}\cdot e^{\frac{i}{\hbar} \mathbb{I}(\As^{1,0}_\oo,\As^{0,1}_\ii) }.
\end{equation*} 
Here $\mathbb{I}$ is given by \eqref{eq:def I}. 
\end{remark}

\subsubsection{Nonabelian case}
In the nonabelian Chern--Simons theory with coefficients in a semisimple Lie algebra $\ggg$ (corresponding to a compact group\footnote{To simplify the notations, and to be able to write expressions like $g^{-1}\dd g$ below, 
we will assume that $G$ is a matrix group. Otherwise, we should use left/right translations in $G$.} $G$), the superfield is $\A\in \Omega^\bt( I\times\Sigma, \ggg[1])$ and all the splittings are as before, just with components understood as $\ggg$-valued forms, paired in the quadratic part of the action via the Killing form $\langle,\rangle$ on $\ggg$. The interaction term of the nonabelian theory,  when restricted to the gauge-fixing lagrangian, yields
$$ S_\mr{int}=\frac16 \int \langle \A,[\A,\A] \rangle = -\int_\Sigma\int_I dt \langle a^{1,0},\mr{ad}_\sigma a^{0,1} \rangle   -\int_\Sigma\int_I dt \langle c_\fl,\mr{ad}_\sigma (\As^*_\res+\As^*_\fl) \rangle .
$$
This adds two new bivalent vertices and a univalent vertex to the Feynman rules.

Let us introduce the following notations:
\begin{gather}
F_+(x)=\frac{x}{1-e^{-x}}=
\sum_{n\geq 0}(-1)^n\frac{B_{n}}{n!}x^n
,\quad 
F_-(x)=-\frac{x}{e^x-1}=
-\sum_{n\geq 0} \frac{B_n}{n!} x^n,\\  \label{j}
j(\sigma)=\sum_{n=2}^\infty \frac{B_n}{n\cdot n!}\tr_\ggg(\ad_\sigma)^n = \mr{tr}_\ggg \log \frac{\sinh \frac{\ad_\sigma}{2}}{\frac{\ad_\sigma}{2}},
\end{gather}
with $B_n$ the Bernoulli numbers, $B_0=1,\; B_1=-\frac12,\; B_2=\frac16,\; B_3=0,\; B_4=-\frac{1}{30},\ldots$

In the following lemma we assume that $\sigma$ is in a sufficiently small neighborhood of zero in $\ggg$, see Remark \ref{rem: B_0} below for details.

\begin{lemma}
 \label{lemma: S_eff non-ab via sigma}
The partition function of the nonabelian Chern--Simons theory on a cylinder is $Z=e^{\frac{i}{\hbar}S^\mr{eff}}$ with the following effective action:
\begin{multline}\label{S eff non-abelian}
S^\mr{eff}=\\ =\int_\Sigma \langle \As^{1,0}_\oo, e^{-\ad_\sigma}\circ\As^{0,1}_\ii \rangle+
\langle \As^{1,0}_\oo,  
\frac{1-e^{-\ad_\sigma}}{\ad_\sigma}\circ \bar\dd\sigma \rangle+
\langle \As^{0,1}_\ii, 
\frac{e^{\ad_\sigma}-1}{\ad_\sigma}
\circ\dd\sigma \rangle \\
-\langle \dd\sigma ,\frac{e^{-\ad_\sigma}+\ad_\sigma-1}{(\ad_\sigma)^2} \circ \bar\dd\sigma \rangle-
\langle \As^*_\res,F_+(\ad_\sigma)\circ c_\oo + F_-(\ad_\sigma)\circ c_\ii \rangle -i\hbar \mathbb{W}(\sigma).
\end{multline}

In (\ref{S eff non-abelian}), the $1$-loop correction $\mathbb{W}$ stands for the contribution of ``ghost wheels'' --- cycles of $n\geq 1$ ghost-antifield propagators (at the vertices, they interact with the residual field 
$\sigma$). These graphs are ill-defined in the chosen axial gauge; their formal evaluation yields the expression
\begin{equation}\label{W non-ab}
 \mathbb{W}(\sigma)= \sum_{n\geq 1} \frac{B_n}{n\cdot n!}\, \mr{tr}_{C^\infty(\Sigma,\ggg)} (\ad_\sigma)^n  = \mr{tr}_{C^\infty(\Sigma)}  j(\sigma)\cdot 
\end{equation}
This expression 
heuristically stands for the ``sum over points $z$ of $\Sigma$'' of $j(\sigma(z))$.

\end{lemma}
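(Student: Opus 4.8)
\emph{Proof strategy.} The plan is to evaluate the partition function directly as a sum over Feynman diagrams built from the physical propagator \eqref{non-ab phys prop}, the ghost propagator \eqref{non-ab ghost prop}, the two bivalent vertices of $S_\mr{int}$, the linear bulk couplings $\int_I dt\,(a^{1,0}\bar\dd\sigma + a^{0,1}\dd\sigma)$ coming from the $d_\Sigma\sigma$ term of the gauge-fixed quadratic action, and the boundary source terms. The decisive simplification is that $\sigma$ is a background (residual) field, so every internal line is a chain of ordinary propagators glued at bivalent $\ad_\sigma$-vertices; summing over the length of such a chain replaces each bare propagator by a $\sigma$-\emph{dressed} one. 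Since the physical fields $a^{1,0},a^{0,1}$ and the ghost fields $c_\fl,\As^*$ are not linked by any propagator, the connected diagrams split into physical trees, ghost trees, and ghost wheels, and I would compute these three contributions in turn. The $\sigma\to0$ specialization must reproduce the abelian result \eqref{eq: Seff par ab 3d CS}, which I would use throughout as a consistency check.

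\emph{Physical sector.} Summing the ordered-simplex integrals $\int_{t'<t_1<\cdots<t_k<t}dt_1\cdots dt_k=(t-t')^k/k!$ against $(-\ad_\sigma)^k$ resums the physical chain into the dressed propagator $\theta(t-t')\,e^{-(t-t')\ad_\sigma}$ times the $\Sigma$-factor of \eqref{non-ab phys prop}. The only tree topologies are a single dressed line whose two ends are boundary sources and/or the $d_\Sigma\sigma$-vertex. The boundary-to-boundary line at $(t,t')=(1,0)$ gives $e^{-\ad_\sigma}$ (first term). A boundary-to-bulk line gives $\int_0^1 e^{-(1-t')\ad_\sigma}\,dt'=\frac{1-e^{-\ad_\sigma}}{\ad_\sigma}$ (second term) and, after using invariance of $\langle\,,\rangle$ to move $\ad_\sigma$ onto the other leg, $\frac{e^{\ad_\sigma}-1}{\ad_\sigma}$ (third term). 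The bulk-to-bulk line gives $\int_0^1\!\int_0^1\theta(t-t')e^{-(t-t')\ad_\sigma}\,dt\,dt'=\frac{e^{-\ad_\sigma}+\ad_\sigma-1}{(\ad_\sigma)^2}$ (fourth term). Physical wheels vanish, since a cycle of factors $\theta(t_i-t_{i+1})$ has empty support.

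\emph{Ghost sector.} Here the interval kernel of \eqref{non-ab ghost prop} is $K(t,t')=\theta(t-t')-t$, which is not simply ordered, so the iterated integrals $\int_{[0,1]^k}K(t,t_1)\cdots K(t_k,t')$ no longer collapse to a simplex volume; instead the Bernoulli-polynomial generating function $\sum_n B_n(s)\lambda^n/n!=\lambda e^{\lambda s}/(e^\lambda-1)$ (with $\lambda=\ad_\sigma$) appears. Resumming yields the dressed ghost propagator and, for the trees joining the residual field $\As^*_\res$ (which occurs both in the vertex $\langle c_\fl,\ad_\sigma\As^*\rangle$ and in the boundary source) to the boundary ghosts $c_\oo$ at $t=1$ and $c_\ii$ at $t=0$, the closed forms $F_+(\ad_\sigma)$ and $F_-(\ad_\sigma)$ of \eqref{j}; this is the fifth term. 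The limit $F_\pm(0)=\pm1$ recovers the abelian term $-\As^*_\res(c_\oo-c_\ii)$.

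\emph{Ghost wheels and the main obstacle.} The $-t$ summand in $K$ is exactly what renders closed ghost loops nonzero. A wheel of $n$ ghost propagators carries a cyclic symmetry factor $1/n$, $n$ vertex factors $-\ad_\sigma$, and the formal trace of the $n$-th power of the interval kernel $K$; resumming these reproduces the coefficients $\tfrac{B_n}{n\cdot n!}$ and hence $\mathbb{W}(\sigma)=\sum_{n\geq1}\tfrac{B_n}{n\cdot n!}\tr_{C^\infty(\Sigma,\ggg)}(\ad_\sigma)^n$ (the $n=1$ term dropping out by $\tr_\ggg\ad_\sigma=0$). I expect this final step to be the delicate one: the trace is over $C^\infty(\Sigma,\ggg)$, whose $\Sigma$-factor is a product of coincident delta-forms, so the wheels are genuinely ill-defined in the axial gauge and must be read as the formal ``sum over points $z\in\Sigma$'' of $j(\sigma(z))$. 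Pinning down the signs throughout (orientations of the in/out boundaries and the $\tfrac{i}{2}d\bar z\,dz'$ factors) and making the manipulation of the divergent wheel sum precise are the parts requiring the most care; once the dressed propagators are in hand, the tree-level resummations are otherwise routine.
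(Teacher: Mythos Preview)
Your proposal is correct and follows essentially the same approach as the paper's proof: both dress the physical and ghost propagators by resumming chains of bivalent $\ad_\sigma$-vertices, then evaluate the four physical tree topologies and the ghost trees exactly as you describe, observe that physical wheels vanish by the step-function support argument, and treat the ghost wheels formally (the paper regularizes by replacing $C^\infty(\Sigma)$ with functions on the vertices of a triangulation, which makes your ``sum over points $z\in\Sigma$'' precise). The only place where the paper is more explicit than your sketch is in writing down the closed form of the dressed ghost propagator $\frac{e^{(t'-t+\theta(t-t'))\ad_\sigma}-e^{t'\ad_\sigma}}{e^{\ad_\sigma}-1}$ and then reading off $F_\pm$ from its boundary values at $t'=0,1$, but this is exactly the Bernoulli-polynomial resummation you anticipate.
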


We refer the reader to  \cite[Section 11.3]{HJ} for a one-dimensional toy model of this statement.

\begin{proof}
One has the following classes of Feynman diagrams contributing to the effective action:
\begin{figure}[H]
\includegraphics[scale=0.7]{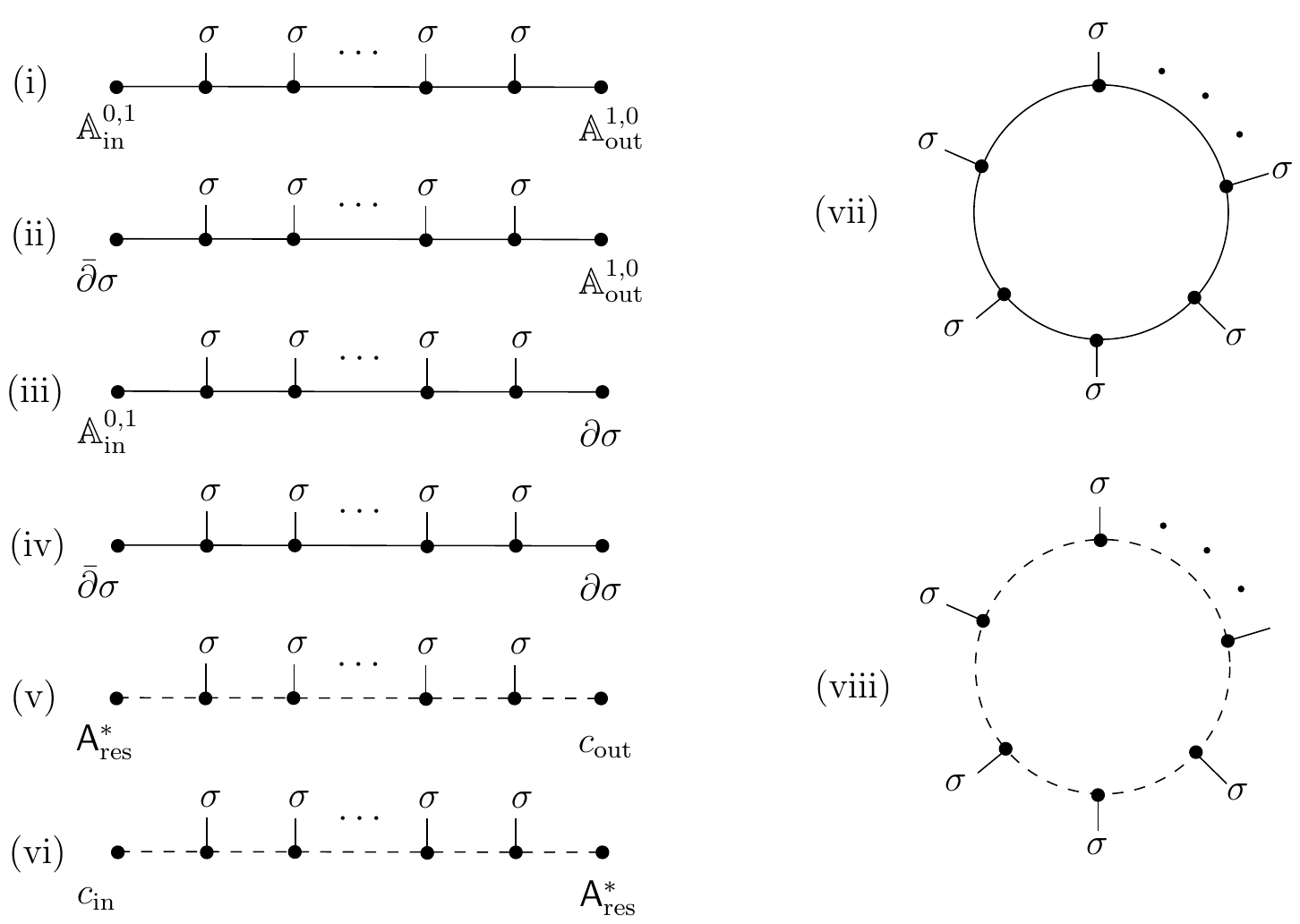}
\caption{Feynman diagrams in nonabelian theory on a cylinder with ``parallel ghost'' polarization.}
\label{S_eff non-ab diagrams}
\end{figure}
Here the solid lines represent the ``physical propagator'' (\ref{non-ab phys prop}) and the dashed lines represent the ``ghost propagator'' (\ref{non-ab ghost prop}). 

These diagrams are calculated easiest by introducing the propagators dressed with $\sigma$-insertions:
\begin{align*}
\!\!\!\!\!\!\!\!
\langle a^{0,1}(t,z)\otimes a^{1,0}(t',z') \rangle_\mr{dressed} & =-i\hbar\, \theta(t-t')\,  \delta^{(2)}(z-z')\frac{i}{2}d\bar{z}\,dz' \;\sum_{k=0}^\infty \int_{t'<t_1<\cdots<t_k<t}dt_1\cdots dt_k (-\ad_\sigma)^k\\
 &= -i\hbar\, \theta(t-t')\,e^{-(t-t')\ad_\sigma}\,  \delta^{(2)}(z-z')\frac{i}{2}d\bar{z}\,dz',
\end{align*}
\begin{align*}
\langle c_\fl(t,z) & \otimes  \As^*_\fl(t',z') \rangle_\mr{dressed}   =-i\hbar\,  \delta^{(2)}(z-z')\frac{i}{2}dz'\,d\bar{z}'\cdot\\
\cdot &\sum_{k=0}^\infty \underbrace{\int_{t_1,\ldots,t_k\in [0,1]} dt_1\cdots dt_k \; (\theta(t-t_1)-t)\,(\theta(t_1-t_2)-t_1)\cdots (\theta(t_k-t')-t_k)}_{
\begin{cases}\frac{B_{k+1}(1-t')-B_{k+1}(t-t')}{(k+1)!},\;\; t>t' \\
(-1)^k\frac{B_{k+1}(t'-t)-B_{k+1}(t')}{(k+1)!},\;\; t<t'
\end{cases}
}\; (-\ad_\sigma)^k\\
&= -i\hbar\,  \delta^{(2)}(z-z')\frac{i}{2}dz'\,d\bar{z}'\cdot 
\frac{e^{(t'-t+\theta(t-t'))\ad_\sigma}-e^{t'\ad_\sigma}}{e^{\ad_\sigma}-1}.
\end{align*}
Here $B_k(t)$ are the Bernoulli polynomials.

Computing the tree Feynman diagrams (i)--(vi) in Figure \ref{S_eff non-ab diagrams}, we have the following.
\begin{enumerate}[(i)]
\item $\displaystyle \wick{\int_\Sigma \langle\As^{1,0}_\oo,\c1{a^{0,1}}\big|_{t=1}\rangle \int_\Sigma\langle \c1{a^{1,0}}\big|_{t=0},\As^{0,1}_\ii \rangle} = \int_\Sigma \langle \As^{1,0}_\oo, e^{-\ad_\sigma}\circ \As^{0,1}_\ii \rangle$. Here the contraction is the dressed propagator.
\item \label{FeynDiag computation (ii)}  $\displaystyle  \wick{\int_\Sigma \langle \As^{1,0}_\oo, \c1{a^{0,1}}\big|_{t=1} \rangle  \int_{ I\times\Sigma} dt\,  \langle \c1{a^{1,0}},\bar\dd\sigma\rangle} = \int_\Sigma \langle \As^{1,0}_\oo, \int_0^1 dt\, e^{-(1-t)\ad_\sigma}\circ \bar\dd\sigma \rangle \\= \int_\Sigma\langle \As^{1,0}_\oo , \frac{1-e^{-\ad_\sigma}}{\ad_\sigma}\circ \bar\dd\sigma \rangle $.
\item Similarly to (\ref{FeynDiag computation (ii)}), $\displaystyle \wick{  -\int_{ I\times\Sigma} dt\, \langle \dd\sigma,\c1{a^{0,1}} \rangle  \int_\Sigma \langle \c1{a^{1,0}}\big|_{t=0},\As^{0,1}_\ii \rangle }= -\int_\Sigma \langle \dd\sigma,\int_0^1  dt\, e^{-t\ad_\sigma} \As^{0,1}_\ii \rangle \\= - \langle \dd\sigma,  \frac{1-e^{-\ad_\sigma}}{\ad_\sigma}\circ \As^{0,1}_\ii\rangle= \langle \frac{e^{\ad_\sigma}-1}{\ad_\sigma}\circ\As^{0,1}_\ii,\dd\sigma \rangle$.
\item $\displaystyle \wick{  -\int_{ I\times\Sigma} dt\, \langle \dd\sigma,\c1{a^{0,1}} \rangle   \int_{ I\times\Sigma} dt'\,  \langle \c1{a^{1,0}},\bar\dd\sigma\rangle   } = -\int_\Sigma \langle \dd\sigma  ,\int_0^1 dt \int_0^1 dt'\, e^{-(t-t')\ad_\sigma}\circ  \bar\dd\sigma \rangle \\
= - \int_\Sigma \langle \dd\sigma  , \frac{e^{-\ad_\sigma}+\ad_\sigma-1}{(\ad_\sigma)^2}\circ\bar\dd\sigma \rangle $.
\item \label{FeynDiag computation (v)}  $\displaystyle  \wick{ 
-\int_{ I\times\Sigma} dt\, \langle \ad_\sigma \As^*_\res,\c1{c_\fl} 
\rangle  \int_\Sigma \langle \c1{\As_\fl^*}\big|_{t=1}, c_\oo\rangle  
} -\int_\Sigma \langle A^*_\res,c_\oo \rangle \\
= 
-\int_\Sigma \langle \ad_\sigma \As^*_\res , \int_0^1 dt\, \frac{e^{(1-t)\ad_\sigma}-e^{\ad_\sigma}}{e^{\ad_\sigma}-1}\circ  c_\oo \rangle-\int_\Sigma \langle A^*_\res,c_\oo \rangle 
\\= -\int_\Sigma \langle \As_\res^*, 
\frac{\ad_\sigma}{1-e^{-\ad_\sigma}}
\circ c_\oo \rangle
$.
\item Similarly to (\ref{FeynDiag computation (v)}), $\displaystyle  \wick{ 
\int_{ I\times\Sigma} dt\, \langle \ad_\sigma \As^*_\res,\c1{c_\fl} 
\rangle  \int_\Sigma \langle \c1{\As_\fl^*}\big|_{t=0}, c_\ii\rangle  
} +\int_\Sigma \langle A^*_\res,c_\ii \rangle \\
=\int_\Sigma\langle \ad_\sigma \As^*_\res ,\int_0^1dt\, \frac{e^{(1-t)\ad_\sigma}-1}{e^{\ad_\sigma}-1} \circ c_\ii \rangle +\int_\Sigma \langle A^*_\res,c_\ii \rangle \\
=\int_\Sigma \langle \As^*_\res, \frac{\ad_\sigma}{e^{\ad_\sigma}-1}\circ c_\ii \rangle $.
\end{enumerate}
Thus, the Feynman diagrams (i)-(vi) in Figure \ref{S_eff non-ab diagrams} yield the $O(\hbar^0)$ part of the answer (\ref{S eff non-abelian}).

Next, consider the one-loop graphs in Figure \ref{S_eff non-ab diagrams}. 
The ``physical wheels'' --- diagrams (vii)  --- vanish due to the form of the propagator (\ref{non-ab phys prop}): they are proportional to
$$\int_{t_1,\ldots,t_k\in [0,1]} dt_1\cdots dt_k \, \theta(t_1-t_2)\theta(t_2-t_3)\cdots \theta(t_{k-1}-t_k) \theta(t_k-t_1) =0 . $$

Finally, consider the ``ghost wheels'' --- diagrams (viii). The propagator (\ref{non-ab ghost prop}) is the integral kernel of an operator $K^\mr{gh}=\mr{id}\otimes K^\mr{gh}_I$ acting on $ C^\infty(\Sigma)\otimes \Omega^\bt(I)$ with $K^\mr{gh}_I: f(t)+dt\, g(t) \mapsto \int_0^1dt'\,(\theta(t-t')-t)\,g(t')  $. As a regularization, let us replace   $C^\infty(\Sigma)$ 
with $C^\infty(X)$, with $X$  a  finite set of points --- the set of vertices of some triangulation of the surface $\Sigma$. In particular,  $C^\infty(X)$ is a finite-dimensional vector space.
Then, the regularized value of the ghost wheel diagram (viii) with $k$ $\sigma$-insertions is the supertrace:
$$ -i\hbar\; \mr{str}_{C^\infty(X)\otimes \Omega^\bt(I,\ggg)} (- K^\mr{gh}\, dt\, \ad_\sigma)^k  = -i\hbar\,\tr_{C^\infty(X)}\, \mr{str}_{\Omega^\bt(I,\ggg)} (- K^\mr{gh}_I\, dt\, \ad_\sigma)^k. $$
For the supertrace over the interval, we have (see, e.g., \cite{discrBF}):
\begin{multline*}
\mr{str}_{\Omega^\bt(I,\ggg)} (- K^\mr{gh}_I\,dt\, \ad_\sigma)^k\\
= \tr_\ggg\int_{t_1,\ldots,t_k\in [0,1]} (\theta(t_1-t_2)-t_1)dt_2 \ad_\sigma \cdots (\theta(t_{k-1}-t_k)-t_{k-1}) dt_k \ad_\sigma (\theta(t_k-t_1)-t_1) dt_1 \ad_\sigma
\\
= \frac{B_k}{ k!}\tr_\ggg(\ad_\sigma)^k .
\end{multline*}
Summing over the values of $k\geq 1$ and taking into account the symmetric factor $1/k$ (due to the automorphisms of the wheel graph), we obtain 
$$ \sum_{k\geq 1}\frac{1}{k}\mr{str}_{\Omega^\bt(I,\ggg)} (- K^\mr{gh}_I\, dt\, \ad_\sigma )^k = j(\sigma),  $$
with $j$ as in (\ref{j}). Thus, finally, the total contribution of graphs (viii) to the effective action is
$$ -i\hbar\,\mathbb{W}= -i\hbar\, \tr_{C^\infty(X)}j(\sigma)=-i\hbar\sum_{z\in X} j(\sigma(z)). $$
Trying to pass to a limit of dense triangulation $X$ obviously leads to an ill-defined result here.

Put another way, the regularized computation of a ghost wheel diagram is:
\begin{multline*}
-i\hbar\wick{
\sum_{z_1\in X}\int_I dt_1 \langle \c4{\As^*_\fl},-\ad_\sigma \c1{c_\fl} \rangle 
\sum_{z_2\in X}\int_I dt_2 \langle \c1{\As^*_\fl},-\ad_\sigma \c2{c_\fl} \rangle 
\;\;\c2{}\cdots\c3{}\;\;
\sum_{z_k\in X}\int_I dt_k \langle \c3{\As^*_\fl},-\ad_\sigma \c4{c_\fl} \rangle 
}\\
=i\hbar\sum_{z\in X} \int_{I}dt_1\cdots \int_I dt_k (\theta(t_1-t_2)-t_1)\cdots (\theta(t_{k-1}-t_k)-t_{k-1}) (\theta(t_k-t_1)-t_k) \tr_\ggg (-\ad_\sigma)^k\\
= -i\hbar\,\frac{B_k}{k!} \sum_{z\in X} \tr_\ggg (\ad_{\sigma(z)})^k .
\end{multline*}
where the contractions are the nondressed propagators (\ref{non-ab ghost prop}) with the delta form in $z$ replaced with Kronecker symbol $\delta_{z z'}$. In this regularized setup we understand the fields $c_\fl,\As^*_\fl,\sigma$ as supported at the vertices of $X$; fields $c_\fl,\As^*_\fl$ also depend on $t\in [0,1]$.
\end{proof}

\begin{remark}  \label{rem: B_0}
In Lemma  \ref{lemma: S_eff non-ab via sigma} we assumed that the residual field 
$\sigma$ takes values in a sufficiently small neighborhood of zero in $\ggg$, so that the sums 
 of Feynman diagrams in Figure \ref{S_eff non-ab diagrams} converge.\footnote{Curiously, the issue of convergence arises only in diagrams (v), (vi), (viii) --- the diagrams involving ghosts.} In fact, they converge if{f} $\sigma$ is valued in $B_0\subset \ggg$ where $B_0$ is 
the connected component of the origin in 
$$
\left\{\sigma\in\ggg\,\big|\, {\det}_\ggg \frac{\sinh \frac{\ad_\sigma}{2}}{\frac{\ad_\sigma}{2}}\neq 0\right\}\qquad \subset\quad  \ggg . $$
In other words, $B_0$ is 
the subset of $\ggg$ where all eigenvalues of $\ad_\sigma$ lie in the open interval $(-2\pi i, 2\pi i)\subset i\mathbb{R}$.  Thus, we are assuming that $\sigma$ takes values in $B_0\subset \ggg$ (cf.\ the discussion of the Gribov region in the context of 2D Yang-Mills in \cite[Section 2.4.1]{2dYM}). Furthermore, note that the exponential map $\exp:\ggg\ra G$ is a diffeomorphism from $B_0$ onto its image $\exp(B_0)$. Moreover, $\exp(B_0)$ is an open dense subset of $G$.
\end{remark}

\subsubsection{Group-valued parametrization of the residual field}\label{sss: group-valued parametrization}
Let us parametrize the residual field $\sigma$ by a group-valued map $g=e^{-\sigma}:\Sigma\ra G$. 
\begin{lemma}\label{lemma: S_eff non-ab rewriting via g}
The effective action (\ref{S eff non-abelian}) can be rewritten as 
\begin{multline}\label{S eff non-abelian via g}
S^\mr{eff}=
\int_\Sigma \Big( \langle \As^{1,0}_\oo, g\, \As^{0,1}_\ii g^{-1} \rangle -
\langle \As^{1,0}_\oo,\bar\dd g\cdot g^{-1} \rangle 
- 
\langle  \As^{0,1}_\ii, g^{-1}\,  \dd g  \rangle 
\\
+ \langle \As^*_\res,  F_-(\ad_{\log g})\circ c_\oo+F_+(\ad_{\log g})\circ c_\ii \rangle \Big)
+\mr{WZW}(g) - i\hbar \mathbb{W}.
\end{multline}
Here 
\begin{equation}\label{WZW def}
\mr{WZW}(g)=- \frac12 \int_\Sigma  \langle \dd g\cdot g^{-1}, \bar\dd g\cdot  g^{-1} \rangle -\frac{1}{12}  \int_{I\times \Sigma}\langle d\til{g}\cdot \til{g}^{-1},[d\til{g}\cdot \til{g}^{-1},d\til{g}\cdot \til{g}^{-1}] \rangle .
\end{equation}
is the Wess--Zumino--Witten action, where $\til{g}=e^{(t-1)\sigma}$ is the extension of $g$ to a mapping $ I\times\Sigma\ra G$, 
interpolating between $\til{g}=g$ at $t=0$ and $\til{g}=1$ at $t=1$.\footnote{By the standard result on Wess--Zumino terms, $\mr{WZW}(g) \bmod 4\pi^2 \mathbb{Z}$ is independent of the choice of $\til{g}$ interpolating between $g$ on one end of the cylinder and $1$ on the other.}
\end{lemma}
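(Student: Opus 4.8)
The plan is to rewrite the effective action \eqref{S eff non-abelian} term by term under the substitution $g=e^{-\sigma}$, so that $\ad_{\log g}=-\ad_\sigma$ and $e^{-\ad_\sigma}X=gXg^{-1}$. The loop term $-i\hbar\mathbb{W}$ is literally unchanged. The ghost ($\gh\neq0$) term transforms by the elementary identities $F_+(-x)=-F_-(x)$ and $F_-(-x)=-F_+(x)$, which follow at once from the definitions in \eqref{j}; applied with $\ad_{\log g}=-\ad_\sigma$, they turn $-\langle \As^*_\res,F_+(\ad_\sigma)\circ c_\oo+F_-(\ad_\sigma)\circ c_\ii\rangle$ into $\langle \As^*_\res, F_-(\ad_{\log g})\circ c_\oo+F_+(\ad_{\log g})\circ c_\ii\rangle$.

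For the three physical non-WZW terms I would use the standard formula for the right and left logarithmic derivatives of the exponential map. Specializing $d(e^{X})\,e^{-X}=\frac{e^{\ad_X}-1}{\ad_X}\,dX$ and $e^{-X}\,d(e^{X})=\frac{1-e^{-\ad_X}}{\ad_X}\,dX$ to $X=-\sigma$ yields $\bar\dd g\cdot g^{-1}=-\frac{1-e^{-\ad_\sigma}}{\ad_\sigma}\,\bar\dd\sigma$ and $g^{-1}\dd g=-\frac{e^{\ad_\sigma}-1}{\ad_\sigma}\,\dd\sigma$. Together with $e^{-\ad_\sigma}\As^{0,1}_\ii=g\,\As^{0,1}_\ii\,g^{-1}$ this converts the first three summands of \eqref{S eff non-abelian} into the first three summands of \eqref{S eff non-abelian via g}.

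It remains to show that the single remaining summand $-\langle \dd\sigma,\frac{e^{-\ad_\sigma}+\ad_\sigma-1}{(\ad_\sigma)^2}\circ\bar\dd\sigma\rangle$ equals $\mr{WZW}(g)$, and this is the heart of the computation. I would split $\mr{WZW}(g)$ as in \eqref{WZW def} into its $2$-dimensional kinetic part and the Wess--Zumino $3$-form. Using the logarithmic-derivative formulas above and moving operators across the invariant pairing (so that $f(\ad_\sigma)^{\mathsf T}=f(-\ad_\sigma)$), the kinetic part becomes exactly $-\frac12\int_\Sigma\langle\dd\sigma,\frac{e^{\ad_\sigma}+e^{-\ad_\sigma}-2}{(\ad_\sigma)^2}\,\bar\dd\sigma\rangle$. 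For the Wess--Zumino term I would compute the right Maurer--Cartan form of the interpolation $\til g=e^{(t-1)\sigma}$, namely $d\til g\cdot\til g^{-1}=dt\,\sigma+\frac{e^{(t-1)\ad_\sigma}-1}{\ad_\sigma}\,d_\Sigma\sigma$, extract the unique component containing a single $dt$, and integrate over $t\in[0,1]$. Writing $\mu=dt\,\sigma+\nu$ with $\nu$ the $\Sigma$-part, the $3$-form part of $\langle\mu,[\mu,\mu]\rangle$ equals $3\,dt\wedge\langle\sigma,[\nu,\nu]\rangle$ (the three contributions being equal by $\ad$-invariance, while the $(2,0)$- and $(0,2)$-forms vanish on the surface $\Sigma$); using $[\nu,\nu]=2[A\dd\sigma,A\bar\dd\sigma]$ with $A=\frac{e^{(t-1)\ad_\sigma}-1}{\ad_\sigma}$ and the elementary integral $\int_0^1\frac{e^{(t-1)x}+e^{-(t-1)x}-2}{x}\,dt=\frac{2(\sinh x-x)}{x^2}$, one finds the Wess--Zumino term equals $+\int_\Sigma\langle\dd\sigma,\frac{\sinh\ad_\sigma-\ad_\sigma}{(\ad_\sigma)^2}\,\bar\dd\sigma\rangle$. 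Adding the kinetic and Wess--Zumino contributions, the operator coefficients combine to $-\frac{e^{-\ad_\sigma}+\ad_\sigma-1}{(\ad_\sigma)^2}$, reproducing the target summand.

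The main obstacle is precisely this last reduction of the Wess--Zumino $3$-form to a $2$-form on $\Sigma$: the bookkeeping of signs and orientations (the factor $3$, the sign of the transpose across the invariant pairing, and the orientation of $I\times\Sigma$) must be carried out carefully, since a single sign error in the transpose would produce the spurious operator $\frac{e^{\ad_\sigma}-\ad_\sigma-1}{(\ad_\sigma)^2}$ in place of the correct $\frac{e^{-\ad_\sigma}+\ad_\sigma-1}{(\ad_\sigma)^2}$. Once the signs are fixed, no integration by parts on $\Sigma$ is required and the three operator expressions cancel identically.
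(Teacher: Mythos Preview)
Your proposal is correct and follows essentially the same route as the paper's proof: both identify the first three summands via the logarithmic-derivative formulas for $e^{-\sigma}$, both compute the kinetic and Wess--Zumino pieces of $\mr{WZW}(g)$ separately and obtain $-\langle\dd\sigma,\frac{\cosh\ad_\sigma-1}{(\ad_\sigma)^2}\bar\dd\sigma\rangle$ and $\langle\dd\sigma,\frac{\sinh\ad_\sigma-\ad_\sigma}{(\ad_\sigma)^2}\bar\dd\sigma\rangle$ respectively, and both leave the ghost and loop terms as direct identifications. The only organizational difference is that the paper derives the logarithmic derivatives via the integral representation $\int_0^1 d\tau\, e^{-\tau\sigma}(\cdot)e^{-(1-\tau)\sigma}$ and computes the WZ term through a double $\tau,\tau'$ integral, whereas you quote the derivative-of-exponential formula directly and extract the $dt$ component of the Maurer--Cartan form of $\til g$; these are equivalent packagings of the same calculation.
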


\begin{remark} Under the convergence assumption that $\sigma$ is valued in $B_0$ (see Remark \ref{rem: B_0}), or equivalently that $g$ is valued in $\exp(B_0)$ --- a contractible open dense subset of $G$,  $\mr{WZW}(g)$ is a single-valued function of $g$, and hence $S^\mr{eff}$ is also a single-valued expression. If $g$ is allowed to roam the entire group $G$, $\mr{WZW}(g)$ (and thus $S^\mr{eff}$) becomes multi-valued, defined only $\bmod\, 4\pi^2\mathbb{Z}$. In the latter case, for $e^{\frac{i}{\hbar}S^\mr{eff}}$ to be a single-valued expression, one needs $\hbar=\frac{2\pi}{k}$ with $k\in \mathbb{Z}$ an integer level.   The fact that quantization of $\hbar$ is necessary in one case but not in the other can be traced to the fact that the Cartan $3$-form (whose pullback by $\til{g}$ is the integrand in the second term in the r.h.s. of (\ref{WZW def})) represents a nontrivial cohomology class on $G$ but is exact 
when restricted to $\exp(B_0)$.
\end{remark}

\begin{proof}[Proof of Lemma \ref{lemma: S_eff non-ab rewriting via g}]
First terms in (\ref{S eff non-abelian}) and (\ref{S eff non-abelian via g}) obviously match. 
We have 
\begin{equation*}
\begin{aligned}
g^{-1}\dd g = e^{\sigma}\int_0^1 d\tau\, e^{-\tau\sigma}(-\dd\sigma) e^{-(1-\tau)\sigma} &= \int_0^1 d\tau\, e^{\tau \ad_\sigma}(-\dd\sigma)\\ 
&=\frac{e^{\ad_\sigma}-1}{\ad_\sigma}(-\dd\sigma) ,\\
 \bar\dd g\cdot g^{-1} = \int_0^1 d\tau\, e^{-\tau\sigma}(-\bar\dd\sigma)e^{-(1-\tau)\sigma}e^\sigma &= 
\int_0^1 d\tau \, e^{-\tau\ad_\sigma}(-\bar\dd\sigma)  \\&= \frac{1-e^{-\ad_\sigma}}{\ad_\sigma}(-\bar\dd\sigma).
\end{aligned}
\end{equation*}
Thus, second and third terms in (\ref{S eff non-abelian}) and (\ref{S eff non-abelian via g}) also match. Next, evaluating the Wess-Zumino term on our preferred extension $\til{g}=e^{(t-1)\sigma}$, we have
\begin{multline}\label{WZ computation}
-\frac{1}{12}\int_{ I\times\Sigma} \langle d\til{g}\cdot \til{g}^{-1},[d\til{g}\cdot \til{g}^{-1},d\til{g}\cdot \til{g}^{-1}] \rangle \\ =
-\frac14 \int_\Sigma \int_0^1 d t \Big\langle \sigma,\int_0^{1-t}d\tau \int_0^{1-t}d\tau' \Big[ e^{-\tau\sigma}(-d\sigma) e^{-(1-t-\tau)\sigma}e^{(1-t)\sigma},e^{-\tau'\sigma}(-d\sigma) e^{-(1-t-\tau')\sigma}e^{(1-t)\sigma}\Big] \Big\rangle \\
=\frac14 \int_\Sigma \int_0^1 dt \int_0^{1-t}d\tau \int_0^{1-t} d\tau' \Big\langle d\sigma , \Big[\sigma, e^{(\tau'-\tau)\ad_{\sigma}} d\sigma \Big] \Big\rangle 
= \frac12 \int_\Sigma   \Big\langle d\sigma, \Big[\sigma, \Big(\frac{\sinh\ad_\sigma-\ad_\sigma}{(\ad_\sigma)^3}\Big) \, d\sigma \Big] \Big\rangle \\
=\frac12 \int_\Sigma   \Big\langle d\sigma, \Big(\frac{\sinh\ad_\sigma-\ad_\sigma}{(\ad_\sigma)^2}\Big) \, d\sigma  \Big\rangle.
\end{multline}
 The WZW kinetic term is:
\begin{multline}\label{WZW kinetic computation}
- \frac12 \int_\Sigma  \langle \dd g\cdot g^{-1}, \bar\dd g\cdot  g^{-1} \rangle \\
=
-\frac12 \int_\Sigma \int_0^1 d\tau \int_0^1 d\tau' \Big\langle  e^{-\tau\sigma} (-\dd\sigma) e^{-(1-\tau)\sigma}e^\sigma ,
 e^{-\tau'\sigma} (-\bar\dd\sigma) e^{-(1-\tau')\sigma}e^\sigma \Big\rangle\\
 = -\frac12\int_\Sigma \int_0^1 d\tau \int_0^1 d\tau' \Big\langle \dd\sigma, e^{(\tau-\tau')\ad_\sigma} \bar\dd\sigma \Big\rangle = 
 - \int_\Sigma\Big\langle \dd\sigma,  \frac{\cosh\ad_\sigma-1}{(\ad_\sigma)^2} \bar\dd\sigma \Big\rangle .
\end{multline}
Putting the kinetic term (\ref{WZW kinetic computation}) and the Wess-Zumino term (\ref{WZ computation}) together, we obtain
\begin{multline*}
\mr{WZW}(g)= \int_\Sigma -\Big\langle \dd\sigma,  \frac{\cosh\ad_\sigma-1}{(\ad_\sigma)^2} \bar\dd\sigma \Big\rangle  +  \Big\langle \dd\sigma, \frac{\sinh\ad_\sigma-\ad_\sigma}{(\ad_\sigma)^2} \, \bar\dd\sigma  \Big\rangle\\
=-\int_\Sigma \Big\langle  \dd\sigma , \frac{ e^{-\ad_\sigma} +\ad_\sigma-1}{(\ad_\sigma)^2}\, \bar\dd\sigma   \Big\rangle . 
\end{multline*}
Thus, finally, $\mr{WZW}$ term in (\ref{S eff non-abelian via g}) coincides with the fourth term in (\ref{S eff non-abelian}).

Ghost terms and the 1-loop contributions in (\ref{S eff non-abelian}) and (\ref{S eff non-abelian via g}) are identified directly.

\end{proof}

\subsubsection{A comment on ghost wheels}\label{sss: ghost wheels}
To understand the role of the term $\mathbb{W}$ (ghost wheels) in (\ref{S eff non-abelian via g}), recall that, for $\mu_G$ the Haar measure on the group $G$ and $\mu_\ggg$ the Lebesgue measure on the Lie algebra, one has $\exp^*\mu_G=e^j\cdot \mu_\ggg$, with $j$ the function on $\ggg$ defined by the formula (\ref{j}). Therefore, the half-density on the space of residual fields
 associated to the effective action (\ref{S eff non-abelian via g}) is, heuristically, the following:\footnote{
Recall, see \cite{Severa}, that on an $(n|n)$-dimensional odd symplectic supermanifold $(\mc{M},\omega)$, half-densities can be understood as cohomology classes of the differential $\omega\wedge $ acting on differential forms $\Omega^\bt(\mc{M})$. Moreover, if $(x^i,\xi_i)$ are Darboux coordinates, $i=1,\ldots,n$, with $x^i$ the even coordinates, then each cohomology class has a unique representative of the form $\rho(x,\xi) dx^1\cdots dx^n \in \Omega^n(\mc{M})$ corresponding to the half-density $\rho(x,\xi)\prod_i d^{\frac12} x D^{\frac12}\xi$, with $\rho$ some function. In (\ref{half-density manipulation}), ``$\sim$" refers this choice of preferred representative for a half-density.
}
\begin{multline}\label{half-density manipulation}
 e^{\frac{i}{\hbar}S^\mr{eff}}\D^{\frac12} \sigma\; \D^{\frac12} \As^*_\res \sim 
e^{\frac{i}{\hbar}S^\mr{eff}}\D \sigma \\
= e^{\frac{i}{\hbar}S^{\mr{eff}\,(0)}}\;\; ``\prod_{z\in\Sigma} \underbrace{e^{j(\sigma(z))}\mu_\ggg(\sigma(z))}_{\mu_G(g(z))} "= 
e^{\frac{i}{\hbar}S^{\mr{eff}\,(0)}}\; \D g .
\end{multline}
Here $S^{\mr{eff}\,(0)}$ stands for (\ref{S eff non-abelian via g}) without the $\mathbb{W}$ term\footnote{
The superscript $(0)$ means the ``$0$-loop part,'' corresponding to the expansion in powers of $\hbar$:  $S^\mr{eff}=\sum_{k=0}^\infty (-i\hbar)^k S^{\mr{eff}\,(k)}$. In the present case, we have only $k=0,1$ terms.
} --- the latter was used in transforming the functional measure from the  pointwise product of Lebesgue measures for $\sigma$ to the product of Haar measures for $g$. The equivalence (\ref{half-density manipulation}) of a half-densities is an extension of a rigorous result presented in \cite{HJ} for a finite-dimensional system.

The odd symplectic form on residual fields is
\begin{equation}\label{non-ab omega BV}
\begin{aligned}
 \omega_\res& =\int_\Sigma \langle \delta  \As^*_\res , \delta \sigma \rangle = \delta  \int_\Sigma \langle  \As^*_\res , \delta \sigma \rangle  \\ 
 & = \delta  \int_\Sigma \langle  \As^*_\res ,  -\frac{\ad_\sigma}{1-e^{-\ad_\sigma}}\circ(\delta g\cdot g^{-1}) \rangle = 
 \delta \int_\Sigma \langle   -\frac{\ad_\sigma}{e^{\ad_\sigma}-1}\circ\As^*_\res , \delta g\cdot g^{-1} \rangle\\
& =\int_\Sigma \langle \delta g^*, \delta g \rangle ,
\end{aligned}
\end{equation}
where we introduced the notation
\begin{equation}
g^*=-g^{-1}\cdot  \big(F_+(\ad_{\log g})\circ \As^*_\res \big)  = \big(F_-(\ad_{\log g})\circ \As^*_\res \big)\cdot g^{-1}
\end{equation}
--- a reparametrization of the residual field $\As^*_\res$ such that $(g,g^*)$ form Darboux coordinates on $\VV$.

Rewritten in terms of the parametrization $(g,g^*)$ for residual fields, the half-density (\ref{half-density manipulation}) becomes
\begin{equation}
e^{\frac{i}{\hbar}S^\mr{eff}} \D^{\frac12} \sigma\; \D^{\frac12} \As^*_\res  =
e^{\frac{i}{\hbar}S^{\mr{eff}\,(0)}} \D^{\frac12} g\; \D^{\frac12} g^* .
\end{equation}
I.e., in the $(g,g^*)$-parametrization, the ghost loops go away and the effective action has no quantum corrections.

\begin{remark}\label{rem: log-half-density shift}
In the context of BV formalism, it is natural to think of $S^\mr{eff}$ as a ``log-half-density'' (see, e.g., \cite[section 2.6]{discrBF}) on the space of residual fields, rather than a function, i.e., behaving under a change of Darboux coordinates as  
$$S^\mr{eff}_{[x,\xi]}(x,\xi)=S^\mr{eff}_{[x',\xi']}(x',\xi')-i\hbar \log \mr{sdet} \frac{\dd(x,\xi)}{\dd(x',\xi')} ,  $$ 
so that one has  $e^{\frac{i}{\hbar}S^\mr{eff}_{[x,\xi]}(x,\xi)}d^{\frac12}x D^{\frac12} \xi=e^{\frac{i}{\hbar}S^\mr{eff}_{[x',\xi']}(x',\xi')}d^{\frac12}x' D^{\frac12} \xi'$. Here the superdeterminant (Berezinian) $\mr{sdet}\cdots$ is the Jacobian of the transformation. With that in mind, the effective action (\ref{S eff non-abelian}) is $S^\mr{eff}_{[\sigma,\As^*_\res]}$ --- relative to the coordinate system $(\sigma,\As^*_\res)$ on $\VV$. On the other hand, $S^{\mr{eff}\,(0)}$ given by (\ref{S eff non-abelian via g}) without the $-i\hbar \mathbb{W}$ term is $S^\mr{eff}_{[g,g^*]}$ --- relative to the coordinate system $(g,g^*)$.
\end{remark}

\subsubsection{Effective action vs. Hamilton--Jacobi}\label{sss: Seff vs HJ nonab CS}

Denote
\begin{multline}\label{I non-ab}
\mathbb{I}(\As^{0,1}_\ii , \As^{1,0}_\oo;g)=\\
=\int_\Sigma \Big( \langle \As^{1,0}_\oo, g \As^{0,1}_\ii g^{-1} \rangle -
\langle \As^{1,0}_\oo,\bar\dd g\cdot g^{-1} \rangle 
- \langle  \As^{0,1}_\ii, g^{-1} 
 \dd g  \rangle \Big)
+\mr{WZW}(g) 
\end{multline}
--- the effective action (\ref{S eff non-abelian via g}) restricted to fields of ghost number zero and without the $O(\hbar)$ term.

Function (\ref{I non-ab}) produces, as a 
generalized generating function 
(see \cite[Appendix A]{HJ} and Section \ref{sec:HJ}),
with $g$ an auxiliary parameter, the following lagrangian $L\subset \overline{\PhiPM_\mr{in}}\times \PhiPM_\mr{out}$ in the phase space for the boundary of the cylinder:
\begin{equation}
\begin{aligned}\label{non-ab HJ L}
L
=\Big\{&\As|_{t=1}=\As_\oo^{1,0}+\underbrace{g \As^{0,1}_\ii g^{-1} -\bar\dd g\cdot g^{-1}}_{\As^{0,1}\big|_{t=1}=\frac{\delta \mathbb{I}}{\delta \As^{1,0}_\oo}}\;,\\
&\As|_{t=0}=\As_\ii^{0,1}+\underbrace{g^{-1} \As^{1,0}_\oo g + g^{-1}\cdot\dd g}_{\As^{1,0}\big|_{t=0}=-\frac{\delta \mathbb{I}}{\delta \As^{0,1}_\ii}} \;\; 
\Big|\;\; Y=0
 \Big\} ,
\end{aligned}
\end{equation}
where we denoted\footnote{
The notation $Y=-\frac{\delta\mathbb{I}}{\delta g\cdot g^{-1}}$ means that variation of $\mathbb{I}$ under the variation of $g$ is $\delta_g \mathbb{I}=-\int_\Sigma \langle \delta g\cdot g^{-1}, Y \rangle$.
} 
\begin{multline}\label{Y constraint}
Y=-\frac{\delta\mathbb{I}}{\delta g\cdot g^{-1}}= \\
= [\As_\oo^{1,0},g \As^{0,1}_\ii g^{-1} ]+(\bar\dd-\ad_{\bar\dd g\cdot g^{-1}}) \As_\oo^{1,0} +\dd (g \As^{0,1}_\ii g^{-1}) -\dd (\bar\dd g\cdot g^{-1}) .
\end{multline}

\begin{lemma}
The lagrangian (\ref{non-ab HJ L}) generated by the functional  (\ref{I non-ab}) --- the tree part of the effective action, restricted to $\gh=0$ fields --- coincides with the 
evolution relation
in $\overline{\PhiPM_\mr{in}}\times \PhiPM_\mr{out}$ 
for
Chern--Simons theory on the cylinder $ I\times\Sigma$.
\end{lemma}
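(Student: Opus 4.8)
The plan is to match the two pieces of data defining the Lagrangian $L$ in \eqref{non-ab HJ L} --- the boundary-value formulas and the constraint $Y=0$ --- against the two defining properties of the Chern--Simons evolution relation $L_{CS}$, namely \emph{gauge-equivalence} and \emph{flatness}. First I would recall that the ghost-number-zero Euler--Lagrange equation is the flatness equation $F(\mc{A})=d\mc{A}+\frac12[\mc{A},\mc{A}]=0$ on the cylinder. Splitting $\mc{A}=\As+dt\cdot\AsI$, flatness decomposes into the slice-wise flatness $F_\Sigma(\As)=0$ for every $t$ and the evolution equation $\dd_t\As=d_{\As}\AsI$ (the covariant derivative of $\AsI$). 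Hence a solution is a path of flat connections on $\Sigma$ moving by the $t$-dependent gauge transformation generated by $\AsI$, and its boundary values $(\As_\ii,\As_\oo)=(\As|_{t=0},\As|_{t=1})$ are two flat connections related by the holonomy $g=\mc{P}\exp(-\int_0^1\AsI\,dt)\colon\Sigma\to G$. Thus $L_{CS}$ is exactly the set of pairs of flat connections on $\Sigma$ gauge-equivalent through such a $g$.

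The first computation is the gauge relation. Writing $\As_\ii=\As^{0,1}_\ii+g^{-1}\As^{1,0}_\oo g+g^{-1}\dd g$ and $\As_\oo=\As^{1,0}_\oo+g\As^{0,1}_\ii g^{-1}-\bar\dd g\cdot g^{-1}$ as in \eqref{non-ab HJ L}, I would compute
\begin{equation*}
g\As_\ii g^{-1}-dg\cdot g^{-1}=\As^{1,0}_\oo+g\As^{0,1}_\ii g^{-1}+\dd g\cdot g^{-1}-(\dd g+\bar\dd g)g^{-1}=\As_\oo,
\end{equation*}
using $g(g^{-1}\As^{1,0}_\oo g)g^{-1}=\As^{1,0}_\oo$ and $g(g^{-1}\dd g)g^{-1}=\dd g\cdot g^{-1}$. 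So the two boundary connections are automatically gauge-equivalent through $g$, independently of the constraint.

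The second computation identifies $Y$ with a curvature. Working in a local complex coordinate $z$ and writing $\As^{1,0}_\oo=P\,dz$, $g\As^{0,1}_\ii g^{-1}=Q\,d\bar z$, $\bar\dd g\cdot g^{-1}=R\,d\bar z$, the $(0,1)$ part of $\As_\oo$ is $(Q-R)\,d\bar z$ and the only nonvanishing curvature component is the $(1,1)$ part $F(\As_\oo)=\big(-\dd_{\bar z}P+\dd_z(Q-R)+[P,Q-R]\big)\,dz\wedge d\bar z$. Expanding \eqref{Y constraint} in the same coordinates gives precisely this expression, so $Y=F(\As_\oo)$. Therefore $Y=0$ is equivalent to flatness of $\As_\oo$, and by the gauge relation (flatness is gauge-invariant) to flatness of $\As_\ii$. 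Combining the two computations, any point of $L$ is a pair of gauge-equivalent flat connections, giving $L\subseteq L_{CS}$; conversely, given a gauge-equivalent pair of flat connections in $L_{CS}$ I would take the same $g$, set $\As^{0,1}_\ii=(\As_\ii)^{0,1}$ and $\As^{1,0}_\oo=(\As_\oo)^{1,0}$, and recover the remaining two equations of \eqref{non-ab HJ L} by taking the Hodge $(1,0)$- and $(0,1)$-parts of the gauge relation (conjugation by the function $g$ preserves Hodge type, $\dd g\cdot g^{-1}$ is of type $(1,0)$ and $\bar\dd g\cdot g^{-1}$ of type $(0,1)$), with $Y=0$ holding by flatness; this gives $L_{CS}\subseteq L$.

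The main obstacle I anticipate is the surjectivity direction: it must be checked that every gauge transformation relating two flat connections in $L_{CS}$ is realized by a map $g$ in the range covered by the generalized generating function. Since in $L_{CS}$ that gauge transformation arises as the holonomy $\mc{P}\exp(-\int_0^1\AsI\,dt)$ of a bulk solution, it is connected to the identity, matching the parametrization $g=e^{-\sigma}$ with $\sigma$ in the Gribov region $B_0$ of Remark \ref{rem: B_0}; I would make this identification precise and note that $\exp(B_0)$ is open and dense in $G$, so that $L$ and $L_{CS}$ agree on a dense open set and hence coincide as (split) Lagrangian submanifolds. A secondary point to handle carefully is the bracket-sign and wedge conventions when comparing \eqref{Y constraint} with $F(\As_\oo)$, since an apparent mismatch of a term $[\As^{1,0}_\oo,\bar\dd g\cdot g^{-1}]$ can arise if the graded bracket of $\ggg$-valued one-forms is not tracked consistently.
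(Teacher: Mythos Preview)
Your proof is correct and follows essentially the same route as the paper: split the zero-curvature equation on the cylinder into slice-wise flatness plus the evolution equation, identify the latter with a gauge transformation by the holonomy $g$, and recognise the constraint $Y$ as the curvature $F(\As_\oo)$. The paper obtains the boundary formulas by writing $\As|_{t=1}=g\,\As|_{t=0}\,g^{-1}+g\,d_\Sigma g^{-1}$ and then reading off the $(1,0)$ and $(0,1)$ parts, whereas you verify the gauge relation by direct substitution; these are equivalent checks. Your local-coordinate identification $Y=F(\As_\oo)$ is the same computation the paper does invariantly, and your sign worry is unfounded once you use that the graded bracket of $\ggg$-valued $1$-forms is symmetric, so $-\ad_{\bar\dd g\cdot g^{-1}}\As^{1,0}_\oo=-[\As^{1,0}_\oo,\bar\dd g\cdot g^{-1}]$.

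Your anticipated obstacle about the range of $g$ is a nonissue for this lemma. In \eqref{non-ab HJ L} and \eqref{I non-ab} the auxiliary parameter is $g\in\mr{Map}(\Sigma,G)$ itself, not $\sigma\in B_0$; the restriction $g=e^{-\sigma}$ with $\sigma$ in the Gribov region is only relevant for the convergence of the perturbative sum and for single-valuedness of $\mr{WZW}(g)$, not for the variational characterisation of $L$. Conversely, every point of $L_{CS}$ comes with a holonomy $g=P\overleftarrow{\exp}(-\int_0^1 a\,dt)$, and one recovers a bulk solution from any pair of gauge-equivalent flat connections by choosing any smooth path in $G$ from $1$ to $g$ (which exists since $G$ is connected). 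So you can drop that paragraph; the two inclusions you wrote out already suffice.
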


\begin{proof}
We are restricting our attention only the to $\gh=0$ connection field $\As+dt\cdot a$ with $\As$ a $t$-dependent $1$-form on $\Sigma$ and $a$ a $t$-dependent $0$-form on $\Sigma$ (both are $\ggg$-valued). The equation of motion --- zero-curvature condition --- $F_{\As+dt\, a}=0$ splits into
\begin{align}
d_\Sigma \As + \frac12 [\As,\As] = 0,  \label{non-ab HJ eq1} \\
\dd_t \As 
= (d_\Sigma + [\As,-]) a . \label{non-ab HJ eq2}
\end{align}
Equation (\ref{non-ab HJ eq2}) says that $\As$ changes by a continuous gauge transformation on $\Sigma$ as $t$ changes, with $a$ the infinitesimal generator. Thus,
\begin{equation}\label{non-ab HJ fin gt}
\As|_{t=1}=g\,\As|_{t=0}\, g^{-1} + g d_\Sigma g^{-1},\quad \mr{with}\;\; g=P\overleftarrow{\exp} \left(-\int_0^1 dt\, a\right) .
\end{equation}
This implies that we can recover the $(1,0)$ component of $\As$ at $t=0$ from its known value at $t=1$ and can recover the $(0,1)$ component at $t=1$ from its known value at $t=0$. Thus,
\begin{align}
\As|_{t=1} &= \As^{1,0}_\oo+\As^{0,1}|_{t=1}=\As^{1,0}_\oo+ g \As^{0,1}_\ii g^{-1} + g \bar\dd g^{-1}  ,
\\
\As|_{t=0} &= \As^{0,1}_\ii+\As^{1,0}|_{t=0}=\As^{0,1}_\ii+ g^{-1} \As^{1,0}_\oo g + g^{-1} \dd g  .
\end{align}
Note that these two equations coincide with 
the first two equations in (\ref{non-ab HJ L}). Next, equation (\ref{non-ab HJ eq1}) means that the curvature of $\As$ must vanish on $\Sigma\times\{t\}$ for any $t$. In fact, it suffices to verify it just for one value of $t$, because for all others it would follow from (\ref{non-ab HJ eq2}). Checking (\ref{non-ab HJ eq1}) at $t=1$, we have
\begin{equation}
F_\As \big|_{t=1}=\underbrace{\bar\dd\As^{1,0}_\oo + \dd (g \As^{0,1}_\ii g^{-1}-\bar\dd g\cdot g^{-1})}_{d_\Sigma \As|_{t=1}}  + [\As^{1,0}_\oo, g \As^{0,1}_\ii g^{-1}-\bar\dd g\cdot g^{-1}] \quad =0 .
\end{equation}
This equation coincides with the constraint $Y=0$ in (\ref{non-ab HJ L}) coming from equating to zero the variation of the generating function $\mathbb{I}$ in the auxiliary parameter $g$.

Thus, we have checked that the lagrangian in the boundary phase space induced from the equations of motion (the evolution relation) coincides with the lagrangian generated by $\mathbb{I}$. 
\end{proof}

\begin{remark}
The function $\mathbb{I}$ given by (\ref{I non-ab}) is also the Hamilton--Jacobi action (see \cite[Section 7.2]{HJ}): 
it is the evaluation of the 
Chern--Simons 
action with polarization terms, restricted to degree-zero fields,
$$ 
S^{f}_\ph
=\int_{ I\times\Sigma} \Big(\frac12 \langle A, dA \rangle+\frac16 \langle A, [A,A] \rangle \Big) + \int_{\{1\}\times\Sigma } \frac12 \langle \As^{1,0},\As^{0,1} \rangle - \int_{\{0\}\times\Sigma } \frac12 \langle \As^{0,1},\As^{1,0} \rangle ,$$
on any connection $1$-form $A$ 
solving the ``evolution equation'' $\iota_{\frac{\dd}{\dd t}} F_A=0$
subject to boundary conditions $(A|_{t=1})^{1,0}=\As^{1,0}_\oo$, $(A|_{t=0})^{0,1}=\As^{0,1}_\ii$ and with the parallel transport of $A$ along the interval $I\times \{z\}$ given by $g(z)\in G$ for any $z\in \Sigma$. One proves this by an explicit computation similar to the proof of Lemma \ref{lemma: S_eff non-ab rewriting via g}, picking a convenient gauge equivalent representative for $A=\As+a\,dt$ with $a$ constant along $I$ (but allowed to vary in $\Sigma$ direction). Here we are using gauge-invariance of Chern--Simons action $\bmod \,4\pi^2\mathbb{Z}$ with respect to gauge transformations trivial on the boundary.
\end{remark}

\subsubsection{Quantum master equation}\label{sss: mQME nonab CS}
Quantum BFV operators on in- and out-states $\Omega_\ii,\Omega_\oo$ are given by canonical quantization of the boundary BFV action
$$ S^\mr{BFV}_\Sigma= \pm \int_\Sigma \langle c,F_\As \rangle+\langle \As^* ,\frac12[c,c]\rangle $$
with $\pm$ corresponding to out-/in-boundary. Explicitly, quantum BFV operators are\footnote{The BFV operator (\ref{non-ab Omega in}), its generalization to the case of Wilson lines intersecting the boundary --- see (\ref{Omega with Wilson lines}) below --- and its cohomology in genus zero were discussed in \cite{ABM}.}
\begin{align}
\label{non-ab Omega out}
\Omega_\oo &= \int_\Sigma \left\langle c_\oo,\bar\dd \As^{1,0}_\oo -i\hbar(\dd+[\As^{1,0}_\oo,-])\frac{\delta}{\delta \As^{1,0}_\oo} \right\rangle -i\hbar \left\langle \frac12 [c_\oo,c_\oo],\frac{\delta}{\delta c_\oo} \right\rangle, \\ \label{non-ab Omega in}
\Omega_\ii &= \int_\Sigma \left\langle c_\ii,-\dd \As^{0,1}_\ii -i\hbar(\bar\dd+[\As^{0,1}_\ii,-])\frac{\delta}{\delta \As^{0,1}_\ii} \right\rangle 
-i\hbar \left\langle \frac12 [c_\ii,c_\ii],\frac{\delta}{\delta c_\ii} \right\rangle .
\end{align}

\begin{lemma}\label{lemma: mQME non-ab}
The partition function $Z=e^{\frac{i}{\hbar}S^\mr{eff}}$ with $S^\mr{eff}$ given by (\ref{S eff non-abelian}), (\ref{S eff non-abelian via g}) satisfies the modified quantum master equation
\begin{equation}\label{mQME non-ab}
(\Omega_\oo+\Omega_\ii-\hbar^2\Delta_\res)Z=0
\end{equation}
with $\Delta_\res=\int_\Sigma \langle \frac{\delta}{\delta\sigma},\frac{\delta}{\delta \As^*_\res} \rangle$ the BV Laplacian on residual fields.
\end{lemma}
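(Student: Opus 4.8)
The plan is to proceed exactly as in the abelian lemma of Section~\ref{sss: hol-hol QME} and in Proposition~\ref{prop: 1d par ghosts mQME}, reducing the operator equation~(\ref{mQME non-ab}) to a differential identity for $S^\mr{eff}$ graded by powers of $\hbar$. Writing $\Omega_\oo+\Omega_\ii=\Omega^{(0)}+\Omega^{(1)}$, with $\Omega^{(0)}=\int_\Sigma\langle c_\oo,\bar\dd\As^{1,0}_\oo\rangle-\langle c_\ii,\dd\As^{0,1}_\ii\rangle$ the multiplication part and $\Omega^{(1)}$ the first-order part (differentiating only the boundary fields $\As^{1,0}_\oo,c_\oo,\As^{0,1}_\ii,c_\ii$), one has $\Omega Z=\big(\Omega^{(0)}+\tfrac{i}{\hbar}\Omega^{(1)}S^\mr{eff}\big)Z$ and $-\hbar^2\Delta_\res Z=\big(\tfrac12\{S^\mr{eff},S^\mr{eff}\}_\res-i\hbar\Delta_\res S^\mr{eff}\big)Z$, so that~(\ref{mQME non-ab}) becomes the analogue of~(\ref{eq: mQME parallel ghosts}). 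Splitting $S^\mr{eff}=S^{\mr{eff}\,(0)}-i\hbar\mathbb{W}$ as in Lemma~\ref{lemma: S_eff non-ab via sigma} and using that $\mathbb{W}=\mathbb{W}(\sigma)$ depends only on the residual field $\sigma$ --- hence is annihilated by $\Omega^{(1)}$ and $\Delta_\res$ and satisfies $\{\mathbb{W},\mathbb{W}\}_\res=0$ --- this single equation splits into
\begin{equation*}
\Omega^{(0)}+\tfrac{i}{\hbar}\Omega^{(1)}S^{\mr{eff}\,(0)}+\tfrac12\{S^{\mr{eff}\,(0)},S^{\mr{eff}\,(0)}\}_\res=0 \tag{A}
\end{equation*}
at order $\hbar^0$ and
\begin{equation*}
\{S^{\mr{eff}\,(0)},\mathbb{W}\}_\res+\Delta_\res S^{\mr{eff}\,(0)}=0 \tag{B}
\end{equation*}
at order $\hbar^1$.

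The identity~(A) is a direct (if lengthy) computation using the explicit tree action $S^{\mr{eff}\,(0)}$ from~(\ref{S eff non-abelian via g}). I would compute $\tfrac{i}{\hbar}\Omega^{(1)}S^{\mr{eff}\,(0)}$ --- the covariant Dolbeault derivatives $(\dd+[\As^{1,0}_\oo,-])\tfrac{\delta}{\delta\As^{1,0}_\oo}$, $(\bar\dd+[\As^{0,1}_\ii,-])\tfrac{\delta}{\delta\As^{0,1}_\ii}$ together with the $[c,c]$ terms --- and the bracket $\tfrac12\{S^{\mr{eff}\,(0)},S^{\mr{eff}\,(0)}\}_\res=\int_\Sigma\langle\tfrac{\delta S^{\mr{eff}\,(0)}}{\delta\sigma},\tfrac{\delta S^{\mr{eff}\,(0)}}{\delta\As^*_\res}\rangle$, in which the $\sigma$-variation of the physical and WZW terms reproduces the curvature constraint $Y$ of~(\ref{Y constraint}) and the $\As^*_\res$-variation of the ghost term $-\langle\As^*_\res,F_+(\ad_\sigma)c_\oo+F_-(\ad_\sigma)c_\ii\rangle$ supplies the matching contractions. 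The cancellation against $\Omega^{(0)}$ is the zero-curvature/gauge-covariance statement already encoded in the evolution-relation lemma above; equivalently, passing to the $(g,g^*)$ coordinates of Section~\ref{sss: group-valued parametrization}, where $S^\mr{eff}$ loses its quantum correction, identity~(A) is precisely the Polyakov--Wiegmann formula for $\mr{WZW}(g)$, so it may also be read off from the transformation law of the gauged WZW action.

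The main obstacle is~(B), the genuinely quantum (one-loop) part. Both terms are formally divergent: $\Delta_\res S^{\mr{eff}\,(0)}=\int_\Sigma\langle\tfrac{\delta}{\delta\sigma},\tfrac{\delta S^{\mr{eff}\,(0)}}{\delta\As^*_\res}\rangle$ produces a coincident-point factor $\delta^{(2)}(0)$ when $\tfrac{\delta}{\delta\sigma}$ differentiates $F_\pm(\ad_\sigma)$ at the same point, while $\tfrac{\delta\mathbb{W}}{\delta\sigma}$ carries the same factor because $\mathbb{W}=\tr_{C^\infty(\Sigma)}j(\sigma)$. The plan is to regularize exactly as in the proof of Lemma~\ref{lemma: S_eff non-ab via sigma}, replacing $C^\infty(\Sigma)$ by $C^\infty(X)$ for a triangulation $X$, so that both sides become finite sums $\sum_{z\in X}$ and the common regulator factors out, leaving a pointwise algebraic identity in $\ggg$ between $j'(\sigma)$ and the $\sigma$-derivative of $F_+(\ad_\sigma)c_\oo+F_-(\ad_\sigma)c_\ii$. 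Conceptually this identity is nothing but the infinitesimal form of the change of measure $\D\sigma\to\D g$ of Section~\ref{sss: ghost wheels}: it states that the ghost wheels $\mathbb{W}$ implement precisely the Jacobian $\exp^*\mu_G=e^{j}\mu_\ggg$, so that in the $(g,g^*)$ Darboux coordinates~(B) is absorbed into the half-density normalization and the mQME reduces to~(A) alone (cf.\ Remark~\ref{rem: log-half-density shift}). Establishing this cancellation --- rather than the tree-level identity~(A) --- is where the care lies, since it requires controlling the regularized one-loop trace and recognizing it as the logarithm of the Haar/Lebesgue Jacobian.
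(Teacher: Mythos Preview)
Your proposal is correct and follows essentially the same strategy as the paper's proof: the same splitting into an $O(\hbar^0)$ identity (your~(A), their equation~(\ref{non-ab mQME in zeroth order})) and an $O(\hbar^1)$ identity (your~(B)), with the tree-level part handled in the $(g,g^*)$ parametrization and the one-loop part in the $(\sigma,\As^*_\res)$ parametrization under the same lattice regularization. The only minor difference is one of emphasis: for~(B) you frame the cancellation conceptually as the infinitesimal Haar/Lebesgue Jacobian identity, whereas the paper carries out the algebraic verification directly --- computing $\delta_\sigma j(\sigma)=\tr_\ggg P(\ad_\sigma)\ad_{\delta\sigma}$ with $P(x)=\tfrac12\coth\tfrac{x}{2}-\tfrac{1}{x}$, and then reducing $\Delta_\res S^{\mr{eff}\,(0)}$ to the same expression via the trace identity $\tr_\ggg[(\ad_x)^a y,(\ad_x)^b\,\bt]=0$ for $a\geq 1$ --- so that the two contributions cancel pointwise on $\Sigma$.
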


\begin{proof}
Given the ansatz $Z=e^{\frac{i}{\hbar}S^\mr{eff}}$, the equation (\ref{mQME non-ab}) can be written as
\begin{equation}
Z^{-1}\Omega_\ii Z + Z^{-1}\Omega_\oo Z +\frac12 \{S^\mr{eff},S^\mr{eff}\}_\res-i\hbar \Delta_\res S^\mr{eff} \stackrel{!}{=} 0
\end{equation}
with $\{,\}_\res$ the odd Poisson bracket on residual fields associated with the symplectic structure (\ref{non-ab omega BV}). Moreover, using the decomposition $S^\mr{eff}=S^{\mr{eff}\,(0)}-i\hbar\,\mathbb{W}(\sigma)$, the mQME can be further rewritten as
\begin{multline}\label{non-ab mQME expanded}
Z^{-1}\Omega_\ii Z + Z^{-1}\Omega_\oo Z +\frac12 \{S^{\mr{eff}\,(0)},S^{\mr{eff}\,(0)}\}_\res-\\
-i\hbar \Big( \{S^{\mr{eff}\,(0)},\mathbb{W}\}_\res+ \Delta_\res S^{\mr{eff}\,(0)}\Big)  \stackrel{!}{=} 0 .
\end{multline}
It is easiest to compute the term $\frac12 \{S^{\mr{eff}\,(0)},S^{\mr{eff}\,(0)}\}_\res$ using $(g,g^*)$ - parametrization of residual fields. We have 
\begin{eqnarray*}
S^{\mr{eff}\,(0)}&=& 
\mathbb{I}
+\int_\Sigma \langle g^*, c_\oo  g- g c_\ii \rangle , \\
\frac{\delta}{\delta g}S^{\mr{eff}\, (0)} 
&=& 
-g^{-1} Y+g^* c_\oo + c_\ii g^* ,
\\
\frac{\delta}{\delta g^*} S^{\mr{eff}\,(0)} &=& c_\oo  g-g c_\ii ,
\end{eqnarray*}
with $\mathbb{I}$ as in (\ref{I non-ab}) and $Y$ as in (\ref{Y constraint}). Thus,
\begin{multline*}
\frac12 \{S^{\mr{eff}\,(0)},S^{\mr{eff}\,(0)}\}_\res = S^{\mr{eff}\,(0)}\left( \int_\Sigma \langle \frac{\overleftarrow\delta}{\delta g}, \frac{\overrightarrow\delta}{\delta g^*}\rangle \right) S^{\mr{eff}\,(0)} \\
= \int_\Sigma \langle c_\oo-g c_\ii g^{-1},-Y+gg^* c_\oo+g c_\ii g^* \rangle .
\end{multline*}
Acting on the partition function with the boundary BFV operators yields
\begin{eqnarray*}
Z^{-1}\Omega_\oo Z&=& \int_\Sigma \Big\langle c_\oo,\bar\dd\As^{1,0}_\oo + (\dd+[\As^{1,0}_\oo,-])(g \As^{0,1}_\ii g^{-1}-\bar\dd g \; g^{-1}) \Big\rangle -\frac12 \langle [c_\oo,c_\oo],gg^* \rangle , \\
 Z^{-1}\Omega_\ii Z&=& \int_\Sigma \Big\langle c_\ii,-\dd\As^{0,1}_\ii - (\bar\dd+[\As^{0,1}_\ii,-])(g^{-1} \As^{1,0}_\oo g+ g^{-1}\dd g ) \Big\rangle +\frac12 \langle [c_\ii,c_\ii],g^* g \rangle .
\end{eqnarray*}
Putting together these computations, we find that
\begin{equation}\label{non-ab mQME in zeroth order}
Z^{-1}\Omega_\ii Z + Z^{-1}\Omega_\oo Z +\frac12 \{S^{\mr{eff}\,(0)},S^{\mr{eff}\,(0)}\}_\res = 0
\end{equation}
--- all the terms in this combination cancel out. This gives us the mQME in the leading order $O(\hbar^0)$.

For the two remaining terms, $\{S^{\mr{eff}\,(0)},\mathbb{W}\}_\res$ and $\Delta_\res S^{\mr{eff}\,(0)}$, we will use the $(\sigma,\As^*_\res)$-parametrization for residual fields. The variation of $j(\sigma)$ (see (\ref{j})) in $\sigma$ is
$$ \delta_\sigma j(\sigma) =\mr{tr}_\ggg P(\ad_\sigma)\ad_{\delta\sigma},\qquad\mr{where}\quad P(x)=\frac12 \coth \frac{x}{2}-\frac{1}{x} . $$
Therefore, using (\ref{W non-ab}), we have
\begin{equation}\label{non-ab (S0,W)}
\begin{aligned}
\{S^{\mr{eff}\,(0)},\mathbb{W}\}_\res&= \sum_{z\in\Sigma} \tr_\ggg P(\ad_\sigma)[-F_+(\ad_\sigma)\circ c_\oo- F_-(\ad_\sigma)\circ c_\ii,\;\bt] \\
&= \sum_{z\in\Sigma} \tr_\ggg P(\ad_\sigma)[- c_\oo+ c_\ii,\;\bt] .
\end{aligned}
\end{equation}
Here the last simplification relies on the identity 
\begin{equation}\label{trace identity}
 \tr_\ggg [(\ad_x)^a\circ y,(\ad_x)^b\bt]=0\qquad \mr{for}\;\; a\geq 1,\; b\geq 0 ,
\end{equation}
which follows from the cyclic property of the trace and Jacobi identity. We also have
\begin{multline}\label{non-ab Delta S0}
\Delta_\res S^{\mr{eff}\,(0)} =\sum_{z\in \Sigma} -\tr_\ggg \sum_{r,s\geq 0} \frac{B_{r+s+1}}{(r+s+1)!} (\ad_\sigma)^r \ad_\bt \Big((-1)^{r+s+1}(\ad_\sigma)^s c_\oo-(\ad_\sigma)^s c_\ii\Big) \\
= \sum_{z\in \Sigma} \tr_\ggg \sum_{r,s\geq 0} \frac{B_{r+s+1}}{(r+s+1)!} \Big[ (-1)^{r+s+1}(\ad_\sigma)^s c_\oo-(\ad_\sigma)^s c_\ii  , (\ad_\sigma)^r\bt\Big] \\
\underset{(\ref{trace identity})}{=}\sum_{z\in\Sigma}\tr_\ggg \sum_{r\geq 0}\frac{B_{r+1}}{(r+1)!} [c_\oo-c_\ii,(\ad_\sigma)^r\bt] 
= \sum_{z\in \Sigma} \tr_\ggg\; \ad_{c_\oo-c_\ii}\circ P(\ad_\sigma) .
\end{multline}
Comparing (\ref{non-ab (S0,W)}) and (\ref{non-ab Delta S0}), we see that they exactly cancel each other pointwise on $\Sigma$. Thus,
$$ \{S^{\mr{eff}\,(0)},\mathbb{W}\}_\res+ \Delta_\res S^{\mr{eff}\,(0)}=0 . $$
Together with (\ref{non-ab mQME in zeroth order}), this finishes the proof of mQME (\ref{non-ab mQME expanded}).
\end{proof}

\begin{remark} The check of the mQME above clearly breaks into two parts: 
\begin{enumerate}[(a)]
\item The classical part 
\begin{equation} Z^{-1}\Omega_\ii Z + Z^{-1}\Omega_\oo Z +\frac12 \{S^{\mr{eff}\,(0)},S^{\mr{eff}\,(0)}\}=0, \label{eq:Class_mQME}
\end{equation}
which is unambiguous and requires no regularization.
\item The quantum part
$$  \{S^{\mr{eff}\,(0)},\mathbb{W}\}_\res+ \Delta_\res S^{\mr{eff}\,(0)}=0 ,  $$
which makes sense with the same regularization as the one used in the proof of Lemma \ref{lemma: S_eff non-ab via sigma}: replacing $\Sigma$ with the set of vertices of a triangulation.
\end{enumerate}
\end{remark}

\textbf{Aside: mQME and Polyakov-Wiegmann formula.} 
The classical part of the mQME, equation \eqref{eq:Class_mQME}, itself splits into two parts: terms involving the antifield $g^*$ and others. The terms involving $g^*$ cancel due to invariance of the inner product. The cancellation of the remaining terms can be understood in terms of the WZW model as follows. The part of the effective action $\mathbb{I}(\As^{0,1}_\ii , \As^{1,0}_\oo;g)$ defined in \eqref{I non-ab} can be identified as the WZW action coupled to two external chiral gauge fields $\As^{0,1}_\ii, \As^{1,0}_\oo$, see, e.g., eq.\ (4.5) in \cite[\S 4.2]{Gawedzki CFT 2}. This coupling is sometimes called ``gauging the $G_L \times G_R$ symmetry'', for instance in \cite{Witten2}. For us it is more natural to call it the  $G_\ii \times G_\oo$-action. Explicitly, the action of $(h_\ii,h_\oo) \in G \times G$ on $(\As^{0,1}_\ii,\As^{1,0}_\oo;g)$ is 
\begin{equation}
(h_\ii,h_\oo)\cdot (\As^{0,1}_\ii,\As^{1,0}_\oo;g) = \left({}^{h_\ii}(\As^{0,1}_\ii),{}^{h_\oo}(\As^{1,0}_\oo);h_\oo g h_\ii^{-1}\right) . \label{eq:GLGRaction}
\end{equation}
It is well known that under the transformation \eqref{eq:GLGRaction} $\mathbb{I}$ is not invariant, but transforms according to the Polyakov--Wiegmann \cite{PW} formula:
\begin{equation}
\mathbb{I}\left({}^{h_\ii}(\As^{0,1}_\ii),{}^{h_\oo}(\As^{1,0}_\oo);h_\oo gh_\ii^{-1}\right) = \mathbb{I}(\As^{0,1}_\ii , \As^{1,0}_\oo;g) - \mathbb{I}(\As^{0,1}_\ii,0;h_\ii) + \mathbb{I}(0, \As^{1,0}_\oo;h_\oo). \label{eq:PW}
\end{equation}
We claim that this equation is just the finite version of the classical part of the mQME, eq.\ \eqref{eq:Class_mQME} above. To see this, consider a path $(h_\ii(t),h_\oo(t))$ of gauge transformations starting at the identity and compute the derivative of \eqref{eq:PW} at $t=0$. The computation is quite 
straightforward and we just sketch it: using eq.\ \eqref{Y constraint}, we get 
$$\left.\frac{d}{dt}\right|_{t=0}\mathbb{I}\left(\As^{0,1}_\ii,\As^{1,0}_\oo,h_\oo(t)gh_\ii(t)^{-1}\right) = \int_\Sigma \langle \dot{h}_\oo - g \dot{h}_\ii g^{-1},Y\rangle.$$
Upon identifying $c_\ii = \dot{h}_\ii, c_\oo = \dot{h}_\oo$ this gives the piece of $\frac12 \{S^{\mr{eff}\,(0)},S^{\mr{eff}\,(0)}\}$ of \eqref{eq:Class_mQME} not involving $g^*$. Then, we find that 
\begin{multline*}\left.\frac{d}{dt}\right|_{t=0}\mathbb{I}\left({}^{h_\ii(t)}(\As^{0,1}_\ii),\As^{1,0}_\oo,g\right) + \mathbb{I}(\As^{0,1}_\ii,0,h_\ii(t))  \\
= - \bar\dd_{\As^{0,1}_\ii}\dot{h}_\ii(g^{-1}A^{1,0}_\oo g + g^{-1}\dd g) - \int \langle\dot{h}_\ii, Y\big|_{g=1,A^{0,1}_\oo = 0}\rangle = Z^{-1}\Omega_\ii Z \bigg|_{g^*=0,c_\ii = \dot{h}_\ii, c_\oo = \dot{h}_\oo}
\end{multline*}
and similarly for the action of $h_\oo$.
Overall, we find 
\begin{multline*}
\left.\frac{d}{dt}\right|_{t=0}\left[\mathbb{I}\left({}^{h_\ii}(\As^{0,1}_\ii),{}^{h_\oo}(\As^{1,0}_\oo);h_\oo gh_\ii^{-1}\right) + \mathbb{I}(\As^{0,1}_\ii, 0, h_\ii(t)) - \mathbb{I}(0,\As^{1,0}_\oo,h_\oo(t)\right] 
\\ =\left[ Z^{-1}\Omega_\ii Z + Z^{-1}\Omega_\oo Z +\frac12 \{S^{\mr{eff}\,(0)},S^{\mr{eff}\,(0)}\}\right]_{g^* = 0, c_\ii = \dot{h}_\ii, c_\oo = \dot{h}_\oo} ,
\end{multline*}
which proves the claim that (a part of) the $\hbar = 0$ part of the mQME is equivalent to the infinitesimal Polyakov--Wiegmann formula. 
We will comment further on the relationship between  Chern--Simons theory on $\Sigma \times I$ and  WZW theory on $\Sigma$ in Section \ref{sec:CSWZW} below. 

\begin{remark}
In the mQME (\ref{mQME non-ab}) and the proof above we were using the $(g,\As^*_\res)$-parametrization of residual fields for the BV Laplacian. The corresponding statement for the BV Laplacian in $(g,g^*)$-parametrization,
$$\Delta_{[g,g^*]}=\int_\Sigma \langle\frac{\delta}{\delta g} , \frac{\delta}{\delta g^*}\rangle =
\int_\Sigma \langle\frac{\delta}{\delta g\; g^{-1}} , \frac{\delta}{g\, \delta g^*}\rangle  , $$
is:
\begin{equation}\label{mQME in g,g^*}
 (\Omega_\oo+\Omega_\ii-\hbar^2\Delta_{[g,g^*]})\;e^{\frac{i}{\hbar}S^{\mr{eff}\,(0)}}=0 .
\end{equation}
Note that here we should not be including the $-i\hbar\mathbb{W}$ term in the effective action, cf.\ Remark \ref{rem: log-half-density shift}. The proof of (\ref{mQME in g,g^*}) is exactly as before in the order $O(\hbar^0)$. In the order $O(\hbar^1)$, we have
$$\Delta_{[g,g^*]} S^{\mr{eff}\, (0)}=\sum_{z\in\Sigma} \frac12 \mr{div}_{T^*[-1]G}\big\{\langle g^*,c_\oo g-g c_\ii \rangle,\bt\big\} . $$
The hamiltonian vector field generated by the ghost term in the effective action is the cotangent lift to $T^*[-1]G$ of the vector field 
$$X=\langle c_\oo,\frac{\dd}{\dd g\; g^{-1}} \rangle-\langle c_\ii ,\frac{\dd}{g^{-1}\dd g} \rangle.$$
--- This is a sum of a right-invariant and a left-invariant vector field on $G$. Since the Haar measure is bi-invariant, $X$ has divergence zero.
Therefore,
$$\Delta_{[g,g^*]} S^{\mr{eff}\, (0)}=\sum_{z\in\Sigma} \mr{div}_G X = 0 . $$
\end{remark}

Ultimately, to avoid the ambiguity as to whether we should be including the term $-i\hbar\mathbb{W}$ into the partition function or not, we can use the invariant formulation where the partition function is a \emph{half-density} (rather than a function) on residual fields and the BV Laplacian is the canonical BV Laplacian on half-densities. Then the mQME is
$$ (\Omega_\oo+\Omega_\ii-\hbar^2 \Delta_\res^\mr{can})\,Z^\mr{can}=0 , $$
where $Z^\mr{can}=e^{\frac{i}{\hbar}S^\mr{eff}_{[x,\xi]}}d^{\frac12}x D^{\frac12}\xi$. Here $(x,\xi)$ can be any Darboux coordinate system on $\VV$, e.g., $(\sigma,\As^*_\res)$ or $(g,g^*)$.

\textbf{Summary.} 
Summarizing the main results of Section \ref{ss: non-ab CS parall ghost}, we have the following:
\begin{itemize}
\item The canonical partition function of the nonabelian theory on the cylinder $[0,1]\times \Sigma $ with the ``parallel ghost'' polarization is:  
$Z^\mr{can}=e^{\frac{i}{\hbar}S^\mr{eff}_{[g,g^*]}} \D^{\frac12} g\, \D^{\frac12} g^*$ where the effective action relative to the coordinate system $(g,g^*)$ on $\VV$ is given explicitly by
\begin{multline}
S^\mr{eff}_{[g,g^*]}=\\
=\int_\Sigma \Big( \langle \As^{1,0}_\oo, g \As^{0,1}_\ii g^{-1} \rangle -
\langle \As^{1,0}_\oo,\bar\dd g\cdot g^{-1} \rangle 
- \langle  \As^{0,1}_\ii, g^{-1} 
 \dd g \rangle \Big)
+\mr{WZW}(g)  \\+
\int_\Sigma -\langle c_\oo,g\, g^* \rangle + \langle c_\ii, g^* g \rangle . \label{eq:CSWZW}
\end{multline} 
In particular, there are no quantum corrections in $S^\mr{eff}_{[g,g^*]}$.
\item $Z^\mr{can}$ satisfies the modified quantum master equation.
\item The restriction of $S^\mr{eff}_{[g,g^*]}$ to ghost number zero fields is the Hamilton--Jacobi action, i.e., is the generalized generating function for the evolution relation of the classical theory obtained by evaluating the classical action on a solution of the evolution equations, see Section \ref{sec:HJ}.
\end{itemize}

\begin{remark}\label{rem: comaprison with Blau-Thompson} The relation between 3D nonabelian Chern--Simons theory and the (gauged) WZW model was studied from different angles in the literature. The closest discussion to ours, perhaps, was in \cite{BT}: $G/G$ WZW theory was recovered from Chern--Simons on a cylinder, using essentially the same gauge fixing and polarization as the ones we employ. But there are crucial differences in the two approaches. We have an explicit Feynman diagram computation of the partition function and 
prove the QME and the gauge invariance property at the quantum level. In \cite{BT}, on the other hand, quantum gauge invariance was assumed and was used to evaluate the Chern--Simons partition function.
\end{remark}

\subsubsection{``Vertical'' Wilson lines}\label{sec:WilsonLines}
One can enrich Chern--Simons theory with Wilson line observables given classicaly by the parallel transport of the connection field $A$ along a curve $\gamma$ ending on the boundary; the parallel transport is evaluated in some linear representation $\rho$ of $G$ on a vector space $R$.

Let us consider a very simple case: several ``vertical'' Wilson lines with $\gamma_j=I\times \{z_j\}$ connecting the in- and out-boundaries of the cylinder $ I\times\Sigma$; here $z_j$ are a collection of points on $\Sigma$, $j=1,\ldots,n$. We are fixing representations $\rho_j$ for the Wilson lines, with $R_j$ the respective representation spaces.

\begin{figure}[H]
\includegraphics[scale=0.7]{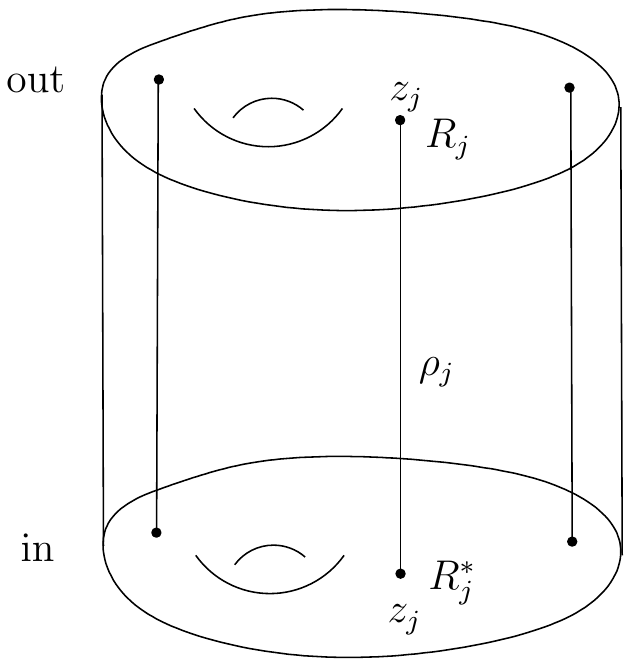}
\caption{Vertical Wilson lines.}
\end{figure}

Note that for our choice of gauge fixing, we have for the Wilson lines 
$$W_j=\rho_j(P\overleftarrow{\exp}(-\int_{\gamma_j} A)) = \rho_j(e^{-\sigma(z_j)})= \rho_j(g(z_j)) \qquad \in \mr{End}(R_j) .$$
I.e., vertical Wilson lines depend only on the residual fields. 

Thus, the partition function of the theory enriched with vertical Wilson lines is:
\begin{equation}\label{non-ab Z with vertical Wilson lines}
Z_{ I\times\Sigma, \{\gamma_j\}}
= 
Z_{ I\times\Sigma} \cdot \bigotimes_j \rho_j(g(z_j)) 
\end{equation}
with $Z_{ I\times\Sigma}=e^{\frac{i}{\hbar}S^\mr{eff}}$ the partition function without the Wilson lines.

The space of out-states is given by functionals of $\As^{1,0}_\oo, c_\oo$ with values in $\bigotimes_j  R_j$, while the space of in-states is given by functionals of $\As^{0,1}_\ii, c_\ii$ also with values in $\bigotimes_j  R_j^*$.\footnote{Our convention is that the partition function is not a  homomorphism from in-states to out-states (depending on residual fields), but an element of $\mathcal{H}_\oo\widehat\otimes \mathcal{H}_\ii$, i.e., without dualization of the in-factor. 
In this description, gluing of two cylinders involves a pairing between the out-states of the first cylinder $\mathcal{H}_\oo^I$ and the in-states $\mathcal{H}_\ii^{II}$ of the second.
} The BFV operators are:\footnote{
See \cite{ABM} for a construction of these BFV operators from the presentation of Wilson lines via a path integral over auxiliary fields supported on $\gamma_j$ (the Alekseev--Faddeev--Shatashvili formula).
}
\begin{equation} \label{Omega with Wilson lines}
\Omega_\oo^{\Sigma,\{z_j\}} = \Omega_\oo^\Sigma + i\hbar \sum_j \rho_j(c_\oo(z_j)) \;\;, \quad
\Omega_\ii^{\Sigma,\{z_j\}} = \Omega_\ii^\Sigma + i\hbar \sum_j  \rho_j^*(c_\ii(z_j))  ,
\end{equation}
where $\Omega_\oo^\Sigma$, $\Omega_\ii^\Sigma$ are the BFV operators for the theory without the Wilson lines, given by (\ref{non-ab Omega out}), (\ref{non-ab Omega in}); $\rho_j^*$ is the dual representation to $\rho_j$ with representation space $R_j^*$. 

As a direct consequence of  Lemma \ref{lemma: mQME non-ab}, one has that the partition function with Wilson lines (\ref{non-ab Z with vertical Wilson lines}) satisfies the modified quantum master equation:
$$ (\Omega_\oo^{\Sigma,\{z_j\}}+\Omega_\ii^{\Sigma,\{z_j\}}-\hbar^2 \Delta_\res )\, Z_{ I\times\Sigma, \{\gamma_j\}} =0 . $$
Here we understand that the $\rho_j^*$ term in $\Omega_\ii$ acts on the second factor in $\rho_j(g(z_j))\in R_j\otimes R_j^*$, while the $\rho_j$ term in $\Omega_\oo$ acts on the first factor.

\subsubsection{The CS-WZW correspondence: WZW theory as an effective theory of Chern--Simons}\label{sec:CSWZW}
Equation \eqref{eq:CSWZW} is evidence of a strong relationship between the Chern--Simons theory on a manifold with boundary $\Sigma$ and the WZW theory on the Riemann surface $\Sigma$. This relationship has, of course, already been subject to a lot of scrutiny after Witten's seminal article \cite{Witten}. In the approach of this paper, this relationship stems from the fact that the gauged WZW action emerges as an effective action of the Chern--Simons theory, as is clear from eq.\ \eqref{eq:CSWZW}. To be precise, the following two theories are equivalent: 
\begin{enumerate}[i)]
\item The BV-BFV effective theory of Chern--Simons on a $I\times \Sigma$, restricted to the gauge-fixing lagrangian $\LL = \{g^* = 0\} \subset T^*[-1]G$. 
\item The WZW theory with gauged ``$G_\ii \times G_\oo$''-symmetry.
\end{enumerate} This is a very strong statement of equivalence: It means that essentially all quantities computed from the action functional in gauged WZW theory have an expression in Chern--Simons theory. We summarize this relationship in Table \ref{tab:CSWZW} below.
\begin{table}[h]
\begin{center}
\begin{tabular}{c|c}
Object in CS on $I\times \Sigma $ & Object in gauged WZW on $\Sigma$ \\
\hline \\

Effective action $S^\eff_{[g,g*]}$ & Gauged WZW action $\mathbb{I}(\As^{0,1}_\ii,\As^{1,0}_\oo;g)$ \\
&\\
mQME $(\Omega - \Delta_\res )Z = 0$ &\makecell{ Polyakov--Wiegmann formula \eqref{eq:PW} \\ (group 1-cocycle property) }\\
& \\
\makecell{Expectation value $W$ of Wilson line \\ $\gamma = I\times \{z\}$ in rep.  $\rho$ } & Field insertion $\rho(g(z))$ \\ 
& 
\end{tabular}
\end{center}
\caption{The CS-WZW correspondence}\label{tab:CSWZW}
\end{table}

\begin{remark}
One might wonder why in Table \ref{tab:CSWZW} on the left hand side we have objects defined in the quantization on the Chern--Simons theory, while on the right-hand side we have entirely classical objects in the WZW model. This apparent puzzle is resolved by the observation that on the left-hand side we are seeing only the
semiclassical limit of the quantum Chern--Simons theory (which in this case happens to be exact, since there are no loop contributions).\footnote{Up to the subtleties concerning ghost loops discussed above.} 
\end{remark}
\begin{remark}[Nonequivalent gauge-fixing lagrangians]
Instead of  $\LL = \{g^* =  0\}$, one could restrict the effective Chern--Simons action \eqref{eq:CSWZW} also to another lagrangian $\LL'\subset T^*[-1]G$ given by $g=1$. In that way, one obtains 
\begin{equation}
S^\eff_{[g,g^*]}\bigg|_{g=1} = \int_\Sigma \langle \As^{1,0}_\oo,\As^{0,1}_\ii\rangle + \int_\Sigma \langle g^*, c_\ii - c_\oo\rangle .
\end{equation}
Upon integrating $Z$ over $\mathcal{L}$ or $\mathcal{L'}$, one obtains two $(\Omega_\ii + \Omega_\oo)$-cocycles $Z_1$, $Z_2$. $Z_1$ is concentrated in ghost degree 0 (we will discuss it in more detail in the next subsection) while 
$$Z_2 = \exp\left(\frac{i}{\hbar}\int_\Sigma \langle \As^{1,0}_\oo,\As^{0,1}_\ii\rangle \right)\delta( c_\ii - c_\oo) $$
has nonzero ghost number (formally, it is infinite, $\gh=\dim \Omega^0(\Sigma,\g)$). 
Therefore, $\LL$ and $\LL'$ provide an example of nonequivalent gauge-fixing lagrangians. \end{remark}

In the $(g,g^*)$-coordinates we can define a particularly simple gauge-fixing lagrangian $\LL$ given by $g^* = 0$ (for the general remarks in this section, we will allow ourselves to ignore issues arising from possible zero modes).
\begin{equation}
S^\eff\big|_\LL = \mr{WZW}(g) + \langle \As^{1,0}_\oo, g \As^{0,1}_\ii g^{-1} \rangle -
\langle \As^{1,0}_\oo,\bar\dd g\cdot g^{-1} \rangle 
- \langle  \As^{0,1}_\ii, g^{-1} 
 \dd g \rangle \equiv \mathbb{I}[\As^{1,0}_\oo,\As^{0,1}_\ii;g].
\end{equation} 
Here $ \mathbb{I}[\As^{1,0}_\oo,\As^{0,1}_\ii;g]$ is the standard way of gauging the WZW action, see, e.g., eq.\ (4.5) in \cite[\S 4.2]{Gawedzki CFT 2}. We can then express the Chern--Simons partition function on $I\times \Sigma$ as 
\begin{multline}
Z_{I\times \Sigma }[\As^{1,0}_\oo,\As^{0,1}_\ii, c_\ii,c_\oo]  \equiv Z_{I\times \Sigma }[\As^{1,0}_\oo,\As^{0,1}_\ii]  = \int_g \exp\frac{i}{\hbar} \mathbb{I}[\As^{1,0}_\oo,\As^{0,1}_\ii;g]\DD g\label{eq:CSWZW2}
\end{multline}
(notice the partition function does not depend on $c_\ii,c_\oo$). 
This is the definition of the partition function $Z_\As^\mr{WZW}$ of gauged WZW, see, e.g., eq.\ (4.7) in \cite[\S 4.2]{Gawedzki CFT 2}. Here we abbreviate $\As = (\As^{1,0}_\oo, \As^{0,1}_\ii)$.
Similarly, we see that a correlator in the gauged WZW theory can be expressed as the partition function of Chern--Simons theory enriched with Wilson lines: 
\begin{multline} 
\langle \rho_1(g(z_1))\otimes\ldots \otimes \rho_n (g(z_n))\rangle_\As \\
= \int_g \rho_1(g(z_1))\otimes\ldots \otimes \rho_n(g(z_n)) e^{\frac{i}{\hbar}\mathbb{I}[g,\As]}\DD g = \int_g Z_{I\times\Sigma,\{\gamma_j\}} \DD g. \label{eq:CSWZW3}
\end{multline}

For the purposes of this subsection, we will treat the path integral expressions on the right hand side of \eqref{eq:CSWZW2} and \eqref{eq:CSWZW3} heuristically. In the literature, these objects are typically defined via representation theory. In this paper, we are typically interpreting path integral expressions as defined via Feynman graphs and rules, but 
for WZW the absence of a natural linear structure on the target (the group $G$) obstructs the treatment of the path integral as a perturbed Gaussian integral.
We will therefore simply assume that the partition function exists and defines an element in $\Omega$-cohomology. In \cite{ABM} it was shown that in genus 0, the $\Omega$-cohomology with $n$ Wilson lines ending on the boundary can be identified with the $n$-point space of conformal blocks. We expect this to hold for all genera, and assume it for the purpose of the next section. 
We summarize the content of the CS-WZW correspondence after integrating over $\mathcal{L}$ in Table \ref{tab:CSWZWII} below. 
\begin{table}[h]
\begin{center}
\begin{tabular}{c|c}
Object in CS on $I\times \Sigma $ & Object in gauged WZW on $\Sigma$ \\
\hline \\

CS partition function $ Z_{I\times \Sigma }[\As^{1,0}_\oo,\As^{0,1}_\ii] $ & Gauged WZW partition function $Z_\As^\mr{WZW}$  \\
&\\

\makecell{Expectation value $W$ of $n$ Wilson lines \\ $\gamma = I\times \{z_i\}$ in rep.  $\rho_1,\ldots \rho_n$ } & 
\makecell{Gauged WZW correlator \\
$\langle \rho_1(g(z_1))\cdots \rho_n (g(z_n))\rangle_\mathsf{A}$ }
\\ 
& 
\\
$\Omega$-cohomology with $n$ Wilson lines & $n$-point space of conformal blocks \\
&
\end{tabular}
\end{center}
\caption{The CS-WZW correspondence after integrating over the gauge-fixing lagrangian $\mathcal{L}$. }\label{tab:CSWZWII}
\end{table}

\subsubsection{An application: Holomorphic factorization of the WZW theory}\label{sec: hol fact WZW}
We will now discuss an application of the correspondence observed above. The arguments in this section will be more heuristic in nature. 

Suppose we fix the boundary condition on one side, e.g., fix the antiholomorphic boundary condition at the $\ii$-boundary by setting $\As^{0,1}_\ii = 0$ (remember that in our treatment of boundary conditions, setting $\As^{0,1}_\ii = 0$ means that $\As\big|_{\Sigma_\ii} \in \Omega^{1,0}(\Sigma,\g)$
), and take the $\oo$-boundary in the $\As^{1,0}_\oo$-representation. We will call such a cylinder a ``chiral cylinder.'' 

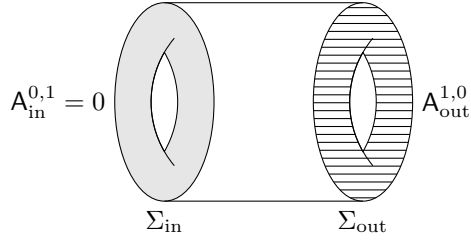
\begin{figure}[h]
\begin{tikzpicture}[scale=0.66]

\draw[fill=gray!20] (0,0) ellipse (1cm and 2cm);
\draw[fill=white] (0,1) arc (150:210:2cm) arc (330:390:2cm);
\draw (0,1) arc (150:140:2cm);
\draw (0,1) arc (150:220:2cm);
\draw (0,2) -- (4,2);
\draw (0,-2) -- (4,-2);
\node[coordinate,label=below:{$\Sigma_\ii$}] at (0,-2) {};
\node[coordinate,label=left:{$\As^{0,1}_\ii = 0$}] at (-1,0){};
\begin{scope}[shift={(4,0)}]
\draw[pattern=horizontal lines] (0,0) ellipse (1cm and 2cm);
\draw[fill=white] (0,1) arc (150:210:2cm) arc (330:390:2cm);
\draw (0,1) arc (150:140:2cm);
\draw (0,1) arc (150:220:2cm);
\node[coordinate,label=below:{$\Sigma_\oo$}] at (0,-2) {};
\node[coordinate,label=right:{$\As^{1,0}_\oo$}] at (1,0){};
\end{scope}
\end{tikzpicture}
\caption{A chiral cylinder: antiholomorphic boundary conditions on $\Sigma_\ii$, $\As^{1,0}_\oo$-representation on $\Sigma_\oo$. Gray indicates that we fix a boundary condition on this boundary, while hatching indicates we fix only the polarization.} 
\end{figure}

 After integrating out $g$, we obtain the partition function  $\psi(\As_{\oo}^{1,0})$ of a ``chiral gauged WZW theory,'' i.e.,  a WZW theory coupled to a chiral gauge field, see, e.g., \cite{Witten2}.\footnote{Where Witten suggests 
 that ``this in fact can be regarded as the essential relation between the
WZW model and Chern--Simons theory.''} This partition function is not a number, but rather a holomorphic gauge invariant section of a line bundle over the space of connections on $\Sigma$.\footnote{In our approach, holomorphicity simply follows from the fact that it depends only on $\As^{1,0}_\oo$, while gauge invariance is the statement that $\Omega_\oo \psi = 0$.} We can glue the chiral cylinder to another ``antichiral'' cylinder with opposite boundary conditions (see Figure \ref{fig:CScyls}). 
\begin{figure}
\begin{tikzpicture}[scale=0.66]

\draw[fill=gray!20] (0,0) ellipse (1cm and 2cm);
\draw[fill=white] (0,1) arc (150:210:2cm) arc (330:390:2cm);
\draw (0,1) arc (150:140:2cm);
\draw (0,1) arc (150:220:2cm);
\draw (0,2) -- (4,2);
\draw (0,-2) -- (4,-2);
\begin{scope}[shift={(4,0)}]
\draw[pattern=horizontal lines] (0,0) ellipse (1cm and 2cm);
\draw[fill=white] (0,1) arc (150:210:2cm) arc (330:390:2cm);
\draw (0,1) arc (150:140:2cm);
\draw (0,1) arc (150:220:2cm);
\end{scope}
\node[coordinate,label=below:{\huge{*}}] at (5.25,0) {};
\begin{scope}[shift={(6.5,0)}]
\draw[pattern=vertical lines] (0,0) ellipse (1cm and 2cm);
\draw[fill=white] (0,1) arc (150:210:2cm) arc (330:390:2cm);
\draw (0,1) arc (150:140:2cm);
\draw (0,1) arc (150:220:2cm);
\draw (0,2) -- (4,2);
\draw (0,-2) -- (4,-2);
\begin{scope}[shift={(4,0)}]
\draw[fill=gray!60] (0,0) ellipse (1cm and 2cm);
\draw[fill=white] (0,1) arc (150:210:2cm) arc (330:390:2cm);
\draw (0,1) arc (150:140:2cm);
\draw (0,1) arc (150:220:2cm);
\end{scope}
\end{scope}
\node[coordinate,label=below:{\huge{=}}] at (12,0) {};
\begin{scope}[shift={(13.5,0)}]
\draw[fill=gray!20] (0,0) ellipse (1cm and 2cm);
\draw[fill=white] (0,1) arc (150:210:2cm) arc (330:390:2cm);
\draw (0,1) arc (150:140:2cm);
\draw (0,1) arc (150:220:2cm);
\draw (0,2) -- (4,2);
\draw (0,-2) -- (4,-2);
\begin{scope}[shift={(4,0)}]
\draw[fill=gray!60] (0,0) ellipse (1cm and 2cm);
\draw[fill=white] (0,1) arc (150:210:2cm) arc (330:390:2cm);
\draw (0,1) arc (150:140:2cm);
\draw (0,1) arc (150:220:2cm);
\end{scope}
\end{scope}
\end{tikzpicture}
\caption{Gluing a chiral and an antichiral cylinder into a cylinder with opposite chiral boundary conditions. Gray indicates a fixed boundary condition, hatching indicates a polarized boundary.}
\label{fig:CScyls}
\end{figure}
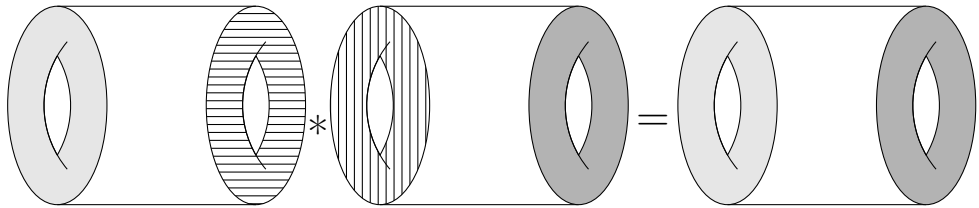In this way, we obtain  -- as explained in \cite{Witten2} -- the square of the norm of $\psi$.\footnote{A missing factor of $\int_\Sigma A^{1,0}A^{0,1}$ in comparison with \cite{Witten2} comes from the gluing conventions in BV-BFV, see Appendix \ref{app:SB}.} Here, the ``norm square'' should be taken with respect to a well-defined inner product on the $\Omega$-cohomology, i.e., the finite-dimensional moduli space of gauge-invariant holomorphic sections.\footnote{See, for instance, \cite{Gawedzki_CS_Zero} for a more detailed discussion in genus 0, or \cite{FG_CS_One} for a discussion in genus 1. The authors do not know of an explicit construction of this inner product in higher genera.} On the other hand, from the general principles of the BV-BFV formalism, we will then obtain the partition function of Chern--Simons theory with opposite chiral boundary conditions, which is given by specializing to $\As^{1,0}_\oo = \As^{1,0}_\ii = 0$  in \eqref{eq:CSWZW2}: 
\begin{equation}
|\psi|^2 = Z_{I\times\Sigma}^{\mr{CS}} = Z_{\Sigma}^\mr{WZW}.\label{eq:HolomorphicFactorization}
\end{equation}
Here on the left-hand side we have the norm-square of the partition function of chiral WZW, in the middle we have Chern--Simons partition function on the cylinder with opposite chiral boundary conditions, while on the right-hand side we have the definition of the WZW partition function. Equation \eqref{eq:HolomorphicFactorization} is sometimes called ``holomorphic factorization of the WZW model'', because one sees that the partition function of the full WZW model --- which does not vary holomorphically on the moduli space of conformal structures --- splits 
into a sum of products of holomorphic and  antiholomorphic factors, which do depend (anti)holomorphically on the complex structure.  Thus, holomorphic factorization of the WZW model follows from the self-similarity of the Chern--Simons partition function on cylinders. \\
Using the results of Section \ref{sec:WilsonLines}, in particular  equation \eqref{non-ab Z with vertical Wilson lines}, these results for the partition function generalize to correlators in chiral and full WZW. Namely, suppose we are given $n$ Wilson lines colored by representations $R_1,\ldots,R_n$, and let $V = \otimes_j R_j$. Then, the Chern--Simons partition function with Wilson lines on a chiral cylinder $\psi_{\{\gamma_j\}}$ is naturally a degree zero element of $V \otimes \mathcal{H}_\oo$ with $\mathcal{H}_\oo$ the space of functionals  of $\As^{1,0}_\oo,c_\oo$ with values in $V^*$. Gluing with an antichiral cylinder, we obtain the Chern--Simons partition function with Wilson lines and opposite boundary conditions - the correlator 
$\langle \rho_1(g(z_1)) \cdots \rho_n(g(z_n)) \rangle \in V \otimes V^*$ in a full WZW model. On the other hand, explicitly computing the BV-BFV gluing we obtain\footnote{
The gluing is defined as a formal functional integral, but its restriction to the finite-dimensional $\Omega$-cohomology gives rise to a well-defined inner product.} $(\psi,\overline{\psi}) \in V \otimes V^*$. Here $(\cdot,\cdot)$ is the inner product on the space on $n$-point conformal blocks. Thus, we obtain the generalization of \eqref{eq:HolomorphicFactorization} to the case with Wilson lines: 
\begin{equation}
(\psi_{\{\gamma_j\}},\overline{\psi}_{\{\gamma_j\}}) = Z^{CS}_{I\times\Sigma,\{\gamma_j\}} = \langle \rho_1(g(z_1))\cdots \rho_n(g(z_n))\rangle.
\end{equation}  \\

\subsection{3D nonabelian Chern--Simons theory in holomorphic-to-holomorphic polarization}
\label{ss: non-ab hol-hol}
Next, consider the nonabelian Chern--Simons theory on $\Sigma\times [0,1]$ with polarizations as in Section \ref{ss: hol-hol}. The residual fields 
$$\As^+_\Ires=\As^{0}_{\Ires}+\As^{1,0}_{\Ires},\qquad \As^-_\res= \As^{0,1}_\res+ \As^2_\res $$ 
and the gauge fixing are as in Section \ref{ss: hol-hol} (but now all forms are $\ggg$-valued). We will use the notations $\sigma=\As^{0}_\Ires$, $\lambda=\As^{0,1}_\res$ for $\gh=0$ residual fields, as in Section \ref{sss: hol-hol HJ}.

In this subsection we will present only the results; the computations 
are similar to those of Section \ref{ss: non-ab CS parall ghost}.

The Feynman diagrams for the partition function $Z=e^{\frac{i}{\hbar}S^\mr{eff}}$ are: 
\begin{figure}[H]
\includegraphics[scale=0.7]{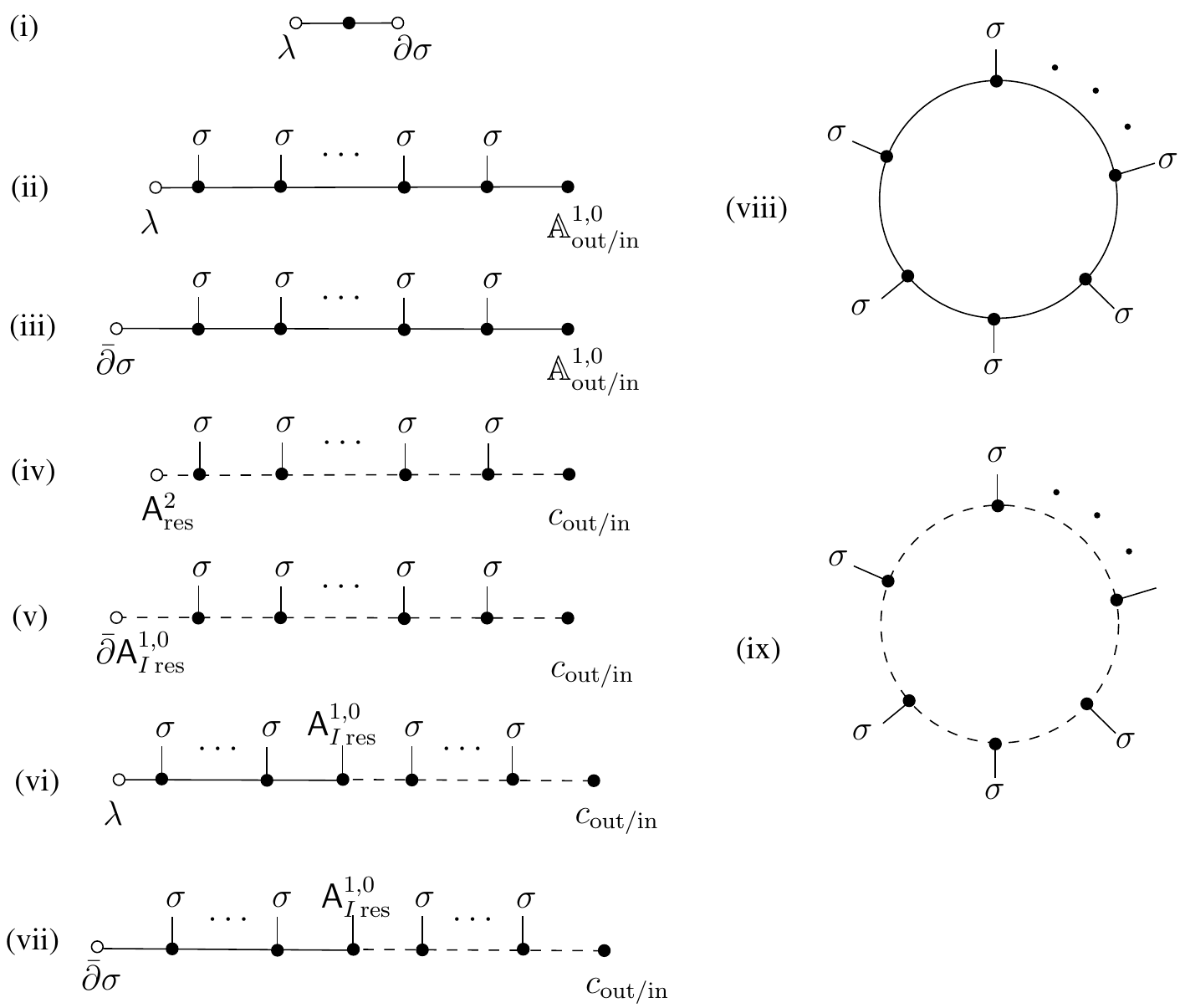}
\caption{Feynman diagrams in nonabelian theory on a cylinder in holomorphic-to-holomorphic polarization.}
\label{hol-hol non-ab diagrams}
\end{figure}
Here the ``physical wheels'' (viii) and the ``ghost wheels'' (ix) cancel each other, due to the form of propagators in the chosen polarization.

Calculating the Feynman diagrams, one finds the following expression for the effective action:
\begin{equation}\label{hol-hol non-ab S_eff}
S^\mr{eff}= S^{\mr{eff}}_\ph+S^{\mr{eff}}_\gh ,
\end{equation}
where the part depending only on ``physical'' ($\gh=0$) fields (the contribution of diagrams (i), (ii), (iii)) is
\begin{multline}\label{hol-hol non-ab S_eff phys part}
S^{\mr{eff}}_\ph=\int_\Sigma \langle  \lambda,\dd\sigma\rangle +
\left\langle \As^{1,0}_\oo, \frac{\ad_\sigma}{e^{\ad_\sigma}-1}\circ\lambda+ \left(-\frac{1}{e^{\ad_\sigma}-1}+\frac{1}{\ad_\sigma} \right)\circ\bar\dd\sigma \right\rangle\\
+\left\langle \As^{1,0}_\ii, -\frac{\ad_\sigma}{1-e^{-\ad_\sigma}}\circ\lambda+ \left(\frac{1}{1-e^{-\ad_\sigma}}-\frac{1}{\ad_\sigma} \right)\circ\bar\dd\sigma \right\rangle
\end{multline}
and the ghost-dependent part (the contribution of diagrams (iv)--(vii)) is
\begin{multline}\label{hol-hol non-ab S_eff ghost part}
S^{\mr{eff}}_\gh=\int_\Sigma\left\langle c_\oo, \sum_{k\geq 0}\frac{B_k^-}{k!}(\ad_\sigma)^k \As^2_\res  - 
\sum_{k\geq 0} \frac{B^-_{k+1}}{(k+1)!}(\ad_\sigma)^k\bar\dd\As^{1,0}_\Ires \right. \\
\left.
+\sum_{k,l\geq 0}\frac{B^-_{k+l+1}}{(k+l+1)!}(\ad_\sigma)^k\ad_{\As^{1,0}_\Ires}(\ad_\sigma)^l \lambda 
- \sum_{k,l\geq 0} \frac{B^-_{k+l+2}}{(k+l+2)!} (\ad_\sigma)^k\ad_{\As^{1,0}_\Ires}(\ad_\sigma)^l \bar\dd\sigma
  \right\rangle\\
+
\left\langle c_\ii, -\sum_{k\geq 0}\frac{B_k^+}{k!}(\ad_\sigma)^k \As^2_\res 
+\sum_{k\geq 0} \frac{B^+_{k+1}}{(k+1)!}(\ad_\sigma)^k \bar\dd\As^{1,0}_\Ires \right. \\
\left.
-\sum_{k,l\geq 0} \frac{B^+_{k+l+1}}{(k+l+1)!} (\ad_\sigma)^k\ad_{\As^{1,0}_\Ires}(\ad_\sigma)^l \lambda
+\sum_{k,l\geq 0} \frac{B^+_{k+l+2}}{(k+l+2)!} (\ad_\sigma)^k\ad_{\As^{1,0}_\Ires}(\ad_\sigma)^l \bar\dd\sigma
 \right\rangle .
\end{multline}
Here $B_n^\pm$ are the Bernoulli numbers with $B^\pm_1=\pm\frac12$ and with $B^+_n=B^-_n$ the usual Bernoulli numbers for $n\neq 1$  (thus, $B^-_n=B_n$ are the standard Bernoulli numbers for $n=0,1,2,\ldots$):

\begin{center}
\begin{tabular}{c|cccccccc}
$n$ & $0$ & $1$ & $2$ & $3$ & $4$& $5$ & $6$ & $\cdots$ \\ \hline  
$B^+_n$ & $1$ & $+\frac12$ & $\frac{1}{6}$ & $0$ & -$\frac{1}{30}$ & $0$ & $\frac{1}{42}$ & $\cdots$\\
$B^-_n$ & $1$ & $-\frac12$ & $\frac{1}{6}$ & $0$ & -$\frac{1}{30}$ & $0$ & $\frac{1}{42}$ & $\cdots$
\end{tabular}
\end{center}

\begin{remark}
Another form of the ghost-dependent part of the effective action (\ref{hol-hol non-ab S_eff ghost part}), with sums over $k,l$ below evaluated explicitly, is:
\begin{multline*}
S^{\mr{eff}}_\gh=\int_\Sigma
\Big\langle c_\oo, \frac{\ad_\sigma}{e^{\ad_\sigma}-1}\circ \As^2_\res +\Big(-\frac{1}{e^{\ad_\sigma}-1}+\frac{1}{\ad_\sigma} \Big)\circ \bar\dd\As^{1,0}_\Ires\\ 
+\ad_{\As^{1,0}_\Ires}\frac{1}{e^{\ad_\sigma}-1}\circ\lambda-\frac{\ad_\sigma}{1-e^{-\ad_\sigma}}\ad\left(\frac{1-e^{-\ad_\sigma}}{\ad_\sigma}\circ \As^{1,0}_\Ires \right)\frac{1}{e^{\ad_\sigma}-1}\circ \lambda\\
-\frac{1}{\ad_\sigma}\ad_{\As^{1,0}_\Ires}\frac{1}{\ad_\sigma}\bar\dd\sigma + 
\frac{1}{e^{\ad_\sigma}-1} \ad\left(\frac{e^{\ad_\sigma}-1}{\ad_\sigma}\circ \As^{1,0}_\Ires\right) \frac{1}{1-e^{-\ad_\sigma}}\circ \bar\dd\sigma  \Big\rangle \\
+
 \Big\langle c_\ii, \Big(\sigma \ra -\sigma, \As^2_\res \ra -\As^2_\res \Big) 
\Big\rangle .
\end{multline*}
Here the coefficient of $c_\ii$ is obtained from the coefficient of $c_\oo$ by replacing $\sigma$ with $-\sigma$ and replacing $\As^2_\res$ with $-\As^2_\res$.
\end{remark}

Next, one can introduce a new parametrization of the space of residual fields by a group-valued map $g:\Sigma\ra G$, a  $(0,1)$-form $\Lambda$, a   $(1,0)$-form $\Lambda^*$ and a $2$-form $g^*$:\footnote{$\Lambda,\Lambda^*$ are $\ggg$-valued forms while $g^*$ is a $\ggg$-valued form  translated by $g^{-1}$. The ghost number
 is $0$ for $g,\Lambda$ and is $-1$ for $g^*,\Lambda^*$.}
\begin{align*}
g&=e^{-\sigma},\\
\Lambda&= \mbox{coefficient of $\As^{1,0}_\oo$ in (\ref{hol-hol non-ab S_eff phys part})} \\
& =
\frac{\ad_\sigma}{e^{\ad_\sigma}-1}\circ\lambda+ \left(-\frac{1}{e^{\ad_\sigma}-1}+\frac{1}{\ad_\sigma} \right)\circ\bar\dd\sigma, \\
\Lambda^*&=\frac{1-e^{-\ad_\sigma}}{\ad_\sigma}\circ \As^{1,0}_\Ires,\\
g^*&= (\mbox{coefficient of $c_\ii$ in (\ref{hol-hol non-ab S_eff ghost part}) })\cdot g^{-1} .
\end{align*}
This change of parametrization has Jacobian $1$ and changes one Darboux coordinate system with respect to the BV symplectic form on $\mathcal{V}$ into another one:
$$ \omega_\res=\int_\Sigma \langle \delta \sigma,\delta \As^2_\res \rangle + \langle \delta \lambda, \delta \As^{1,0}_\Ires \rangle =  \int_\Sigma \langle \delta g,\delta g^* \rangle + \langle \delta \Lambda,\delta \Lambda^* \rangle . $$
In terms of this new parametrization, the effective action (\ref{hol-hol non-ab S_eff}) can be written 
more concisely:
\begin{multline}\label{hol-hol non-ab S_eff via g}
S^\mr{eff}=- \mr{WZW}(g^{-1}) -\int_\Sigma \langle \Lambda,\dd g\cdot g^{-1} \rangle 
\\
+\int_\Sigma
\langle \As^{1,0}_\oo,\Lambda \rangle + \langle\As^{1,0}_\ii,g^{-1}\Lambda g+g^{-1}\bar\dd g\rangle +\langle c_\oo, -g\, g^*+\bar\dd\Lambda^*+[\Lambda,\Lambda^*] \rangle +
\langle c_\ii,g^*g  \rangle  ,
\end{multline}
where $\mr{WZW}$ is the Wess--Zumino--Witten action defined as in (\ref{WZW def}).

The effective action (\ref{hol-hol non-ab S_eff via g})
satisfies the following properties:
\begin{itemize}
\item Its restriction to $\gh=0$ fields satisfies the Hamilton--Jacobi property, i.e., it is the generalized generating function for the evolution relation of nonabelian Chern--Simons theory. From this identification one can see that, on-shell, $\Lambda$ can be interpreted as the $(0,1)$-component of the connection field at $t=1$.\footnote{There is, of course, a similar change of variables where instead we take the $(0,1)$-component at $t=0$ to be a coordinate on $\mathcal{V}$. It leads to an expression for $S^\mr{eff}$ where the symmetry between in/out boundaries is broken in the opposite way to (\ref{hol-hol non-ab S_eff via g}): the coefficients of $\As^{1,0}_\ii$, $c_\oo$ are simple and the coefficients of $\As^{1,0}_\oo$, $c_\ii$ are more complicated.}
\item One has the modified quantum master equation
$$ (\Omega_\oo+\Omega_\ii -\hbar^2\Delta_\res) e^{\frac{i}{\hbar} S^\mr{eff}}=0 $$
with the boundary BFV operators
$$ 
\Omega_{\oo/\ii}= \int_\Sigma \big\langle c,\pm \bar\dd \As^{1,0} -i\hbar(\dd+[\As^{1,0},-])\frac{\delta}{\delta \As^{1,0}} \big\rangle -i\hbar \big\langle \frac12 [c,c],\frac{\delta}{\delta c} \big\rangle .
$$
Here the sign $\pm$ is $+$ for out-boundary and $-$ for in-boundary;
we also suppressed the $\oo/\ii$ subscript in the boundary fields
$\As^{1,0}$   and $c$.
\end{itemize}

\section{BV-BFV approach to higher-dimensional Chern--Simons theories}\label{sec:higherdimCS}
 The observations on abelian Chern--Simons theory in Sections~\ref{ss: 1d ab CS} and~\ref{s: 3dCS}  generalize readily to  cylinders $I\times M$ of other dimensions $d$. Observe that $d$ must be odd because we want the field $\mc{A}$ to belong to the superspace $\Omega^\bullet(I\times M)$ or $\Pi\Omega^\bullet(I\times M)$ and, in either case, the BV action $S=\int_{ I\times M} \frac12 \mc{A}\wedge d\mc{A}$ is even if{f} $d$ is odd.

In the following, we will actually focus on the graded case where the field $\mc{A}$ belongs to the graded space $\Omega^\bullet(I\times M)[k]$ for some integer $k$ and the BV action has degree zero. This forces $d=2k+1$.
If $k$ were even, we would have $\mc{A}\wedge d\mc{A}=\frac12d\mc{A}^2$, so the BV action would have no bulk contribution. Therefore, we will have to assume that $k$ is odd. To summarize:\footnote{There are other ways to get at this results. For example, we may consider the classical theory with action $S_\text{cl}=\int_{ I\times M} \frac12 AdA$ with $A$ a $k$-form for some $k$. This immediately forces $d=2k+1$. For $k$ even, we have $AdA=\frac12dA^2$, so the classical action has no bulk term. We then have to assume $k$ odd. Another option is to consider the AKSZ construction with target $\RR[k]$, for some $k$, endowed with a symplectic form. If we denote by $x$ the coordinate, the general $2$-form is $\omega=f(x)dxdx$ with $f$ a function. If $k$ is even, $dxdx=0$ and $\omega$ is degenerate, so we have to assume $k$ odd. In this case $f(x)=a+bx$ for some real numbers $a$ and $b$. Now $\omega$ is closed if{f} $b=0$ and nondegenerate if{f} $a\not=0$ (we may, e.g., take $a=\frac12$). We then have that $\omega$ has degree $2k$. Since we want to produce a BV form (i.e., degree equal to $-1$) on
$ I\times M$ by the AKSZ construction, we need $d=2k+1$.}
\[
d=\dim(I\times M)=2k+1,\quad k=2l+1.
\]
The case $k=1$ has been considered in Section~\ref{s: 3dCS}. We will briefly describe the general case before turning to the next example of interest, $k=3$. 

Next we assume that the $2k$-dimensional manifold
$M$ is closed and oriented. Again, we can construct a BV-BFV theory by the AKSZ construction as 
$$\FF= \mr{Map}(T[1](I \times M), \RR[k]) =\Omega^\bt(I \times M)[k]  $$
and rewrite this space of fields in the form 
$$\FF = \Omega^\bt(I,\Omega^\bt(M)[k]),$$
exhibiting the theory as a 1-dimensional Chern--Simons theory with coefficients in $\g = \Omega^\bt(M)$. 
The BV action is then, mimicking \eqref{ab CS S}, 
\begin{equation*}
S = \int_{I \times M} \frac12 \mc{A} \wedge d\mc{A} = \int_I \frac12 (\mc{A},d_I\mc{A}) + \frac{1}{2}(\mc{A},d_M\mc{A}),
\end{equation*}
where $d = d_I + d_M$ and $(a,b) = \int_M a \wedge b$. Again, the field $\mc{A}$ can be split as $\mc{A} = \As + dt \cdot \AsI$ and the boundary phase space is $\PhiPM_M = \Omega^\bt(M)[k]$ with 
Noether 1-form 
$$\alpha =\frac12 \int_{ \{1\} \times M}\As \wedge \delta \As -
\frac12 \int_{ \{0\}\times M}\As \wedge \delta \As.$$ 

Next, assume that $M$ carries a complex structure. Then we can split the space of complexified $k$-forms as 
$$\Omega^k_\CC(M) = \bigoplus_{j_1 + j_2 = k}\Omega^{j_1,j_2}(M).$$
Given that $k$ is odd, the splitting 
\begin{equation} \label{Omega^k splitting into + and -}
\Omega^k_\CC(M) = \underbrace{\bigoplus_{j = 0}^{l}\Omega^{k-j,j}(M)}_{\Omega^k_+(M)} \oplus \underbrace{\bigoplus_{j = l+1}^{k}\Omega^{k-j,j}(M)}_{\Omega^k_-(M)}
\end{equation}
provides a splitting into lagrangian subspaces of $\Omega^k_\CC(M)$ (which is the degree 0 subspace of $\PhiPM_M$). The splitting \eqref{ab CS pol} generalizes to 
\begin{equation*}
\underbrace{\Omega^\bt_\CC(M)}_{\g_\CC} = \underbrace{\bigoplus_{j=0}^{k-1}\Omega^j_\CC(M) \oplus \Omega^k_+(M)}_{\g^+_\CC} \oplus \underbrace{\Omega^k_-(M)\oplus \bigoplus_{j=k+1}^{d-1}\Omega_\CC^j(M)}_{\g^-_\CC} .
\end{equation*}
Correspondingly, we split the fields $\As$ into its components $\As = \As^+ + \As^-$, and similarly for $\AsI$. The de Rham differential restricted to the subcomplex $
\Omega^{k-1}(M)\to \Omega^k_+(M) \oplus \Omega^k_-(M) \to \Omega^{k+1}(M)$ splits as $d_M = d_M^+ + d_M^-$ as in Section \ref{sec: 1d par ghosts complex},\footnote{ In more detail, for a $(p,q)$-form $\alpha$ with $p+q = k-1$ we have $$d_M^+\alpha =\begin{cases} d_M\alpha, & q<l, \\  \partial\alpha, & q=l, \\ 0, & q > l. 
\end{cases} $$ and vice versa for $d_M^-$.} see also Figure \ref{fig: pol 7d CS} below. 
\begin{figure}[H]
\includegraphics[scale=1]{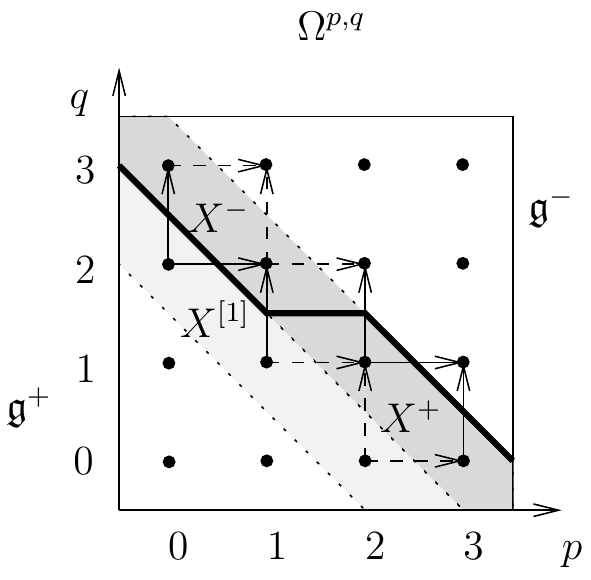}
\caption{Splitting of $\Omega^{p,q}$ into $\g^+$ (below the thick line) and $\g^-$ (above the thick line), in the case $k=3$. Solid arrows are components of $d^-$, dashed arrows are components of $d^+$. Horizontal arrows (dashed or not) are $\dd$, vertical ones are $\bar\dd$.}
\label{fig: pol 7d CS}\end{figure}

In particular, we have $(\As^+,d_M\As^+) = (\As^+,d_M^-\As^+)$. 
Before turning to the particular example of $k=3$, let us briefly have a look at the general form of the partition function in the two polarizations considered already in the last section. 
\subsection{Partition functions in $4l+3$-dimensional CS theories}\label{ss: higher dim CS transversal}
First, let us take $\g^+$ as the base of the polarization on both ends of the cylinder. This means that 
$$\BB = \g^+[k] \oplus \g^+[k] \ni (\As^+_{\rm{in}},\As^+_{\rm{out}})$$ with fiber 
$$\YY = \Omega^\bt(I,\partial I; \g^+[k]) \oplus \Omega^\bt(I,\g^-[k]).$$ Again, we can gauge fix the polarized theory by choosing 
$$ \mathcal{V} = dt \cdot \g^+[k] \oplus 1\cdot\g^-[k] \ni (dt\cdot \As^+_{I,\rm{res}}, \As^-_{\rm{res}})  $$
and using Hodge decomposition \eqref{1d ab CS hol-hol Hodge} with chain contraction \eqref{1d ab CS hol-hol K} and corresponding propagator \eqref{1d ab CS hol-hol eta}. Thus, we  obtain the splitting 
$$\mc{A} = \widetilde{\As}_{\rm{in}}^+ + \widetilde{\As}_{\rm{out}}^+ + dt\cdot \As^+_{I,\rm{res}} + \As^-_{\rm{res}} + \As^+_{\rm{fl}} + \As^-_{\rm{fl}}, $$ where 
$\widetilde{\As}_{\rm{in}}^+,\widetilde{\As}_{\rm{out}}^+$ are the discontinuous extensions of $\As^+_{\rm{in}},\As^+_{\rm{out}}$ into the bulk. In terms of this splitting, we can rewrite the ``perturbation'' $\frac12\int_{I\times M}\mc{A}\wedge d_M\mc{A}$ as 
$$
\frac12 \int_{I\times M}\mc{A}\wedge d_M\mc{A} = \int_I dt \int_M \As^+_{I,\res} d_M (\As^-_{\rm{res}} + \As^+_{\rm{fl}} + \As^-_{\rm{fl}}), $$
since $dt\cdot\As^+_{I,\res}$ is the only term containing a $dt$. By definition (see \eqref{1d ab CS hol-hol Hodge}) the $\As^-_\fl$ fluctuations have vanishing integral over $I$. Applying 
the fact that $d_M = d_M^+ + d_M^-$, we obtain 
$$
\frac12 \int_{ I \times M }\mc{A}\wedge d_M\mc{A} = \int_I dt\left( \int_M \As^+_{I,\res} d_M \As^-_{\rm{res}} + \int_M \As^+_{I,\res} d_M^- \As^+_\fl\right).$$
 The BV-BFV partition function - as in \eqref{ab CS hol-hol Z} - is then given by 
\begin{multline}
Z(\As^+_{\rm{in}},\As^-_{\rm{out}},\As^+_{I,\rm{res}}, \As^-_{\rm{res}}) =\\
=
\int_{\YY'_{K-ex}\subset \YY'}\mc{D} \As^+_\fl \;\mc{D} \As^-_\fl\;e^{\frac{i}{\hbar}S^\pol\left(
\til{\As^+_\ii}+\til{\As^+_\oo}+ \As^+_\fl+\As^-_\res+\As^-_\fl+dt\cdot\As^+_{I\,\res}
\right)}\\
=\int \mc{D}\As^+_\fl \;\mc{D} \As^-_\fl
\;e^{\frac{i}{\hbar}\big(
\int_{I\times M} \As^-_\fl  \dt \As^+_\fl+ \int_{\{1\}\times M} \As^+_\oo \As^- - \int_{\{0\}\times M} \As^+_\ii \As^- + \int_{I\times M} \frac12\A\, d_M \A \big)}
\\
=\int \mc{D} \As^+_\fl \;\mc{D} \As^-_\fl\;
\exp\frac{i}{\hbar} \Big(\int_{I\times M} \As^-_\fl \, \dt \As^+_\fl+ \int_{M} 
\As^+_\oo \, (\As^-_\res+\As^-_\fl\big|_{t=1}) -\\ -
 \int_{M} \As^+_\ii\, (\As^-_\res+\As^-_\fl\big|_{t=0})
 +
\int_M \As^+_{I\,\res} d_M \As^-_\res  + \int_{I\times M} dt\;
 \As^+_{I\,\res} \, d_M^- 
 \As^+_\fl
\Big).
\end{multline}
This is structurally the same formula as in Section \ref{s: 3dCS} before, 
with the difference
that the pairing on residual fields in slightly more complicated in this case. The Feynman diagrams defining this functional integral are the same as in \eqref{ab CS hol-hol Z answer} and yield 
\begin{equation*}
Z=\exp\frac{i}{\hbar}\int_M\Big(
(\As^+_\oo - \As^+_\ii)\; \As^-_\res+\As^+_{I\,\res}\,d_M \As^-_\res+\frac12 (\As^+_\oo+\As^+_\ii)\;d_M^- \As^+_{I\,\res}
\Big).
\end{equation*}
Similarly, the partition function in the holomorphic-to-antiholomorphic polarization, with space of boundary conditions 
$$\BB =\g^-[k]\oplus \g^+[k]\ni(\As^-_\ii,\As^+_\oo) $$ 
and space of residual fields 
$$\VV  = dt\cdot \g_{\CC}[k-1]\oplus (1-t)\cdot \g^+[k] \oplus t\cdot \g^-[k] \;\; \ni \;\; dt\cdot \As_\Ires + (1-t)\cdot \As^+_\res + t\cdot \As^-_\res ,$$ 
is 
\begin{align*}
Z(\As^-_\ii,\As^+_\oo&;\As_\Ires, \As^+_\res,\As^-_\res) \\
&=\exp\frac{i}{\hbar}\Bigg(\int_M \ -\As^+_\oo\As^-_\ii+\As^+_\oo \As^-_\res -\As^-_\ii\As^+_\res +\frac12 \As^-_\res \As^+_\res +\\
&+\frac12 \int_M 
\As_\Ires^+ \; (d_M^- \As^+_\res+d_M\As^-_\res) + \frac12\int_M\As_\Ires^-d_M\As_\res^+
\Bigg).\end{align*} 
\subsection{Parallel ghost polarization}\label{ss: higher dim CS parallel}
We can also choose again the ``parallel ghost'' polarization discussed in Section \ref{sec: 1d par ghosts complex}. To be more explicit, and in preparation for the next section, let us fix $k=3$. Then we have $\g = \Omega^\bullet(M,\CC)$ where $M$ is a 6-dimensional manifold with a complex structure, and $X = \g[3]$.\footnote{We use freely the notation from Section \ref{sec: 1d par ghosts complex}.} For later purposes, let us suppose that $M$ is endowed with a K\"ahler metric $g$. We use the complex structure to define the polarization of the $\gh = 0$ component of $X$: 
\begin{equation*}
X^{[0]} = \Omega^3_\CC(M) = \underbrace{\Omega^{3,0}(M) \oplus \Omega^{2,1}(M)}_{X^+} \oplus \underbrace{\Omega^{1,2}(M) \oplus \Omega^{0,3}(M)}_{X^-}  .
\end{equation*} 
Correspondingly, we split the field $\As = \As^{\leq 2} + \As^{3,+} + \As^{3,-}  + \As^{\geq 4}$ and similarly for $\As_I$. We denote by $\PP^{[<0],-}$ the polarization given by 
\begin{equation*}
\PP^{[<0],-} = \left\lbrace\frac{\delta}{\delta \As^{\geq 4}},\frac{\delta}{\delta \As^{3,-}}\right\rbrace ,
\end{equation*}
whose base is parametrized by $(\As^{\leq 2}, \As^{3,+})$, and by $\PP^{[<0],+}$ the similar polarization with $-$ and $+$ exchanged. The maps $d_\g^+,d_\g^-\colon X^{[1]} \to X^{[0]}$ defined in Section \ref{sec: 1d par ghosts nonlinear} are given by projecting the de Rham differential $$d_M \colon \Omega^2_\CC(M) \to \Omega^3_\CC(M)$$ to $X^\pm$. Explicitly, they are given by 
\begin{align*}
d^+_\g\big|_{\Omega^{2,0}} &= d_M\big|_{\Omega^{2,0}}, \qquad &d^+_\g\big|_{\Omega^{1,1}} = \partial_M\big|_{\Omega^{1,1}},\qquad &d^+_\g\big|_{\Omega^{0,2}} =0 , \\
d^-_\g\big|_{\Omega^{2,0}} &=0, \qquad &d^-_\g\big|_{\Omega^{1,1}} = \bar\partial_M\big|_{\Omega^{1,1}},\qquad &d^-_\g\big|_{\Omega^{0,2}} =d_M\big|_{\Omega^{0,2}}  ,
\end{align*}
see Figure \ref{fig: pol 7d CS}.
Now, we consider the cylinder $I\times M$ with  with $\PP^{[<0],+}$ on the in-boundary and  $\PP^{[<0],-}$ on the out-boundary. 
We then have the following fields in the effective action (referring to notation from Section \ref{sec: 1d par ghosts complex}): 
\begin{itemize}
\item $\psi_\oo^+ = \As^{3,+}_\oo= \As^{3,0}_\oo + \As^{2,1}_\oo$ --- physical boundary field on out-boundary,
\item $\psi^-_\ii =\As^{3,-}_\ii= \As^{0,3}_\ii + \As^{1,2}_\ii$  --- physical boundary field on in-boundary,
\item $(\psi^{[>0]}_\ii,\psi^{[>0]}_\oo) = (\As_\ii^{\leq  2},\As_\oo^{\leq 2})$ --- boundary fields in higher ghost number (collected in a superfield),
\item $A^{[1]}_\res = \As^2_\Ires = \As_\Ires^{2,0}+\As_\Ires^{1,1}+\As_\Ires^{0,2}$ --- 2-form,  residual field in ghost number 0,
\item $(A^{[>1]}_\res,\psi_\res)=(\As^{<2}_\Ires,\As_\res^{>3})$ --- residual fields of higher ghost number (form degree $<2$) or negative ghost number (form degree $>3$).
\end{itemize}
The effective action \eqref{eq: S eff 1d CS par} then reads 
\[ S_\eff[\As_\oo,\As_\ii,\As_\Ires,\As_{\res}] = S_\ph + S_\gh , \]
where
 \begin{align*}
S_\ph &= \int_M \As^{3,0}_\oo\As^{0,3}_\ii + \As^{2,1}_\oo\As^{1,2}_\ii
 + \As^{3,0}_\oo\bar\dd\As_\Ires^{0,2} + \As^{0,3}_\ii\dd\As_\Ires^{2,0}\\
 &+\int_M\As^{1,2}_\ii\partial\As_\Ires^{1,1}  +\As^{1,2}_\ii\bar\dd\As_\Ires^{2,0}+\As^{2,1}_\oo\dd\As_\Ires^{0,2} + \As^{2,1}_\oo\bar\dd \As_\Ires^{1,1} \\
 &+\frac12\int_M\bar{\partial}\As_\Ires^{1,1}\partial\As_\Ires^{1,1} , \\
 S_\gh &= \int_M(\As^{\leq 2}_\oo - \As^{\leq 2}_\ii)\As^{>3}_\res +\As_\res^{>3}d_M\As_\Ires^{<2}. 
\end{align*}
A similar formula holds for higher-dimensional Chern--Simons theories. Some comments: 
\begin{enumerate}[i)]
\item  As a consequence of Propositions \ref{prop: 1d par ghosts mQME} and \ref{prop: 1d par ghosts HJ}, the effective action satisfies the modified quantum master equation, and the $\gh=0$ part $S_\ph$ satisfies the generalized Hamilton--Jacobi equations. In particular, $S_\ph$ can be identified with the HJ action. 
\item One can rewrite $S_\ph$ as 
\begin{equation*}
S_\ph = \int_M (\As^{3,+}_\oo - d_\g^+\As^2_\Ires)(A^{3,-}_\ii + d_\g^-\As_\Ires^2) -\frac12 \bar\dd A^{1,1}_\Ires\dd A^{1,1}_\Ires
\end{equation*}
--- a higher-dimensional version of an abelian gauged WZW model, see also footnote \ref{fn:abWZW}. 
\end{enumerate}
\subsubsection{Pushforward over residual fields}
The space of residual fields is 
\begin{equation*}
\VV =\{ dt\cdot \As^{\leq 2}_\Ires + \As_\res^{\geq 4}\} \;\;  =
 \Omega^{\leq 2}_\CC(M)[2] \oplus \Omega_\CC^{\geq 4}(M)[3].
\end{equation*}
In particular, the components and their ghost numbers are 
\begin{center}
\begin{tabular}{c|c|c|c|c|c|c}
field & $\As^0_\Ires$ & $\As^1_\Ires $ & $ \As^2_\Ires $ & $\As^4_\res$ & $\As^5
_\res $ & $\As^6_\res$ \\
\hline
ghost number & 2 & 1 & 0 & -1 & -2 & -3 \end{tabular}
\end{center}
A gauge-fixing lagrangian can be constructed by using the Hodge decomposition for $\bar\partial$: Namely, using the K\"ahler metric $g$, we decompose 
\begin{equation*}
\Omega^{p,q}(M) = H^{p,q}(M) \oplus \Omega^{p,q}_{\bar\dd-\mr{ex}} \oplus \Omega^{p,q}_{\bar\dd-\mr{coex}}
\end{equation*}
where the middle and rightmost terms denote the spaces of $\bar\dd$-exact and $\bar\dd^*$-exact forms, respectively. The gauge-fixing lagrangian $\mathcal{L}\subset \VV$  is then defined as 
\begin{equation}
\LL =\bigoplus_{p+q \leq 2} \Omega^{p,q}_{\bar\dd-\mr{coex}} \oplus \bigoplus_{p+q \geq 4}\left( H^{p,q}(M) \oplus \Omega^{p,q}_{\bar\dd-\mr{coex}} \right).  \label{eq:GFlag7d} 
\end{equation}
Restricted to this gauge-fixing lagrangian, the effective action is nondegenerate in residual fields and the integral gives
\begin{multline*}
Z_*\propto 
\delta(\bar\dd \As^{3,0}_\oo+\dd \As^{2,1}_{\oo,\bar\dd\mr{-ex}})
\,
\delta(\bar\dd \As^{1,2}_\ii+\dd \As^{0,3}_{\ii,\bar\dd\mr{-ex}})
\, \delta(\As^{\leq 2}_{\oo,\mr{harm}}-\As^{\leq 2}_{\ii,\mr{harm}})\cdot\\
\cdot\exp \frac{i}{\hbar}\bigg( \int_M  \big(\As^{3,0}_\oo \As^{0,3}_\ii + \As^{2,1}_\oo (\mr{id}-P_{\bar\dd\mr{-ex}}) \As^{1,2}_\ii \big)-\\
-\frac12 \int_{M\times M}\bar\dd \As^{2,1}_\oo(x)K(x,x')\bar\dd \As^{2,1}_\oo(x') -\frac12 
\int_{M\times M}\dd \As^{1,2}_\ii(x)K(x,x')\dd \As^{1,2}_\ii(x')
\bigg),
\end{multline*}
where $K(x,x')$ is the integral kernel of the inverse of the operator $(\dd\bar\dd)$ restricted to $\Omega^{1,1}_{\bar\dd-\mr{coex}}$.
\subsection{7D Chern--Simons and Kodaira--Spencer action functional}\label{s:7dCSKS} 
We now turn our attention to 7-dimensional Chern--Simons theory on a cylinder with a particular polarization on the out-boundary. This polarization was first discovered by Hitchin \cite{Hitchin}. It was used in \cite{GS} to argue that the semi-classical approximation of the Chern--Simons wave function can be expressed in terms of the Kodaira--Spencer action functional introduced in \cite{BCOV} whose classical solutions are deformations of  complex structures on a K\"ahler manifold (see Appendix \ref{app:KS} for a brief review of the Kodaira-Spencer theory). Here we argue that this semiclassical approximation is in fact exact in the axial gauge. From the general arguments of the BV-BFV formalism, it follows that a change of gauge fixing will result in an $\Omega$-exact change of the partition function, hence its $\Omega$-cohomology class is well-defined and given by the Kodaira--Spencer partition function. A caveat is that in this section we do not take care of determinants arising in Gaussian path integrals. Those might lead to anomalies similar to the discussion of Remark \ref{rem: Liouville}, and would have to be treated separately. 

\subsubsection{General polarizations in $4l+3$-dimensional Chern--Simons theory}

Using the results of Section \ref{sec: 1d par ghosts nonlinear}, one can consider also more general polarizations in higher-dimensional Chern--Simons theories, in dimension $d = 2k+1 = 4l+3$. 

Suppose that $\PP^{[0]}$ is any polarization on $X^{[0]} = \Omega^k_\CC(M)$ such that we have local coordinates $\As^Q$ on the base and $\As^P$ on the fibers, and let $G = G(\As^{-},\As^Q)$ be the corresponding generating function. From Section \ref{sec: 1d par ghosts nonlinear} we know that the partition function of abelian Chern--Simons theory with $\PP^{[<0],+}$-polarization on the in-boundary and $\PP^{[<0],P}$-polarization on the out-boundary is $Z = \exp(\frac{i}{\hbar} S_\eff)$ with $S_\eff = S_\ph + S_\gh$ and 
\begin{align}
S_\ph[\As^-_\ii,\As^Q_\oo,\As_\Ires] &=  \frac{1}{2}\int_M \partial_M \As^{l,l}_\Ires\bar\dd_M \As^{l,l}_\Ires + \int_M \As^{k,-}_\ii d_\g^+\As^{k-1}_\Ires \notag
\\&\qquad - G(\As^{k,-}_\ii + d_\g^-\As^{k-1}_\Ires,\As^Q_\oo ) , \notag\\
S_\gh[\As^{[>0]}_\ii,\As^{[>0]}_\oo,\As_\Ires,\As_\res] &= \int_M(\As^{[>0]}_\oo - \As^{[>0]}_\ii)\As^{[<0]}_\res + \int_M\As_\res^{[<0]}d_M\As_\Ires^{[>0]}.\label{eq:S eff gh 7d}
\end{align}
See also the toy model in considered \cite[Section 12]{HJ}.
\subsubsection{Hitchin polarization on 6-dimensional manifolds and effective action}\label{sss: Hitichin}
In 7-dimensional Chern--Simons theory, there is an interesting --- nonlinear --- polarization on the boundary phase space, coming from the special geometry of three-forms in six dimensions first described by Hitchin in \cite{Hitchin}.
The idea is as follows. A complex 3-form $A$ on a six-dimensional manifold with a complex structure decomposes as  
\begin{equation} A = A^{+,\nl} + A^{-,\nl} \label{eq:defHitchinPol}
\end{equation}
where $A^{+,\nl}$ and $A^{-,\nl}$ are decomposable complex 3-forms, i.e.,  triple wedge products:
$$A^{\pm,\nl} = \theta_1^\pm \wedge \theta_2^\pm \wedge \theta_3^\pm, \qquad \theta_i^\pm \in \Omega^1(M,\CC).$$ The $3$-form $A$ is called \emph{nondegenerate} if $A^{+,\nl} \wedge A^{-,\nl}$ is everywhere nonvanishing (which is equivalent to the fact that the form $A$ is not decomposable). In this case $A^{+,\nl}$ and $A^{-,\nl}$ are uniquely determined by $A$ and define a polarization of $\Omega^3(M)_{\mr{nd}}$, the subset of nondegenerate forms. For more details on this polarization, we refer to \cite{GS}.
The effective action on the cylinder with $\PP^{[<0],-,\mr{l}}$-polarization on the in-boundary\footnote{For consistency with \cite{GS} and \cite{BCOV}, we switch here to the polarization with base parametrized by $\As^{3,+,\mr{l}}$, as opposed to the rest of the paper.} and $\PP^{[<0],+,\nl}$-polarization on the out-boundary thus reads $S_\eff = S_\ph + S_\gh$ with $S_\gh$ given by \eqref{eq:S eff gh 7d} and  
\begin{multline}
S_\ph[\As^{+,\mr{l}}_\ii,\As^{-,\nl}_\oo,\As^{2,0}_\Ires,\As^{1,1}_\Ires,\As^{0,2}_\Ires] \\ = \frac12 \int_M \dd \As^{1,1}_\Ires \bar\dd \As^{1,1}_\Ires  + \int_M \As^{+,\mr{l}}_\ii d\As^{0,2}_\Ires + \As^{2,1}_\ii\bar\dd \As^{1,1}_\Ires  
- G(\As^{+,\l}_\ii +  d\As^{2,0}_\Ires + \dd\As^{1,1}_\Ires;\As^{-,\nl}_\oo).   \label{eq:SeffCS7d}
\end{multline}
\begin{proposition}
The partition function is  given by 
\begin{equation*}
Z = \exp\frac{i}{\hbar}\left(S_{\ph} + S_{\gh}\right),
\end{equation*}
with $S_\ph$ given by \eqref{eq:SeffCS7d} and $S_\gh$ given by \eqref{eq:S eff gh 7d}. It satisfies the modified quantum master equation 
\begin{equation*}
( \Omega_\ii + \Omega_\oo - \hbar^2\Delta_\res)Z = 0, 
\end{equation*}
with $\Omega_\ii$, $\Omega_\oo$ given by the standard quantization of the BFV boundary action $\frac12\int_M \As d\As$: 
\begin{align*}
\Omega_\ii &= \int_M d\As_\ii^{2}\left(\As^{+,\mr{l}}_\ii - i\hbar\frac{\delta}{\delta \As_\ii^{+,\mr{l}}}\right) -i\hbar \int_M d\As_\ii^{1}\frac{\delta}{\delta\As_\ii^{4}}-i\hbar \int_M d\As_\ii^{0}\frac{\delta}{\delta\As_\ii^{5}},  \\
\Omega_\oo &= \int_M d\As_\oo^{2}\left(\As^{-,\mr{nl}}_\oo - i\hbar\frac{\delta}{\delta \As^{-,\mr{nl}}_\oo}\right) -i\hbar \int_M d\As_\oo^{1}\frac{\delta}{\delta\As_\oo^{4}}-i\hbar \int_M d\As_\oo^{0}\frac{\delta}{\delta\As_\oo^{5}}. 
\end{align*}
\begin{proof}
 We are in the situation of Remark \ref{rem:mQMEnonlin} here because of the fact that the boundary action is linear in the canonical variables defining the Hitchin polarization, see equation \eqref{eq:defHitchinPol}. It follows that the mQME is satisfied for the standard quantization of the boundary action. 
\end{proof}
\end{proposition}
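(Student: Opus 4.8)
The plan is to treat seven-dimensional Chern--Simons theory as the one-dimensional AKSZ theory with target $\g = \Omega^\bullet(M,\CC)$ and $k=3$, exactly as set up in Section~\ref{sec:higherdimCS}, and to deduce both the form of $Z$ and the mQME from the general one-dimensional results of Sections~\ref{sec: 1d par ghosts complex} and~\ref{sec: 1d par ghosts nonlinear}. First I would fix the dictionary between the two pictures: on the out-boundary the Hitchin polarization $\PP^{[<0],+,\nl}$ has base coordinate $\psi^Q = \As^{-,\nl}_\oo$ and fibre coordinate $\psi^P = \As^{+,\nl}_\oo$, with $G$ the Hitchin generating function satisfying $\delta G = \As^{-,\mr l}\,\delta\As^{+,\mr l} - \As^{+,\nl}\,\delta\As^{-,\nl}$; the degree-zero residual field is $A^{[1]}_\res = \As^2_\Ires$; and the in-boundary carries the linear polarization (in the role of the $\psi^-$-representation of Section~\ref{sec: 1d par ghosts nonlinear}, with the $+/-$ convention swapped as noted before \eqref{eq:SeffCS7d}, so that $\As^{+,\mr l}_\ii$ is the fixed boundary value). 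Under this dictionary the physical effective action \eqref{eq:SeffCS7d} is precisely the specialization of the general formula \eqref{eq: Seff nonlin pol}: the quadratic and linear-source terms arise from the bulk $d_M$-vertices and the boundary vertices (the ``linear'' diagrams of Section~\ref{sec: 1d par ghosts complex}), while the $G(\cdots)$-term is the resummed nonlinear vertex on the out-boundary, with $\psi^-_\ii + d_\g^- A^{[1]}_\res = \As^{+,\mr l}_\ii + d\As^{2,0}_\Ires + \dd\As^{1,1}_\Ires$. The ghost action $S_\gh$ in \eqref{eq:S eff gh 7d} is unaffected by the choice of physical polarization and is read off from Proposition~\ref{prop: 1d par ghosts Seff}. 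Since every contributing diagram is a tree, no $\delta(0)$-type ambiguities arise, so this first step is essentially bookkeeping.

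The substance of the statement is the mQME. Writing $Z = e^{\frac{i}{\hbar}S_\eff}$, the equation $(\Omega_\ii + \Omega_\oo - \hbar^2\Delta_\res)Z = 0$ is equivalent to
\[
Z^{-1}\Omega_\ii Z + Z^{-1}\Omega_\oo Z + \frac12\{S_\eff,S_\eff\}_\res - i\hbar\,\Delta_\res S_\eff = 0 .
\]
I would split the verification by boundary. The in-boundary contribution is governed by a linear polarization and is handled verbatim by Proposition~\ref{prop: 1d par ghosts mQME} (after the $+/-$ relabeling), producing $\Omega_\ii$ as the stated standard quantization with no corrections; the residual-field term $\Delta_\res S_\eff$ vanishes for the same degree-shift reason as in that proposition. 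For the out-boundary I would invoke the analysis of Section~\ref{sec:mQME nonlin}: the standard quantization $\Omega^{\mr{std}}_\oo$ always satisfies the mQME to order $\hbar^0$, the only potential obstruction to all-orders validity being the higher $\psi^Q_\oo$-derivatives of $G$ that the naive first-order operator fails to reproduce.

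The crux --- and the step I expect to be the real obstacle --- is showing that for the Hitchin polarization those higher-order corrections vanish identically. Here I would appeal to Remark~\ref{rem:mQMEnonlin}: the defining property \eqref{eq:defHitchinPol} of the Hitchin polarization is that the degree-zero field reconstructs \emph{additively} from the canonical variables, $\As^3 = \As^{+,\nl} + \As^{-,\nl} = \psi^P + \psi^Q$, so we are in precisely the special case $\psi(\psi^P,\psi^Q) = \psi^P + \psi^Q$ of that remark. Consequently the constraint $d_M\As = 0$ is affine (indeed linear) in \emph{both} the base coordinate $\As^{-,\nl}_\oo$ and the momentum $\As^{+,\nl}_\oo$, and the first-order operator $\Omega_\oo = \int_M d\As^2_\oo\,(\As^{-,\nl}_\oo - i\hbar\,\delta/\delta\As^{-,\nl}_\oo) + \cdots$ reproduces $(\psi^{[1]}_\oo, d_\g\psi^{[0]}_\oo)$ exactly via $\dd G/\dd\psi^Q = \psi^P$, with $\Omega_\oo^2 = 0$ and no $\hbar$-corrections. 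The delicate point to argue carefully is that, although the Hitchin \emph{splitting} map $A \mapsto (A^{+,\nl},A^{-,\nl})$ is genuinely nonlinear, what enters the boundary BFV constraint $\int_M c\, d_M A$ is only the linear reconstruction $A = A^{+,\nl}+A^{-,\nl}$; it is this linearity in the canonical variables, not linearity of the polarization itself, that Remark~\ref{rem:mQMEnonlin} requires. Assembling the two boundary contributions together with the vanishing residual-field terms then closes the mQME, completing the proof.
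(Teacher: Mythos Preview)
Your proposal is correct and follows essentially the same route as the paper: the paper's proof is a one-line appeal to Remark~\ref{rem:mQMEnonlin}, based on the observation that the Hitchin splitting \eqref{eq:defHitchinPol} gives $\As^3 = \As^{+,\nl} + \As^{-,\nl} = \psi^P + \psi^Q$, which is precisely the linearity-in-momenta condition singled out in that remark. Your write-up spells out the dictionary with Sections~\ref{sec: 1d par ghosts complex}--\ref{sec: 1d par ghosts nonlinear} and the reason $\Delta_\res S_\eff = 0$ more explicitly than the paper does, but the key step --- identifying the Hitchin case as the ``$\psi = \psi^P + \psi^Q$'' case of Remark~\ref{rem:mQMEnonlin} --- is identical.
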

\begin{remark}\label{rem: 7d CS exact} We want to stress that in this gauge, the semiclassical approximation to this partition function that was used as an ansatz in \cite{GS} (the integral kernel of the generalized Segal--Bargmann transform\footnote{See Appendix \ref{app:SB} for a motivation for the comparison of BV-BFV partition function with standard Segal--Bargmann transform.} from the linear to the nonlinear polarization) is found to be \emph{exact}: there are no quantum corrections.  Adapting the proof of \cite[Appendix B]{HJ} to the infinite-dimensional setting one can show that changing the gauge fixing by changing the propagator on the interval
results in a change of the partition function by a $(\hbar^2\Delta_\res - \Omega)$-cocycle.
\end{remark}
\begin{remark}
To be precise in comparison with \cite{GS}, one should identify the residual fields in  \eqref{eq:SeffCS7d} with the Lagrange multipliers enforcing the constraint $d_M\As=0$ in \cite{GS}. This, of course, is precisely their role when interpreting \eqref{eq:SeffCS7d} as the Hamilton--Jacobi action for 7D Chern--Simons theory in the chosen polarizations. 
\end{remark}

\subsubsection{Comparison with Kodaira--Spencer gravity} \label{sss: CS vs KS}
Following \cite{GS}, we want to compare the Chern--Simons effective action \eqref{eq:SeffCS7d} with the Kodaira--Spencer action functional \eqref{eq:KSaction2}. 
Let us fix a reference holomorphic 3-form $\omega_0\in\Omega^{3,0}(M)$. Its conjugate is an antiholomorphic 3-form $\overline{\omega}_0 \in \Omega^{0,3}(M)$. We can then parametrize $A^{+,\nl}$ and $A^{-,\nl}$ as 
\begin{align*}
A^{+,\nl} &= \rho e^\mu \omega_0, \\
A^{-,\nl} &= \overline{\rho} e^{\overline{\mu}}\overline{\omega}_0,
\end{align*}
where $\rho,\ol{\rho} \in \Omega^0_\CC(M)$, $\mu \in \Omega^{-1,1}(M)$, $\ol{\mu}\in \Omega^{1,-1}(M)$ and 
\begin{equation} \rho e^\mu\omega_0 = \rho\left(\omega_0 + \mu \omega_0 + \frac{\mu^2}{2}\omega_0 + \frac{\mu^3}{6}\omega_0 \right) ,
\label{eq:nlpoldef_maintext}
\end{equation}
where $\mu\omega_0$ should be interpreted as extension of contraction to forms with values in vector fields. 

Of course, a complex 3-form still has a decomposition $A = A^{+,\mr{l}} + A^{-,\mr{l}}$, with $A^{+,\mr{l}} \in \Omega^{3,0}(M) \oplus \Omega^{2,1}(M)$, and $A^{-,\mr{l}} \in \Omega^{1,2}(M) \oplus \Omega^{0,3}(M)$.

The following expression for $G$ is given in \cite{GS}: 
\begin{multline}
G(A^{3,0},A^{2,1},\overline{\rho},\overline{\mu}) =\\
=\int_M \overline{\rho}( A^{3,0}\overline{\omega}_0 + A^{2,1}\overline{\mu}\,\overline{\omega}_0) + \overline{\rho}^2\langle\overline{\mu}^3\rangle\omega_0\overline{\omega}_0 -
\frac{\left\langle\left( (A^{2,1} - \frac12\overline{\rho}\,\overline{\mu}^2\overline{\omega}_0)^\vee\right)^3\right\rangle}{(A^{3,0})^\vee - \overline{\rho}\langle\overline{\mu}^3\rangle}\omega_0\overline{\omega}_0 . \label{eq:genfctmain}
\end{multline}
For completeness, we present a derivation of this formula in Appendix \ref{app:genfctHitchin}.
Here for $\overline{\mu} \in \Omega^{1,-1}(M)$ the function $\langle \mu^3 \rangle$ is defined in \eqref{eq:mucube} and for any $A \in \Omega^{p,q}$ we 
define $A^\vee \in \Omega^{p-3,q}$ by $A^\vee \omega_0 =A$.

%
Consider now the state 
\begin{equation}
\psi\left(\overline{\rho}_\oo,\overline{\mu}_\oo,\overline{\As^{[>0]}_\oo}\right) = \delta(\overline{\mu}_\oo)\delta\left(\As^{[>0]}_\oo\right)\exp\frac{i}{\hbar}\int_M\overline{\rho}_\oo\omega_0\overline{\omega}_0. \label{eq:weirdstate}
\end{equation}
This is an extension of the physical state\footnote{This state $\psi_\ph$  was proposed in \cite{GS} as a way of ``fixing the string coupling constant.'' It is a quantization of the lagrangian given by $\ol\mu=0$, $p_{\ol\rho}=1$, where $p_{\ol\rho}$ denotes the canonical momentum of $\ol\rho$.} $$\psi_\ph(\ol\mu,\ol\rho) = \delta(\overline{\mu}_\oo)\exp\frac{i}{\hbar}\int_M\overline{\rho}_\oo\omega_0\overline{\omega}_0$$ to the ghost sector. Note that this state trivially satisfies the mQME because $\Omega_\oo\psi_\ph$ is linear in the ghost fields, but we are multiplying with the ghost delta function.\footnote{This is just the identity $x \delta(x) = 0$. }
Now  we compute (formally) the vector $Z\left|\psi\right\rangle$ (put differently, we are specifying a boundary condition on the $\oo$-boundary for the quantum theory).  
%
%
This means computing the following functional integral:
\begin{align*}
 Z \left|\psi\right\rangle &= \int\DD\overline{\rho}_\oo\DD\overline{\mu}_\oo\DD\As_\oo^{[>0]} Z \cdot \psi \\ 
&= \int\DD\overline{\rho}_\oo Z[\As^{+,\mr{l}}_\ii,\As^{[>0]}_\ii,\overline{\rho}_\oo,0,0,\As^{\geq 4}_\res, \As^{\leq 2}_\Ires]\exp\frac{i}{\hbar}\int_M\overline{\rho}_\oo\omega_0\overline{\omega}_0 \\
&=: Z'[\As^{+,\mr{l}}_\ii,\As^{[>0]}_\ii,\As^{\geq 4}_\res, \As^{\leq 2}_\Ires] .
\end{align*}
The partition function $Z$ depends on $\overline{\rho}_\oo$ only through $G$ and  we have 
\begin{multline}
\int\DD\overline{\rho}\;e^{-\frac{i}{\hbar}G(A^{3,0},A^{2,1},\overline{\rho},0)+\frac{i}{\hbar}\int_M\overline{\rho}\omega_0\overline{\omega}_0} =\\
=\delta(A^{3,0}-\omega_0)\, \exp\left(-\frac{i}{\hbar} \int_M \frac16 \langle A^{2,1}, A^{2,1},A^{2,1}\rangle \right) . \label{eq:Gintegral}
\end{multline}
Here for $A \in \Omega^{2,1}(M)$ we have $\frac16\langle A,A,A\rangle := \langle (A^\vee)^3\rangle \omega_0\overline{\omega}_0$. Thus, $Z'$ has the following expression: 
\begin{align}
Z' &= Z'_\ph Z'_\gh,  \quad \mbox{where} 
\notag \\
Z'_\ph 
&= \delta(\As^{3,0}_\ii + \dd \As^{2,0}_\Ires - \omega_0) \,
\exp\frac{i}{\hbar} \bigg(\frac12 \int_M \dd \As^{1,1}_\Ires \bar\dd \As^{1,1}_\Ires  + \int_M \As^{3,0}_\ii\bar\dd \As^{0,2}_\Ires \label{eq:Zprime2}\\
&+\int_M \As^{2,1}_\ii\dd \As^{0,2}_\Ires + \As^{2,1}_\ii\bar\dd \As^{1,1}_\Ires - \langle( (\As_\ii^{2,1}+ \bar\dd \As^{2,0}_\Ires + \dd \As^{1,1}_\Ires)^\vee)^3\rangle\omega_0\overline{\omega}_0\bigg),\notag\\
Z'_\gh &= \exp\frac{i}{\hbar}\left(\int_M -\As^{\leq 2}_\ii \As^{\geq 4}_\res + \int_M \As^{\geq 4}_\res d_M\As^{\leq 1}_\Ires\right).\label{eq:Zprime3}
\end{align} 
We stress that $Z'$ was obtained from $Z$ through a formal functional integral. 
However, we have the following result.
\begin{lemma}
The function $Z'[\As^{+,\mr{l}}_\ii,\As^{[>0]}_\ii,\As^{\geq 4}_\res,\As^{\leq 2}_\Ires]$ satisfies the modified quantum master equation, i.e., is an $(\Omega_\ii - \hbar^2\Delta_\res)$-cocycle, where $\Omega_\ii = \Omega^{(0)}_\ii + \Omega^{(1)}_\ii$ is the standard quantization of $-\frac12\int_M \As d_M \As$ in the $\PP^{[<0],-}$-polarization, explicitly given by 
\begin{align*}
\Omega^{(0)}_\ii &= -\int_M \As^{2,1}_\ii \bar\dd\As^{1,1}_\ii + \As^{2,1}_\ii\dd \As_\ii^{0,2} + \As_\ii^{3,0}\bar{\dd}\As_\ii^{0,2}, \\ 
\frac{i}{\hbar}\Omega^{(1)}_\ii &=  -\int_M \frac{\delta}{\delta \As^{3,0}_\ii}\dd \As^{2,0}_\ii  +  \frac{\delta}{\delta \As^{2,1}_\ii}\bar\dd\As^{2,0}_\ii + \frac{\delta}{\delta \As^{2,1}_\ii}\dd \As^{1,1}_\ii + \frac{\delta}{\delta \As^{\leq 2}_\ii}d\As^{\leq 1}_\ii .
\end{align*}
\end{lemma}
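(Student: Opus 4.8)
The cleanest conceptual route exploits that $Z'$ was manufactured from the full partition function $Z$, which by the preceding Proposition satisfies $(\Omega_\ii + \Omega_\oo - \hbar^2\Delta_\res)Z = 0$, and from the $\oo$-boundary state $\psi$ of \eqref{eq:weirdstate}, which is annihilated by $\Omega_\oo$ (the content of the identity $x\,\delta(x)=0$ noted just after \eqref{eq:weirdstate}). Since $\psi$ depends only on the $\oo$-boundary fields, while $\Omega_\ii$ and $\Delta_\res$ differentiate only the $\ii$-boundary and residual fields, these operators pass through $\psi$, giving
\begin{equation*}
(\Omega_\ii - \hbar^2\Delta_\res)Z' = \int\DD\overline{\rho}_\oo\,\big((\Omega_\ii - \hbar^2\Delta_\res)Z\big)\,\psi = -\int\DD\overline{\rho}_\oo\,(\Omega_\oo Z)\,\psi .
\end{equation*}
Integrating by parts in the $\oo$-fields and using the formal skew-self-adjointness of $\Omega_\oo$ under the BV--BFV gluing pairing reduces this to $\pm\int\DD\overline{\rho}_\oo\,Z\,(\Omega_\oo\psi) = 0$. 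The weakness of this heuristic is exactly the formality of the integral \eqref{eq:Gintegral} defining $Z'$, which is why the ``However'' calls for a direct verification on the explicit output.

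I would therefore verify the claim directly on \eqref{eq:Zprime2}--\eqref{eq:Zprime3}. Writing $Z' = e^{\frac{i}{\hbar}S'}\,\delta(\As^{3,0}_\ii + \dd \As^{2,0}_\Ires - \omega_0)$ with $S' = S'_\ph + S'_\gh$, I would organize $(\Omega_\ii - \hbar^2\Delta_\res)Z' = 0$ by powers of $\hbar$ exactly as in \eqref{eq: mQME parallel ghosts}. The multiplication operator $\Omega^{(0)}_\ii$ contributes at order $\hbar^0$; the operator $\Omega^{(1)}_\ii$ carries an explicit $-i\hbar$ and contributes $(\Omega^{(1)}_\ii S')$ at order $\hbar^0$ together with genuinely $O(\hbar)$ pieces; and $-\hbar^2\Delta_\res e^{\frac{i}{\hbar}S'}$ produces $\frac12\{S',S'\}_\res$ at order $\hbar^0$ and $-i\hbar\,\Delta_\res S'$ at order $\hbar$.

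At order $\hbar^0$ the task is the ``classical mQME'' $\Omega^{(0)}_\ii + (\Omega^{(1)}_\ii S')^{(0)} + \frac12\{S',S'\}_\res = 0$, which is precisely the Hamilton--Jacobi property that $S'_\ph$ generates the evolution relation $d_M\As = 0$: the variation in the residual fields of the cubic Hitchin term $\langle((\As^{2,1}_\ii + \bar\dd \As^{2,0}_\Ires + \dd \As^{1,1}_\Ires)^\vee)^3\rangle\omega_0\overline{\omega}_0$ reproduces the nonlinear part of the curvature constraint, while the quadratic and coupling terms give its linear part. The factor $\delta(\As^{3,0}_\ii + \dd \As^{2,0}_\Ires - \omega_0)$ is used to evaluate $\As^{3,0}_\ii$ on shell, and the distributional identity $x\,\delta'(x) = -\delta(x)$ handles the action of $\frac{\delta}{\delta \As^{3,0}_\ii}$ in $\Omega^{(1)}_\ii$ on the $\delta$; as stressed in Remark \ref{rem:mQMEnonlin}, the \emph{standard} quantization suffices here precisely because the boundary action is linear in the Hitchin momenta of \eqref{eq:defHitchinPol}. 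At order $\hbar^1$ I would check $\Delta_\res S' = 0$: as in the parallel-ghost computations, the residual-field components are never paired with their own antifields in $S'$ (the pairing is shifted by $d_M$), and consistently every Feynman graph entering $Z'$ is a tree, so there are no loop corrections; the residual $O(\hbar)$ pieces, arising when $\Omega^{(1)}_\ii$ differentiates $S'$ a second time or hits the $\delta$, must then be shown to cancel.

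The technical heart, and the main obstacle, is the order-$\hbar^0$ step: reorganizing the functional derivatives of the nonlinear generating function \eqref{eq:genfctmain}, together with the distributional derivatives of the on-shell $\delta$-functions, into the full curvature so that everything cancels against $\Omega^{(0)}_\ii$ and $\frac12\{S',S'\}_\res$. The bookkeeping of the $A^\vee$ operation and of the signs in the odd pairing is where errors are most likely, so this is the part I would carry out most carefully; the ghost sector \eqref{eq:Zprime3}, by contrast, is linear and reduces to the already-established parallel-ghost identities.
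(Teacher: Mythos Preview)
Your proposal is sound in spirit---the formal gluing argument followed by a direct $\hbar$-graded verification is a legitimate strategy, and you correctly flag that the ghost sector reduces to the linear parallel-ghost case already treated. However, the paper's proof is organized around a structural shortcut you miss. Rather than expanding the cubic Hitchin term and invoking distributional identities like $x\,\delta'(x)=-\delta(x)$, the paper observes that both the delta function and the cubic term depend on the boundary and residual fields only through the combinations $\As^{3,0}_\ii + \dd\As^{2,0}_\Ires$ and $\As^{2,1}_\ii + \bar\dd\As^{2,0}_\Ires + \dd\As^{1,1}_\Ires$, and that any function of these combinations is automatically annihilated by $\big(i\hbar\{S_\gh,\bullet\}_\res - \Omega^{(1)}_\ii\big)$, since on ghost number~$0$ one has $\{S_\gh,\bullet\}_\res = \int_M \As^{i,j}_\ii\,\frac{\delta}{\delta\As^{i,j}_\Ires}$. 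This makes the nonlinear pieces trivially closed, leaving only the quadratic exponential terms to check by hand---a much lighter computation than the one you anticipate as the ``technical heart.'' Your framing of the $\hbar^0$ step as ``precisely the Hamilton--Jacobi property that $S'_\ph$ generates the evolution relation'' is also slightly off: the HJ property pertains to the original $S_\eff$ before pairing with $\psi$, whereas $S'$ is the result after pairing, so the classical mQME for $S'$ is a genuinely separate (if related) check rather than a restatement of HJ.
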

This is expected because $\psi$ satisfies $\Omega_\oo\psi = 0$ and $Z$ satisfies the mQME. See also the discussions of gluing in \cite[Sections 11.4,12.2]{HJ}.
The interpretation of this Lemma is that $Z'$ is a valid state in the  linear polarization. 
\begin{proof}
We will only prove the claim in ghost number 0, since in positive ghost number the effective action is the same as for linear polarizations. To begin, we note that any function of $\As^{3,0}_\ii + \dd \As^{2,0}_\Ires$ or $\As_\ii^{2,1} + \bar\dd\As^{2,0}_\Ires + \dd\As^{1,1}_\Ires$ is $( 
i \hbar\{S_\gh,\bullet\}_\res - \Omega^{(1)}_\ii)$-closed since $\{S_\gh,\bullet\}_\res\big|_{\gh =0} = \int_M\As^{i,j}_\ii\frac{\delta}{\delta \As^{i,j}_\Ires}$.
This implies that the delta function in \eqref{eq:Zprime2} and the last term in  \eqref{eq:Zprime3} are $\hbar^2\Delta_\res - \Omega^{(1)}$ closed. It is a straightforward check that the remaining exponential terms are $(\hbar^2\Delta_\res - \Omega^{(0)} - \Omega^{(1)})$-closed, which concludes the proof. \end{proof}
We will now argue that formally integrating out the residual fields, in ghost number 0 we obtain the Kodaira--Spencer action.  Let us restrict to the gauge-fixing lagrangian $\LL$ defined similarly to \eqref{eq:GFlag7d}, but given in ghost number 0 by $\dd^*$-exact 2-forms.  We will denote 
\begin{equation*}
Z''[\As^{3,0}_\ii,\As^{2,1}_\ii,\As^{[>0]}_\ii] = \int_\LL Z'[\As^{+,\mr{l}}_\ii,\As^{[>0]}_\ii,\As^{\geq 4}_\res,\As^{\leq 2}_\Ires] .
\end{equation*}
The modified quantum master equation implies that for a $(2,0)$-form $\chi$ one has
$$ Z'[\As^{+,\mr{l}}_\ii +d \chi,\As^{[>0]}_\ii,\As^{\geq 4}_\res,\As^{\leq 2}_\Ires]  = Z'[\As^{+,\mr{l}}_\ii 
,\As^{[>0]}_\ii,\As^{\geq 4}_\res,\As^{\leq 2}_\Ires - \chi].$$
By a change of variables, this implies 
\begin{equation*}
Z''[\As^{3,0}_\ii +\dd\chi,\As^{2,1}_\ii+\bar\dd\chi,\As^{[>0]}_\ii] = Z''[\As^{3,0}_\ii,\As^{2,1}_\ii,\As^{[>0]}_\ii] .
\end{equation*}
 We can use this property to 
 reduce the computation of $Z''$ to the case
 $\As^{3,0} = \rho_0\omega_0$, where $\rho_0$ is a constant. The $\delta$ function in $Z'_\ph$ then factorizes as $\delta(\rho_0 - 1)\delta( \dd A^{2,0}_\Ires)$. Since $\dd$ is an isomorphism on the gauge-fixing lagrangian, the integral over $\As^{2,0}_\Ires$ gives
 \begin{multline*} Z''_\ph = \int\DD A^{0,2}_\Ires A^{1,1}_\Ires \delta(\rho_0 -1) \exp\frac{i}{\hbar} \bigg(-\frac12 \int_M \dd \As^{1,1}_\Ires \bar\dd \As^{1,1}_\Ires  +\\
+\int_M \As^{2,1}_\ii\dd \As^{0,2}_\Ires + \As^{2,1}_\ii\bar\dd \As^{1,1}_\Ires  
- \langle( (\As_\ii^{2,1} + \dd \As^{1,1}_\Ires)^\vee)^3\rangle\omega_0\overline{\omega}_0\bigg) 
\end{multline*}
and the integral over $A^{0,2}_\Ires$ then gives
\begin{multline*}
Z''_\ph = \int\DD \As^{1,1}_\Ires \delta(\rho_0 -1)\delta(\dd \As_\ii^{2,1})\cdot\\
\cdot \exp\frac{i}{\hbar} \bigg( \int_M \frac12\dd \As^{1,1}_\Ires \bar\dd \As^{1,1}_\Ires  
+ \As^{2,1}_\ii\bar\dd \As^{1,1}_\Ires -\langle( (\As_\ii^{2,1} + \dd \As^{1,1}_\Ires)^\vee)^3\rangle\omega_0\overline{\omega}_0\bigg).
\end{multline*} 
Finally, writing $\As^{2,1}_\ii$ in the Hodge decomposition $\As^{2,1}_\ii = x + \dd\lambda + \dd^*\tau$, we obtain by another change of variables\footnote{We can choose $\lambda$ such that $\dd^*\lambda = 0$, so that $b \in \LL \cap \Omega^{1,1}(M)$.} $b = \As^{1,1}_\Ires + \lambda$ the expression
\begin{multline}
Z''_\ph[\rho_0,x,\lambda,\tau] = \delta(\rho_0 -1)\delta(\dd\dd^*\tau)\cdot \\
\cdot \int_{\LL \cap \Omega^{1,1}(M)}\DD b\;\exp\frac{i}{\hbar}\int_M\left(-\frac12\dd\lambda\bar\dd\lambda +\frac12\dd b \bar\dd b+ \bar\dd b \dd\lambda + \frac16\langle (x+\dd b), (x + \dd b), (x + \dd b) \rangle\right) , \label{eq:finalZprimeprime}
\end{multline}
which coincides with eq.\ (2.50) in \cite{GS}. 
Thus, we see that the Chern--Simons partition function on a cylinder, paired with the state \eqref{eq:weirdstate}, coincides with the partition function of Kodaira--Spencer theory with background $x$ and action functional given in \eqref{eq:KSaction2} for $\lambda = 0$. The latter integral can be evaluated perturbatively in terms of Feynman graphs and rules. It would be interesting to compare our results to other constructions of the BCOV theory, such as in \cite{CL}.

\begin{remark}[On gauge invariance of $Z''$]
If one uses formally the properties of the BV integral, it is immediate that the $Z''$ gives a class in $\Omega_\ii$-cohomology independent of the gauge-fixing lagrangian $\LL$.\footnote{To give a rigorous proof of this statement would require to give a strict interpretation of the functional integral in \eqref{eq:finalZprimeprime}, which is beyond the scope of the present paper. It should be noted that the restriction of $b$ to the subspace $\dd^*b = 0$ is a gauge-fixing condition for the KS theory, so one should consider also the the gauge independence of $Z_{KS}$.}  We see here that this cohomology class has a representative given in terms of the KS partition function. The partition function $Z''$ can be also interpreted as the BV-BFV partition function on the cylinder paired with the state $\psi$ at the $\oo$-boundary, with \emph{all} fields integrated out using an axial-type gauge (the components of the gauge field involving $dt$ are set to zero). Another open question is how $Z''$ behaves when we deform away from this type of gauge to a general gauge fixing on the cylinder (say, one given by a Riemannian metric). This is a subject of ongoing research. \end{remark}
%

\appendix

\section{Segal--Bargmann transform via BV-BFV}\label{app:SB}
Recall (see \cite{Hall} for details) that the Segal--Bargmann space $\mc{H}^\mr{SB}$ is the Hilbert space of holomorphic functions $\psi(z)$ on $\CC$ satisfying
$$\int_\CC \frac{i}{4\pi\hbar} dz\,d\bar{z} \; e^{-\frac{|z|^2}{2\hbar}} \;\overline{\psi(z)}\psi(z)<\infty$$
(here we assume that $\hbar$ is a fixed positive number), equipped with inner product
\begin{equation}\label{SB inner product}
\langle \psi_1,\psi_2 \rangle =\int_\CC \frac{i}{4\pi\hbar } dz\,d\bar{z} \; e^{-\frac{|z|^2}{2\hbar}} \;\overline{\psi_1(z)}\psi_2(z).
\end{equation}
The Segal--Bargmann space is isomorphic to the Hilbert space $L^2(\RR)$ of square-integrable functions on $\RR$, with the unitary isomorphism $L^2(\RR)\ra \mc{H}^\mr{SB}$ given by the Segal--Bargmann transform:
\begin{equation}\label{SB transform}
\chi(x)\mapsto 
\psi(z)=(\pi\hbar)^{-\frac14}\int_\RR dx\; e^{-\frac{1}{\hbar}\left(\frac{z^2}{4}-zx+\frac{x^2}{2}\right)}\; \chi(x).
\end{equation}

Now we would like to show how the transformation (\ref{SB transform}) can be seen as the partition function for topological quantum mechanics on an interval with appropriate boundary polarizations.

Consider topological quantum mechanics on the interval $I$ parametrized by $0\leq t\leq 1$ --- the theory with $0$-form fields $x,p\in \Omega^0(I)$ and action 
\begin{equation}\label{S=pdx}
S=\int_I p\, d x .
\end{equation}
In the BV-BFV formalism, we adjoin the anti-fields $x^*,p^*\in \Omega^1(I)$ --- $1$-form fields carrying ghost number $-1$ (while $x,p$ carry ghost number $0$), so that the odd symplectic form on BV fields is: $\int_I \delta x\wedge \delta x^* + \delta p\wedge \delta p^*$. The BFV phase space assigned to a point $\mr{pt}^\pm$ (where $\pm$ is the orientation) is: $\PhiPM=\mathbb{R}^2\; \ni (x,p)$ with the Noether $1$-form $\alpha_{\mr{pt}^\pm}=\pm p\,\delta x$ and vanishing BFV action $S_\mr{pt}=0$.

Alongside the real coordinates $x,p$ on the phase space, we will consider the complex coordinates $z=x-ip$, $\bar{z}=x+ip$.  The symplectic structure on the phase space is $\omega_{\mr{pt}^\pm}=\delta\alpha_{\mr{pt}^\pm}=\pm\delta p\wedge\delta x$. Written in complex coordinates it has the form $\mp\frac{i}{2}\delta\bar{z}\wedge \delta z$.

Consider the polarization $\mr{Span}\{\frac{\dd}{\dd p}\}$ (i.e., $x$ fixed) at $t=0$ and the polarization $\mr{Span}\{\frac{\dd}{\dd\bar{z}}\}$ (i.e., $z$ fixed) at $t=1$. The corresponding modification of the action (\ref{S=pdx}) by a boundary term is:
\begin{equation}
S^\pol=S\underbrace{-\big(\frac{i}{4}x^2+\frac12 xp +\frac{i}{4} p^2 \big)}_{f}\Big|_{t=1}
\end{equation}
--- this boundary term is chosen so that one has $-\frac{i}{2}\bar{z}\delta z= p\delta x+\delta f$. Thus, the corresponding boundary Noether $1$-form is:
\begin{equation}
\alpha_{\dd I}^\pol=-\frac{i}{2}\bar{z}\,\delta z\big|_{t=1} - p\,\delta x \big|_{t=0}
\end{equation}
--- it vanishes along the chosen polarization, as desired.

Next consider the following splitting of the (complexified) phase space
$$\PhiPM_\CC=\g^+\oplus\g^- , $$
where $\g^+$ is parametrized by $z$ and $\g^-$ is parametrized by $x$ (we are borrowing the notations from (\ref{g=g+ + g-}) 
here). The space of fields $\FF=\Omega^\bt(I,\g^+)\oplus \Omega^\bt(I,\g^-)$ is fibered over $\BB\ni (z_\oo,x_\ii)$ with the fiber
\begin{multline*}
\YY= \Omega^\bt(I,\{1\};\g^+)\oplus \Omega^\bt(I,\{0\};\g^-)=\\
=
\underbrace{\big(\Omega^0(I,\{1\};\g^+)\oplus \Omega^0(I,\{0\};\g^-)\big)}_{\YY'_{K-ex}}\bigoplus
\underbrace{\big(\Omega^1(I;\g^+[-1])\oplus \Omega^1(I;\g^-[-1])\big)}_{\YY'_{d-ex}} .
\end{multline*}
This is an acyclic complex, and thus we can choose the space of residual fields to be zero (cf.\ Remark \ref{rem: 1d ab cs ahol-hol full integral}). The corresponding propagator --- the integral kernel of the chain contraction $K$ --- is:
$$ \eta(t,t')= -\pi^+\otimes \theta(t'-t) + \pi^-\otimes \theta(t-t') . $$

The BV-BFV partition function is then given by the following path integral
\begin{multline}\label{Z SB}
Z(z_\oo,x_\ii)=\int \D z_\fl\;\D x_\fl\; e^{\frac{i}{\hbar}S^\pol(\til{z_\oo}+z_\fl,\til{x_\ii}+x_\fl)}\\
=\int \D z_\fl\;\D x_\fl\;e^{\frac{i}{\hbar} \Big(i\int_I z_\fl\,dx_\fl+\big(-ix_\fl(1)z_\oo+\frac{i}{4}z^2_\oo\big)
+\big(-iz_\fl(0)\,x_\ii +\frac{i}{2}x_\ii^2\big)
\Big) }\\
=e^{-\frac{1}{\hbar}\big(\frac{z^2_\oo}{4}-z_\oo x_\ii +\frac{x^2_\ii}{2}\big)} .
\end{multline}
Here we have a contribution from the Feynman diagram with single propagator connecting $z_\oo$ and $x_\ii$.  In (\ref{Z SB}) we recognize the integral kernel of the Segal--Bargmann transform (\ref{SB transform}). This is, of course, to be expected: the partition function for a cylinder (in this case, an interval), with polarization $\mc{P}_1$ on the in-boundary and polarization $\mc{P}_2$ on the out-boundary, maps $\mc{P}_1$-states to $\mc{P}_2$-states.

\begin{remark}
Note that the Hamilton--Jacobi action for the theory (\ref{S=pdx}) on the interval with our choice of in/out polarizations is: 
$$S_\text{HJ}=S^\pol\big(x(t)=x_\ii,z(t)=z_\oo\big)\quad = \quad 
\frac{i}{4}z_\oo^2-iz_\oo x_\ii +\frac{i}{2} x_\ii^2 .
$$
Here we again recognize the expression in the exponent in (\ref{SB transform}).
\end{remark}

Finally, the measure $e^{-\frac{|z|^2}{2\hbar}}$ in (\ref{SB inner product}) from the BV-BFV standpoint originates from the gluing of intervals (more precisely, from gluing the out-end of one interval with $z$-fixed polarization to the in-end of another interval with $\bar{z}$-fixed polarization). Indeed, consider the theory (\ref{S=pdx}) on the interval $I=[t_0,t_1]$ with some polarization $\mc{P}$ at $t_0$ and $z$-fixed polarization at $t_1$, and also the same theory on the interval $I'=[t_1,t_2]$, with some polarization $\mc{P}'$ at $t_2$ and with $\bar{z}$-fixed polarization at $t_1$. 
$$
\includegraphics[scale=0.7]{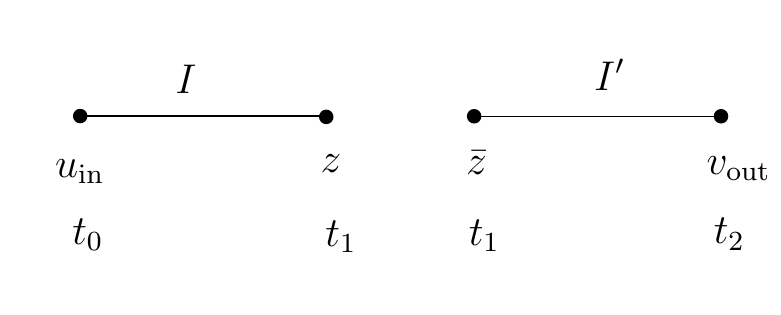}
$$
The respective actions including the boundary terms adjusting for the polarization are
$$ S^\pol_{I}=-\frac{i}{2}\int_{I} \bar{z} dz + f^\mc{P}\big|_{t_0}\quad,\quad
S^\pol_{I'}=-\frac{i}{2}\int_{I'} \bar{z} dz -\frac{i}{2} z\bar{z}\big|_{t_1} + f^\mc{P'}\big|_{t_2} ,
$$
with $f^\mc{P},f^{\mc{P}'}$ the appropriate\footnote{I.e., chosen in such a way that $\frac{i}{2}\bar{z}\delta z+\delta f^\mc{P}$ vanishes along $\mc{P}$ and $-\frac{i}{2}\bar{z}\delta z+\delta f^{\mc{P}'}$ vanishes along $\mc{P}'$.} boundary terms at $t_0$, $t_2$. It follows that the partition function for the glued interval $I\cup I'=[t_0,t_2]$ is:
\begin{multline}\label{SB measure from gluing}
Z_{I\cup I'}(v_\oo, u_\ii) = \int \D z(t)\,\D\bar{z}(t)\; e^{\frac{i}{\hbar}(S^\pol_I+\frac{i}{2}z\bar{z}\big|_{t_1}+S^\pol_{I'})} =\\
= \int_\CC \frac{i\, dz\, d\bar{z}}{4\pi\hbar }\; Z_{I'}(v_\oo,\bar{z})\; e^{-\frac{z\bar{z}}{2\hbar}}  \; Z_I (z,u_\ii) .
\end{multline}
In the first integral, the term $\frac{i}{2}z\bar{z}\big|_{t_1}$ compensates the boundary term of $S^\pol_{I'}$ at $t_1$.  The final integral is over the values of $z,\bar{z}$ at $t=t_1$. Also, we denoted $u_\ii$ a coordinate parametrizing the space of leaves of the polarization $\mc{P}$ and similarly $v_\oo$ a coordinate paramterizing the space of leaves of $\mc{P}'$. In (\ref{SB measure from gluing}), we see the Segal--Bargmann measure $\frac{i\,dz\,d\bar{z}}{4\pi\hbar}\,e^{-\frac{|z|^2}{2\hbar}}$ , cf.\ (\ref{SB inner product}), appearing.  
The normalization factor $\frac{i}{4\pi\hbar}$ is chosen in such a way that, choosing $\mc{P}'=\mr{Span}\{\frac{\dd}{\dd \bar{z}}\}$, we have 
$$Z_{I\cup I'}(z_\oo,u_\ii)=Z_I(z_\oo,u_\ii) , $$
which in turn follows from the identity
$$ \int \frac{i}{4\pi\hbar} dz\, d\bar{z}\; \underbrace{e^{\frac{z_\oo \bar{z}}{2\hbar}} }_{Z_{I'}(z_\oo,\bar{z})} e^{-\frac{|z|^2}{2\hbar}}\Psi(z) = \Psi(z_\oo) $$
true for any holomorphic function $\Psi(z)$ (for which the l.h.s. converges), applied to $\Psi(z)=Z_I(z,u_\ii)$.

\subsection{Aside: contour integration in the complexified space of fields, a lattice toy model}
\label{App: real contour}
Throughout the paper we are dealing with complexified phase spaces (so that we can impose the convenient holomorphic/antiholomorphic polarizations) and complexified spaces of fields where the path integral should be understood as an integral over a real contour. 

A toy model is provided by topological quantum mechanics $S=\int_I pdx$, as above. The phase space for a point is $\Phi=\RR^2$. We consider the model with boundary polarization $\mr{Span}\left\{ \frac{\dd}{\dd z} \right\}$ (i.e., $\bar{z}$ fixed) at $t=0$ and $\mr{Span}\left\{ \frac{\dd}{\dd \bar{z}} \right\}$ (i.e., $z$ fixed) at $t=1$; these polarizations are defined 
on the complexified phace space $\Phi_\CC=\CC^2$. The action modified by the appropriate boundary term is $S^f=\int_I (-\frac{i}{2}\bar{z} dz) -\frac{i}{2} z\bar{z}\big|_{t=0} $.

The path integral for this model can be presented by a lattice approximation (which happens to be exact):
\begin{multline}\label{lattice contour integral}
Z(z_\oo=z_N,\bar{z}_\ii=\bar{z}_0)=\int_{C\subset Y_\CC}  \prod_{k=1}^{N-1} \frac{i\, dz_k d\bar{z}_k}{4\pi\hbar}\;\;  e^{  \frac{1}{2\hbar}  
\big(\sum_{k=1}^N(z_k-z_{k-1})\bar{z}_{k-1}+z_0 \bar{z}_0\big)} \\
= \int_{C\subset Y_\CC}  \prod_{k=1}^{N-1} \frac{i\, dz_k d\bar{z}_k}{4\pi\hbar}\;\;  e^{\frac{1}{2\hbar} \big( 
z_1 \bar{z}_0 - z_1 \bar{z}_1 + z_2 \bar{z}_1 - z_2\bar{z}_2 +\cdots - z_{N-1} \bar{z}_{N-1}+ z_N \bar{z}_{N-1}
\big)} .
\end{multline}
Here: 
\begin{itemize}
\item We understand that the interval $I=[0,1]$  is partitioned   into $N\geq 1$ smaller intervals $[t_0=0,t_1]$, $[t_1,t_2]$, \ldots, $[t_{N-1},t_N=1]$.  
\item The space $Y_\CC =(\CC^2)^{N-1}$  --- the 
fiber of the complexified space of fields over boundary conditions --- is the product of complexified phase spaces corresponding to $t_1,\ldots,t_{N-1}$. In particular, we understand that $z_k$ and $\bar{z}_k$ are independent complex variables: they do not have to be complex conjugates of each other.
\item The integration is over a ``contour'' $C\subset Y_\CC$ --- a real $2(N-1)$-dimensional subspace. In particular, the integrand in (\ref{lattice contour integral}) is a holomorphic $2(N-1)$-form on $Y_\CC$ pulled back to $C$ by the inclusion.
\end{itemize}

For the contour $C$, we can consider the following two examples:
\begin{enumerate}[(i)]
\item Contour $C_1$ given by $\bar{z}_k=z_k^*$, $k=1,\ldots,N-1$, where $*$ is the complex conjugation.
\item Contour $C_2$ given by reality conditions $z_k\in \RR\subset\CC$, $\bar{z}_k\in i\RR\subset \CC$.
\end{enumerate}
For $C=C_1$, (\ref{lattice contour integral}) is an absolutely convergent Gaussian integral (for arbitrary boundary conditions) and yields
$$ Z(z_\oo, \bar{z}_\ii) = e^{   \frac{1}{2\hbar}  
\bar{z}_\ii z_\oo} . $$ 
For $C=C_2$, (\ref{lattice contour integral}) is an oscillatory Fresnel integral which is only conditionally convergent and even that only under special assumptions on the boundary conditions ($\bar z_\ii\in i\RR, z_\oo\in \RR$).\footnote{
Absolute/conditional convergence property is particularly clear in the simplest case $N=2$: here (\ref{lattice contour integral}) becomes $\int_{C\subset \CC^2} \frac{i\, dz_1 d\bar{z}_1}{4\pi\hbar}\; e^{\frac{1}{2\hbar}(z_1 \bar{z}_0 -z_1 \bar{z}_1 +z_2 \bar{z}_1)}$.
}
When the integral over $C_2$ converges, its value coincides with the result of integration over $C_1$ (which is clear, e.g., from a contour deformation argument).

In summary, we have the complexified space of fields of the lattice theory $F^I_{\CC}\ni (\bar{z}_0,z_1,\bar{z}_1,\ldots,z_{N-1},\bar{z}_{N-1},z_N)$ fibered over the complex space of boundary conditions $\mc{B}^{\dd I}_\CC \ni \{\bar{z}_\ii,z_\oo\}$ with complex fiber $Y_\CC$ (lattice fields with zero boundary conditions), and the integration in the lattice path integral (\ref{lattice contour integral}) is over a contour $C\subset Y_\CC$ --- a half-dimensional real submanifold.

\section{Kodaira--Spencer theory}\label{app:KS}
We briefly review the definition of the Kodaira--Spencer action functional that was introduced in \cite[Section 5]{BCOV}, where it was used to analyze the target space physics of the $B$-model. See also \cite[Section 2.1]{GS}.  
\subsection{Some operations on complex forms}
Let $M$ be a 6-dimensional Calabi--Yau manifold with a reference holomorphic 3-form $\omega_0$ (sometimes the pair $(M,\omega_0)$ is called a gauged Calabi--Yau manifold). We denote by $\Omega^{p,q}(M)$ complex forms of Hodge type $(p,q)$ --- sections of the bundle $\wedge^p (T_\CC^*M)^{1,0} \otimes \wedge^q (T^*_\CC M)^{0,1}$ and by $\Omega^{-p,q}(M)$ sections of the bundle $\wedge^p (T_\CC M)^{1,0} \otimes \wedge^q(T^*_\CC M)^{0,1}$, i.e., $(0,q)$-forms with values in $(p,0)$-vector fields. Contraction with the reference holomorphic 3-form provides a map 
\begin{align*}
\Omega^{-p,q}(M) &\to \Omega^{3-p,q}(M) \\ 
A &\mapsto A^\vee = A\omega_0
\end{align*}
(we omit the symbols for wedge products and contractions). For a $(p,q)$-form $A$ with $p\geq 0$, we set $A^\vee = A\omega_0^{-1} \in \Omega^{p-3,q}$, in particular we have $(A^\vee)^\vee = A$.  For $A,B,C \in \Omega^{-1,1}(M)$, we further define the operations
\begin{align}
A^\vee \circ B^\vee & = (AB)^\vee= (AB)\omega_0 &\in \Omega^{1,2}(M), \notag \\
\langle A^\vee,B^\vee,C^\vee\rangle &= A^\vee(B^\vee\circ C^\vee) = A^\vee(BC)\omega_0 & \in \Omega^{3,3}(M), \notag \\
\langle A^3\rangle &= -\frac16 \frac{\langle A^\vee,A^\vee,A^\vee\rangle}{\omega_0\overline{\omega}_0} = \frac16 (A^3\omega_0)(\ol{\omega}_0)^{-1} & \in\Omega^{0,0}(M), \label{eq:mucube}
\end{align} 
and the same operations make sense for $\bar A,\bar B, \bar C\in \Omega^{1,-1}(M)$ if we replace $\omega_0$ by $\bar\omega_0$. The minus sign in \eqref{eq:mucube} ensures that $\langle A^3\rangle\ol\omega_0 = \frac16 A^3\omega_0$. Also, in the last equation we made use of the fact that one can divide by sections of a line bundle. 
By a lemma of Tian \cite{Tian}, if $\dd A = \dd B = 0$, we have 
\begin{equation}
[A,B]^\vee = \dd(A^\vee \circ B^\vee). \label{eq:Tian}
\end{equation}
\subsection{The generating function for Hitchin polarization} \label{app:genfctHitchin}
For completeness, we include here a derivation of the generating function \eqref{eq:genfctmain} for the transformation from the linear polarization to the nonlinear polarization. 
A complex 3-form $A$ has  decompositions $A = A^{+,\mr{l}} + A^{-,\mr{l}}$, with $A^{+,\mr{l}} \in \Omega^{3,0}(M) \oplus \Omega^{2,1}(M)$, and $A^{-,\mr{l}} \in \Omega^{1,2}(M) \oplus \Omega^{0,3}(M)$, and 
$$A = A^{+,\nl} + A^{-,\nl} , $$
where $A^{+,\nl}$ and $A^{-,\nl}$ are decomposable complex 3-forms. The $3$-form $A$ is called \emph{nondegenerate} if $A^{+,\nl} \wedge A^{-,\nl}$ is everywhere nonvanishing. We parametrize
\begin{align*}
A^{+,\nl} &= \rho e^\mu \omega_0 , \\
A^{-,\nl} &= \overline{\rho} e^{\overline{\mu}}\overline{\omega}_0 ,
\end{align*}
where $\rho,\ol{\rho} \in \Omega^0_\CC(M)$, $\mu \in \Omega^{-1,1}(M)$, $\ol{\mu}\in \Omega^{1,-1}(M)$ and 
\begin{equation} \rho e^\mu\omega_0 = \rho\left(\omega_0 + \mu \omega_0 + \frac{\mu^2}{2}\omega_0 + \frac{\mu^3}{6}\omega_0 \right) .
\label{eq:nlpoldef}\end{equation}
 To write the generating function from the linear to the nonlinear polarization, we use \begin{equation}
G(A^{+,\mr{l}},A^{-,\nl}) = 
\frac12 \int_M A^{+,\mr{l}}A^{-,\mr{l}} - A^{-,\nl}A^{+,\mr{nl}} .
\label{eq:G1}
\end{equation}
\begin{lemma}
In the variables $A^{3,0},A^{2,1},\ol\rho,\ol\mu$, the generating function is given by 
\begin{multline}
G(A^{3,0},A^{2,1},\overline{\rho},\overline{\mu}) =\\
=\int_M \overline{\rho}( A^{3,0}\overline{\omega}_0 + A^{2,1}\overline{\mu}\,\overline{\omega}_0) + \overline{\rho}^2\langle\overline{\mu}^3\rangle\omega_0\overline{\omega}_0 - 
\frac{\left\langle\left( (A^{2,1} - \frac12\overline{\rho}\,\overline{\mu}^2\overline{\omega}_0)^\vee\right)^3\right\rangle}{(A^{3,0})^\vee - \overline{\rho}\langle\overline{\mu}^3\rangle}\omega_0\overline{\omega}_0 . \label{eq:genfctapp}
\end{multline}
\end{lemma}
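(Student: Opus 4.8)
The plan is to take the coordinate-free expression \eqref{eq:G1} for $G$ and rewrite it purely in the variables $A^{3,0},A^{2,1},\ol\rho,\ol\mu$ by substituting the parametrization \eqref{eq:nlpoldef} and eliminating $\rho$ and $\mu$. The first step is to read off the Hodge components of the nonlinear data. Expanding $A^{+,\nl}=\rho e^\mu\omega_0$ and $A^{-,\nl}=\ol\rho e^{\ol\mu}\ol\omega_0$ and using the two instances $\tfrac16\mu^3\omega_0=\langle\mu^3\rangle\ol\omega_0$ and $\tfrac16\ol\mu^3\ol\omega_0=\langle\ol\mu^3\rangle\omega_0$ of \eqref{eq:mucube}, one obtains
\begin{align*}
A^{3,0}&=\bigl(\rho+\ol\rho\,\langle\ol\mu^3\rangle\bigr)\omega_0, & A^{2,1}&=\rho\mu\,\omega_0+\tfrac12\ol\rho\,\ol\mu^2\ol\omega_0,\\
A^{1,2}&=\tfrac12\rho\mu^2\omega_0+\ol\rho\,\ol\mu\,\ol\omega_0, & A^{0,3}&=\bigl(\rho\,\langle\mu^3\rangle+\ol\rho\bigr)\ol\omega_0.
\end{align*}
Applying $(-)^\vee$ to the first two entries inverts the parametrization,
\[
\rho=(A^{3,0})^\vee-\ol\rho\,\langle\ol\mu^3\rangle,\qquad \rho\mu=\bigl(A^{2,1}-\tfrac12\ol\rho\,\ol\mu^2\ol\omega_0\bigr)^\vee,
\]
which are precisely the denominator and (the base of the) numerator appearing in the cubic term of \eqref{eq:genfctapp}.

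\textbf{Evaluation of the two integrals.} Inserting these components into \eqref{eq:G1}, I would compute the linear part using degree counting, $A^{+,\mr{l}}\wedge A^{-,\mr{l}}=A^{3,0}\wedge A^{0,3}+A^{2,1}\wedge A^{1,2}$. The first summand is immediate, $A^{3,0}\wedge A^{0,3}=(A^{3,0})^\vee\bigl(\rho\langle\mu^3\rangle+\ol\rho\bigr)\omega_0\ol\omega_0$. The second is the step that produces the cubic monomial: writing $C^\vee=\mu\omega_0$, so that $\tfrac12\mu^2\omega_0=\tfrac12\,C^\vee\circ C^\vee$, the pure-$\mu$ cross-term is $\rho\mu\omega_0\wedge\tfrac12\rho\mu^2\omega_0=\tfrac12\rho^2\,C^\vee\wedge(C^\vee\circ C^\vee)=\tfrac12\rho^2\langle C^\vee,C^\vee,C^\vee\rangle=-3\rho^2\langle\mu^3\rangle\,\omega_0\ol\omega_0$ by the definition of $\langle\,\cdot\,,\cdot\,,\cdot\,\rangle$ in Appendix~\ref{app:KS}, with the symmetric $\ol\mu$-contribution treated identically. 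For the nonlinear part $A^{-,\nl}\wedge A^{+,\nl}$ I would expand both exponentials and retain only the four products landing in bidegree $(3,3)$, evaluating each with the same $\vee/\circ/\langle\cdot\rangle$ calculus while carefully tracking the Koszul sign $\ol\omega_0\wedge\omega_0=-\omega_0\wedge\ol\omega_0$. Collecting everything and using the inversion relations, the cubic pieces coming from $A^{3,0}\wedge A^{0,3}$, from $A^{2,1}\wedge A^{1,2}$ and from the nonlinear term combine into $-\rho^2\langle\mu^3\rangle\,\omega_0\ol\omega_0=-\tfrac{\langle((A^{2,1}-\frac12\ol\rho\,\ol\mu^2\ol\omega_0)^\vee)^3\rangle}{(A^{3,0})^\vee-\ol\rho\langle\ol\mu^3\rangle}\,\omega_0\ol\omega_0$, while the quadratic remainder assembles into $\ol\rho\bigl(A^{3,0}\ol\omega_0+A^{2,1}\ol\mu\,\ol\omega_0\bigr)+\ol\rho^2\langle\ol\mu^3\rangle\omega_0\ol\omega_0$, which is exactly \eqref{eq:genfctapp}.

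\textbf{Main obstacle and a cross-check.} The conceptual content is light: everything is pointwise multilinear algebra on $\wedge^\bullet T^*_\CC M$ with no derivatives, so Tian's lemma \eqref{eq:Tian} is not needed here. The real work is the bookkeeping, namely tracking which bidegrees survive, the Koszul signs under reordering of $3$-forms, the factorials $\tfrac12,\tfrac16$ from the exponential series, and the correct identification of each surviving wedge product with a $\langle\,\cdot\,\rangle$, $\circ$ or $\vee$ expression. I expect the delicate point to be the cancellation among the several $(3,3)$-forms built from three copies of $\mu$ (or of $\ol\mu$), where the coefficients from the three sources must add up to the single cubic monomial in \eqref{eq:genfctapp}; as a warm-up one can first check the case $\ol\mu=0$, where the cubic term reduces to $-\rho^2\langle\mu^3\rangle\omega_0\ol\omega_0$ and the remaining terms collapse to $\int_M\ol\rho A^{3,0}\ol\omega_0$. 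As an independent verification of the final formula I would confirm that \eqref{eq:genfctapp} satisfies the defining generating-function relation $\delta G=A^{-,\mr{l}}\,\delta A^{+,\mr{l}}-A^{+,\nl}\,\delta A^{-,\nl}$: differentiating the compact expression and using the inversion relations must reproduce the remaining components $A^{1,2},A^{0,3}$ and $A^{+,\nl}$, which certifies that $G$ indeed generates the transformation from the linear to the Hitchin polarization.
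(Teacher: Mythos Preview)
Your proposal is correct and follows essentially the same route as the paper: extract the Hodge components of the nonlinear parametrization to obtain the expressions for $A^{3,0},A^{2,1},A^{1,2},A^{0,3}$, invert the first two to get $\rho$ and $\rho\mu$ in terms of $A^{3,0},A^{2,1},\ol\rho,\ol\mu$, expand $A^{+,\mr l}A^{-,\mr l}$ and $A^{-,\nl}A^{+,\nl}$ termwise, and collect. The paper organizes the collection slightly differently, first reducing to the compact form $G=\int_M \ol\rho A^{3,0}\ol\omega_0+\ol\rho A^{2,1}\ol\mu\,\ol\omega_0+\ol\rho^2\langle\ol\mu^3\rangle\omega_0\ol\omega_0-\tfrac{\langle(\rho\mu)^3\rangle}{\rho}\omega_0\ol\omega_0$ and only then substituting the inversion relations, but this is a cosmetic difference; your cross-check via $\delta G=A^{-,\mr l}\delta A^{+,\mr l}-A^{+,\nl}\delta A^{-,\nl}$ is exactly what the paper verifies in the subsequent lemma.
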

\begin{proof}
This is a tedious but straightforward computation. One way to do it is to express $G$ in terms of $\rho,\mu,\ol\rho,\ol\mu$ first. To this end, notice that the decomposition in \eqref{eq:nlpoldef} is a decomposition into forms of definite Hodge type. Thus, we can write 
\begin{align}
A^{3,0}
 &= {\rho}\,{\omega}_0+ \ol\rho\frac{\ol\mu^3}{6} \ol\omega_0  = \left(\rho + \ol\rho\langle\ol\mu^3\rangle\right)\omega_0,\label{eq:A30} \\
A^{2,1}
 &={\rho}\,{\mu}\,{\omega}_0 +  \ol\rho\frac{\ol\mu^2}{2}\ol\omega_0,  \label{eq:A21} 
\end{align}
and similarly for $A^{0,3}$ and $A^{1,2}$.
We thus obtain 
\begin{align}
A^{3,0}A^{0,3} &= \rho\ol\rho(1+ \langle\mu^3\rangle\langle\ol{\mu}^3\rangle)\omega_0\ol\omega_0 + \left(\rho^2\langle\mu^3\rangle\ + \ol\rho^2\langle\ol\mu^3\rangle\right)\omega_0\ol\omega_0, \label{eq:A30A03}\\
A^{2,1}A^{1,2} &= \rho\ol{\rho}\left(\mu\omega_0\ol\mu\,\ol\omega_0  + \frac14\ol{\mu}^2\ol{\omega}_0\mu^2\omega_0\right) +\frac12\left(\rho^2\mu\omega_0{\mu}^2\omega_0 - \ol\rho^2\ol\mu\,\ol\omega_0\ol{\mu}^2\ol\omega_0\right). \label{eq:A+l+A-l}
\end{align}
On the other hand, we have 
\begin{equation}
A^{-,\nl}A^{+,\nl} = \rho\ol\rho\left(\ol\omega_0\omega_0 + \ol\mu\,\ol\omega_0\mu\omega_0 + \frac14\ol\mu^2\ol\omega_0\mu^2\omega_0 + \langle{\ol{\mu}^3}\rangle\langle\mu^3\rangle \omega_0\ol\omega_0\right). \label{eq:A+nlA-nl}
\end{equation}
Summing \eqref{eq:A30A03} and \eqref{eq:A+l+A-l} and subtracting \eqref{eq:A+nlA-nl},  the last two terms in \eqref{eq:A+nlA-nl} cancel and we obtain
\begin{multline}
A^{+,\mr{l}}A^{-,\mr{l}} - A^{-,\nl}A^{+,\nl} = 2\rho\ol\rho\left(\omega_0\ol\omega_0 -\ol\mu\,\ol\omega_0\mu\omega_0\right) \\+ \rho^2\left(\langle\mu^3\rangle\omega_0\ol\omega_0 + \frac12 \mu\omega_0  \mu^2\omega_0\right) 
+\ol\rho^2\left(\langle\ol\mu^3\rangle\omega_0\ol\omega_0-\frac12\ol\mu\,\ol\omega_0\ol\mu^2\ol\omega_0\right). \label{eq:difference}
\end{multline}
Recall that $\mu\omega_0 \mu^2\omega_0 = (\mu^3\omega_0)\omega_0 = 6\langle\mu^3\rangle\ol\omega_0\omega_0$, hence we can simplify this expression to 
$$A^{+,\mr{l}}A^{-,\mr{l}} - A^{-,\nl}A^{+,\nl} = 2\rho\ol\rho\left(\omega_0\ol\omega_0 -\ol\mu\,\ol\omega_0\mu\omega_0\right) -2 \left(\rho^2\langle\mu^3\rangle +\ol\rho^2\langle\ol\mu^3\rangle\right)\omega_0\ol\omega_0.$$
From equations \eqref{eq:A30}, \eqref{eq:A21} we get 
\begin{align*}
\rho\omega &= A^{3,0} - \ol\rho\langle\ol\mu^3\rangle\omega_0, \\
\rho\mu\omega_0 &= A^{2,1} - \frac12\ol\rho\, \ol\mu^2\ol\omega_0,
\end{align*}
which we use to rewrite the first term as 
$$2\rho\ol\rho\left(\omega_0\ol\omega_0 -\ol\mu\,\ol\omega_0\mu\omega_0\right)=2\ol\rho A^{3,0}\ol\omega_0 - 2\ol\rho^2\langle\ol\mu^3\rangle\omega_0 + 2\ol\rho A^{2,1}\mu\omega_0 + 6\ol\rho^2\langle\ol\mu^3\rangle\omega_0\ol\omega_0.$$
In total, \eqref{eq:G1} evaluates to 
\begin{equation}G = \int_M \ol\rho A^{3,0}\ol\omega_0 + \ol\rho A^{2,1}\ol\mu\,\ol\omega_0 + \ol\rho^2\langle\ol\mu^3\rangle\omega_0\ol\omega_0 - \frac{\langle(\rho\mu)^3\rangle}{\rho}\omega_0\ol\omega_0.\label{eq:genfctappproof}
\end{equation}
Formula \eqref{eq:genfctapp} may now be obtained by 
solving equations \eqref{eq:A30},\eqref{eq:A21} for $\rho$ and $\rho\mu$, which gives \begin{align}
\rho &= \frac{A^{3,0} - \frac16\overline{\rho}\,\overline{\mu}^3\overline{\omega}_0}{\omega_0} = (A^{3,0})^\vee -  \ol\rho\langle\ol{\mu}^3\rangle, \label{eq:rho} \\
\rho\mu &= \frac{A^{2,1} - \frac12\overline{\rho}\,\overline{\mu}^2\overline{\omega}_0}{\omega_0} = \left(A^{2,1} - \frac12\overline{\rho}\,\overline{\mu}^2\overline{\omega}_0\right)^\vee .  \label{eq:rhomu}
\end{align}
Plugging \eqref{eq:rho},\eqref{eq:rhomu} into \eqref{eq:genfctappproof} we obtain \eqref{eq:genfctapp}. 
\end{proof}
The defining property of $G$ is the following. 
\begin{lemma} 
We have $\delta G = \theta^\mr{l} - \theta^\nl$ where 
$\theta^\mr{l} = A^{-,\mr{l}}\delta A^{+,\mr{l}}$ and $\theta^\nl = A^{+,\nl}\delta A^{-,\nl}$.
\end{lemma}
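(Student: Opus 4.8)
The plan is to verify the identity by differentiating the explicit formula \eqref{eq:G1} for $G$ and matching the result against $\theta^{\mr{l}}-\theta^{\nl}$, where throughout the wedge product and the integration over $M$ are kept implicit, as in \eqref{eq:G1}. The only structural inputs are that on the six-manifold $M$ the pairing $\langle\alpha,\beta\rangle=\int_M\alpha\wedge\beta$ on $3$-forms is \emph{antisymmetric} (since $3$ is odd), so it is a genuine symplectic pairing, and that the field-space differential $\delta$ is a graded derivation. First I would record the Koszul signs: with respect to total degree ($M$-form degree plus field-space degree) each $A^{\pm,\bullet}$ has degree $3$ and each $\delta A^{\pm,\bullet}$ has degree $4$, so $(\delta A^{+})\,A^{-}=(-1)^{4\cdot 3}A^{-}\,\delta A^{+}=A^{-}\,\delta A^{+}$, and similarly in the nonlinear sector.

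Applying the graded Leibniz rule to \eqref{eq:G1} then yields
\[
\delta G=\tfrac12\int_M\bigl(A^{-,\mr{l}}\,\delta A^{+,\mr{l}}-A^{+,\mr{l}}\,\delta A^{-,\mr{l}}\bigr)-\tfrac12\int_M\bigl(A^{+,\nl}\,\delta A^{-,\nl}-A^{-,\nl}\,\delta A^{+,\nl}\bigr).
\]
Comparing with $\theta^{\mr{l}}-\theta^{\nl}=\int_M A^{-,\mr{l}}\,\delta A^{+,\mr{l}}-\int_M A^{+,\nl}\,\delta A^{-,\nl}$, the difference $\delta G-(\theta^{\mr{l}}-\theta^{\nl})$ collapses, and the claim reduces to the single identity
\[
\int_M\bigl(A^{-,\mr{l}}\,\delta A^{+,\mr{l}}+A^{+,\mr{l}}\,\delta A^{-,\mr{l}}\bigr)=\int_M\bigl(A^{-,\nl}\,\delta A^{+,\nl}+A^{+,\nl}\,\delta A^{-,\nl}\bigr).
\]

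Next I would establish this identity by showing that both sides equal $\int_M A\wedge\delta A$, where $A=A^{+,\mr{l}}+A^{-,\mr{l}}=A^{+,\nl}+A^{-,\nl}$ is the full $3$-form. Expanding $\int_M A\,\delta A$ with respect to either splitting, the cross terms reproduce the respective side above, while the diagonal terms $\int_M A^{+}\,\delta A^{+}$ and $\int_M A^{-}\,\delta A^{-}$ vanish because each summand of the splitting is isotropic for $\langle\cdot,\cdot\rangle$. For the linear (Hodge-type) splitting this is pure type counting: a wedge of two forms in $\Omega^{3,0}(M)\oplus\Omega^{2,1}(M)$ has holomorphic degree $>3$ and so vanishes on $M$, and likewise for $\Omega^{1,2}(M)\oplus\Omega^{0,3}(M)$. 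For the nonlinear splitting, isotropy is precisely the statement that the Hitchin decomposition \eqref{eq:defHitchinPol} is a polarization (subsection \ref{sss: Hitichin}); concretely, writing the decomposable form locally as $A^{+,\nl}=\theta_1\wedge\theta_2\wedge\theta_3$ and varying, every term of $A^{+,\nl}\wedge\delta A^{+,\nl}$ contains a repeated factor $\theta_i$ and hence vanishes, and similarly for $A^{-,\nl}$.

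The computation is routine once the signs are fixed, so I expect the only genuinely substantive point to be the isotropy of the nonlinear summands, $\int_M A^{\pm,\nl}\wedge\delta A^{\pm,\nl}=0$; this is the step I would be most careful about, though it follows either from the decomposability argument above or simply by invoking that Hitchin's construction produces a Lagrangian splitting. A secondary care point is the bookkeeping of the graded signs arising from the odd ($3$-form) degree of $A$ together with the odd field-space degree of $\delta A$, since a single wrong sign would destroy the cancellation that converts the two factors of $\tfrac12$ in $\delta G$ into the full $\theta^{\mr{l}}-\theta^{\nl}$.
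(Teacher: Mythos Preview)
Your argument is correct and is precisely the content of the paper's first sentence of proof, ``This follows from Equation \eqref{eq:G1}'': you have simply unpacked that remark, verifying the Koszul signs and making explicit the isotropy of the two Lagrangian splittings (type counting for the linear one, decomposability for the Hitchin one). The bulk of the paper's proof is an \emph{alternative} verification: it takes the explicit coordinate expression \eqref{eq:genfctapp} and computes $\delta G/\delta A^{3,0}$, $\delta G/\delta A^{2,1}$, $\delta G/\delta\bar\rho$, $\delta G/\delta\bar\mu$ directly, matching them against the components of $A^{-,\mathrm l}$ and $-A^{+,\mathrm{nl}}$. That longer route is not logically necessary once one has your argument, but it serves as an independent consistency check that the explicit formula \eqref{eq:genfctapp} was derived correctly; your approach, by contrast, never touches the explicit formula and works purely from the structural definition \eqref{eq:G1}.
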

\begin{proof}
This follows from Equation \eqref{eq:G1}.
But one can also check it through direct computation: we have 
$$\frac{\delta G}{\delta A^{3,0}} = \ol\rho\,\ol\omega_0 + \frac{\left\langle\left(\left(A^{2,1} -\frac12\ol\rho\,\ol\mu^2\ol\omega\right)^\vee\right)^3\right\rangle}{(A^{3,0})^\vee - \ol\rho\langle\ol\mu^3\rangle^2}(\omega_0)^{-1}\omega_0\ol\omega_0 = \ol\rho\,\ol\omega_0+ \frac{\langle(\rho\mu)^3\rangle}{\rho^2}\ol\omega_0 = A^{3,0}. $$
Notice that we have $$\frac{\delta}{\delta \mu}\langle\mu^3\rangle = \frac12 (\mu^2\omega)(\ol\omega)^{-1}.$$ It follows that 
$$\frac{\delta}{\delta A^{2,1}} \langle ((A^{2,1})^\vee)^3\rangle = \frac12 (((A^{2,1})^\vee)^2\omega_0)\ol\omega_0^{-1}\omega_0^{-1} = -\frac{\frac12(((A^{2,1})^\vee)^2\omega_0)}{\omega_0\ol\omega_0^{-1}}$$(note the sign)
and therefore
$$\frac{\delta G}{\delta A^{2,1}} = \ol\rho\,\ol\mu\,\ol\omega  + \frac12\rho\mu^2\omega_0 = A^{1,2}.$$ 
This proves that $\delta G/\delta A^{+,\mr{l}} = A^{-,\mr{l}}$. 
Computing $\delta G /\delta\ol\rho$ gives 
\begin{align*}
\frac{\delta G}{\delta\bar\rho}\delta\ol\rho &= \delta\ol\rho \left(A^{3,0}\ol\omega_0 + A^{2,1}\ol\mu\,\ol\omega_0 + 2\ol\rho\langle\ol\mu^3\rangle\omega_0\ol\omega_0 + \frac12\rho\mu^2\omega_0(\frac12\ol\mu^2\ol\omega_0) +  \rho\langle\mu^3\rangle\omega_0\langle\ol\mu^3\rangle\ol\omega_0\right) \\
&= -\rho e^\mu\omega_0e^{\ol\mu}\ol\omega_0\delta\ol\rho.
\end{align*}
Finally, computing $\delta G/\delta \ol\mu$ gives 
\begin{align*}
\frac{\delta G}{\delta\ol\mu}\delta\ol\mu &= - \ol\rho A^{2,1}(\delta\ol\mu\, \ol\omega_0) + \frac12\ol\rho^2(\ol\mu^2\ol\omega_0)(\delta\ol\mu\,\ol\omega_0) + \frac12\rho\mu^2\omega_0(\ol\rho\,\ol\mu\,\delta\ol\mu\,\ol\omega_0) + \rho\langle\mu^3\rangle(\delta\ol\mu\,\ol\omega_0)\frac12\ol\rho\,\ol\mu^2\ol\omega_0  \\
&= -\ol\rho\rho\,\mu\,\omega_0 (\delta\ol\mu\,\ol\omega_0) + \frac12\rho\mu^2\omega_0(\ol\rho\,\ol\mu\,\delta\ol\mu\,\ol\omega_0) + \rho\langle\mu^3\rangle(\delta\ol\mu\,\ol\omega_0)\frac12\ol\rho\,\ol\mu^2\ol\omega_0  \\
 & = -\rho\ol\rho (e^\mu\omega_0)\delta\ol\mu(\ol\omega_0 + \ol\mu\omega_0 + \frac12\ol\mu^2\ol\omega_0). 
\end{align*}
Using 
$$\delta A^{-,\nl} = \delta \ol\rho e^{\ol\mu}\ol\omega_0 = e^{\ol\mu}\ol\omega_0\delta\ol\rho + \ol \rho\delta\ol\mu(\ol\omega_0 + \ol\mu\omega_0 + \frac12\ol\mu^2\ol\omega_0),$$ we obtain 
$$\delta G = A^{+,\mr{l}}\delta A^{-,\mr{l}} - A^{+,\nl}\delta A^{-,\nl}.$$
\end{proof}
\subsection{Deformations of complex structures}
Let $M$ be a compact Calabi--Yau manifold supplied with a reference holomorphic 3-form $\omega_0$. A deformation of the complex structure is equivalent to a deformation of the $\bar\dd$ operator $\bar\dd \to \bar\dd_{\bar{A}} = \bar\dd + \bar{A}$, where $\bar{A} \in \Omega^{-1,1}(M) = \Gamma(T^{1,0}M \otimes (T^*)^{0,1}M)$. The integrability condition $\bar\dd_A^2 = 0$ is equivalent to (\cite{KS}) 
\begin{equation}
\bar\dd\bar{A} + \frac12 [\bar{A},\bar{A}] = 0.\label{eq:KS}
\end{equation} 
The moduli space of complex structures is thus given by solutions of \eqref{eq:KS} modulo gauge transformations 
\begin{equation}
\delta\bar{A} = \bar\dd\varepsilon + [\bar{A},\varepsilon], \label{eq:KS gauge trafo}
\end{equation}
with $\varepsilon \in \Omega^{-1,0}(M)$. The tangent space to the moduli space of complex structures is given by the linearization of \eqref{eq:KS}, i.e., it is the quotient of $\{\alpha\colon\bar\dd \bar{\alpha} = 0\}$ by linearized gauge transformations $\delta\alpha = \bar\dd\varepsilon$. \\
After Tian (\cite{Tian}), this problem can be reformulated using $\bar{A}^\vee$ as follows. Imposing the constraint $\dd\bar{A}^\vee =0$ and using \eqref{eq:Tian}, we can rewrite \eqref{eq:KS} as 
\begin{equation}
\bar\dd \bar{A}^\vee + \dd (\bar{A}^\vee \circ \bar{A}^\vee) = 0. \label{eq:KS2}
\end{equation}
\subsection{Kodaira--Spencer action}
The Kodaira--Spencer action functional as introduced in \cite{BCOV} is 
\begin{equation}
S_{KS}[\bar{A}^\vee]= \int_M\frac 12 \bar{A}^\vee \dd^{-1}\bar\dd\bar{A}^\vee + \frac16 \langle \bar{A}^\vee,\bar{A}^\vee,\bar{A}^\vee \rangle.\label{eq:KSaction1}
\end{equation}
Here the first term is well-defined due to $\dd\bar\dd$-lemma. The equation of motion of \eqref{eq:KSaction1} is  \eqref{eq:KS2}. One can resolve the nonlocality by writing $\bar{A}^\vee = x + \dd b$, where $x$ is a $\dd$-harmonic $(2,1)$-form. The action functional then becomes 
\begin{equation}
S_{KS}(x;b) = \int_M \frac12 \dd b\bar\dd b + \frac 16 \langle (x + \dd b),(x + \dd b),(x + \dd b)\rangle .\label{eq:KSaction2}
\end{equation}
This action functional has the following remarkable property. 
From eq.\ \eqref{eq:KS2}, it follows that any harmonic $(2,1)$-form $x = \bar{A}^\vee_1$ can be interpreted as a first order deformation of the complex structure. The tree level diagrams of \eqref{eq:KSaction2} then generate forms $\bar{A}^\vee_n$ with the property that $\bar{A}^\vee = \sum \varepsilon^n \bar{A}^\vee_n$ is a solution of the Kodaira--Spencer equation \eqref{eq:KS2}. We refer to \cite[Section 5.2]{BCOV} for details.

\end{document}